%% file: main.tex
\documentclass{article}
\usepackage[utf8]{inputenc} %
\usepackage[T1]{fontenc}    %
\usepackage{graphicx} %
\usepackage{tikz}
\usepackage{array}
\usepackage{mdframed}
\usepackage[frozencache,cachedir=minted-cache]{minted}

\input{preamble.tex}

\newif\ificlrfinal
\iclrfinaltrue

\hypersetup{linktocpage = true, linkcolor = purple}

\newcommand{\Comment}[1]{ \color{teal} #1 \color{black} }

\title{Online Tokenization with Optimal Delay in Linear Time}

\author{
Mina Dalirrooyfard \\
Morgan Stanley\\
\texttt{minad@mit.edu}
\and
Christopher Ye\thanks{Supported by NSF HDR TRIPODS Phase II grant 2217058 (EnCORE Institute). This work was completed as part of an internship at Morgan Stanley Machine Learning Research.}  \\
University of California San Diego  \\
\texttt{czye@ucsd.edu}
}

\begin{document}

\maketitle

\thispagestyle{empty}

\begin{abstract} 
    \input{abstract}
\end{abstract}

\thispagestyle{empty}

\newpage

\clearpage
\pagenumbering{arabic}

\input{intro}

\input{prelims}

\input{online_tokenization}

\input{improved_delay}

\input{experiments}

\section*{Statement on AI Usage}

For our theoretical results, generative AI was particularly useful in designing the reduction behind \Cref{thm:np-delay-lb}.
For our empirical results, essentially all of our code was implemented by AI.
The code was reviewed and manually verified.
For our DFA algorithm, the main empirical contribution of this work, the code was written by AI but we manually prove the correctness of the key optimization (DFA simplification, see \Cref{sec:experimental-setup}). 
We take responsibility for the final content of this work, including text, claims or artifacts produced with the aid of generative AI.

\printbibliography

\appendix
\input{appendix_omitted_proofs}
\end{document}

%% file: preamble.tex
\usepackage{array}
\usepackage{tabularx}
\usepackage{graphicx}
\usepackage{amsmath,amssymb,amsthm,mathtools}
\usepackage{bm}
\usepackage{bbm}
\usepackage{xspace}
\usepackage{url}
\usepackage{fullpage, prettyref}
\usepackage{boxedminipage}
\usepackage{wrapfig}
\usepackage{ifthen}
\usepackage{color}
\usepackage{xcolor}
\usepackage{framed}
\usepackage[colorlinks=true,urlcolor=blue,linkcolor=blue,citecolor=blue,pdfstartview=FitH]{hyperref}
\usepackage{subcaption}
\usepackage[nameinlink]{cleveref}
\usepackage{esvect}
\usepackage{mdframed}
\usepackage{ifthen}
\usepackage{tikz}
\usetikzlibrary{automata,positioning,arrows.meta}
\newcommand{\pL}{\backslash\mathrm p\{L\}}
\newcommand{\pN}{\backslash\mathrm p\{N\}}
\newcommand{\leadc}{[\hat{}\,\backslash\mathrm r\backslash\mathrm n\pL\pN]}
\newcommand{\punct}{[\hat{}\,\backslash\mathrm s\pL\pN]}
\newcommand{\Hclass}{[\hat{}\,\backslash\mathrm S\backslash\mathrm r\backslash\mathrm n]}
\newcommand{\Rclass}{[\backslash\mathrm r\backslash\mathrm n]}

\usepackage[
  backend=biber,
  style=alphabetic,
  natbib=true,
  backref=true
]{biblatex}
\usepackage{enumitem}
\usepackage{thmtools}
\usepackage{thm-restate}

\newtheorem{theorem}{Theorem}[section]

\newtheorem{lemma}[theorem]{Lemma}
\newtheorem{claim}[theorem]{Claim}

\newtheorem{definition}[theorem]{Definition}
\newtheorem{proposition}[theorem]{Proposition}

\newtheorem{remark}[theorem]{Remark}

\newcommand{\ignore}[1]{}

\newcommand{\poly}{\mathrm{poly}}

\newcommand{\calE}{{\cal E}}
\newcommand{\calL}{{\cal L}}
\newcommand{\calF}{{\cal F}}

\newcommand{\calP}{{\cal P}}
\newcommand{\calQ}{{\cal Q}}

\newcommand{\calS}{{\cal S}}
\newcommand{\calT}{{\cal T}}

\newcommand{\calC}{{\cal C}}

\newcommand{\calA}{{\cal A}}

\newcommand{\calR}{{\cal R}}

\newcommand{\set}[1]{\left\{ #1 \right\}}
\newcommand{\ceil}[1]{\lceil#1\rceil}

\newcommand{\given}{\textrm{ s.t. }}

\newcommand{\otherwise}{\textrm{ o/w }}
\renewcommand{\leq}{\leqslant}
\renewcommand{\geq}{\geqslant}

\newcommand{\p}{\boldsymbol{p}}

\newcommand{\Sec}[1]{\hyperref[sec:#1]{\Cref*{sec:#1}}} %
\newcommand{\Eqn}[1]{\hyperref[eq:#1]{(\ref*{eq:#1})}} %
\newcommand{\Fig}[1]{\hyperref[fig:#1]{Fig.\,\ref*{fig:#1}}} %
\newcommand{\Tab}[1]{\hyperref[tab:#1]{Tab.\,\ref*{tab:#1}}} %
\newcommand{\Thm}[1]{\hyperref[thm:#1]{Theorem\,\ref*{thm:#1}}} %
\newcommand{\Fact}[1]{\hyperref[fact:#1]{Fact\,\ref*{fact:#1}}} %
\newcommand{\Lem}[1]{\hyperref[lem:#1]{Lemma\,\ref*{lem:#1}}} %
\newcommand{\Prop}[1]{\hyperref[prop:#1]{Prop.~\ref*{prop:#1}}} %
\newcommand{\Cor}[1]{\hyperref[cor:#1]{Corollary~\ref*{cor:#1}}} %
\newcommand{\Conj}[1]{\hyperref[conj:#1]{Conjecture~\ref*{conj:#1}}} %
\newcommand{\Def}[1]{\hyperref[def:#1]{Definition~\ref*{def:#1}}} %
\newcommand{\Alg}[1]{\hyperref[alg:#1]{Alg.~\ref*{alg:#1}}} %
\newcommand{\Obs}[1]{\hyperref[obs:#1]{Obs.~\ref*{obs:#1}}} %
\newcommand{\Ex}[1]{\hyperref[ex:#1]{Ex.~\ref*{ex:#1}}} %
\newcommand{\Clm}[1]{\hyperref[clm:#1]{Claim~\ref*{clm:#1}}} %
\newcommand{\Step}[1]{\hyperref[step:#1]{Step~\ref*{step:#1}}} %

\newcommand{\tO}[1]{\tilde{O}(#1)}

\crefname{algocf}{alg.}{algs.}
\Crefname{algocf}{Alg.}{Algs.}

\crefname{step}{Step}{Steps}
\Crefname{step}{Step}{Steps}
\newcommand{\stepnames}{%
  \crefalias{enumi}{step}\crefalias{enumii}{step}\crefalias{enumiii}{step}}

\renewcommand{\P}{\mathbf{P}}
\newcommand{\NP}{\mathbf{NP}}
\newcommand{\SAT}{\mathrm{SAT}}
\newcommand{\kSAT}[1]{#1\text{-}\SAT}

\newcommand{\Tok}{\operatorname{Tok}}
\newcommand{\Seg}{\operatorname{Seg}}
\newcommand{\Split}{\operatorname{Split}}
\newcommand{\Merge}{\operatorname{Merge}}

\newcommand{\Segment}{\python{Segment}}
\newcommand{\Pretokenize}{\python{Pretokenize}}
\newcommand{\BPEMerge}{\python{BPEMerge}}
\newcommand{\FastMatch}{\python{FastMatch}}
\newcommand{\applyFixup}{\python{ApplyFixup}}
\newcommand{\FastPretokenization}{\python{FastPretokenization}}

\newcommand{\last}{\mathrm{last}}
\newcommand{\state}{\mathbf{state}}

\newcommand{\buffer}{\mathrm{buffer}}
\newcommand{\start}{\mathrm{start}}
\newcommand{\accept}{\calP}
\newcommand{\dead}{\mathrm{dead}}
\newcommand{\fail}{\mathrm{fail}}
\newcommand{\out}{\mathrm{out}}
\newcommand{\base}{\mathrm{base}}
\newcommand{\trim}{\mathrm{trim}}
\newcommand{\cut}{\mathrm{cut}}
\newcommand{\brk}{\;\|\;}

\newcommand{\push}{{\bf push}}
\newcommand{\pop}{{\bf pop}}

\newcommand{\query}{{\bf query}}
\newcommand{\cqueue}{{\rm cQueue}}
\newcommand{\bpequeue}{{\rm bpeQ}}
\newcommand{\incbpe}{{\rm incBPE}}

\newcommand{\cutQ}{{\rm cutQ}}

\newcommand{\addgets}{\overset{+}{\gets}}

\newcommand{\raw}{{\rm raw}}
\newcommand{\pre}{{\rm pre}}
\newcommand{\suc}{{\rm suc}}

\newcommand{\lookback}{\mathfrak{b}}
\newcommand{\longest}{\mathfrak{t}}
\newcommand{\merges}{\mathfrak{m}}

\newcommand{\rust}[1]{%
  \relax
  \ifmmode
    \mathord{\text{\mintinline{rust}{#1}}}
  \else
    \mintinline{rust}{#1}%
  \fi
}

\newcommand{\python}[1]{%
  \relax
  \ifmmode
    \mathord{\text{\mintinline{python}{#1}}}
  \else
    \mintinline{python}{#1}%
  \fi
}

\newcommand{\rfk}{{\rm R50K}}
\newcommand{\pfk}{{\rm P50K}}
\newcommand{\clok}{{\rm CL100K}}
\newcommand{\otk}{{\rm O200K}}
\newcommand{\refT}{\mathrm{ref}}

%% file: abstract.tex
Tokenization is an essential preprocessing step in almost all large language models (LLMs). One of the most popular frameworks is BPE tokenization, which consists of special token segmentation, pretokenization, and byte-pair encoding (BPE). However, existing algorithms are sequential, with each stage only processing complete output from the previous stage. In order to reduce latency and memory footprints, recent works (Jiang and Gong, ICML 2026) (Li, Yang and Mamouras, ASPLOS 2026) (Mamouras, Li and Yang, PLDI 2026) have initiated the study of online algorithms for subroutines such as pretokenization and BPE with a goal of streaming input to output as the input arrives online. However, the search for efficient online tokenization remains open.

In this work, we formalize and resolve online tokenization. We introduce the notion of delay for online tokenization which characterizes the extent to which the algorithm output lags behind the (possibly computationally intractable) optimum. We give a near-linear time algorithm with optimal delay (up to constant factors). We complement this algorithm with a conditional lower bound: unless $\mathbf{P} = \mathbf{NP}$, there is no polynomial time algorithm that beats the delay of our algorithm by more than a constant factor.

We implement a practical variant for tokenizers from OpenAI's tiktoken and HuggingFace's tokenizers. We obtain a $7.9\times$-$34.8\times$ latency advantage (measured as time to first token) over tiktoken (as well as the recent gigatoken tokenizer) and our implementation maintains approximately constant memory usage independent of input size. Our implementation also achieves a $2.4\times$-$7.0\times$ improvement in throughput over tiktoken and $15.8\times$-$38.1\times$ over tokenizers.

%% file: intro.tex
\input{motivation}

\subsection{Implementation and Experiments}

\input{empirical}

\section{Technical Overview}

\input{iclr/iclr_overview}

\paragraph{Organization.}
In \Cref{sec:prelims}, we define the necessary preliminaries.
In \Cref{sec:online-tokenization}, we establish the hardness of online tokenization with zero delay, and obtain an efficient algorithm via composition.
In \Cref{sec:improved-delay}, we give our efficient optimal delay algorithm.
In \Cref{sec:experimental-setup}, we present our experimental setup and in \Cref{sec:experiment-results} our results.
We provide omitted proofs in \Cref{app:omitted}.

%% file: motivation.tex
\section{Introduction}

In many natural language processing (NLP) applications, including powerful and popular large language models (LLMs), raw text undergoes tokenization before being provided as model input.
Tokenization defines a vocabulary containing as many common words and sequences as possible, which help compress raw text into tokens, balancing expressivity and computational efficiency.

Byte-Pair Encoding (BPE) is a popular method for tokenization proposed by \citet{gage1994new} that has been widely adopted in LLMs \citep{sennrich2016neural}.
Indeed, BPE Tokenization is both the subject of intense recent theoretical study \citep{berglund2023formalizing, jiang2025incrementalbpe, mamouras2026efficient} and the basis for popular practical tokenizers, including OpenAI's \python{tiktoken} and HuggingFace's \python{tokenizers} \citep{tiktoken, huggingface}.

Concretely, tokenization consists of a three step process: Special Token, Pretokenization, and BPE.
As observed in \citet{jiang2025incrementalbpe}, existing implementations of tokenization are inherently offline and sequential: we take in the full input, apply segmentation to the full input, then apply pretokenization on each segment, and then apply BPE to each pretoken.
While completed segments and pretokens can be passed on to the next stage, one should in principle be able to apply pretokenization to a segment as it forms and similarly apply BPE to a pretoken as it forms. 
Following this observation, \citet{jiang2025incrementalbpe} raise the question of online tokenization: can we process each stage as the input arrives in an online setting?

While offline tokenization is now well understood both theoretically and empirically, the question of efficient online tokenization remains wide open.
Online algorithms are desirable for many reasons.
First, we can drastically reduce the the memory footprint of our algorithm by avoiding holding the entire input or output in memory, a shortcoming of all existing end-to-end tokenization algorithms.
Second, streaming can reduce latency of tokenization, drastically reducing the time to first token.
Indeed, streaming and online algorithms are a fundamental and well-studied paradigm both in theory and in practice, with efficient algorithms being developed in many applications including counting \citep{morris1978counting}, selection and sorting \citep{munro1980selection}, heavy hitters \citep{misra1982finding},  frequency moments \citep{alon1996space}, and cardinality estimation \citep{flajolet1985probabilistic, durand2003loglog, flajolet2007hyperloglog}.

Motivated by latency and memory constraints, online algorithms have been developed for individual steps of tokenization: \citet{li2026static} studies the problem of online pretokenization and \citet{berglund2023formalizing, jiang2025incrementalbpe, mamouras2026efficient} study online BPE algorithms.
However, none of these works provide an online algorithm for the full tokenization pipeline.
Indeed, \citet{jiang2025incrementalbpe} raise the design of end-to-end online tokenization as a crucial direction for future research.
Our work is thus guided by the following question:

\begin{center}
    {\it Is there an efficient algorithm for online tokenization?}
\end{center}

\subsection{Formalizing Online Tokenization}

We begin with formalizing tokenization which consists of the following three stage process:

\begin{enumerate}
    \item Special Token ($\Seg$): Given a list of special tokens $S^*$ and a string $x$, split the string by occurrences of special tokens (see \Cref{def:special-tokens}) into segments.
    For example, if $S^* = \set{\rust{"<|endoftext|>"}}$ and the input string is $x_1 = \rust{"file1<|endoftext|>file2"}$, we have
    \begin{equation*}
        \Seg(x_1; S^*) = \rust{"file1"} \brk \rust{"<|endoftext|>"} \brk \rust{"file2"} \text{.}
    \end{equation*}
    We use $\brk$ to denote segment boundaries.

    \item Pretokenization ($\Split$): Given a regular expression (regex) $R$ and a string $x$, split each segment of $x$ into pretokens by repeatedly finding the longest match of $R$ in $x$ (see \Cref{def:pretokenization}).
    For example, if $R = a^* \mid b^*$ and $x_2 = \rust{"aaabbba"} \brk \rust{"aaaaabbbbaaa"}$, 
    \begin{equation*}
        \Split(x_2; R) = \rust{"aaa"} \brk \rust{"bbb"} \brk \rust{"a"} \brk \rust{"aaaaa"} \brk \rust{"bbbb"} \brk \rust{"aaa"} \text{.}
    \end{equation*}
    As we see in our example, matches cannot extend across segment boundaries.

    \item BPE ($\Merge$): Given a merge list $M$ and a list of pretokens $x$, apply BPE to each pretoken of $x$, by repeatedly applying merges in order of priority left to right (see \Cref{def:bpe}).
    For example, if $x_3 = \rust{"aaabbbaaa"}$ and $M = \set{(a, b) \mapsto X, (a, a) \mapsto A, (b, b) \mapsto B}$,
    \begin{equation*}
        \Merge(x_3; M) = \rust{"AXBAa"} \text{.}
    \end{equation*}
\end{enumerate}

Tokenization, given special tokens $S^*$, regex $R$, and merge list $M$, is defined
\begin{equation*}
    \Tok(x; S^*, R, M) = \Merge(\Split(\Seg(x; S^*); R); M) \text{.}
\end{equation*}
Note that each token $\tau$ in the output corresponds to a unique string $x$ denoted $\raw(\tau)$ where $\Merge(x) = \tau$.
For example, using the same $R$ and $M$ as above,
\begin{equation*}
    \Tok(x_3; \emptyset, R, M) = \Merge(\rust{"aaa"} \brk \rust{"bbb"} \brk \rust{"aaa"}) = \rust{"AaBbAa"} \text{.}
\end{equation*}
That is, merges cannot merge across pretoken boundaries, and the merged pretokens are concatenated.
In the offline setting, efficient algorithms are now well known.
For example, $\Seg$ and $\Split$ have classical linear time algorithms \citep{reps1998maximalmunch}.
The naive algorithm for $\Merge$ runs in $O(n \merges)$ time,
which is already efficient when $\merges = |M|$ is small.
When $\merges$ is large, there are well known $\tO{n}$ time heap-based algorithms (see e.g. the \python{tiktoken} implementation \citep{tiktoken}).

We now define \emph{online} $\Seg$, $\Split$, $\Merge$, and $\Tok$. First, for two strings $w,w'$ in an alphabet $\Sigma$, if $w$ is a prefix of $w'$, we show it by $w \preceq w'$.
Consider a function $f: \Sigma^* \rightarrow (\Sigma')^*$ where $\Sigma \subseteq \Sigma'$ are alphabets (e.g. characters and tokens). In our context, $f$ could be any of the $\Seg$, $\Split$, $\Merge$, or $\Tok$ functions.
We say that $z\in (\Sigma')^*$ is a \emph{stable prefix} with respect to $x$ if $z \preceq f(xy)$ for any $y \in \Sigma^*$.
It is easy to show that stable prefixes are totally ordered (see \Cref{clm:stable-prefix-compare}).
The \emph{maximum stable prefix} is then denoted by $f_{\preceq}(x)$.
We define online tokenization (analogously each of the three sub-routines) as $\Tok_{\preceq}$.
In particular, given that input $x$ is all we have observed so far, online tokenization outputs $\Tok_{\preceq}(x)$, the longest possible prefix of tokens that can be safely emitted, regardless of what characters appear in the future.
This is exactly the definition that allows us to stream input characters and output tokens.

Surprisingly, our first result establishes that online tokenization, unlike its offline counterpart, is $\NP$-Hard.
Given that it is unlikely that even a polynomial (let alone linear time) algorithm for online tokenization exists, a natural relaxation is to design efficient approximation algorithms. 
We introduce a natural approximation of online tokenization which we call \emph{delay} (see \Cref{def:online-tokenization}).
An online tokenization with delay $k$ produces on input $x = x_1 \dots x_{n}$, output $z$ satisfying:
\begin{equation*}
    \Tok_{\preceq}(x_{1} \dots x_{n - k}) \preceq z \preceq \Tok_{\preceq}(x) \text{.}
\end{equation*}
In particular, an algorithm with small delay produces a stable prefix that might not be the maximum stable prefix, but would have been the maximum stable prefix not too long ago.
A trivial algorithm achieves delay $n$: output the empty string $\epsilon$.
So we focus on answering the following question:

\begin{center}
    {\it What is the minimum delay $k$ with efficient algorithms for online tokenization?}
\end{center}

\subsection{Our Contributions}
\label{sec:contributions}

We give optimal algorithms for online tokenization, tightly characterizing when efficient online tokenization is possible, resolving the main open question of \citet{jiang2025incrementalbpe}.
First, our lower bound rules out the existence of efficient algorithms with small delay.
For any list of special tokens $S^*$, let $\ell_{\infty}(S^*) = \max_{t \in S^*} |t|$ be the longest special token and $\ell_1(S^*) = \sum_{t \in S^*} |t|$ be the sum of the lengths of the special tokens.
For any merge list $M$, let $\merges = |M|$ and $\longest = \max \set{|x| \given |\Merge(x)| = 1}$ denote the length of the longest token. 

Note $\ell_{\infty}(S^*)$, $\ell_{1}(S^*)$, $|R|$ are constants independent of $x$, so we omit them from bounds for clarity.
While $\merges$ (and therefore $\longest$) is also a constant (since the BPE dictionary is independent of the input string), since $\merges$ can be quite large in practice (around 100,000 for CL100K in \python{tiktoken}), we will account for $\merges, \longest$ in space and run-time, keeping consistent with prior work \citep{jiang2025incrementalbpe}.

\begin{restatable}[Hardness of Online Tokenization]{theorem}{OnlineLB}
    \label{thm:np-delay-lb}
    Unless $\P = \NP$, there is no algorithm for exact (zero-delay) online tokenization with $\poly(\ell_1(S^*), |R|, \merges, \longest)$ preprocessing and total time $\poly(n, \ell_1(S^*), |R|, \merges, \longest)$, even when $S^* = \emptyset$.
    Furthermore, the lower bounds hold for algorithms with $o(\merges \longest)$ delay.
\end{restatable}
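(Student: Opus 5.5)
We reduce from $\kSAT{3}$, and the reduction sits entirely inside the pretokenization stage $\Split$ with $S^* = \emptyset$. The key observation is that deciding whether a candidate token can be emitted is a universal question over all future suffixes $y$. If the offline output $\Tok(xy)$ can be made to "remember" a choice encoded by a long string, then emptiness of the set of bad suffixes becomes a satisfiability question.

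\textbf{Encoding.} Given a formula $\varphi$ with variables $v_1,\dots,v_N$ and clauses $C_1,\dots,C_m$, we build in polynomial time a regex $R_\varphi$ and a merge list $M_\varphi$, with $|R_\varphi|,\merges,\longest = \poly(N,m)$, and a short input prefix $x$ (say a single symbol $a$). Future suffixes $y$ are intended to range over truth assignments. We encode $y = \sigma_1\sigma_2\cdots\sigma_N\#$ with $\sigma_i\in\{0,1\}$, and $R_\varphi$ is designed so that the longest (maximal munch) match starting at $a$ behaves as follows:
\begin{itemize}
\item it extends through the entire $ay$ exactly when the assignment encoded by $y$ \emph{satisfies} $\varphi$;
\item otherwise the match is just $a$.
\end{itemize}
A natural candidate is $R_\varphi = a \mid a\,L_\varphi\#$, where $L_\varphi$ is the language of satisfying assignment strings. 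A union over clauses would not work, since regexes cannot directly express intersection. To get around this, we encode the assignment $m$ times, i.e. $y=(\sigma)^{m}$ with one copy checked per clause, and force consistency between copies. Forcing consistency is itself hard for regexes, so we instead push it into the BPE stage: choose merges so that a pretoken whose copies are inconsistent still merges differently from $a$ alone. If that proves awkward, we fall back on the known $\NP$-hardness of nonemptiness for intersections of $m$ simple DFAs, realized via nested pretoken splits.

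\textbf{Correctness.} We then argue the dichotomy. If $\varphi$ is unsatisfiable, every continuation splits $a$ off as its own pretoken, so $\Merge(a)$ is a stable prefix and $\Tok_{\preceq}(x)$ is nonempty. If $\varphi$ is satisfiable, some $y$ makes $a$ part of a longer pretoken whose BPE output begins with a different token. Meanwhile the empty continuation makes the first token $\Merge(a)$, so no token is stable and $\Tok_{\preceq}(x)=\epsilon$. A zero-delay algorithm with the stated polynomial preprocessing and running time would therefore decide $\kSAT{3}$, which gives $\P=\NP$.

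\textbf{Delay extension.} To rule out delay $k=o(\merges\longest)$, we pad the instance. We append a forced "neutral" tail $x' = x z$, where $z$ is a string of length $\Theta(\merges\longest)$ that is consumed inside the same pending long pretoken and does not resolve it. This keeps the answer undetermined for $k$ further characters, so that $\Tok_{\preceq}(x_1\dots x_{n-k})$ already reflects the satisfiability dichotomy. Any algorithm with delay $k$ must then output a $z$ sandwiched between two equal quantities. The natural way to get length $\Theta(\merges\longest)$ is to blow up the instance by chaining merges: $\merges$ merges building tokens of length $\longest$, repeated so that the ambiguity persists over $\merges\longest$ characters.

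\textbf{Main obstacle.} The hardest step is the gadget: making a single polynomial-size regex plus merge list encode the conjunction of clauses, since maximal-munch regex matching alone only gives unions. I would first try to put the conjunction into the BPE merge order, where priority ordering can simulate sequential checking, before resorting to the DFA-intersection formulation.
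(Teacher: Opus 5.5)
\textbf{There is a genuine gap: the proposal never produces a working gadget, and the one concrete gadget it offers cannot exist.}

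Suppose $R_\varphi = a \mid a L_\varphi \#$ had polynomial size and were computable in polynomial time. Then deciding $L(R_\varphi)\cap a\{0,1\}^N\#\neq\emptyset$ is a reachability question on the product of a Thompson NFA with an $O(N)$-state DFA. That would decide $\SAT$ in polynomial time without any tokenizer. So the obstruction is not syntactic (``regexes cannot express intersection''). It is that no polynomial-size automaton recognizes the satisfying assignments unless $\P=\NP$.

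Your fallbacks do not repair this.
\begin{itemize}
\item In the copying scheme, the stated requirement has the wrong polarity. If a pretoken with inconsistent copies ``merges differently from $a$ alone'', then for an unsatisfiable $\varphi$ the continuation in which copy $j$ separately satisfies $C_j$ already destabilizes $a$. Hence $\Tok_{\preceq}(x)=\epsilon$ in both cases.
\item What you would actually need is that inconsistent pretokens begin with the same token as $a$. Then a polynomial merge list would have to compare $m$ copies of an $N$-bit string spread across one pretoken and funnel the verdict into the first token. Nothing in the proposal indicates how BPE could do this.
\item The DFA-intersection fallback names a hard problem but gives no way to embed it. A single maximal-munch match only yields unions, and successive pretokens see disjoint substrings, so they cannot check a common assignment without copying it.
\end{itemize}

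\textbf{The missing idea.} Your correctness polarity (a stable first token iff $\varphi$ is unsatisfiable) is the right one. The error is asking the regex to recognize the destabilizing event. Instead, let the universal quantifier in the definition of stability supply the conjunction. Arrange that the first token survives a continuation $z$ whenever $z$ falsifies \emph{some} clause. Stability then becomes $\forall z\,\exists i\colon c_i(z)=0$, which is unsatisfiability, and the regex only ever needs a union.

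\textbf{How the paper realizes this.} The formula goes into the input, $x_\psi=\ell_{c_1}\cdots\ell_{c_k}\alpha$, while $R$ and $M$ depend only on $d$:
\begin{itemize}
\item $R=R_{\calC}\mid\alpha\beta\calE\gamma\delta\mid\Sigma$, with $R_c=\ell_c\Sigma_{\calC}^*\alpha\beta\calE_c\gamma$. Here $\calE_c$, the assignments falsifying the single clause $c$, has an $O(d)$-size DFA.
\item Splitting proceeds left to right. So a falsified clause $c_i$ yields a pretoken starting at $\ell_{c_i}$ that swallows $\alpha$ and pre-empts the escape pattern starting at $\alpha$. Only a satisfying assignment reaches the pretoken $\alpha\beta E(z)\gamma\delta$.
\item The right-to-left merge cascade $(\gamma,\delta)\mapsto\Gamma$, $(b_d,\gamma)\mapsto B_d,\dots,(\alpha,\beta)\mapsto\Phi$ merges $\alpha$ exactly when the pretoken suffix from $\alpha$ has even length. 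That happens only in the escape pretoken.
\end{itemize}

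\textbf{Delay extension.} Your padding plan matches the paper's in spirit. The paper inserts a fixed block $F$ of length $\Theta(\merges\longest)$ after $\alpha\beta$ in both competing patterns. It uses $\zeta^{2^p}$ runs to make $\longest=\Theta(d)$ while preserving the parity cascade. This step presupposes the base gadget, so it cannot be carried out until that is in place.
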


A natural attempt at an online tokenization algorithm is to compose online algorithms for special token, pretokenization, and BPE.
We show that this idea of composition does in fact yield an efficient online tokenization algorithm with non-trivial delay.

The performance of this algorithm, which we call composition, is parameterized by the following quantity which depends on the pretokenization regular expression.
Given a regular language $L$, we define the lookback of the regular language, denoted $\lookback(L)$, to be the smallest integer (or $\infty$ if no such number exists) such that every string $x \in L$ of length $|x| > \lookback$ has a strict prefix $y \prec x$ such that $y \in L$, and $|y| \geq |x| - \lookback$ (see \Cref{def:lookback}), i.e. $y$ is in the language and $y$ is shorter than $x$ by at most $\lookback$.
For example, if $R = a^*$, we have $\lookback(R) = 1$.
On the other hand, if $R = a^*b$, $\lookback(R) = \infty$.

\begin{restatable}[Online Tokenization via Composition]{theorem}{CompositionUB}
    \label{thm:linear-online-tokenization-w-delay}
    Let $S^* \subset \Sigma^*$ be a set of special tokens, $R$ a total regular expression over $\Sigma$, and $M$ a merge list.
    Then, there is an $O(n \log^2 \longest)$ time and $O(\lookback(R) + \merges \longest)$ space algorithm for online tokenization with $\lookback(R) + (\merges +1)\longest + \ell_{\infty}(S^*)$ delay.
\end{restatable}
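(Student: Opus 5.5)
We build the online tokenizer as a pipeline of three online stages, $\Seg \to \Split \to \Merge$. Each stage emits a stable prefix of its own output, and that prefix becomes the input to the next stage. The point to verify is that delays and stability compose. If $A$ has delay $k_1$ with respect to $f$ and $B$ has delay $k_2$ with respect to $g$, where delay is measured in input characters, then $B \circ A$ has delay $k_1 + k_2$ with respect to $g \circ f$. This works for two reasons. First, a stable prefix of $f(x)$ stays a prefix of $f(xy)$ for every $y$, so a stable prefix of $g$ applied to it is also stable for $g\circ f$. Second, since $g$ acts segment-by-segment or pretoken-by-pretoken, every boundary already committed by $f$ is permanent, and the downstream stage only lags on the uncommitted tail. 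Given this composition lemma, the theorem reduces to three stage-wise bounds that sum to $\ell_\infty(S^*) + \lookback(R) + (\merges+1)\longest$.

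\textbf{Special tokens.} Run an Aho--Corasick automaton on $S^*$, which has constant size since $\ell_1(S^*)$ is constant. Emit all characters except the last $\ell_\infty(S^*)$, because only those final characters can still be the start of an occurrence of a special token. The leftmost-longest matching semantics of \Cref{def:special-tokens} may require holding back up to $\ell_\infty$ characters, but no more. This stage gives delay $\ell_\infty(S^*)$ in linear time and constant space.

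\textbf{Pretokenization.} Simulate the DFA of $R$ as in maximal munch \citep{reps1998maximalmunch}, recording the last accepting position. By the definition of $\lookback(R)$, every match longer than $\lookback$ has an accepting strict prefix within $\lookback$ of its end. Consequently, once the DFA has consumed $\lookback(R)$ characters past the last accepting position without reaching a new one, it cannot extend the match (this may require $R$ to be total so that the DFA never dies prematurely), and the pretoken is final. Conversely, the stable pretoken boundaries are detected at most $\lookback(R)$ characters late. The buffer holds $O(\lookback(R))$ characters.

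\textbf{BPE.} This is the step I expect to be the main obstacle. The plan is to show that inside a growing pretoken, the prefix of merged tokens ending more than $(\merges+1)\longest$ characters before the current end is stable. The intuition is a locality argument: a merge at position $i$ can be influenced by a later character only through a chain of merges propagating leftward. Each level of this chain moves by at most one token of length at most $\longest$, and the chain can have at most $\merges$ distinct merge priorities. So characters more than about $\merges\longest$ positions behind the frontier, plus one token of slack, are fixed.

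I would formalize this by induction on merge priority. After applying the first $j$ merges, the token boundaries more than $j\longest$ from the end agree for every extension $y$. For the implementation, I would use an incremental BPE such as that of \citet{jiang2025incrementalbpe}, maintained over a window of size $O(\merges\longest)$. Using per-position priority structures keyed on tokens of length at most $\longest$ should give amortized $O(\log^2 \longest)$ time per character and $O(\merges\longest)$ space.

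Finally, apply the composition lemma to sum the three delays and space bounds. The only remaining subtlety is segment and pretoken boundaries arriving from the upstream stages: each such boundary flushes the downstream buffers, and this only helps the delay bound.
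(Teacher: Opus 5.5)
Your architecture is the paper's: three online stages, stability composes, and the per-stage coverage losses $\ell_\infty(S^*)$, $\lookback(R)$, $(\merges+1)\longest$ add up. The pretokenization stage, however, has a real gap. You bound its buffer but never its running time, and the claimed $O(n\log^2\longest)$ total does not depend on $\lookback(R)$. The natural implementation of what you describe is to simulate the DFA, record the last accept, and, once the attempt dies, restart from the position after it. That re-reads up to $\lookback(R)$ buffered characters per restart, which is $\Theta(n\lookback)$ in the worst case: for $R=a^f b\mid a\mid b$ on input $a^n$, every new character kills the live attempt and the restart re-reads $f$ characters. Invoking Reps' maximal-munch algorithm does not settle this by itself. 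Its linear time comes from memoizing failed (state, position) pairs, and that table has size linear in $n$ unless you prove it can be truncated to the unresolved window. The paper supplies the missing ingredient in \Cref{lem:concat-queue}. Store the unresolved buffer $x_{\last+1}\dots x_i$ in a two-stack queue whose entries record, for every state $q$, $\delta(q,w)$ and $\last(q,w)$ for the string $w$ beneath them. These compose by $\delta(q,uv)=\delta(\delta(q,u),v)$ and $\last(q,uv)=\max\{\last(q,u),\,|u|+\last(\delta(q,u),v)\}$. Push, pop of a prefix, and the query at $\start$ are then amortized $O(1)$. So after a dead state, the new state and last accept of the buffer are obtained without rescanning, giving $O(n)$ time in $O(\lookback(R))$ space.

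There are also three smaller points.
\begin{itemize}
\item Your composition lemma holds only for the coverage notion you actually use: the output is stable and $|\raw(z)|\geq|x|-k$. For delay as in \Cref{def:online-tokenization}, measured against each stage's own maximum stable prefix, it is false. In \Cref{thm:composition-delay-lb}, $\Split_\preceq$ and $\Merge_\preceq$ each have zero delay, yet their composition lags $\Tok_\preceq$ by more than $\lookback-2$. So state the invariant as coverage and finish with \Cref{clm:coverage}: a stable $z$ with $|\raw(z)|\geq|x|-k$ contains $\Tok_\preceq(x[:-k])$, because both are prefixes of $\Tok_\preceq(x)$ and $|\raw(\Tok_\preceq(x[:-k]))|\leq|x|-k$.
\item Make explicit that the pretokenizer emits the accepted prefix of the live pretoken; otherwise one long pretoken is never covered. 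Also, the bound $i-\last\leq\lookback(R)$ while the state is live needs the lookback property iterated along a chain of accepted prefixes (\Cref{lem:live-lookahead}).
\item Your induction on merge priority is correct. Per merge, only the cut at the right end of the last common token can be decided differently, and the next token has length at most $\longest$. This gives a self-contained substitute for the Berglund et al.\ bound the paper imports (\Cref{lem:merge-prec-progress}). The paper, however, uses the Jiang--Gong eager-output algorithm as a black box and only bounds its lag. Your windowed variant would still need a concrete way to detect newly stable cuts within $O(\log^2\longest)$ amortized time.
\end{itemize}
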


\begin{remark}As part of our algorithm for \Cref{thm:linear-online-tokenization-w-delay}, we design a simple, novel linear time algorithm for online pretokenization (see \Cref{lem:pretokenize-alg}).
In contrast, the previous algorithm of \citet{li2026static} incurs preprocessing time and space $2^{\Theta(\lookback(R))}$.
In particular, we achieve online pretokenization without delay in linear time for all regular expressions under the longest-match convention, whereas the previous algorithm was only efficient when $\lookback(R) = O(\log n)$ is small.
\end{remark}
\begin{remark}
    While we have efficient algorithms for online BPE \citep{jiang2025incrementalbpe} and pretokenization via \Cref{lem:pretokenize-alg}, \Cref{thm:np-delay-lb} still establishes that online tokenization (with no delay) is $\NP$-hard.
    Our algorithm for online pretokenization has $0$ delay.
    Known algorithms for online BPE \citep{jiang2025incrementalbpe} have delay $O(\merges \longest)$ (see proof of \Cref{lem:bpe-merge-alg}) which suffices for our purposes.
    We leave as an interesting open question whether zero delay is achievable for online BPE.
\end{remark}

While our first algorithm does not achieve optimal delay, we show that its space is optimal (up to word size).
In particular, $\Theta(\lookback(R))$ bits is necessary for online tokenization.

\begin{restatable}[Space Lower Bound]{theorem}{SpaceLB}
    \label{thm:online-tokenization-space-lb}
    For every integer $f$, 
    there exists $R$ where $\lookback(R) = 2f$ and online tokenization requires $\Omega(\lookback(R))$ bits in space.
\end{restatable}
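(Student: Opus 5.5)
We use a fooling-set (communication-style) argument. The plan is to build a regular expression $R_f$ with $\lookback(R_f) = 2f$, with $S^* = \emptyset$ and $M$ empty, so that tokenization coincides with pretokenization. We then exhibit $2^{\Omega(f)}$ input prefixes that any online algorithm must distinguish. Since the algorithm's entire state after reading a prefix must determine everything it emits on every future continuation, two prefixes that force different emissions on some common suffix must lead to different memory states. The number of bits is therefore at least $\log_2$ of the size of such a family.

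\textbf{Construction.} Take $\Sigma = \{a,b,c\}$ and let $R_f$ (made total by adding single characters as an alternative) accept the following strings:
\begin{itemize}
\item strings $w c$ with $w \in \{a,b\}^f$;
\item strings $w w'$ with $w, w' \in \{a,b\}^f$, subject to a constraint such as $w' = w$, chosen so that whether a long match exists depends on all $f$ bits of $w$.
\end{itemize}
First we compute $\lookback(R_f)$ and tune the language so that it equals exactly $2f$; for instance, the longest accepted strings have length about $2f$ and their only accepted strict prefixes are single characters. Adjusting lengths by padding to hit exactly $2f$ is routine.

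\textbf{Fooling set.} Consider the prefixes $x_w = w$ for $w \in \{a,b\}^f$. After reading $x_w$, no pretoken is stable yet, because the longest match is still undetermined. For $w \neq u$, append the suffix $y = w$. On $x_w y = ww$ the longest match is the whole string, a single pretoken. On $x_u y = uw$ there is no long match, so the output splits into single-character pretokens, which differ from the first output already in the first pretoken.

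The argument must be about what is emitted after further input, not about the emissions on $x_w$ and $x_u$ themselves, which are identical. Suppose the algorithm's states after $x_w$ and after $x_u$ were equal. Feeding $y$ followed by a terminator (such as $c$, which forces stability) would then produce identical outputs, since any outputs already emitted on $x_w$ and $x_u$ were equal and empty. This contradicts the differing first pretoken. Hence there are at least $2^f$ distinct states, i.e.\ $f = \Omega(\lookback(R))$ bits.

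The main subtlety is making this rigorous for algorithms with delay, not just exact online tokenization. At the end of the suffix we append enough characters (the delay, a constant) that force finalization, so the first pretoken must have been emitted by then. Because $M = \emptyset$ and $S^* = \emptyset$, the delay bound of \Cref{thm:linear-online-tokenization-w-delay} reduces to $\lookback(R)$, and the argument only needs the outputs to eventually differ. The other point needing care is designing $R_f$ so that its lookback is exactly $2f$ while it still encodes the equality test. If the equality language is awkward, the fallback is $R_f = \{w c : w \in L\} \cup \{\text{single chars}\}$ for an arbitrary $L \subseteq \{a,b\}^{2f}$; a regex of exponential size is acceptable since the claim only concerns $\lookback$. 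The fooling set is then prefixes of $L$-words that must be remembered exactly.
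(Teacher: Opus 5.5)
There is a genuine gap. Your construction takes $M = \emptyset$ on the grounds that ``tokenization coincides with pretokenization.'' Under the paper's definitions it does not: $\Merge$ is applied to each pretoken and the results are \emph{concatenated}, so pretoken boundaries never appear in the token output. With $S^* = \emptyset$ and no merges, $\Tok(x) = x$ for every $x$, and hence $\Tok_{\preceq}(x) = x$. The paper notes the same phenomenon in its composition-delay example, where no merge ever fires and $\Tok_{\preceq}(y) = y$. So echoing each character as it arrives is an exact online tokenizer using $O(1)$ space, and your fooling set collapses. After reading $x_w$ the algorithm has already emitted $w$, not the empty string. On $ww$ versus $uw$ the token outputs differ only in the part emitted before the suffix. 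Your claim that the outputs ``differ already in the first pretoken'' refers to $\brk$ boundaries, which are invisible in $\Tok$. The fallback language $\{wc : w \in L\}$ inherits the same defect, since it also uses $M = \emptyset$.

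The missing idea is a merge list whose merges fire only inside the long pretoken, so that the choice between ``one long pretoken'' and ``single-character pretokens'' is visible in the tokens. The paper takes $R = (aa \mid bb)^f c \mid a \mid b \mid c$ and $M = [aa \mapsto A,\ bb \mapsto B]$.
\begin{itemize}
\item For $x \in (aa \mid bb)^f$, the empty continuation gives $\Tok(x) = x$, while the continuation $c$ gives $\Tok(xc) \in (A \mid B)^f c$. These share no common prefix, so $\Tok_{\preceq}(x) = \epsilon$.
\item Once $c$ is read, the first pretoken $xc$ is closed, because no word of $L(R)$ is longer than $2f+1$. The algorithm must therefore emit the full $(A \mid B)^f c$, which encodes $x$.
\item The final character is the same for all $2^f$ inputs, so the state after $x$ must take $2^f$ distinct values, giving $\Omega(f)$ bits.
\end{itemize}
This construction also gives lookback exactly $2f$ with a regex of size $O(f)$. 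Your $ww$ language has lookback $2f-1$ and needs an exponential-size expression. The discussion of delay is unnecessary, since the statement concerns exact online tokenization. Your fooling-set framework is the right one and matches the paper's once the merges are added.
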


While our algorithm requires $O(\lookback(R) + \merges \longest)$ space, we note that $\merges \longest$ is a constant independent of the input $x$, while $\lookback(R)$ can be arbitrarily large (see e.g. $R = a^* b$ above).
We note that \Cref{thm:bpe-space-lb} shows that $\Omega(\merges )$ bits of space are necessary for online BPE, and therefore online tokenization, essentially showing that our space usage is optimal (up to a $\longest$ factor and word size).

Finally, we discuss delay-optimal algorithms for online tokenization.
Note that \Cref{thm:linear-online-tokenization-w-delay} incurs delay $O(\lookback(R) + \merges \longest)$\footnote{Note that the size of $S^*$ is a constant independent of all other parameters. We focus on $R$ and $\merges $ parameters here as $S^*$ is typically not a bottleneck in practice, and we set $S^* = \emptyset$ in the construction of \Cref{thm:np-delay-lb}} and thus incurs additional $O(\lookback(R))$ delay over the lower bound of \Cref{thm:np-delay-lb}.
We show that this is not an artifact of our analysis: composition alone will always incur delay $\Omega(\lookback(R))$ on some instances.

\begin{restatable}[Composition Delay Bound]{theorem}{CompositionDelayLB}
    \label{thm:composition-delay-lb}
    For every integer $f \geq 2$ 
    there exists alphabet $\Sigma$, a total regular expression $R$ over $\Sigma^*$, merge list $M$ and some input $x$ where $\lookback(R) = f$ and
    \begin{equation*}
        \Merge_{\preceq}(\Split_{\preceq}(\Seg_{\preceq}(x))) \prec \Tok_{\preceq}(x[:-(f - 2)]) \text{.}
    \end{equation*}
\end{restatable}

Thus, if we want to obtain a delay-optimal algorithm, it cannot be from composition alone.
Using these insights, our main result establishes an algorithm that achieves near linear time, optimal space, and optimal delay.

\begin{restatable}[Online Tokenization with Constant Delay]{theorem}{MTDelay}
    \label{thm:online-tokenization-const-delay}
    Let $S^* \subset \Sigma^*$ be a set of special tokens, $R$ a total regular expression over $\Sigma$, and $M$ a merge list.
    There is an online tokenization algorithm with preprocessing time $O(\merges  \longest)$, total time $O(n (\log^2 \longest + \log \merges  + \log \min(n, \lookback)))$ and space $O(\min(n, \lookback) + \merges \longest)$ with $O(\merges \longest)$ delay.
\end{restatable}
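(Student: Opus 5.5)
Our plan is to keep the composition pipeline of \Cref{thm:linear-online-tokenization-w-delay} for special tokens and BPE, and replace the lookback-dependent part of the delay with a speculative mechanism. The $\lookback(R)$ delay of composition comes from a pretoken that has not yet been closed by $\Split_{\preceq}$. Even so, its eventual boundary must lie at one of the positions where the current suffix is in $L(R)$, namely the accepting positions of the DFA run since the last committed cut. The key observation is that BPE is local: if the pretoken will end at a candidate cut $c$ far enough back, then the tokens covering a prefix of it are already determined. So we maintain, for the open pretoken, the set $C$ of candidate cuts. These are the accepting positions of the longest-match DFA that remain viable; there are at most $\min(n,\lookback)$ of them. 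Our target is to output the BPE tokens of the open region that are agreed upon by every candidate completion.

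\textbf{Step 1 (stability characterization).} We show that a token is in $\Tok_{\preceq}(x)$ iff it is emitted by $\Merge$ on every pretoken prefix $x[s:c]$ with $c\in C$, plus any continuation. We then relax this to a computable sufficient condition. Using the BPE locality from the proof of \Lem{bpe-merge-alg}, a token ending at position $p$ of a string is fixed once $O(\merges\longest)$ further characters are known. Hence on the open pretoken we may safely emit every token ending before $\min C - O(\merges\longest)$ that is consistent across candidates. Tokens lying between $\min C$ and the frontier are withheld.

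\textbf{Step 2 (delay analysis).} We compare against $\Tok_{\preceq}(x[:-k])$ with $k=O(\merges\longest)$. Any token stable for $x[:-k]$ must be produced consistently for all cuts viable at time $n-k$, and the candidate set only shrinks as input arrives. The locality argument then shows our emission covers it. The delicate case is when $\min C$ is far back but all far candidates yield identical token prefixes. The exact consistency check is NP-hard in general by \Cref{thm:np-delay-lb}. We avoid it by comparing only incremental BPE states anchored at each candidate cut. Our claim is that BPE outputs of $x[s:c]$ and $x[s:c']$ agree on every token ending before $\min(c,c')-O(\merges\longest)$. Consequently, emitting up to $\max C - O(\merges\longest)$ after checking agreement among the candidates is safe, and it is within $O(\merges\longest)$ of optimal.

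\textbf{Step 3 (data structures and running time).} We run one incremental BPE instance (\Lem{bpe-merge-alg}, $O(\log^2\longest)$ amortized per character) on the open pretoken. At each new candidate cut we snapshot its trailing $O(\merges\longest)$-window state, and we store the candidates in a balanced structure keyed by position, giving the $\log\min(n,\lookback)$ term. Candidates are pruned when the DFA dies, and stale snapshots are discarded once they fall behind the emitted frontier, so space stays $O(\min(n,\lookback)+\merges\longest)$. The $\log\merges$ term comes from the merge-priority lookups at the boundary reconciliation, and the preprocessing builds the BPE tables in $O(\merges\longest)$. Special tokens are handled as in \Cref{thm:linear-online-tokenization-w-delay} with $O(\ell_\infty)$ extra delay, which is a constant.

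The main obstacle is Step 2: proving that agreement among candidate cuts can be decided by a local window comparison rather than a global one, while charging every comparison to a character so the total time stays near-linear.
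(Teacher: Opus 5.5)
\textbf{There is a genuine gap, and it is the step you flag yourself.} Step 2 is the whole content of the theorem, and the proposal gives no mechanism for it.

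What Steps 1--2 do establish is no better than composition. Under longest match, the live pretoken starting at $b(i)+1$ never ends at an accepting position other than the last one, $\ell(i)$. For every continuation it ends either at $\ell(i)$ or strictly after $i$. So emitting tokens of $x_{b(i)+1}\dots x_{\ell(i)}$ that are stable under right extension recovers only $\Merge_{\preceq}(\Split_{\preceq}(x))$. That has delay $\lookback+O(\merges\longest)$ in the worst case (\Cref{thm:composition-delay-lb}).

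The region that must be handled is $x_{\ell(i)+1}\dots x_i$. There the branching is not over the end position of one pretoken with a fixed start. It is over which pretoken $z_k$ of the current split $z_{-p},\dots,z_{\ell}$ is the first one the continuation extends, and these alternatives have different starts. Your same-start comparison of $\Merge(x[s:c])$ with $\Merge(x[s:c'])$ therefore does not touch the hard case.

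Step 3 as written also misses the stated bounds:
\begin{itemize}
\item Keeping an $O(\merges\longest)$ window snapshot for each of up to $\min(n,\lookback)$ live candidates can take $\Theta(\min(n,\lookback)\cdot\merges\longest)$ space.
\item Copying a window at every new accepting position can take $\Theta(n\merges\longest)$ time.
\end{itemize}

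\textbf{The missing idea} is that no enumeration of candidates is needed. Extend the output one token at a time and use the inductively maintained stability of the emitted prefix $P$, where $\raw(P)=x_{\leq\out}$. Because $P$ is stable, $\out$ is a surviving token boundary under every continuation. So the next token is always $\phi$ of the rest of the pretoken containing position $\out+1$, and only two cases arise:
\begin{itemize}
\item No pretoken at or before $z_0$ extends. Then that pretoken is $z_0$ itself, ending at $\cut=r(0)$, and the next token is $\theta=\phi(x_{\out+1}\dots x_{\cut})$.
\item Some $z_k$ with $k\leq 0$ extends. Its new match must end strictly after $i$, since otherwise longest match would already have taken it. So the part of that pretoken after $\out$ begins with $x_{\out+1}\dots x_i$. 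By \Cref{clm:first-token-equal}, its next token is $\tau=\phi(x_{\out+1}\dots x_i)$ whenever $i-\out\geq 3\merges\longest$, independently of the continuation.
\end{itemize}

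The algorithm then proceeds as follows:
\begin{itemize}
\item If $\theta=\tau$, emit the token.
\item If $\theta\neq\tau$, first flush the dead pretokens so that $\out\geq b(i)$. The live state $\delta(\start,x_{b(i)+1}\dots x_i)\notin\dead$ then gives a continuation realizing $\tau$, so $P$ is exactly maximal.
\item If $i-\out<3\merges\longest$, the delay bound is immediate.
\end{itemize}

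Each emission thus costs one equality test. It is supported by $\incbpe$ (first token of a pretoken after popping up to a token boundary), $\bpequeue$ (first token of a sliding window of length at least $3\merges\longest$) and $\cutQ$ (maintaining $r(0)$ and $b(i)$). The $\log\merges$ and $\log\min(n,\lookback)$ terms come from the costs of $\incbpe$, not from a search structure over candidate cuts.
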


In particular, we characterize (up to constant factors) for what delay efficient online tokenization is possible.
When delay is $O(\merges \longest)$, there is a near linear time algorithm.
On the other hand, when delay is $o(\merges \longest)$, there is no polynomial time algorithm (unless $\P = \NP$).

%% file: empirical.tex
We implement a practical variant and obtain an efficient tokenizer that streams an input file of characters to an output file of tokens.
We evaluate our tokenizer against \python{tiktoken} and \python{tokenizers} as a drop-in replacement on code (Github), English, and Chinese datasets.
See \Cref{sec:experimental-setup} for experimental setup, dataset composition, as well as implementation details and optimizations.
Our code is available 
\ificlrfinal
  at \url{https://github.com/morganstanley/MSML/tree/main/papers/Online_Tokenization_with_Optimal_Delay_in_Near_Linear_Time}.
\else
  in the supplementary material.
\fi
Our experiments reveal that our implementation exhibits a significant advantage over all evaluated baselines with respect to latency and memory consumption. 
We also describe some simple optimizations that lead to significant throughput advantages over \python{tiktoken} and \python{tokenizers}.

\subsection{Latency Benchmark}

We measure latency as time to first token (TTFT).
Since \python{tiktoken} and \python{gigatoken} do not natively support TTFT measurement, we make minimal changes to their code base to allow TTFT measurement, which is measured as the time from when the input is provided to the tokenizer to the time that the first token is formed.
\Cref{tbl:ttft-speedup} shows that our tokenizer achieves $7.9\times$-$34.8\times$ TTFT advantage over the best of \python{tiktoken} and \python{gigatoken} for every encoding and dataset, exhibiting the significant latency advantage of online tokenization.

\begin{table}[ht]
\caption{Time to first token (TTFT) speedup of our tokenizer over the better of \python{tiktoken} and \python{gigatoken} on each corpus.
}
\label{tbl:ttft-speedup}
\begin{center}
\begin{tabular}{|l|c|c|c|}
\hline
{\bf Encoding} & {\bf Github} & {\bf English} & {\bf Chinese} \\
\hline
\python{r50k} & 21.7$\times$ & 17.8$\times$ & 14.3$\times$ \\
\python{p50k} & 19.2$\times$ & 20.3$\times$ & 13.8$\times$ \\
\python{cl100k} & 34.8$\times$ & 11.1$\times$ & 7.9$\times$ \\
\python{o200k} & 17.8$\times$ & 11.8$\times$ & 12.5$\times$ \\
\hline
\end{tabular}
\end{center}
\end{table}

\subsection{Memory Benchmark}

We measure peak RSS (Resident Set Size) as input size increases.
Our implementation exhibits approximately constant peak RSS over the range of input sizes, while existing tokenizers retain the full input or output in memory.
See \Cref{fig:github-peak-rss} for Github, \Cref{fig:english-peak-rss} for English, and \Cref{fig:chinese-peak-rss} for Chinese.
Our tokenizer reports lower memory usage than \python{gigatoken} across all encodings, datasets, and input sizes.
Compared to \python{tiktoken}, our tokenizer reports higher memory usage on small inputs due to initialization overhead but lower memory usage on large inputs.
The crossover point is between $2$ and $32$ MiB-sized inputs depending on the encoding and dataset.

\subsection{Throughput Benchmark}

Finally, \Cref{tbl:combined-speedup} exhibits the throughput comparison of our tokenizer against \python{tiktoken} and \python{tokenizers} baselines.
Depending on the encoding and dataset, our tokenizer exhibits a throughput advantage of $2.4\times$-$7.0\times$ over \python{tiktoken} and $15.8\times$-$38.1\times$ over \python{tokenizers}.

Our main optimization replaces the default regular expression of each encoding with a simplified regular expression that allows for faster matching, coupled with a simple, efficient cleanup step that ensures equivalence with the reference regular expression. 
The optimizations of our streaming
algorithms (e.g. simplified regular expressions) are \emph{provably} verified in \Cref{sec:experimental-setup}.
Due to their simplicity, the optimizations are portable across different encodings, achieving consistent speedups for all evaluated tokenizers that fall within the framework considered in this work. 
See \Cref{sec:experimental-setup} for implementation details and practical optimizations.

\begin{table}[ht]
\caption{Throughput speedup of our tokenizer over \python{tiktoken}
(top) and \python{tokenizers} (bottom).
}
\label{tbl:combined-speedup}
\begin{center}
\begin{tabular}{|l|c|c|c|}
\hline
{\bf Encoding} & {\bf Github} & {\bf English} & {\bf Chinese} \\
\hline
\python{r50k} & 3.9$\times$ & 2.7$\times$ & 2.5$\times$ \\
\python{p50k} & 4.2$\times$ & 2.7$\times$ & 2.4$\times$ \\
\python{cl100k} & 4.4$\times$ & 2.9$\times$ & 2.8$\times$ \\
\python{o200k} & 7.0$\times$ & 4.8$\times$ & 3.7$\times$ \\
\hline
\python{GPT2} & 38.1$\times$ & 36.0$\times$ & 32.2$\times$ \\
\python{Mistral} & 32.8$\times$ & 27.9$\times$ & 23.1$\times$ \\
\python{Llama4} & 31.3$\times$ & 24.2$\times$ & 19.8$\times$ \\
\python{Qwen-3} & 19.2$\times$ & 19.7$\times$ & 15.8$\times$ \\
\hline
\end{tabular}
\end{center}
\end{table}

\paragraph{Recent Work}
Recently, the \python{gigatoken} tokenizer presents a parallelized tokenizer that achieves significant throughput speedups over previous tokenizers \citep{roed2026gigatoken}.
While \python{gigatoken} achieves high throughput, our implementation, due to its practical streaming design, achieves lower measured TTFT and approximately constant peak RSS over the tested input sizes.
We remark that some optimizations of \python{gigatoken} are compatible with online tokenization, and observe that a combination of the two can achieve improved throughput for some encodings and datasets, especially on the Chinese dataset (see \Cref{sec:gigatoken} for details).\footnote{Some \python{gigatoken} optimizations, such as more aggressive caching of observed pretokens can introduce additional
memory usage that grows with the number and size of distinct pretokens
encountered.}
\citet{verma2026component} give an online BPE algorithm with total time $O(n \log \longest)$, improving upon the $O(n \log^2 \longest)$ of \citet{jiang2025incrementalbpe}.
This immediately implies a runtime improvement saving a $\log \longest$ factor to our composition algorithm (\Cref{thm:linear-online-tokenization-w-delay}).

%% file: iclr/iclr_overview.tex
We now describe the main ideas behind our algorithms.

\paragraph{Efficient Online Pretokenization}
We begin with \Cref{thm:linear-online-tokenization-w-delay}.
Our algorithm is a simple composition of online segmentation, pretokenization, and BPE.
Thus, in order for composition to be efficient, we need efficient implementations of each online component.
Given the efficient online BPE algorithm \citep{jiang2025incrementalbpe}, it suffices for us to design a linear time online pretokenization algorithm with zero delay.

\begin{lemma}[Online Pretokenization (Informal \Cref{lem:pretokenize-alg})]
    \label{lem:pretokenize-informal}
    There is an algorithm that computes online pretokenization in $O(n)$ time and $O(\lookback(R))$ space.
\end{lemma}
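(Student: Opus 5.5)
We would prove the lemma by running a single DFA for the pretokenization regex in a left-to-right maximal-munch scan, while keeping only a bounded window of characters in memory. First we compile $R$ into a DFA $D$ for $L(R)$; its size depends only on $|R|$, which the paper treats as a constant. Let $p$ be the start of the current (unemitted) pretoken inside the current segment. As each character arrives we advance $D$ from $p$, and we record $\last$, the largest end position $e$ such that $x[p:e] \in L(R)$.

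The invariant we want is the following. The current pretoken is determined once $D$ reaches a dead state, i.e.\ once no extension of $x[p:]$ can lie in $L(R)$. At that point the longest match is exactly $x[p:\last]$, independently of future input, so it is stable and we emit it. Before reaching a dead state, no boundary after $p$ is stable: the longest match can still be extended by some suffix, and totality of $R$ guarantees a nonempty match always exists. Hence, at that point, $\Split_{\preceq}(x)$ is exactly the list of already emitted pretokens. This gives zero delay.

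When a dead state is reached we set $p \gets \last$ and rescan from $\last$. A segment end (arriving from the online $\Seg$ stage) is handled the same way: it acts as a forced dead state, after which everything remaining is emitted greedily.

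The main obstacle is time. A naive rescan after each emission can reread characters many times, which is the classic $O(n^2)$ pitfall of maximal munch; the remedy is the linear-time technique of \citet{reps1998maximalmunch}. We memoize failed (state, position) pairs: if the scan from state $q$ at position $i$ has already been shown to reach a dead state with no accepting state after $i$, that pair is marked. Any later scan reaching $(q,i)$ stops immediately and uses its current $\last$. Each pair is marked at most once, so the total work is $O(|D| \cdot n) = O(n)$ for fixed $R$.

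For the online setting, a scan that has not yet died simply suspends and resumes when more characters arrive. The memo table must also be windowed: we only need to store the marks for positions in the current window.

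For space, we use $\lookback$. Between emissions we hold the window $x[p:i]$. We want to show that whenever $i - \last > \lookback(R)$, the scan from $p$ must already be dead, or that the stretch $x[\last:i]$ cannot be needed. Concretely: if some string $x[p:e] \in L$ with $e > \last + \lookback$ existed, then by the definition of lookback it would have a prefix in $L$ of length at least $|x[p:e]| - \lookback > \last - p$. That prefix ends in $(\last, i]$, contradicting the maximality of $\last$.

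Here is where we expect care is needed. The argument must be iterated along a chain, and it must account for lookahead that has not yet produced any accepting position. If the window $x[p:i]$ grows beyond $O(\lookback)$ past $\last$ while still alive, repeated application shows that some accepting position must appear, which moves $\last$ forward. So only $O(\lookback(R))$ characters past $\last$, plus the memo bits for them, need to be retained. Characters before $\last$ that belong to the pending match can be retained only implicitly, as a start index and a DFA state, and re-read later from the buffered window.

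We expect nailing down exactly what must be buffered — the rescan characters from $\last$ onward — to be the delicate step. We would resolve it by showing the rescan region always lies within $[\last, i]$, which has length at most $\lookback(R)$.
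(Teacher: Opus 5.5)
There is a genuine gap in your emission rule, and it breaks both the zero-delay claim and the $O(\lookback(R))$ space bound. You assert that before the DFA dies, $\Split_{\preceq}(x)$ is exactly the list of already emitted complete pretokens. Under the paper's definition this is false. The output is a list of strings updated incrementally, so a stable prefix may end in the middle of a pretoken. Once the scan from $p$ has passed an accepting state at $\last$, we have $x[p:\last]\in L(R)$. Hence for every continuation $y$, the longest match starting at $p$ in $xy$ has $x[p:\last]$ as a prefix, and $\Split_{\preceq}(x)$ ends with the partial pretoken $x[p:\last]$ (this is the form $P_i \brk x_{b(i)+1}\dots x_{\ell(i)}$ in the proof of \Cref{lem:pretokenize-alg}).

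Concretely, take $\Sigma=\{a\}$ and $R=a^*$, so $\lookback(R)=1$, with input $a^n$. The DFA never reaches a dead state, so your algorithm emits nothing, whereas $\Split_{\preceq}(a^n)=a^n$. The delay is therefore $n$, not $0$. Moreover, those characters must eventually be printed, so you have to store them verbatim. The window $x[p:i]$ you hold then has size $\Theta(n)$ rather than $O(1)$. Your remark that characters before $\last$ can be kept ``only implicitly, as a start index and a DFA state, and re-read later'' does not rescue this. A start index and a state cannot reproduce the characters. In fact nothing at or before $\last$ ever needs re-reading, since the rescan begins right after $\last$.

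The fix is what the paper does in \Cref{step:acc-prefix} of its pretokenization algorithm. Whenever $\last$ advances, immediately emit the newly accepted characters, and emit $\brk$ only upon reaching a dead state. The retained buffer is then exactly the stretch after $\last$, which coincides with the rescan region, and your chain argument (the paper's \Cref{lem:live-lookahead}) bounds it by $\lookback(R)$.

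With that change, your time analysis is a legitimate alternative to the paper's. The paper never rescans. It stores the buffer in the two-stack $\cqueue$, whose entries carry the composed maps $q\mapsto\delta(q,w)$ and $q\mapsto\last(q,w)$. After popping up to $\last$, it reads off $\delta(\start,\cdot)$ and $\last(\start,\cdot)$ for the remaining buffer in amortized $O(1)$.

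Your windowed Reps-style memoization instead rescans but charges each (state, position) pair at most once. The failure marks are sound online because they are set only when a dead state is reached, which no future input can undo. Marks behind the current start can be discarded, giving $O(|\calQ|\,\lookback(R))$ bits. To make the amortization airtight, state two further facts. Unmarked visits occur only at positions up to the scan's final $\last$, which are emitted and never rescanned. And a scan stopping at a marked pair must itself mark the pairs it visited after its $\last$.
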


Given a string $x$, let $F(x)$ denote the longest prefix of $x$ that matches $R$.
For simplicity, suppose $R$ can be recognized by a DFA $\calA$ of size $\poly(|R|)$ (in \Cref{lem:pretokenize-alg} we describe how to handle NFAs).
Let $\calQ$ denote the set of states and $\accept$ the accepting states of $\calA$.
For a string $w \in \Sigma^+$, let $\delta(q, w)$ denote the state computed if we begin at state $q$ and read $w$, and let $\last(q, w)$ be the length of the longest prefix $w' \preceq w$ such that $\delta(q, w') \in \accept$; i.e. reading $w'$ from $q$ ends up at an accepting state.
A state $q \in \calA$ is $\dead$ if there is no $y \in \Sigma^+$ such that $\delta(q, y) \in \accept$.\footnote{An accepting state can be $\dead$, for example, if every transition leads to a $\dead$ state.}

Consider the following simple algorithm: 
initialize the variable $\state$ to $\start$ and on input $x_i$ update $\state \gets \delta(\state, x_i) = \delta(\start, x_{\leq i})$.
Whenever $\state \in \accept$, we know that the current prefix $x_{\leq i} \preceq F(x)$.
That is, $x_{\leq i}$ is a prefix of the longest prefix of $x$ that matches $R$, so we can emit $x_{\leq i}$ and set $\last \gets i = \last(\start, x_{\leq i})$.
Whenever $\state \in \dead$, we know that the last accepting prefix (denoted $x_{\leq \last}$) is maximal and therefore $F(x) = x_{\leq \last}$.
Since we have already emitted $x_{\leq \last}$, we can emit $\brk$, and restart the algorithm on input $x_{\last + 1}$. 

We now analyze the performance of this simple algorithm.
On each input $x_{i}$, our algorithm needs to maintain (1) $\state = \delta(\start, x_{\leq i})$ to check whether $\state \in \accept$ is accepting or $\state \in \dead$ is dead and (2) $\last = \last(\start, x_{\leq i})$ to track the longest accepting prefix so far.
By restarting the algorithm at $x_{\last + 1}$, naively scanning the input again requires $O(i - \last)$ time.
Since every accepting path through the DFA reaches an accepting state once every $\lookback := \lookback(R)$ steps we have $i \leq \last + \lookback$ (\Cref{lem:live-lookahead}).
Thus, our algorithm backtracks by at most $\lookback$ steps on every output, yielding a running time of $O(n\lookback)$.
If we could instead recompute $\delta(\start, x_{\last + 1} \dots x_{i})$ and $\last(\start, x_{\last + 1} \dots x_{i})$ efficiently, then we would have an efficient algorithm for online pretokenization.

\citet{li2026static} address this by precomputing $\delta, \last$ for all possible strings of length $\lookback$, resulting in preprocessing time and space that is exponential in $\lookback$.
In this work, we instead give a simple data structure $\cqueue$ that supports exactly the required operations.

\begin{restatable}{lemma}{ConcatQueue}
    \label{lem:concat-queue}
    There is a data structure $\cqueue$ supporting the following operations on a string $w$:
    \begin{enumerate}
        \item (Push) Append a character $c$ to $w$ i.e. update $w \gets wc$.
        \item (Pop) Pop a prefix of $w$, i.e. return $w_1 \dots w_{p}$ and update $w \gets w_{p + 1} \dots w_{\ell}$ for some $p > 0$.
        \item (Query) Given a state $q$, compute $\delta(q, w)$ and $\last(q, w)$.
    \end{enumerate}
    Furthermore, the data structure supports the operations in amortized $O(1)$ time.
\end{restatable}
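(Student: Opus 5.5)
This is the classical two-stack queue, adapted to an associative "summary" of strings. For a DFA with state set $\calQ$, represent a string $u$ by its transition function $T_u:\calQ\to\calQ$ together with the map $L_u:\calQ\to\{-1,0,\dots,|u|\}$, where $L_u(q)=\last(q,u)$ and $-1$ means no accepting prefix. Since $\calQ$ has constant size $\poly(|R|)$, each summary takes $O(1)$ space.

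\textbf{Summaries compose.} Summaries of $u$ and $v$ combine into the summary of $uv$ in $O(1)$ time:
\begin{equation*}
    T_{uv}=T_v\circ T_u,\qquad L_{uv}(q)=\begin{cases}|u|+L_v(T_u(q)) & \text{if } L_v(T_u(q))\ge 0,\\ L_u(q) & \text{otherwise.}\end{cases}
\end{equation*}
The rule is correct because the longest accepting prefix of $uv$ from $q$ either extends into $v$ or lies entirely inside $u$. The operation is associative, since it is just the summary of concatenation, and a single character $c$ has summary $T_c(q)=\delta(q,c)$ and $L_c(q)\in\{1,-1\}$ according to whether $\delta(q,c)\in\accept$. (Whether the empty prefix counts as accepting is a boundary convention and can be fixed afterward.)

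\textbf{The data structure.} Keep $w$ as a front part $F$ followed by a back part $B$.
\begin{itemize}
    \item The back $B$ is stored as an array of characters together with its running summary $\sigma_B$. A push of $c$ appends $c$ and sets $\sigma_B\gets\sigma_B\cdot\sigma_c$, in $O(1)$ time.
    \item The front $F=f_1\dots f_m$ is stored as an array with suffix summaries $S_j=\sigma(f_j\dots f_m)$.
    \item A query with state $q$ returns $T_{F B}(q)$ and $L_{FB}(q)$, computed by combining $S_{\mathrm{head}}$ with $\sigma_B$ in $O(1)$ time.
    \item A pop of $p$ characters first takes characters from the front by advancing the head pointer, which is $O(1)$ per character popped, and the popped characters are also output.
    \item If the front runs out while more characters remain to pop, transfer: move all of $B$ into the front, recompute the suffix summaries right to left in $O(|B|)$ time, and reset $B$ to empty with identity summary.
\end{itemize}

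\textbf{Amortized cost.} Each character is pushed once, moved from back to front at most once, and popped once, and each of these costs $O(1)$. With the potential $\Phi=c\,|B|$, push costs amortized $O(1)$, a transfer is paid for by the drop in potential, and pop costs amortized $O(1)$ per character popped. Since every popped character was pushed earlier, charging pop work to pushes gives amortized $O(1)$ per operation, and $O(n)$ total over $n$ pushes. Space is $O(|w|)$ words.

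\textbf{Main obstacle.} The delicate part is getting the $\last$ composition right, in particular the offsets: $L_F$ is measured relative to the current head, so stored suffix summaries must use lengths $|f_j\dots f_m|$, which stay valid after earlier front characters are popped. The other issue is extending the construction to NFAs, which the paper defers to \Cref{lem:pretokenize-alg}. For NFAs the same scheme works with summaries being relations on $\calQ$ in place of functions, at $\poly(|R|)$ cost per combine, which is still $O(1)$ with respect to $n$.
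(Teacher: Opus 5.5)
Your proposal is correct and is essentially the paper's proof. The paper also uses a push stack and a pop stack (your back and front parts). Each pop-stack entry stores $\delta$ and $\last$ of the suffix beneath it, and when the pop stack runs out, a reversal recomputes these summaries right to left using the same composition identities for $\delta$ and $\last$; the $O(1)$ amortized bound comes from each character being pushed, transferred, and popped at most once. The only differences are cosmetic: you keep a single running summary for the back rather than per-entry prefix summaries, and you transfer even when the entire remaining string is being popped (the paper simply empties both stacks in that case), neither of which affects the bound.
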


Now, to implement the online pretokenization algorithm efficiently, it suffices to observe that the string $x_{\last + 1} \dots x_{i}$ which we want to query only undergoes queue updates.
In particular, we push $x_{i}$ on each input, and whenever $\state \in \dead$, we pop up to $x_{\last}$.
Thus, by supporting the required operations in amortized $O(1)$ time, we obtain a total time $O(n)$ algorithm for online pretokenization.

\paragraph{Composition Delay Bound}
We now argue that composition already achieves non-trivial delay $O(\lookback + \merges \longest)$.
Composition always outputs a stable prefix (and this in fact holds for composition of any two functions $f, g$ see \Cref{clm:composition-is-stable}).
If $x' \preceq g(xy)$ for any continuation $y$, and $z \preceq f(x'y')$ for any continuation $y'$, then $z \preceq f(g(xy))$ for any continuation $y$, since $g(xy) = x'y'$ for some $y'$.

To bound the delay, it suffices to bound $|x| - |\raw(z)|$ where $z$ is the output of our algorithm (see \Cref{clm:coverage}).
Let $z_1 = \Seg_{\preceq}(x)$.
In \Cref{lem:segment-alg}, we show $|\raw(z_1)| \geq |x| - \ell_{\infty}(S^*)$ where $\ell_{\infty}(S^*) = \max_{\tau \in S^*} |\tau|$ is the length of the longest special token. 
Now, let $z_2 = \Split_{\preceq}(z_1)$.
As discussed in the pretokenization algorithm, there are two cases when we can safely emit a stable prefix of $\Split$: either $\state \in \accept$ and we can emit the accepted prefix, or $\state \in \dead$ and we can emit $\brk$ and rewind to the last accept.
Since the DFA must reach a $\accept$ state every $\lookback = \lookback(R)$ steps, $\Split_{\preceq}$ outputs all but the last $\lookback(R)$ characters of $z_1$ (see \Cref{lem:live-lookahead}), and $|\raw(z_2)| \geq |\raw(z_1)| - \lookback$.
Finally, let $z = \Merge_{\preceq}(z_2)$ be the output of our algorithm.

We use the key property of BPE which states that for all inputs $x$,
\begin{equation}
    \label{eq:online-bpe-bound}
    |\raw(\Merge_{\preceq}(x))| \geq |x| - (\merges + 1) \longest \text{.}\footnote{While known online BPE algorithms do not provably compute $\Merge_{\preceq}$, the algorithms nevertheless guarantee that the above bound holds (\Cref{lem:bpe-merge-alg})}
\end{equation}
Combining all three bounds we can upper bound the delay by
\begin{equation*}
    |x| - |\raw(z)| \leq |x| - |\raw(z_2)| + (\merges +1)\longest \leq |x| - |\raw(z_1)| + \lookback + (\merges +1)\longest \leq \ell_{\infty}(S^*) + \lookback + (\merges +1) \longest \text{.}
\end{equation*}

\paragraph{Limitations of Composition}
While composition produces a stable prefix with non-trivial delay, it falls short of an optimal delay algorithm for online tokenization. 
While the additive $\ell_{\infty}(S^*)$ term is an input-independent constant (and usually not a large one in practice), the additive $\lookback(R)$ term can be arbitrarily large.
\Cref{thm:composition-delay-lb} shows that this delay is inherent to composition.

Consider the following simple example of a tokenizer: $R = a^{n} b \mid a \mid b$ and $M = [bb \mapsto B]$; and an input $x = a^{n}$.
On the one hand, $\lookback(R) = n$ and indeed $\Split_{\preceq}(x) = a$, since two possible continuations produce $\Split(x) = a \brk \dots \brk a$ and $\Split(xb) = a^{n} b$ with a longest common prefix of $a$.
On the other hand, since no pretoken contains $bb$, then regardless of the pretokenization, no merge ever occurs. 
For our specific example, we in fact have $\Tok_{\preceq}(y) = y$ for any input $y$.
Concretely, we argue the delay is greater than $n - 2$ as the algorithm outputs $a$ and
\begin{equation*}
    \Tok_{\preceq}(x[:-(n-2)]) = \Tok_{\preceq}(x_{\leq 2}) = \Tok_{\preceq}(aa) = aa \not\preceq  a \text{.}
\end{equation*}

From the above example, we observe that composition will always produce an output $P$ where $\raw(P) \preceq x_{\leq \last}$, where $x_{\leq \last}$ is the longest accepting prefix in the pretokenization.
However, a zero-delay online tokenization algorithm might produce output $P'$ that is strictly longer than $_{\leq \last}$, i.e. $x_{\leq \last} \prec P'$.
This is because different continuations may produce different pretokenization outputs, but they nevertheless can produce the same tokenization at the end.

\paragraph{Optimal Delay Algorithm} %
We are now ready to describe our algorithm with optimal delay.
Suppose our algorithm proceeds inductively, beginning with an empty output which is a stable prefix of all possible outputs.
Consider some stable prefix $P$ of the online tokenization output, i.e. $P \preceq \Tok_{\preceq}(x)$ which we have computed so far.
Since $\Tok_{\preceq}(x) \preceq \Tok(x)$, any stable prefix $P' \preceq \Tok_{\preceq}(x)$ must be a prefix of $\Tok(x)$.
Let $\theta$ be the token following $P$ in $\Tok(x)$, i.e. $P \theta \preceq \Tok(x)$.
Then, $P$ is not a maximum stable prefix if and only if $P \theta \preceq \Tok_{\preceq}(x)$: $\theta$ is the only candidate token to extend $P$ to a longer stable prefix.
It thus suffices to decide whether $P \theta \preceq \Tok_{\preceq}(x)$, equivalently, whether for all $y \in \Sigma^+$ we have $P \theta \preceq \Tok(xy)$.

Recall that $\last = \last(\start, x)$ is the index of the last accepting state in the pretokenization DFA.
Denote $\Split(x) = z_1 \brk \dots \brk z_{p}$ so that $z_1 = x_{1} \dots x_{\last}$\footnote{Note that a different continuation $xy$ may have different outcome $\Split(xy)=z_1\brk z_{2}'\brk \dots \brk z_{p'}'$.} and 
\begin{equation*}
    \Tok(x) = \Merge(z_1) \circ \Merge(z_2) \circ \dots \circ \Merge(z_{p}) \text{.}
\end{equation*}
Let $\out$ be the index of the last output character so that $\raw(P) = x_{1} \dots x_{\out}$.
Let $\last(0) = 0$ and $1 \leq q \leq p$, let $\last(q)$ be the index such that $z_{q} = x_{\last(q - 1) + 1} \dots x_{\last(q)}$.
Note that $\last = \last(1)$.

We now describe our optimal $O(\merges \longest)$ delay tokenization algorithm.
Let $x$ be the current input of length $i$ and $P$ be the output so far (stable by assumption).
In the appendix, we will describe how to use offline tokenization to reduce $\state \in \dead$ to the $\state \not\in \dead$ case. 
Thus, without loss of generality, assume $\state = \delta(\start, x) \not\in \dead$. 
First, if $i - \out \leq 3 \merges \longest$, then we already have that $P$ has delay $O(\merges \longest)$.
Thus, assume $i - \out > 3 \merges \longest$.

Now, since $i - \out > 3 \merges \longest$, we have
\begin{equation*}
    |x_{\out + 1} \dots x_{i} y| \geq |x_{\out + 1} \dots x_{i}| \geq i - \out > 3 \merges \longest \text{.}
\end{equation*}
From \Cref{eq:online-bpe-bound}, we can conclude that $\tau := \phi(x_{\out + 1} \dots x_{i} y)$ is independent of $y$, where $\phi(w)$ denotes the first token in $\Merge(w)$ for any $w$. %
In fact, $\tau = \phi(x_{\out + 1} \dots x_{i} y) \preceq \Merge_{\preceq}(x_{\out + 1} \dots x_{i})$ can be computed with online BPE applied to a suffix of the current input.
Of course, if $\tau = \theta$, then we have $P \theta \preceq \Tok_{\preceq}(x)$ so that we can safely emit $\theta$ and proceed inductively.
On the other hand, if $\tau \neq \theta$, the fact that $\state \not\in \dead$ already tells us that $P = \Tok_{\preceq}(x)$ is a maximum stable prefix, since there is a continuation $y$ such that
\begin{equation*}
    \Tok(xy) = \Merge(xy) = P \circ \Merge(x_{\out + 1} \dots x_{i} y) \succeq P \tau \text{.}
\end{equation*}

To implement the above algorithm efficiently, we need a few additional data structures.
First, we need to compute $\theta$, the next token in $\Tok(x)$, denoted by $\phi(x_{\out + 1} \dots x_{\last(q)})$ where $x_{\last(q - 1) + 1} \dots x_{\last(q)} = z_{q}$ %
is the pretoken in $\Split(x)$ with $\last(q - 1) \leq \out < \last(q)$.
Second, we need to compute $\tau := \phi(x_{\out + 1} \dots x_{i})$, the next token in the event that a future continuation extends the current pretoken match.
To support these operations efficiently, we implement two useful data structures.
Note that both data structures support pushing string characters and $\brk$ to denote the beginning of a new pretoken (pushing $\brk$ does not contribute to the length of the maintained string).

The first data structure $\incbpe$ allows us to compute $\theta$ efficiently.
 \begin{restatable}[Incremental BPE]{lemma}{incBPELem}
    \label{lem:incremental-bpe}
    Given a BPE merge list $M$ with longest token length $\longest$, there is a data structure $\incbpe$ that maintains implicitly an append-only history string $h$, an index $0 \leq I \leq |h|$, and explicitly string $w = h_{> I}$ under the following operations:
    \begin{enumerate}
        \item (Push) append $c$ to $h \gets hc$. Note that we can push $\brk$ to denote a pretoken split.
        \item (Pop) if $I < |h|$, increment $I$ by $1$.
        \item (First Token) if $\Merge(h_{\leq I}) \prec \Merge(h)$, compute $\phi(w)$. 
        Otherwise return $\bot$.
    \end{enumerate}
    The data structure has $O(\merges \longest)$ preprocessing, space $O(\merges \longest + |w|)$, worst-case update time $O(\log^2 \longest + \log \merges)$ and query time $O(\log \longest + \log |w|)$.
\end{restatable}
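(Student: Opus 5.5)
We build $\incbpe$ on top of the online (incremental) BPE algorithm of \citet{jiang2025incrementalbpe}, whose guarantee we use in the form of \Cref{eq:online-bpe-bound}. The history $h$ is maintained only implicitly. Physically we store the suffix $w = h_{>I}$ in a deque, together with an online BPE instance. This instance is fed the characters of $h$ as they are pushed, and it restarts at every pushed $\brk$, since merges never cross pretoken boundaries. Our goal is to identify $\phi(w)$ with the next token of $\Merge(h)$ that is not yet ``consumed'' by the prefix $h_{\leq I}$.

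\textbf{Step 1 (finalized tokens).}
Since $\Merge$ acts independently on each pretoken, $\Merge(h)$ is a concatenation of the per-pretoken merges. Within the current pretoken $u$, online BPE emits the stable prefix $\Merge_{\preceq}(u)$, and \Cref{eq:online-bpe-bound} guarantees that this prefix covers all but the last $(\merges+1)\longest$ characters of $u$. The emitted tokens are final: each appears in $\Merge(h)$ and in $\Merge(hy)$ for every continuation $y$. We keep these finalized tokens in a queue of pairs (token, start offset). The unfinalized tail has length $O(\merges\longest)$; for it we store a single tentative tokenization, obtained by running offline BPE on that tail.

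\textbf{Step 2 (pops and queries).}
A pop increments $I$ and discards the first character of $w$. We also discard every queued token whose raw span ends at or before $I$. The condition $\Merge(h_{\leq I}) \prec \Merge(h)$ says that $I$ lies on a token boundary of $\Merge(h)$. Whenever this holds, $\phi(w)$ is the first token whose span starts at $I$. Queries therefore amount to a lookup in a balanced search structure keyed by start offsets: we check whether $I$ is a boundary and, if so, return the corresponding token, falling back to the tentative tail tokenization when $I$ lies in the tail. This costs $O(\log|w| + \log\longest)$. If $I$ is not a boundary, the query returns $\bot$.

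\textbf{Step 3 (update costs).}
Pushes inherit the $O(\log^2\longest)$ per-character update of the online BPE algorithm, plus $O(\log\merges)$ for a hashed or sorted lookup of merge priorities. Preprocessing consists of building the merge table and token trie, which takes $O(\merges\longest)$. Space is $O(\merges\longest)$ for the dictionary and tail, plus $O(|w|)$ for the deque and token index.

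\textbf{Main obstacle.}
The tail tokenization changes as new characters arrive, so recomputing it naively would cost $\Theta(\merges\longest)$ per push rather than the claimed worst-case bound. The approach I would try first is to adapt the heap-based offline BPE so that the tail is maintained incrementally, with only $O(\log\longest)$ affected positions rescanned per push. If that fails, I would restrict $\phi(w)$ queries to the finalized region. This restriction suffices for the optimal-delay algorithm, because it only queries at offsets more than $3\merges\longest$ behind the input head, which \Cref{eq:online-bpe-bound} places inside the finalized region. Converting the amortized bounds into worst-case ones, via deamortized deques and buffered pops, is the other technical point to verify.
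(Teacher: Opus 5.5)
There is a genuine gap, and it is the one you flag as the ``main obstacle''. Your structure cannot answer a first-token query when $I$ lies in the last $O(\merges\longest)$ characters of the current, not yet $\brk$-terminated, pretoken of $h$. The lemma, however, requires $\phi(w)$ whenever $I$ is a token boundary of $\Merge(h)$, wherever that boundary lies.

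Your proposed incremental repair cannot work, because BPE is not local under appending. In the construction of \Cref{thm:bpe-space-lb}, a single appended character $c_t$ changes all $\Theta(\merges)$ tokens of a string of length $\Theta(\merges\longest)$. So any structure that keeps the tail tokenization \emph{explicitly} must spend $\Omega(\merges)$ time on that push, which exceeds the worst-case bound $O(\log^2\longest+\log\merges)$. The claim that only $O(\log\longest)$ positions are affected per push is false.

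Your fallback of restricting queries to the finalized region weakens the lemma, and it is also insufficient for how the lemma is used. In the optimal-delay algorithm, the history of $\incbpe$ ends at $\cut = r(0)$, not at the input head $i$. The $3\merges\longest$ gap is therefore a property of $\bpequeue$ only. In \Cref{step:while-match}, and in the completeness argument where $\theta = \phi(x_{\out+1}\dots x_{\cut})$, $\incbpe$ is queried with $\cut-\out$ possibly equal to $1$, which is squarely inside your tentative tail.

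The missing idea is to use a different guarantee of \citet{jiang2025incrementalbpe}. You need not its eager stable-prefix output, but its incremental structure. In $O(\log^2\longest)$ worst-case time per push, that structure reports the last token $\lambda(h_{\leq j})$ of the \emph{exact} tokenization $\Merge(h_{\leq j})$ of every prefix.

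Since $\lambda(h_{\leq j})$ depends only on $h_{\leq j}$, the parent pointer $p(j) = j - |\lambda(h_{\leq j})|$ never changes once $j$ is pushed. By \Cref{clm:stable-merge-seq}, the tokens of $\Merge(h)$ before its last token form $\Merge(h_{\leq p(J)})$. Hence the token boundaries of $\Merge(h)$ are exactly the nodes on the path from $J=|h|$ to its root.

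Consequently there is never a tail to recompute. A push inserts one node into the forest on indices $I,\dots,J$, and a pop removes one. The condition $\Merge(h_{\leq I}) \prec \Merge(h)$ becomes ``$I$ is an ancestor of $J$'', and $\phi(w)$ is read off from the child of $I$ on the path toward $J$. The paper maintains this forest with an Euler tour plus an order-maintenance structure for ancestor tests, and finds that child by binary search in $O(\log|w|)$ time.
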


The second data structure $\bpequeue$ allows us to compute $\tau$ efficiently.
\begin{restatable}[BPE Queue]{lemma}{BPEQLem}
    \label{lem:bpe-queue}
    Given a BPE merge list $M$, there is a data structure $\bpequeue$ supporting the following operations on a string $w$:
    \begin{enumerate}
        \item (Push) Append a character $c$ to $w$ i.e. update $w \gets wc$.
        \item (Pop) Pop $w_1$, i.e. return $w_1$ and update $w \gets w_{2} \dots w_{\ell}$.
        \item (First Token) If $|w| \geq 3 \merges  \longest$, then compute $\phi(w)$. Otherwise, return $\bot$.
    \end{enumerate}
    Furthermore, the data structure has $O(\merges \longest)$ preprocessing, $O(\merges  \longest + |w|)$ space, and supports updates and queries in amortized time $O(\log^2 \longest)$. 
\end{restatable}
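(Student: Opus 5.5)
The plan is to build $\bpequeue$ from two online BPE instances arranged as a two-stack queue. Amortization is charged to pushes, and a \emph{locality} property of BPE supplies correctness of the first-token query. The structural fact needed, which is already implicit in \Cref{eq:online-bpe-bound} and the overview, is this: for any $w$ with $|w| \geq 3\merges\longest$, the first token $\phi(w)$ is determined by the prefix $w_{\leq 3\merges\longest}$. More precisely, $\phi(w) = \phi(w')$ whenever $w_{\leq 3\merges\longest} \preceq w'$, and $\phi(w)$ can be read off from $\Merge_{\preceq}(w_{\leq 3\merges\longest})$. This follows from \Cref{eq:online-bpe-bound}: applied to $u = w_{\leq 3\merges\longest}$, it gives $|\raw(\Merge_\preceq(u))| \geq 3\merges\longest - (\merges+1)\longest > 0$, so the stable prefix of $u$ is nonempty and its first token equals $\phi(uy)$ for every continuation $y$. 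Hence answering a query only requires an online BPE state for the first $3\merges\longest$ characters of the current $w$.

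The data structure keeps the string as $w = F \circ B$. The front block $F$ holds at most $3\merges\longest$ characters and carries an online BPE instance (\Cref{lem:bpe-merge-alg}, following \citet{jiang2025incrementalbpe}) fed with $F$. The back buffer $B$ is a plain deque.
\begin{itemize}
\item On \textbf{push} of $c$: if $|F| < 3\merges\longest$ and $B$ is empty, feed $c$ to the online BPE instance; otherwise append $c$ to $B$.
\item On \textbf{query}: if $|w| < 3\merges\longest$, return $\bot$. Otherwise $|F| = 3\merges\longest$ (by the refill invariant below), and we return the first token of the maintained stable prefix. Reading it from the online BPE structure costs $O(\log \longest)$.
\end{itemize}

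The difficulty is \textbf{pop}, since online BPE is append-only. I would handle it by rebuilding, in the style of a two-stack queue. Keep a second \emph{shadow} online BPE instance fed with $F_{2}\dots$ followed by characters moved from $B$. A pop discards $F$'s first character by switching to the shadow, and the shadow is then refilled to length $3\merges\longest$.

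A simpler, fully amortized variant is to rebuild lazily. Mark popped characters with a counter $d$, and keep answering from the old instance while that stays valid. Once $d$ reaches $\merges\longest$, rebuild a fresh instance on $w_{\leq 3\merges\longest}$ of the current string, in $O(\merges\longest\log^2\longest)$ time, and charge this to the $\merges\longest$ pops since the last rebuild. That gives amortized $O(\log^2\longest)$ per pop.

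Validity between rebuilds is the delicate part. After $d$ pops, the answer $\phi(w)$ is the first token of BPE applied to a suffix of the old block, which is not simply the old instance's next token. So I would instead keep the old instance's full token sequence and pop whole tokens. I would try to show that popping one character at a position that is a stable token boundary of the old instance leaves the remaining tokenization unchanged. This is false in general for BPE, so I expect to fall back to plain rebuilding: after each pop, invalidate, and rebuild only at query time if needed, charging rebuild cost to the characters consumed since.

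The main obstacle is getting amortized $O(\log^2\longest)$ for pops interleaved with queries, where each query may need a fresh first token of a shifted window. The plan I would commit to is a sliding-window scheme. After each pop of $w_1$, run $\Merge$ starting at index $2$ by reusing the \emph{incremental} structure of \Cref{lem:incremental-bpe}, whose Pop operation increments $I$ and whose First Token query returns $\phi(h_{>I})$ exactly when the boundary at $I$ is a genuine split. I would realize $\bpequeue$ by inserting a virtual split marker $\brk$ at the current left end. Concretely, push $\brk$ then the window's characters into a fresh $\incbpe$ every $\merges\longest$ pops, so that each rebuild costs $O(\merges\longest\cdot(\log^2\longest+\log\merges))$ amortized over $\merges\longest$ operations. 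Between rebuilds, handle the at most $\merges\longest$ popped characters via $\incbpe$'s Pop. Correctness then reduces to checking that the First Token precondition of \Cref{lem:incremental-bpe} holds once $|w| \geq 3\merges\longest$, which again is the locality bound \Cref{eq:online-bpe-bound}. The $\log\merges$ term is a priority-lookup cost that I would remove by hashing the merge ranks in the $O(\merges\longest)$ preprocessing.
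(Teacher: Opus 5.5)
Your committed sliding-window scheme fails at exactly the step you call the reduction: the First Token precondition of \Cref{lem:incremental-bpe} does not follow from locality. That query returns $\phi(h_{>I})$ only when $\Merge(h_{\leq I}) \prec \Merge(h)$, i.e.\ only when $I$ is a cut in the tokenization of the whole hidden history. Internally it reports the child of $I$ on the path to $J$ in the tree of prefix last tokens, and $J$ is a descendant of $I$ precisely when $I$ is such a cut. After a rebuild that pushes $\brk$ and then the window, single-character pops move $I$ to arbitrary positions, typically strictly inside a token of $\Merge(h)$. For instance, take $M = [aa \mapsto A]$ and history $\brk a^{8}$. One pop leaves $w = a^{7}$ with $|w| \geq 3\merges\longest = 6$ and $\phi(w) = A$. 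Yet $I = 1$ lies inside the first $A$ of $\Merge(h) = AAAA$, so the structure returns $\bot$. \Cref{eq:online-bpe-bound} only says that $\phi$ of a long string is fixed by a long prefix. It says nothing about where the \emph{old} window's tokenization places its cuts, so it cannot make $I$ a boundary. A $\brk$ also cannot be inserted retroactively into an append-only history. You are therefore back to rebuilding after every pop, at cost $\Omega(\merges\longest)$ per pop. This is the same obstruction you correctly diagnosed in the lazy-rebuild variant: after $d$ pops, $\phi(w)$ is the first token of BPE on a suffix of the old block, not a token of the old block's tokenization. Routing it through $\incbpe$ does not remove it.

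The missing idea is directional. Every structure you invoke is anchored at a fixed left end and grows to the right, whereas a queue that pops on the left needs the first token of \emph{every suffix} of the window. The paper obtains this from \Cref{lem:batch-bpe}, which computes $\phi(x_{\geq i})$ for all $i$ in $O(|x|\log^2\longest)$ time by running the algorithm of \citet{jiang2025incrementalbpe} in mirror image:
\begin{itemize}
\item scan right to left with a reversed Aho--Corasick automaton;
\item replace the successor forest and suffix-successor trees by the predecessor forest and prefix-predecessor trees, with centroid decompositions;
\item use a Suffix First Token condition whose priority dominance is non-strict, to respect left-to-right tie-breaking.
\end{itemize}
Given that, keep $w = UV$. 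Whenever $|w| \geq 3\merges\longest$ but $|U| < 2\merges\longest$, reset $U \gets w_{\leq 3\merges\longest}$ and run the batch algorithm on it. A pop then just advances a pointer into the precomputed array $\phi(U_{\geq i})$. A query returns $\phi(U)$, which equals $\phi(w)$ by \Cref{clm:first-token-equal} because $|U| \geq 2\merges\longest \geq (\merges+1)\longest$. Each rebuild costs $O(\merges\longest\log^2\longest)$ and is preceded by more than $\merges\longest$ pops (or by the initial pushes). This gives amortized $O(\log^2\longest)$, with no $\log\merges$ term to argue away. Your window length and rebuild period already match this; what must replace $\incbpe$'s Pop is precomputation of all suffix first tokens.
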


Furthermore, \Cref{lem:pretok-q} gives a data structure $\cutQ$ that allows us to efficiently compute $\last(q)$ satisfying $\last(q - 1) \leq \out < \last(q)$.
We note that $\last(q)$ is monotonically increasing: since we are always looking for the longest match, additional characters can only extend a pretoken.

We now sketch our final algorithm on input $x_i$ (see \Cref{thm:online-tokenization-const-delay} for the full algorithm and \Cref{fig:alg-overview} for an illustration of the data structures in our algorithm).
As before, we will assume $\delta(\start, x) \not\in \dead$.
Initialize data structures $\incbpe$ and $\bpequeue$.
Denote data structure queries by $\incbpe.\phi$ and $\bpequeue.\phi$.
\begin{mdframed}
    Sketch of {\bf $\python{OptimalDelayTokenization}(x_{i})$}
    \begin{enumerate}
        \stepnames
        \item Push $x_i$ onto $\bpequeue$.
        Ensure $\incbpe$ maintains $x_{\out + 1} \dots x_{\last(q)}$. 
        \label{step:push-overview}

        \item While $\incbpe.\phi = \bpequeue.\phi \neq \bot$ (i.e. first token queries agree), output $\incbpe.\phi$ and pop $\incbpe.\phi$ from $\incbpe$ and $\bpequeue$.
        \label{step:while-match-overview}

        \item If $\incbpe = \epsilon$ and $\out < i$: add $q \gets q + 1$, push $\brk x_{\last(q - 1) + 1} \dots x_{\last(q)}$ to $\incbpe$, and return to \Cref{step:while-match-overview}.
        
        \Comment{(When $\incbpe$ is empty, we have processed the current pretoken $z_{q}$, so increment $q$ and proceed to the next pretoken.)}
        \label{step:next-cut}
        \end{enumerate}
\end{mdframed}
Our algorithm maintains the following invariants (see \Cref{fig:alg-overview}):
\begin{enumerate}
    \item if $\out < i$, $q$ satisfies $\last(q - 1) \leq \out < \last(q)$. 
    \item $\incbpe$ maintains $x_{\out + 1} \dots x_{\last(q)}$.
    \item $\bpequeue$ maintains $x_{\out + 1} \dots x_{i}$.
\end{enumerate}
Observe that whenever $i - \out > 3 \merges \longest$, $\incbpe$ is guaranteed to produce the correct output.
We only need $\push$ operations in \Cref{step:push-overview}, since $\last(q)$ is monotonically increasing.
We will show in the proof of \Cref{thm:online-tokenization-const-delay} that every query to $\incbpe$ is valid.
Following the above discussion, the displayed algorithm obtains a delay bound of $O(\merges \longest)$.
The running time follows from the near linear running time of both data structures.

\input{tikz/alg_overview}

\paragraph{Hardness of Exact Online Tokenization}
Our algorithm does not achieve exact online tokenization (zero delay), and we give a lower bound ruling out any polynomial algorithm unless $\P = \NP$. 
We will reduce $\kSAT{3}$ to online tokenization.
Consider $\kSAT{3}$ on $d$ variables.
First, we construct an alphabet $\Sigma$, regular expression $R$, and merge list $M$ depending only on $d$. 
We will ensure that the alphabet size, the size of a DFA recognizing the regular expression, and the length of the merge list all have size $\poly(d)$.
Then, we give a reduction from any $d$-variable $\kSAT{3}$ instance $\psi$ to input strings $z_{\psi}$ of length $\poly(d)$ such that there is a specific alphabet character $\alpha \in \Sigma$ that appears in $\Tok_{\preceq}(x)$ if and only if $\psi$ is unsatisfiable (see \Cref{thm:np-delay-lb} for details and proof with delay).

The high level idea of our approach is as follows.
For each of the $O(d^{3})$ possible width 3 clauses $c \in \calC$, define a symbol $\ell_{c}$.
Define $\Sigma_{\calC} = \set{\ell_{c}}_{c \in \calC}$.
Define the alphabet
\begin{equation*}
    \Sigma = \set{a_{1}^{0}, a_{1}^{1}, b_1, \dots, a_{d}^{0}, a_{d}^{1}, b_{d}} \cup \set{\alpha, \beta, \gamma, \delta} \cup \Sigma_{\calC} \text{.}
\end{equation*}
For every assignment $x \in \set{0, 1}^{d}$, we encode $x$ as $E(x) = a_{1}^{x_{1}} b_1 \dots a_{d}^{x_{d}} b_{d}$.
For a clause $c$, let $\calE_{c} = \set{E(x) \given c(x) = 0}$ denote the encodings of all assignments that falsify $c$ and build the regular expression $R_{c} = \ell_{c} \Sigma_{\calC}^* \alpha \beta \calE_{c} \gamma$.

To define the tokenization, fix $S^* = \emptyset$ and $R = R_{\calC} \mid \alpha \beta \calE \gamma \delta \mid \Sigma$ where $R_{\calC}$ is the alternation of all $R_{c}$, $\calE$ is the alternation of all encodings, and $\Sigma$ is the alternation of all characters added to ensure $R$ is total.
The merge rules will be chosen so that within any pretoken, $\alpha$ is merged if and only if the suffix of the pretoken beginning with $\alpha$ has even length.
Observe that this occurs if and only if $\alpha$ occurs in the pretoken $\alpha \beta \calE \gamma \delta$ (since $\Sigma, R_{\calC}$ both have odd length suffixes beginning with $\alpha$).
Then, given a $\kSAT{3}$ instance $\psi = c_1 \wedge \dots \wedge c_{k}$, construct the input $z = \ell_{c_{1}} \dots \ell_{c_{k}} \alpha$.

If $\psi$ is satisfiable (say by $x^*$), we will argue that the continuation $z \beta E(x^*) \gamma \delta$ corresponding to the satisfying assignment $x^*$ has no match with any $R_{\calC}$, and thus matches $\alpha \beta \calE \gamma \delta$, in which case $\alpha$ is merged and does not appear in $\Tok_{\preceq}(z)$.
On the other hand, when $\psi$ is unsatisfiable, we argue that any match involving $\alpha$ matches $R_{\calC}$ corresponding to some unsatisfied clause, so that $\alpha$ is never merged and always appears in $\Tok_{\preceq}(z)$.

%% file: tikz/alg_overview.tex
\begin{figure}[ht]\centering
\begin{tikzpicture}

  \fill[green!25]  (1,-0.16) rectangle (11,0.16);
  \fill[blue!25]   (6,0) rectangle (11,0.16);
  \fill[orange!30] (6,-0.16) rectangle (7,0);

  \draw[blue!70!black, thick]
    (6,-0.98) -- (6,-1.22) -- (11,-1.22) -- (11,-0.98);
  \node[below] at (8.5,-1.22) {\bpequeue};

  \draw[orange!80!black, thick]
    (6,-1.28) -- (6,-1.52) -- (7,-1.52) -- (7,-1.28);
  \node[below] at (6.5,-1.52) {\incbpe};

  \draw[green!60!black, thick]
    (4,0.38) -- (4,0.62) -- (11,0.62) -- (11,0.38);
  \node[above] at (7.5,0.62) {\cutQ};

  \draw[black!60!black, thick]
    (0,-1.22) -- (5.95,-1.22) -- (5.95,-0.98);
  \node[below] at (3.5,-1.22) {$P$};

  \draw[thick] (0,0) -- (12,0);

  \foreach \x/\label in {
      1/{\last},
      6/{\out},
      11/{i}
  }{
    \draw[black, thick] (\x,-0.18) -- (\x,0.18);
    \node[below=3pt] at (\x,-0.14) {$\label$};
  }

  \foreach \x/\label in {
      2.5/{\last(2)},
      4/{\last(3)},
      7/{\last(4)},
      9/{\last(5)}
  }{
    \draw[gray!55, thick] (\x,-0.14) -- (\x,0.14);
    \node[below=3pt, text=gray!65] at (\x,-0.14) {$\label$};
  }

\end{tikzpicture}
\caption{Illustration of $O(\merges \longest)$ delay algorithm. 
Characters up to $\out$ are output in $P$.
$\last, \last(2), \dots, \last(5), i$ are endpoints of pretokens if the current input $x_{\leq i}$ terminates. 
Square brackets describe the substring of the input maintained by each data structure.
Here, we have $q = 4$.}
\label{fig:alg-overview}
\end{figure}

%% file: prelims.tex
\section{Preliminaries}
\label{sec:prelims}

Let $\Sigma$ denote an alphabet, $\Sigma^{k}$ the set of length $k$ strings over $\Sigma$, and $\Sigma^* = \bigcup_{k = 0}^{\infty} \Sigma^{k}$ the set of (any length) strings over $\Sigma$.
Let $[n] = \set{1, \dots, n}$.
Let $\epsilon$ denote the empty string and $\circ$ concatenation.
Let $|x|$ denote the length of a string.
Let $x[i:j]$ or $x_{i} \dots x_{j}$ denote the substring of $x$ starting at the $i$-th index and ending at the $j$-th index (inclusive and indexed at $1$).
We let $x_{\leq i} = x_{1} \dots x_{i}$ and $x_{\geq i} = x_{i} \dots x_{n}$ where $n = |x|$.
$x_{<i}$ and $x_{>i}$ are defined analogously.
Let $x[:-k]$ denote the prefix of $x$ obtained by deleting the last $k$ characters and $x[-k:]$ the suffix containing the last $k$ characters.
We say $x \preceq y$ if $x$ is a prefix of $y$.
We use $y \succeq x$ to mean that $y$ contains $x$ as a prefix, not that $y$ is a suffix of $x$.
We use $a \addgets b$ to denote $a \gets a + b$.

\paragraph{Regular Expressions}

The regular languages over an alphabet $\Sigma$ consist of the smallest collection $\calR = \calR(\Sigma)$ of languages $L \subseteq \Sigma^*$ that satisfy the following rules: 
\begin{enumerate}
    \item $\emptyset \in \calR$, $\set{\epsilon} \in \calR$, and $\set{\sigma} \in \calR$ for all $\sigma \in \Sigma$.
    \item (Concatenation) If $L_1, L_2 \in \calR$, then $L_1 \circ L_2 = \set{x_1 \circ x_2 \mid x_1 \in L_1, x_2 \in L_2} \in \calR$.
    \item (Alternation) If $L_1, L_2 \in \calR$, then $L_1\mid L_2 = L_1 \cup L_2 \in \calR$.
    \item (Star) If $L \in \calR$, then $L^* \in \calR$.
\end{enumerate}
When convenient, we use $L_1L_2$ to denote $L_1 \circ L_2$.
A regular expression $R$ is \emph{total} if every non-empty $x \in \Sigma^*$ has a non-empty prefix matched by $R$ (possibly including $x$ itself).
Note that this implies $\Sigma \subseteq L(R)$.
Throughout this work, we consider only total regular expressions.

Regular languages are closed under complement, union and finite intersection.
For our theoretical results, a match of regular expression $R$ in $x$ is the longest prefix of $x$ that matches $R$, or equivalently the longest prefix of $x$ in $L(R)$.
Since we assume $R$ is total, this is a non-empty prefix.

For a total regular expression $R$ and non-empty input $x$, define the longest match of $R$ in $x$, denoted $F(x; R)$ as $x_{\leq J(x; R)}$ where $J(x; R) = \max \set{j > 0 \given x_{\leq j} \in L(R)}$.
When the context is clear, we simply write $F(x), J(x)$.
We note that for any regular expression $R$, there is a non-deterministic finite automata (NFA) that recognizes exactly $L(R)$ of size $O(|R|)$ (see e.g. \citet{aho2006compilers}).
In \Cref{sec:experimental-setup}, we will separately discuss examples of regular expressions used in practice, which use regular expressions with priority rules and lookahead.

Given a DFA or an NFA $\calA$, let $\calQ$ be its states.
Let $\accept \subseteq \calQ$ be the accepting states, and $\dead \subseteq \calQ$ be the set of states that have no \emph{non-trivial} path to a state in $\accept$.
Note that a $\dead$ state can be a $\accept$ state,
since we do not count the trivial transition on the empty string $\epsilon$. 
For our purposes of simulating a regular expression with a DFA, if a state is $\dead$ and in $\accept$, it is a maximal match.
Let $\start$ denote the start state.
For a state $q$ and word $w$, $\delta(q, w)$ denotes the state computed if we start on $q$ and read $w$.
$\last(q, w)$ denotes the maximum $i$ where $\delta(q, w_{\leq i}) \in \accept$, or $\bot$ if no such $i$ exists.
For an NFA, we let $Q$ be a set of states and $\delta(Q, w)$ be the set of states reachable if starting at any $q \in Q$ and reading $w$ and $\last(Q, w)$ the maximum $i$ where $\delta(Q, w_{\leq i}) \cap \accept \neq \emptyset$, or $\bot$ if no such $i$ exists.

For a merge list $M$, we use merge list and BPE encoding interchangeably.
For a given token $\tau$, let $\raw(\tau)$ denote the expansion of $\tau$, i.e. the raw string $x$ such that $\Merge(x) = \tau$.
For a sequence of tokens $\tau_1 \dots \tau_k$, let $\raw(\tau_1 \dots \tau_k)$ denote the concatenation of $\raw(\tau_1), \dots, \raw(\tau_k)$.
Let the length of a token or a sequence of tokens be defined $|\tau| = |\raw(\tau)|$.

%% file: online_tokenization.tex
\section{Online Tokenization}
\label{sec:online-tokenization}

We begin by defining tokenization in the standard offline batch setting.

\begin{definition}[Special Tokens]
    \label{def:special-tokens}
    Let $\Sigma$ be a finite alphabet and $S^* \subset \Sigma^*$ a finite set of special tokens.
    For each string $x \in S^*$, $s^*(x)$ is a single special character corresponding to $x$.
    Given a string $x \in \Sigma^*$, the segmenting of $x$ by $S^*$ is given by
    \begin{equation*}
        \Seg(x; S^*) = \begin{cases}
            \epsilon & \text{if $x = \epsilon$} \\
            \brk s^*(x_{\leq j}) \brk \Seg(x_{> j}; S^*) & \text{if $j = \max \set{i \given x_{\leq i} \in S^*} \geq 1$} \\
            x_{1} \circ \Seg(x_{>1}; S^*) & \otherwise
        \end{cases} \text{.}
    \end{equation*}
    The output consists of $\brk$ symbols with strings between. 
    Each string is called a special token if it is $s^*(x)$ for $x \in S^*$, and a segment otherwise.
    Let $\ell_{1}(S^*) = \sum_{x \in S^*} |x|$ and $\ell_{\infty}(S^*) = \max_{x \in S^*} |x|$.
\end{definition}

\begin{definition}[Pretokenization]
    \label{def:pretokenization}
    Let $\Sigma$ be a finite alphabet and $R$ a total regular expression over $\Sigma$.
    Then, the pretokenization of $x$ with respect to $R$ is 
    \begin{equation*}
        \Split(x; R) = \begin{cases}
            \epsilon & x = \epsilon \\
            F(x; R) \brk \Split(x_{>|F(x; R)|}; R) & \otherwise
        \end{cases} \text{.}
    \end{equation*}
    Given a list of segments and special tokens $y = x_1 \brk \dots \brk x_m$, define $\Split(y; R)$ to be $\Split$ applied to each segment with the results concatenated.
    Note that $\Split$ is not applied to special tokens and trailing $\brk$ is removed.
    The output of $\Split$ on a single segment consists of $\brk$ symbols with strings between; we call these strings pretokens.
    Thus, on a list of segments and special tokens, $\Split$ produces a list of pretokens and special tokens.
\end{definition}

We formally define bounded lookback, a quantity which will characterize both the delay and space usage of our algorithms.

\begin{definition}[Lookback]
    \label{def:lookback}
    A regular language $L$ has $\lookback$-lookback if for any $x \in L$ with $|x| > \lookback$,
    \begin{equation*}
        \max \set{|y| \given y \prec x, y \in L } \geq |x| - \lookback 
    \end{equation*}
    with the convention $\max \emptyset = - \infty$.
    That is, every word in the regular language of length greater than $\lookback$ has a strict prefix in the language that is shorter by at most $\lookback$.
    We let $\lookback(L)$ denote the minimum $\lookback$ such that $L$ has $\lookback$-lookback.
    $L$ has unbounded lookback if $\lookback(L) > c$ for any finite $c$.
\end{definition}

Lookback ensures that if we do not find an accepting prefix within $\lookback$ characters, the current match (if any) cannot extend by reading more characters of the input.

\begin{definition}[Byte-Pair Encoding]
    \label{def:bpe}
    Let $0 < n_0 < n$ be positive integers.
    Let $\Sigma = [n_0]$ be a finite alphabet.
    Let $M$ be a trained Byte-Pair Encoding (BPE) list of merges with corresponding dictionary $D = n_0 + [M]$ (i.e. a list of ordered pairs $(m_1, m_2) \mapsto d$ for $d \in D$ and $0 < m_1, m_2 < d$).
    Let $\Sigma_{D} = \Sigma \cup D = [n]$ denote the full BPE dictionary.
    We assume the given merge list $M$ is canonical: $\Merge(\raw(\tau);M)=\tau$ for every dictionary token $\tau \in \Sigma_{D}$, as in \citet[Section~3.3]{jiang2025incrementalbpe}.

    Let $m = (m_1, m_2) \mapsto d$ be a merge.
    Given a string $x \in [d - 1]^*$, the BPE of $x$ with respect to $m$ is 
    \begin{equation*}
        \Merge(x; m) = \begin{cases}
            x & |x| \leq 1 \\
            d \circ \Merge([x_3, \dots]; m) & |x| \geq 2 \text{ and } (x_1, x_2) = (m_1, m_2) \\
            x_1 \circ \Merge([x_2, \dots]; m) & \text{otherwise}
        \end{cases} \text{.}
    \end{equation*}
    For a string $x \in \Sigma^*$, the BPE of $x$ with respect to $M$ is 
    \begin{equation*}
        \Merge(x; M) = \begin{cases}
            x & M = \emptyset \\
            \Merge(\Merge(x; \mu_1) ; M \setminus \mu_1) & M = (\mu_1, \dots)
        \end{cases} \text{.}
    \end{equation*}
    Given a list of pretokens and special tokens $y = x_1 \brk \dots \brk x_p$, we define $\Merge(y; M)$ to be $\Merge$ applied to each pretoken with the results concatenated.
    Note that on a special token, $\Merge$ outputs the special token identically.
    We call each element in the output list (i.e. number) a token.
\end{definition}

We note that a BPE encoding is determined by its list of merges $M$ alone.
For succinctness, we use $M$ alone to denote a BPE.
We define the longest token.

\begin{definition}[Longest Token]
    \label{def:longest-token}
    Define $\longest(M) = \max\set{|x| \given |\Merge(x; M)| = 1}$ to be the length of the longest token.
    When the context is clear, we write $\longest = \longest(M)$.
\end{definition}

Throughout, we use the $\brk$ symbol to denote hard boundaries in the input string through which no merges can occur.
The tokenization problem is then a composition of the above three subroutines.

\begin{definition}[Tokenization]
    \label{def:tokenization}
    We define the Tokenization problem as follows.
    Let $\Sigma$ be a finite alphabet.
    Let $S^* \subset \Sigma^*$ be a set of special tokens, $R$ a total regular expression over $\Sigma$, and $M$ a BPE encoding.
    Given a string $x \in \Sigma^*$, the tokenization of $x$ is 
    \begin{equation*}
        \Tok(x; S^*, R, M) = \Merge(\Split(\Seg(x; S^*); R); M) \text{.}
    \end{equation*}
\end{definition}

In running times, we use $R$ to denote $|R|$ and $\merges$ to denote $|M|$.
We now define tokenization in the online setting.
We begin with the notion of a stable output prefix.

\begin{definition}[Stable]
    \label{def:stable}
    Let $S^* \subset \Sigma^*$ be a set of special tokens, $R$ a regular expression over $\Sigma$, and $M$ a BPE encoding.
    A list of tokens $z$ is \emph{stable} with respect to a string $x \in \Sigma^*$ if $z$ is a prefix of $\Tok(x \circ y; S^*, R, M)$ for every string $y \in \Sigma^*$.
    That is, $\Tok$ outputs $z$ as a prefix for every continuation of $x$.
    Let $\calS(x; S^*, R, M)$ denote the set of stable prefixes with respect to $x$.
    When the context is clear, we use $\calS(x)$ to denote the set of stable prefixes.
\end{definition}

We show that any two stable prefixes are comparable, and therefore the maximal/maximum stable prefix is unique.
For two lists $z_1, z_2$, we define the partial order $z_1 \preceq z_2$ if $z_1$ is a prefix of $z_2$.

\begin{claim}
    \label{clm:stable-prefix-compare}
    If $z_1, z_2$ are stable with respect to $x$, then $z_1 \preceq z_2$ or $z_2 \preceq z_1$.
    In particular, $\calS(x; S^*, R, M)$ has a unique greatest element $z^*$ under $\preceq$, and every stable prefix is a prefix of $z^*$.
\end{claim}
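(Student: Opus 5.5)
The plan is to use the fact that prefixes of a single fixed list are totally ordered. Fix any continuation, the simplest being the empty one $y = \epsilon$, and set $T := \Tok(x; S^*, R, M)$. By \Cref{def:stable}, each of $z_1$ and $z_2$ is a prefix of $\Tok(x \circ y)$ for every $y$, and in particular of $T$. Two prefixes of the same finite list are comparable: if $|z_1| \leq |z_2|$, then both agree with the first $|z_1|$ entries of $T$, so $z_1 \preceq z_2$; otherwise the symmetric argument gives $z_2 \preceq z_1$. This proves the first assertion.

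For the second assertion, I would first note that $\calS(x)$ is nonempty, since the empty list $\epsilon$ is a prefix of every output and hence stable. Next, every element of $\calS(x)$ is a prefix of $T$, so $\calS(x)$ is a subset of the finite set of prefixes of $T$, which has at most $|T|+1$ elements. By the first part, $\calS(x)$ is a nonempty finite chain under $\preceq$, so it has a greatest element $z^*$, namely the stable prefix of maximum length. Uniqueness holds because $\preceq$ is antisymmetric: two prefixes of the same list with equal length are equal. Every stable $z$ is comparable to $z^*$ and satisfies $|z| \leq |z^*|$, which gives $z \preceq z^*$.

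There is no real obstacle in this argument. The only point that needs checking is that $\calS(x)$ is finite and nonempty, so that a maximum exists rather than just an upper bound. Both facts come directly from bounding $\calS(x)$ inside the prefixes of the single finite list $\Tok(x)$ and from the stability of $\epsilon$. In particular, the maximum is attained and there is no need to intersect over infinitely many continuations.
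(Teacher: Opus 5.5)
Your proof is correct and follows the paper's argument: via the empty continuation, both stable prefixes are prefixes of the single list $\Tok(x)$, so they are comparable. You also make explicit what the paper leaves implicit when it passes from ``total order'' to the existence of a greatest element, namely that $\calS(x)$ is nonempty (it contains $\epsilon$) and finite (it lies among the prefixes of $\Tok(x)$), so the maximum is actually attained.
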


\begin{proof}
    We begin with the following fact: if $z \preceq L$ and $u \preceq L$, then $z \preceq u$ or $u \preceq z$.
    Assume without loss of generality $|z| \le |u|$. 
    Since both are prefixes of $L$ and $|z| \le |u| \le |L|$, we have $z \preceq u$.

    Define $\Tok(x) := \Tok(x; S^*, R, M)$. 
    Then, since $z_1, z_2 \in \calS(x)$, we have $z_1, z_2 \preceq \Tok(x)$ so that either $z_1 \preceq z_2$ or $z_2 \preceq z_1$.
    Thus, $\preceq$ is in fact a total order, so that there exists a unique greatest element $z^*$.
\end{proof}

Thus, we may define the following Online Tokenization problem.
A string is incrementally updated if it is updated only by appending new characters.
An algorithm outputs $y$ incrementally if it outputs by appending to the previous output.

\begin{definition}[Online Tokenization]
    \label{def:online-tokenization}
    Let $S^* \subset \Sigma^*$ be a set of special tokens, $R$ a total regular expression over $\Sigma$, and $M$ a BPE encoding.
    Online tokenization, given an incrementally updated string $x$, incrementally outputs $\Tok_{\preceq} (x) = \max_{\preceq} \calS(x; S^*, R, M)$.
    An algorithm for online tokenization with delay $k$ incrementally outputs $z$ such that
    \begin{equation*}
        \Tok_{\preceq}(x[:-k]) \preceq z \preceq \Tok_{\preceq}(x) \text{.}
    \end{equation*}
    Exact online tokenization is the problem of online tokenization with delay $0$.
\end{definition}

We argue that given an incrementally updated string $x$, $\Tok(x)$ also updates incrementally, and thus (at least in principle) an algorithm could maintain the output efficiently.

\begin{claim}
    \label{clm:inc-output}
    If $x \preceq y$, then $\calS(x) \subseteq \calS(y)$.
    Consequently, $(\max_{\preceq} \calS(x)) \preceq (\max_{\preceq} \calS(y))$.
\end{claim}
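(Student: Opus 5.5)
The plan is to unfold \Cref{def:stable} directly; the second part will then follow from \Cref{clm:stable-prefix-compare}. First I would show $\calS(x) \subseteq \calS(y)$. Since $x \preceq y$, I would write $y = x \circ w$ for some $w \in \Sigma^*$. Let $z \in \calS(x)$. Then $z \preceq \Tok(x \circ y')$ for every $y' \in \Sigma^*$. For an arbitrary continuation $u \in \Sigma^*$ of $y$, I would take $y' = w \circ u$. This gives $z \preceq \Tok(x \circ w \circ u) = \Tok(y \circ u)$. Because $u$ is arbitrary, $z \in \calS(y)$.

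For the consequence, let $z^*_x = \max_{\preceq} \calS(x)$; this is well defined by \Cref{clm:stable-prefix-compare}, since $\calS(x)$ is nonempty (it contains $\epsilon$). By the inclusion just shown, $z^*_x \in \calS(y)$. \Cref{clm:stable-prefix-compare} also says that every element of $\calS(y)$ is a prefix of its greatest element $z^*_y$. Hence $z^*_x \preceq z^*_y$.

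The only point needing care is that the maxima exist. Every stable prefix of $x$ is a prefix of the finite list $\Tok(x)$, so $\calS(x)$ is a finite, totally ordered set and has a maximum. Apart from this, the argument is a direct manipulation of quantifiers, and I do not expect a real obstacle.
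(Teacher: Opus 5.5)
Your proof is correct and follows the paper's argument: write $y = x \circ w$ and absorb $w$ into the continuation, so any $z \in \calS(x)$ satisfies $z \preceq \Tok(y \circ u)$ for every $u$. The paper leaves the ``consequently'' part implicit, and your derivation of it via \Cref{clm:stable-prefix-compare} is sound; your finiteness remark, that every stable prefix is a prefix of $\Tok(x)$, correctly guarantees that the maxima exist.
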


\begin{proof}
    Let $z \in \calS(x)$ be a stable prefix.
    Then, $z \preceq \Tok(x x')$ for any $x' \in \Sigma^*$.
    Write $y = x a$ for some $a \in \Sigma^*$.
    Then, $z \preceq \Tok(x a y')$ for any $y' \in \Sigma^*$ since $a y' \in \Sigma^*$ is a continuation of $x$.
    Hence, $z \in \calS(y)$, as desired.
\end{proof}

\input{tokenization_lb}

\subsection{Online Tokenization via Composition}

Given \Cref{thm:np-delay-lb}, our goal is to design fast algorithms with minimal delay.
In particular, we would like to design algorithms running in almost linear time (e.g. $n^{1 + o(1)}$).
We begin with a simple algorithm that composes online segmentation, pretokenization, and BPE merging.

We first define online segmentation, pretokenization, and BPE merging.
Note that \Cref{clm:stable-prefix-compare} and \Cref{clm:inc-output} do not use anything specific about the tokenization process, and can be shown for any function that takes an (incremental) list as input and outputs an (incremental) list.
Thus, again in principle, we can hope for efficient online segmentation, pretokenization, and BPE merging.
Since pretokenization takes a list of lists as an input and outputs a list of strings, we say a list of strings is updated incrementally if it is updated by either appending characters to the current last string, or starting a new string.
Similarly, an algorithm outputs a list of strings incrementally by either appending characters to the current last string, or starting a new string.
We let $\calS_{\Seg}(x; S^*)$, $\calS_{\Split}(L; R)$, and $\calS_{\Merge}(L; M)$ denote the set of stable prefixes with respect to $S^*$, $R$, $M$ respectively, where $x$ is a string input to segment with $S^*$, and $L$ is a list of strings input to pretokenization or BPE.

\begin{definition}[Online Segmentation]
    \label{def:online-segmentation}
    Let $S^* \subset \Sigma^*$ be a set of special tokens.
    Online Segmentation, given an incrementally updated string $x$, incrementally outputs $\Seg_{\preceq} (x; S^*) = \max_{\preceq} \calS_{\Seg}(x; S^*)$.
\end{definition}

\begin{definition}[Online Pretokenization]
    \label{def:online-pretokenization}
    Let $R$ be a total regular expression over $\Sigma$.
    Online Pretokenization, given an incrementally updated list of strings $L$, incrementally outputs $\Split_{\preceq}(L; R) = \max_{\preceq} \calS_{\Split}(L; R)$.
\end{definition}

\begin{definition}[Online Byte-Pair Encoding]
    \label{def:online-bpe}
    Let $M$ be a BPE encoding.
    Online BPE, given an incrementally updated list of strings $L$, incrementally outputs $\Merge_{\preceq}(L; M) =  \max_{\preceq} \calS_{\Merge}(L; M)$.
\end{definition}

As before, we omit $S^*, R, M$ when convenient.
We observe that for a list $L = a_1 \brk \dots \brk a_{s}$, $\Split_{\preceq}(L) = \Split(a_1) \brk \dots \brk \Split(a_{s - 1}) \brk \Split_{\preceq}(a_s)$.
Indeed, any continuation of $L$ has the first $s - 1$ segments identical to $L$ and the $s$-th segment is a continuation of $a_s$.
Similarly, $\Merge_{\preceq}(L) = \Merge(a_1) \circ \dots \circ \Merge(a_{s - 1}) \circ \Merge_{\preceq}(a_{s})$.

We give a linear time algorithm for online tokenization.

\CompositionUB*

Accounting only for $n$ on large inputs, our algorithm runs in $O(n)$ time and $O(\lookback)$ space with $\lookback + O(1)$ delay.

\begin{proof}[Proof of \Cref{thm:linear-online-tokenization-w-delay}]
    As subroutines, we require algorithms $\Segment$, $\Pretokenize$, $\BPEMerge$ with the following guarantees.

    \begin{restatable}[Online Segmentation]{lemma}{OnlineSeg}
        \label{lem:segment-alg}
        There is an algorithm $\Segment$ that computes online segmentation in $O(n)$ time and $O(1)$ space.
        Furthermore, for any $x \in \Sigma^*$, $|\raw(\Seg_{\preceq}(x))| \geq \max(0, |x| - \ell_{\infty}(S^*))$.
    \end{restatable}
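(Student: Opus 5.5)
The plan is to first characterize $\Seg_{\preceq}(x)$ exactly and then realize that characterization with a streaming matcher. Call an index $j$ \emph{pending} if the suffix $x_{\geq j}$ is a strict prefix of some special token $t \in S^*$. Note that every $j$ with $x_{\geq j} \in S^*$ is also pending whenever that match could still be extended to a longer special token (the longest-match rule in \Cref{def:special-tokens}). Let $p$ be the smallest index at which the left-to-right parse of $\Seg(x)$ could still be affected by a continuation. I will argue that the parse of $x_{<p}$ is common to $\Seg(xy)$ for every $y$. The reason is that $\Seg$ scans left to right and decides, at each position $j$, between emitting a special token and emitting a raw character. That decision depends only on $\max\{i : x_{j}\dots x_{i} \in S^*\}$. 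For a position $j$ with $|x| - j + 1 \geq \ell_\infty(S^*)$, this maximum is determined by $x$ alone, since no special token can reach past the end of $x$. Hence every decision made at positions $j \leq |x| - \ell_\infty(S^*)$ is the same for all continuations.

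The next step is to convert this into the length bound. Run the parse of $\Seg(x)$ from the left, and stop at the first position $j$ with $j > |x| - \ell_\infty(S^*)$; call it $j_0$. All output produced before $j_0$ is a common prefix of $\Seg(xy)$ for all $y$, so it is stable. Its raw length is $j_0 - 1$. One subtlety is that a special token chosen at some $j \leq |x|-\ell_\infty$ ends before $|x|$, so its entire expansion is stable. Therefore $j_0 - 1 \geq |x| - \ell_\infty(S^*)$, and the maximum stable prefix is at least this long, which gives $|\raw(\Seg_{\preceq}(x))| \geq \max(0, |x| - \ell_\infty(S^*))$.

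I also need the exact $\Seg_{\preceq}$ rather than just this bound. For that I will show that $\Seg_{\preceq}(x)$ equals the parse of $\Seg(x)$ up to and including the last decision that is forced for all continuations. Concretely, at the first position $j$ where some continuation $y$ changes the decision, the two outputs $\Seg(x)$ and $\Seg(xy)$ differ in their next symbol (a raw $x_j$ versus a special character $s^*(\cdot)$, or two different special characters). The only care needed is around $\brk$ placement, which I handle by checking that raw characters following a segment are only emitted once decided. This shows nothing beyond the forced part is stable.

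For the algorithm, I preprocess an Aho–Corasick automaton (equivalently, a trie with failure links) over $S^*$, of size $O(\ell_1(S^*))$. This size is constant in $n$, so it contributes $O(1)$ space. I keep a buffer of the last at most $\ell_\infty(S^*)$ undecided characters, together with the automaton state. On each new character I advance the state and then resolve every leftmost buffered position whose decision has become forced. A position's decision is forced when either no trie node reachable from it can still extend, or a complete longest special token has been determined. When a position is resolved, I emit either that raw character or $\brk s^*(t) \brk$ and drop the consumed characters from the buffer. Each character enters and leaves the buffer once, and the work per character is $O(\ell_\infty)=O(1)$ (re-scanning the buffer from the trie root is also acceptable, since the buffer has constant length). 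This gives $O(n)$ time and $O(1)$ space.

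The main obstacle is the exactness argument in the third paragraph: the claim that the first continuation-dependent decision really yields incomparable outputs. Overlapping special tokens could in principle cause two continuations to agree on the output despite making different decisions. I would handle this by noting that the symbols $s^*(t)$ are distinct from raw characters, so differing decisions always produce differing next symbols.
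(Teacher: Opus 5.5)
Your approach is essentially the paper's: a trie over $S^*$, a buffer holding the unresolved suffix, resolution of the leftmost pending position with a rescan from the root, and a length bound from the buffer being short. You derive the bound by observing that a decision at a position with at least $\ell_\infty(S^*)$ characters remaining cannot depend on the continuation. The paper instead argues that the uncommitted suffix is a non-maximal trie prefix; these are the same fact. Your bound $|\raw(\Seg_{\preceq}(x))| \geq |x| - \ell_\infty(S^*)$, the stability of your output, and the time and space analysis are fine.

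There is one concrete place where your algorithm does not output exactly $\Seg_{\preceq}(x)$, and your third paragraph glosses over it. Suppose the decision at the leftmost pending position is forced to be ``special token'' but the token's identity is not yet forced. Then the two continuations do not differ in their next symbol: both continue with the opening $\brk$ (unless one was just emitted), and only the following $s^*(t_1) \neq s^*(t_2)$ differ. So $\Seg_{\preceq}(x)$ contains that extra $\brk$. For example, take $S^* = \set{ab, abcd}$ and $x = abc$. Every continuation begins with a special token ($ab$ or $abcd$), so $\Seg_{\preceq}(abc) = \brk$. Your rule resolves a position only when no extension is possible or a maximal token is found, and then emits $\brk s^*(t) \brk$ in one go. On this input it emits nothing, so the algorithm does not ``compute online segmentation'' as the lemma claims. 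Your remark about ``raw characters following a segment'' does not address this.

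The fix is what the paper's \Cref{step:short-special} does. As soon as the pending buffer has a special token as a prefix, every continuation takes a special token at that position, so emit the opening $\brk$ immediately, and do not emit it again when the token is later resolved. With this change your exactness argument is correct.
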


    \begin{restatable}[Online Pretokenization (Formal \Cref{lem:pretokenize-informal})]{lemma}{OnlineSplit}
        \label{lem:pretokenize-alg}
        There is an algorithm $\Pretokenize$ that computes online pretokenization in $O(n)$ time and $O(\lookback(R))$ space.
        Furthermore, for any list of segments $L$, $\Split_{\preceq}(L)$ outputs all but the last $\lookback(R)$ characters of $L$.
    \end{restatable}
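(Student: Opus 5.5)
The plan is to prove \Cref{lem:pretokenize-alg} by treating one segment at a time and reducing everything to the queue structure of \Cref{lem:concat-queue}. By the observation after \Cref{def:online-bpe}, $\Split_{\preceq}(L) = \Split(a_1) \brk \dots \brk \Split(a_{s-1}) \brk \Split_{\preceq}(a_s)$. A completed segment can therefore be flushed: we run the procedure below on it and then emit all remaining pretokens offline, which costs linear time. This reduces the problem to a single incrementally growing segment $a$.

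First I would characterize $\Split_{\preceq}(a)$ exactly. Let $\Split(a) = z_1 \brk \dots \brk z_p$, where the first $r$ pretokens are \emph{finalized}. Pretoken $z_j$ is finalized if, on restarting the automaton at its start position, the state reached on the remaining text is $\dead$ (in the NFA setting, every state in the set is $\dead$). In that case no continuation can extend the match, so the boundary is fixed. For the unfinalized tail $w$ (the text after the last finalized boundary), let $\last = \last(\start, w)$.

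I claim the maximum stable prefix is the finalized pretokens followed by the unterminated string $w_{\leq \last}$. Stability holds because the longest match of any extension of $w$ has length at least $\last$. Maximality holds on two counts. No character beyond $\last$ can be added, since the continuation whose characters are all the "empty" continuation yields $F(w) = w_{\leq \last}$ followed by $\brk$. No $\brk$ can be added after $w_{\leq \last}$, since the state is not dead, so some continuation $y$ with $\delta(\start, wy) \in \accept$ extends the match. Appending $\brk$ when the state becomes dead therefore matches the incremental output convention.

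The algorithm maintains $w$ in a $\cqueue$ and keeps $\state = \delta(\start, w)$ incrementally on each push. If $\state \in \accept$, it emits the characters up to the current position and sets $\last$. If $\state \in \dead$, it emits $\brk$ and pops $w_{\leq \last}$. It then queries $\cqueue$ for $\delta(\start, w)$ and $\last(\start, w)$ on the new remainder. Using $\last$ relative to the new start, it emits the newly accepted characters and repeats while the state is dead. Each pop removes at least one character by totality, so the number of pops and queries is $O(n)$. Amortized $O(1)$ operations then give $O(n)$ total time. For an NFA, states become subsets, and the $|R|$ factor is treated as a constant.

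The space bound and the "all but the last $\lookback(R)$ characters" bound both follow from showing $|w| - \last \leq \lookback(R)$ whenever the state is not dead. This is the step I expect to need care, and I would take it from \Cref{lem:live-lookahead}.

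The argument runs as follows. If the state is not dead, some continuation $wy \in L(R)$ exists. Apply lookback repeatedly to obtain a chain of strict prefixes in $L(R)$ with consecutive gaps of at most $\lookback$. That chain must land in the window $(|w| - \lookback, |w|]$, unless the prefixes are all short, in which case $|w| \leq \lookback$ trivially. So $\last \geq |w| - \lookback$.

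Hence the queue holds at most $\lookback + 1$ characters, beyond the $O(1)$ transient state, and the unemitted suffix has length at most $\lookback(R)$. The subtle points are that the bound applies only after dead-state resolution, and that emitted-but-still-queued characters must not be double counted. I would handle both by fixing the invariant that the queue holds exactly $w_{\last+1..}$ plus the current pretoken's emitted part, and that part has length at most $\last$.
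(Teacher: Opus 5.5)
Your plan follows the paper's proof closely. You use the same description of $\Split_{\preceq}$ (finalized pretokens followed by $w_{\leq \last}$), the same $\cqueue$-based restart loop on dead states, and the lookback chain of \Cref{lem:live-lookahead} to get $|w| - \last \leq \lookback(R)$ when the state is not dead. Your direct characterization of the maximum stable prefix is cleaner than the paper's nested induction, and the time bound and the ``all but the last $\lookback(R)$ characters'' claim go through. The gap is in the space bound.

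\textbf{The gap.} The algorithm as you describe it does not achieve $O(\lookback(R))$ space. You store the entire live tail $w$ in the $\cqueue$ and pop $w_{\leq \last}$ only once the state becomes dead. So the already-emitted part of the live pretoken stays in the queue. Its length is $\last$, which $\lookback(R)$ does not control. Take $\Sigma = \set{a}$, $R = a^*$ (total, with $\lookback(R) = 1$) and input $a^n$. The state is accepting and never dead, nothing is ever popped, and the queue holds $n$ characters. Your closing invariant (queue $= w_{>\last}$ plus an emitted part of length at most $\last$) therefore only yields $O(\lookback(R) + \last)$. The sentence ``the queue holds at most $\lookback + 1$ characters'' conflates the queue with the unemitted suffix.

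\textbf{The fix.} The missing observation is that the emitted part is never needed again. $\state = \delta(\start, w)$ is maintained incrementally outside the queue. When the state turns dead, the next pretoken starts at position $\last + 1$, so the restart query only reads $w_{>\last}$. Hence empty the queue whenever $\state \in \accept$ (the paper's $\cqueue.\pop(|\cqueue|)$). On a dead state, query the queue directly and pop only the newly accepted prefix. The queue is then exactly the unemitted suffix. By your chain argument it has length at most $\lookback(R)$ after dead-state resolution, and at most $\lookback(R) + 1$ transiently after a push, which is how the paper obtains the bound.
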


    \begin{restatable}[Online BPE \citep{jiang2025incrementalbpe}]{lemma}{OnlineBPE}
        \label{lem:bpe-merge-alg}
        There is an algorithm $\BPEMerge$ that in $O(n \log^2 \longest)$ total time and $O(\merges \longest)$ space incrementally outputs $y$ satisfying
        \begin{equation*}
            \Merge(x[:-C]; M) \preceq y \preceq \Merge_{\preceq}(x; M)
        \end{equation*}
        for some $C = C(x, M) \leq (\merges + 1)\longest$.
    \end{restatable}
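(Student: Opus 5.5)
The plan is to derive this from a structural locality property of BPE, then instantiate the incremental data structure of \citet{jiang2025incrementalbpe} and emit only the tokens that the locality property certifies. Throughout I work with a single pretoken $x$. For a list $L = a_1 \brk \dots \brk a_s$ this suffices, because $\Merge_{\preceq}(L) = \Merge(a_1) \circ \dots \circ \Merge_{\preceq}(a_s)$. So when a new $\brk$ arrives, the algorithm flushes $\Merge$ of the closed pretoken in full and restarts on the next one.

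\textbf{Step 1 (locality).} Let $T_j(w)$ denote the token sequence after applying the first $j$ merges $\mu_1, \dots, \mu_j$ to $w$. I will show by induction on $j$: for every $x$ and $y$, $T_j(x)$ and $T_j(xy)$ agree on all tokens whose raw span ends at or before position $|x| - j\longest$.
\begin{itemize}
\item For $j = 0$ this is trivial.
\item For the inductive step, look at a single pass $\Merge(\cdot\,; \mu_{j+1})$. It is a deterministic left-to-right greedy scan, so its pairing decisions depend only on the tokens already read.
\item On the common agreeing prefix, the two scans therefore make identical decisions, except possibly at the last agreeing token. That token may pair with an earlier-agreeing token, or with a token past the frontier.
\item Hence the frontier of agreement moves left by at most one token, i.e.\ by at most $\longest$ characters, per merge.
\end{itemize}
After all $\merges$ merges, the tokens ending at or before position $|x| - \merges\longest$ agree. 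A token straddling that position ends before $|x| - \merges\longest + \longest$. Allowing one more token of slack gives the stated $(\merges + 1)\longest$.

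\textbf{Step 2 (the algorithm and its sandwich).} I maintain $\Merge(x)$ incrementally using the structure of \citet{jiang2025incrementalbpe}, which supports appending a character in amortized $O(\log^2 \longest)$ time. After each append, the algorithm emits every not-yet-emitted token of $\Merge(x)$ whose span ends at or before $|x| - (\merges + 1)\longest$; call the emitted sequence $y$.
\begin{itemize}
\item \emph{Upper bound.} By Step 1 applied to $x$ and every continuation $xy'$, each emitted token is a prefix token of $\Merge(xy')$. Thus $y$ is stable and $y \preceq \Merge_{\preceq}(x)$.
\item \emph{Lower bound.} Let $C = |x| - |\raw(y)|$, so $\raw(y) = x[:-C]$ and $C \le (\merges + 1)\longest$ up to one token of slack, which the constant absorbs. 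Apply Step 1 again with the prefix $x[:-C]$ in place of $x$ and the remaining characters as the continuation. This shows that $\Merge(x[:-C])$ agrees with $\Merge(x)$ up to a boundary at or before $\raw(y)$.
\item Since $y$ ends on a token boundary of $\Merge(x)$ that is at or beyond $|x| - (\merges+1)\longest$, one more application of locality gives $\Merge(x[:-C]) \preceq y$.
\end{itemize}

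\textbf{Step 3 (resources).} Emitted tokens can be discarded, so the live window has length $O(\merges\longest)$, and the merge table also takes $O(\merges\longest)$. Each character is appended once and each token is emitted once, giving total time $O(n \log^2 \longest)$.

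The main obstacle is Step 1: showing that one greedy left-to-right merge pass shifts the disagreement frontier by at most one token. The difficulty is the case where the last agreeing token is consumed differently, for example paired with the token after the frontier in $xy$ but left unpaired in $x$. I would handle this by declaring that token itself to be outside the new agreement region. Getting this bookkeeping exactly right is what determines whether the constant comes out as $(\merges+1)\longest$ rather than something slightly larger. The lower-bound half of Step 2 also needs care, to choose $C$ so that the cut falls exactly on a token boundary.
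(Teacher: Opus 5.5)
Your Step 1 is sound, modulo the bookkeeping you flag. It is a self-contained proof of the locality bound that the paper imports from \citet{berglund2023formalizing} as \Cref{lem:merge-prec-progress}, and it does make every token your rule emits stable. The gaps are in Steps 2 and 3.

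\textbf{The constant.} With threshold $|x| - (\merges+1)\longest$, the last emitted boundary can lie up to $\longest - 1$ characters further left, so $C$ can reach $(\merges+2)\longest - 1$. The lemma's bound is exact, not asymptotic: it feeds the exact delay of \Cref{thm:linear-online-tokenization-w-delay}. So there is no constant to absorb this slack. Step 1 already certifies all cuts up to $|x| - \merges\longest$, so emit up to that threshold instead; then $C < (\merges+1)\longest$.

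\textbf{The inclusion $\Merge(x[:-C]) \preceq y$.} This does not follow from locality. Step 1 applied to $x[:-C]$ only controls cuts up to position $|x| - C - \merges\longest$, not the whole of $\Merge(x[:-C])$. What you need is prefix-closure at token boundaries. Since $y$ is a prefix of $\Merge(x)$ ending on a boundary, its consecutive tokens are consistent, so $\Merge(\raw(y)) = y$ by \Cref{clm:stable-merge-seq}. With $C := |x| - |\raw(y)|$ this is exactly $\Merge(x[:-C]) = y$.

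\textbf{The running time.} This is the more serious gap: the time bound does not transfer once you change the emission rule. The structure of \citet{jiang2025incrementalbpe} does not maintain $\Merge(x)$ explicitly. It reports the last token $\lambda(x_{\leq i})$ of each prefix, i.e.\ parent pointers $i \mapsto i - |\lambda(x_{\leq i})|$. With only these queries, locating the boundaries of the current $\Merge(x)$ near position $|x| - \merges\longest$ means walking back from $|x|$ through every token of the unstable window. That window can hold $\Theta(\merges)$ tokens that all change on every append. Take the merge chain from the proof of \Cref{thm:np-delay-lb}, where $\longest = 2$. Appending $a^{0}_{k+1}$ to $\alpha\beta a^{0}_{1} b_{1} \cdots a^{0}_{k} b_{k}$ turns its tokenization $\Phi A^{0}_{1} \cdots A^{0}_{k}$ into $\alpha \Delta^{0} B^{0}_{1} \cdots B^{0}_{k}$, and appending $b_{k+1}$ turns it back. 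On inputs repeating this pattern, the per-append work is $\Theta(\merges)$ while $\log^2 \longest = O(1)$. So ``each character is appended once and each token emitted once'' does not bound the cost. You would need an extra level-ancestor or child-on-path structure, like the one the paper builds for $\incbpe$ in \Cref{lem:incremental-bpe}, with its own logarithmic overhead.

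\textbf{How the paper avoids this.} It keeps Jiang et al.'s eager-output algorithm unchanged, so time, space and soundness are inherited. That algorithm outputs the longest common prefix of $\Merge(x[:-j])$ over $0 \leq j \leq d(x) \leq \longest$. The paper then proves only coverage. Each $x[:-j]$ extends $x[:-\longest]$, hence $\Merge_{\preceq}(x[:-\longest]) \preceq y$. Applying \Cref{lem:merge-prec-progress} to $x[:-\longest]$ then gives $\Merge(x[:-C]) \preceq y$ with $C \leq \longest + \merges\longest$.
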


    First, we argue that the above suffices to prove \Cref{thm:linear-online-tokenization-w-delay}.
    Our algorithm simply outputs 
    \begin{equation*}
        \BPEMerge(\Pretokenize(\Segment(x; S^*); R); M) \preceq \Merge_{\preceq}(\Split_{\preceq}(\Seg_{\preceq}(x))) \text{.}
    \end{equation*}
    Note that each of the three algorithms are online (i.e. handles incremental input and incremental output) so that the algorithm is well defined.
    The running time and space bounds follow trivially, so it suffices to prove correctness.
    Let $x$ be the current input and $\Tok_{\preceq}(x)$ the desired output.
    First, we prove that our algorithm is sound: it outputs a stable prefix.

    \begin{claim}
        \label{clm:composition-is-stable}
        For any $S^*, R, M$ and input string $x$,
        \begin{equation*}
            \Merge_{\preceq}(\Split_{\preceq}(\Seg_{\preceq}(x; S^*); R); M) \preceq \Tok_{\preceq}(x; S^*, R, M) \text{.}
        \end{equation*}
    \end{claim}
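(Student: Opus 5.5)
By \Cref{clm:stable-prefix-compare}, $\Tok_{\preceq}(x)$ is the greatest stable prefix for $x$. So it suffices to show that $z := \Merge_{\preceq}(\Split_{\preceq}(\Seg_{\preceq}(x)))$ is stable, i.e.\ $z \in \calS(x)$, which means $z \preceq \Tok(xy)$ for every continuation $y \in \Sigma^*$. The argument is a generic composition principle for stable prefixes, applied twice. For functions $g$ and $f$ on incrementally updated lists, we use that $g_{\preceq}(x) \preceq g(xy)$ for all $y$, and that $f_{\preceq}(u) \preceq f(uv)$ for all continuations $v$ of $u$ in the appropriate (list) sense.

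\textbf{Step 1 (segmentation to pretokenization).} Fix $y$ and let $z_1 = \Seg_{\preceq}(x)$. By definition of the maximum stable prefix, $z_1 \preceq \Seg(xy)$, so $\Seg(xy) = z_1 \circ v_1$ for some continuation $v_1$. Here ``continuation'' of a list of segments means appending characters to the last segment and/or starting new segments and special tokens, matching the notion of incremental update used in \Cref{def:online-pretokenization}. Let $z_2 = \Split_{\preceq}(z_1)$. Since $z_2$ is stable for the list $z_1$, we get $z_2 \preceq \Split(z_1 \circ v_1) = \Split(\Seg(xy))$.

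\textbf{Step 2 (pretokenization to BPE).} The same reasoning gives $\Split(\Seg(xy)) = z_2 \circ v_2$ for a list continuation $v_2$. Stability of $z = \Merge_{\preceq}(z_2)$ then gives $z \preceq \Merge(z_2 \circ v_2) = \Tok(xy)$. Since $y$ was arbitrary, $z \in \calS(x)$, and hence $z \preceq \Tok_{\preceq}(x)$.

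The main obstacle is bookkeeping rather than mathematics. The stable-prefix definitions for $\Split$ and $\Merge$ quantify over continuations of a \emph{list} of strings, and we must check that ``$a \preceq b$ as token/character lists'' coincides with ``$b$ is an incremental update of $a$.'' One subtle point is whether the last segment (or pretoken) in $z_1$ (or $z_2$) is closed by $\brk$ or may still be extended. We handle this by treating $\brk$ as an ordinary symbol of the output list. Then prefix in the output sense is exactly an incremental-update relation on the input of the next stage: extending the last string corresponds to appending characters, and a trailing $\brk$ forces a new string. The observations after \Cref{def:online-bpe} about $\Split_{\preceq}(L)$ and $\Merge_{\preceq}(L)$ acting segment-wise confirm that stability is measured against exactly these continuations, so the two inclusions in Steps 1 and 2 are valid.
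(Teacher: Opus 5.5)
Your proof is correct and is the paper's argument: fix $y$, write $\Seg(xy) = z_1 v_1$ and $\Split(\Seg(xy)) = z_2 v_2$, and use the stability of each stage in turn to get $z \preceq \Tok(xy)$. Maximality from \Cref{clm:stable-prefix-compare} then gives $z \preceq \Tok_{\preceq}(x)$. Your explicit check that extending a previous stage's output as a prefix is the same as a list continuation of the next stage's input is something the paper leaves implicit (it cites correctness of the three subroutines instead), but it does not change the route.
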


    \begin{proof}
        Throughout, we omit $S^*, R, M$ for clarity.
        Let $x_1 = \Seg_{\preceq}(x)$, $x_2 = \Split_{\preceq}(x_1)$, and $x_3 = \Merge_{\preceq}(x_2)$ denote the outputs of our algorithm.
        Fix any continuation $y \in \Sigma^*$.
        Then, by correctness of $\Segment$, $x_1 = \Seg_{\preceq}(x) \preceq \Seg(xy)$ so that $\Seg(xy) = x_1 w$ for some $w$.
        By correctness of $\Pretokenize$, $x_2 \preceq \Split(x_1w) = \Split(\Seg(xy))$.
        Then, $\Split(\Seg(xy)) = x_2 v$ for some $v$.
        By correctness of $\BPEMerge$, we have $x_3 \preceq \Merge(x_2 v) = \Merge(\Split(\Seg(xy))) = \Tok(xy)$.
        Thus, $x_3$ is a stable prefix.
    \end{proof}

    Next, we prove completeness: our algorithm outputs the maximum stable prefix with delay $\lookback(R) + O(\merges \longest)$.
    We require the following lemma, which states that any accepted word must reach an accepting state every $\lookback(R)$ characters.

    \begin{lemma}
        \label{lem:live-lookahead}
        Let $\calA$ be a DFA recognizing $L(R)$.
        For $u \in \Sigma^+$, let $\calA(u)$ denote whether $\calA$ is in an accepting state after reading $u$.
        Let $\last(u) := \max \set{|w| \given w \preceq u\text{, } \calA(w) = 1}$ denote the last accepting state when reading $u$ (possibly including $u$).
        If $\last(u) < |u| - \lookback(R)$, then $\calA(ux) = 0$ for all $x \in \Sigma^*$.
    \end{lemma}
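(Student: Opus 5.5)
We argue by contraposition: assuming $\calA(ux) = 1$ for some $x \in \Sigma^*$, we show $\last(u) \geq |u| - \lookback(R)$. The plan is to apply \Cref{def:lookback} repeatedly to the accepted word $ux$, stepping backwards through accepted prefixes until we land inside the window $[|u| - \lookback, |u|]$. The key point is that each step shrinks the length by at most $\lookback := \lookback(R)$. For the argument to be nonvacuous we assume $\lookback < \infty$; if $\lookback = \infty$ the hypothesis $\last(u) < |u| - \lookback$ never holds.

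Concretely, let $w_0 = ux \in L(R)$, so $|w_0| \geq |u|$. If $|w_0| \leq |u| + 0$, i.e. $x = \epsilon$, then $\calA(u) = 1$ and $\last(u) = |u|$, contradicting the hypothesis. Otherwise we build a chain inductively. Given $w_j \in L(R)$ with $w_j \preceq ux$ and $|w_j| > |u|$, note that $|w_j| > |u| \geq \lookback$ whenever $|u| \geq \lookback$; if instead $|u| < \lookback$, the hypothesis $\last(u) < |u| - \lookback < 0$ is impossible, since $\last(u) \geq 0$ or is $-\infty$ by convention. We handle that degenerate case separately by interpreting $\last$ of an empty set as $-\infty$: then $u$ has no accepted prefix at all, and we must still derive a contradiction from the chain.

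So assume $|w_j| > \lookback$. Then \Cref{def:lookback} gives a strict prefix $w_{j+1} \prec w_j$ with $w_{j+1} \in L(R)$ and $|w_{j+1}| \geq |w_j| - \lookback$. The lengths strictly decrease, so the chain eventually reaches the first index $j$ with $|w_j| \leq |u|$. At that index $|w_{j-1}| > |u|$, hence $|w_j| \geq |w_{j-1}| - \lookback > |u| - \lookback$. Since $w_j \preceq ux$ and $|w_j| \leq |u|$, we get $w_j \preceq u$, and $w_j \in L(R)$ means $\calA(w_j) = 1$. Therefore $\last(u) \geq |w_j| > |u| - \lookback$, which is the desired contradiction. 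We should also check that the chain cannot stop prematurely, which would require some $w_j$ with $|u| < |w_j| \leq \lookback$. This is where the degenerate case $|u| < \lookback$ lives. In that case the hypothesis already requires $\last(u) < 0$, i.e. no accepted prefix. That is consistent only if we read $\last(u) = -\infty$, and then the claim can fail for short $u$; for example, $R = a^* b$ with $u = a$ has no accepted prefix, yet $ab$ is accepted.

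The main obstacle is exactly this boundary case. We expect to resolve it by noting how the lemma is used: $u$ is read from a point where the match started, and the empty prefix or first character is accepted because $R$ is total. Totality gives that every nonempty $u$ has an accepted nonempty prefix, so $\last(u) \geq 1$. Hence the hypothesis forces $|u| - \lookback > 1$, i.e. $|u| > \lookback$, and then every $w_j$ in the chain with $|w_j| > |u|$ also satisfies $|w_j| > \lookback$, so \Cref{def:lookback} applies at every step. With this observation the induction above goes through cleanly.
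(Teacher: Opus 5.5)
Your argument is correct and is essentially the paper's proof: both iterate \Cref{def:lookback} backwards from the accepted word $ux$, and both use totality to get $\last(u) \geq 1$ and hence $|u| > \lookback(R)$, so the definition applies at every step while the chain is longer than $u$; both then stop at the first accepted prefix of length at most $|u|$, which lies within $\lookback(R)$ of $|u|$. One small aside: $R = a^* b$ does not illustrate the boundary issue, since $\lookback(a^* b) = \infty$ makes the hypothesis vacuous, whereas a non-total example with finite lookback such as $R = ab$ (where $\lookback = 2$ and $u = a$) does.
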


    \begin{proof}
        Suppose for contradiction $\calA(ux) = 1$ for some continuation $x \in \Sigma^*$.
        Let $\lookback := \lookback(R)$.
        Since $R$ is total, we have $\last(u) \geq 1$ so that $|u| > \lookback$.
        Set $y_0 = ux$.
        By \Cref{def:lookback}, for any $|y_i| > \lookback$, there exists some $y_{i + 1} \prec y_i$ such that $|y_{i + 1}| \geq |y_i| - \lookback$ and $\calA(y_{i + 1}) = 1$.
        Let $i^*$ be the largest $i$ such that $u \preceq y_i$ so that $y_{i^* + 1} \prec u$.
        Then, $\last(u) \geq |y_{i^* + 1}| \geq |y_{i^*}| - \lookback \geq |u| - \lookback$, a contradiction.
    \end{proof}

    Let $\lookback := \lookback(R)$.
    Let $\Seg_{\preceq}(x) = a_1 \brk \dots \brk a_{s - 1} \brk a'$ where $x = a_1 a_2 \dots a_{s - 1} a_s$ and $a_s$ (possibly empty) contains no special token.
    We can assume this without loss of generality, as if $x$ ends on a special token, then $a_s = \epsilon$ is empty.
    Note that we can thus assume $a_{s - 1}$ is a special token.
    
    Since $\Seg_{\preceq}(x)$ only holds back characters of $a_s$ if a suffix of $a_s$ is a prefix of a special token in $S^*$, we conclude that $a_s[:-\ell_{\infty}(S^*)] \preceq a' \preceq a_s$ since $a_s[-\ell_{\infty}(S^*):] \not\in S^*$ and therefore is not a prefix of any special token $t \in S^*$.
    Then, we write
    \begin{equation*}
        \Split_{\preceq}(\Seg_{\preceq}(x)) = \Split(a_1) \brk \dots \brk \Split(a_{s - 1}) \brk \Split_{\preceq}(a') = P \brk \Split_{\preceq}(a')
    \end{equation*}
    where $P = \brk_{i = 1}^{s - 1} \Split(a_i)$.
    If $a' = \epsilon$ is empty, then $\Seg_{\preceq}(x) = a_1 \brk \dots \brk a_{s - 1} \brk$ so that 
    \begin{equation*}
        \Merge(\Split(a_1 \brk \dots \brk a_{s - 1} \brk)) \preceq \Tok_{\preceq}(x) \text{.}
    \end{equation*}
    In other words, since $a_{s - 1}$ is a special token, we have
    \begin{equation*}
        \Tok_{\preceq}(a_1 \dots a_{s - 1}) \preceq \Tok(a_1 \dots a_{s - 1}) = \Merge(\Split(a_1 \brk \dots \brk a_{s - 1} \brk)) \preceq \Tok_{\preceq}(x)
    \end{equation*}
    so that the algorithm has delay $\ell_{\infty}(S^*)$, since $a' = \varepsilon$ implies $|a_s| \leq |a'| + \ell_{\infty}(S^*) \leq \ell_{\infty}(S^*)$.
    
    Thus, assume $a' \neq \epsilon$ is not empty.
    Let $\Split_{\preceq}(a') = b_1 \brk \dots \brk b_{t - 1} \brk b'$ where $a' = b_1 \dots b_t$ and $b' \preceq b_t$.
    \Cref{lem:pretokenize-alg} guarantees that $|b'| \geq |b_t| - \lookback(R)$.
    
    Then, our algorithm outputs
    \begin{align*}
        \BPEMerge(P \brk \Split_{\preceq}(a')) &= \Merge(P) \circ \Merge(b_{1}) \circ \dots \circ \Merge(b_{t - 1}) \circ \BPEMerge(b') \\
        &= Q \circ \BPEMerge(b') 
    \end{align*}
    where $Q = \Merge(P) \circ \left(\bigcirc_{i = 1}^{t - 1} \Merge(b_i) \right)$.
    Recall that we proved above
    \begin{equation*}
        |a_1 \dots a_{s - 1} b_1 \dots b_{t - 1} b'| \geq |a_1 \dots a_{s - 1} a'| - \lookback(R) \geq |x| - (\ell_{\infty}(S^*) + \lookback(R)) \text{.}
    \end{equation*}

    We have the following claim.

    \begin{claim}
        \label{clm:coverage}
        Suppose $y \preceq x$ and $z \preceq \Tok_{\preceq}(x)$, and $|z| \geq |y|$.
        Then, $\Tok_{\preceq}(y) \preceq z$.
    \end{claim}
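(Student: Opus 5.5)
The plan is to reduce everything to the fact that stable prefixes of the same string are comparable, and then compare raw lengths. Throughout, $|\cdot|$ of a token sequence means the length of its raw expansion, as in the preliminaries.

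First, since $y \preceq x$, \Cref{clm:inc-output} gives $\Tok_{\preceq}(y) \preceq \Tok_{\preceq}(x)$. By hypothesis, $z \preceq \Tok_{\preceq}(x)$ as well. So both $\Tok_{\preceq}(y)$ and $z$ are prefixes of the same token list $\Tok_{\preceq}(x)$. By the elementary fact used in the proof of \Cref{clm:stable-prefix-compare}, either $\Tok_{\preceq}(y) \preceq z$, which is what we want, or $z \prec \Tok_{\preceq}(y)$ strictly.

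It remains to rule out $z \prec \Tok_{\preceq}(y)$, using the hypothesis $|z| \geq |y|$. I first bound $|\Tok_{\preceq}(y)|$. Taking the empty continuation in \Cref{def:stable}, $\Tok_{\preceq}(y) \preceq \Tok(y)$. Moreover $\raw(\Tok(y)) = y$, because segmentation, pretokenization and BPE only insert boundaries or merge adjacent symbols and never change the underlying characters. Hence $\raw(\Tok_{\preceq}(y))$ is a prefix of $y$, so $|\Tok_{\preceq}(y)| \leq |y|$.

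Now suppose $z \prec \Tok_{\preceq}(y)$. Then $\Tok_{\preceq}(y) = z\tau w$ for some token $\tau$ and token list $w$. Every token, including a special token, has a nonempty raw expansion, so $|\raw(\tau)| \geq 1$. This gives
\begin{equation*}
    |z| < |z| + |\raw(\tau)| \leq |\Tok_{\preceq}(y)| \leq |y| \leq |z| \text{,}
\end{equation*}
a contradiction. Hence $\Tok_{\preceq}(y) \preceq z$.

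The only step needing any care is this length comparison. It relies on two facts: raw length is strictly monotone along strict prefixes of token lists, since no token expands to the empty string, and $\raw(\Tok(y)) = y$. Both follow directly from the definitions, so I do not expect a real obstacle.
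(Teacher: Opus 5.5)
Your proof is correct and follows the paper's argument: both $\Tok_{\preceq}(y)$ and $z$ are prefixes of $\Tok_{\preceq}(x)$, hence comparable, and the length hypothesis decides which way. You also spell out the bound $|\Tok_{\preceq}(y)| \leq |y|$, which follows from $\raw(\Tok(y)) = y$ and the fact that every token has a nonempty raw expansion; the paper's one-line proof uses this bound without stating it.
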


    \begin{proof}[Proof of \Cref{clm:coverage}]
        Note $\Tok_{\preceq}(y) \preceq \Tok_{\preceq}(x)$, so $\Tok_{\preceq}(y), z$ are prefixes of $\Tok_{\preceq}(x)$.
        Thus, if $|z| \geq |y|$, we have the desired claim.
    \end{proof}

    Our algorithm outputs $z = Q \circ \BPEMerge(b') \preceq \Tok_{\preceq}(x)$.
    From \Cref{lem:bpe-merge-alg} we have
    \begin{align*}
        |z| &\geq |a_1 \dots a_{s - 1} b_1 \dots b_{t - 1}| + |b'| - (\merges + 1)\longest \\
        &\geq |x| - \ell_{\infty}(S^*) - \lookback - (\merges +1)\longest \text{.}
    \end{align*}
    Thus, \Cref{clm:coverage} implies the desired delay bound.
    It remains to prove the required lemmas.

    \OnlineSeg*

    \begin{proof}[Proof of \Cref{lem:segment-alg}]
        Consider a trie $\calT$ that recognizes strings in $S^*$.
        This trie has size $O(\ell_1(S^*))$.
        Regard the trie as a deterministic automaton with an additional rejecting sink. 
        Missing child transitions lead to the sink, which loops on every character. Accepting nodes correspond to special tokens. 
        The function $\last(\start, w)$ returns the length of the longest accepting prefix of $w$, or $\bot$ if none exists.
        
        We note some properties of $\Seg_{\preceq}(x)$.
        The input $x$ consists of the characters in the committed output $\Seg_{\preceq}(x)$ and (possibly empty) buffer $u$ such that $x = \raw(\Seg_{\preceq}(x)) \circ u$.
        We note that $u$ is a (non-maximal) trie prefix.
        Otherwise, if it is not a trie prefix, it is not (and cannot be extended to a special token), so that we can commit at least one character of $u$ to output.
        If it is a maximal trie prefix, it is a special token (and cannot be extended to a longer one), so we can commit $s^*(u)$ as a special token.
        Thus, we have 
        \begin{equation*}
            |\raw(\Seg_{\preceq}(x))| \geq |x| - |u| \geq |x| - \ell_{\infty}(S^*) \text{.}
        \end{equation*}
        For some partition $x = y_1 \dots y_s u$ where we may assume without loss of generality that (possibly empty) $y_s$ is not a special token, we have
        \begin{equation}
            \label{eq:seg-prec-output}
            \Seg_{\preceq}(x) = \begin{cases}
                y_1 \brk \dots \brk y_{s} \brk &\text{if $u$ contains a special token as a prefix} \\
                y_1 \brk \dots \brk y_{s} &\otherwise
            \end{cases} \text{.}
        \end{equation}
        Note that each $y_{j}$ is either a special token (corresponding to a trie match) or a segment of alphabet characters. 
        Only the latter is provided as input to pretokenization and BPE in later stages.
        
        In the first case, since $u$ contains a special token as a prefix, any continuation of $x$ contains a special token after $y_{s}$, so the prefix contains $\brk$ after $y_{s}$.
        In the latter case, we do not know if $u$ will extend to a special token.
        In particular, we cannot yet commit any character of $u$, since if $u$ is extended to a match (e.g. $\tilde{u} = u u' \in S^*$) then $\Seg(x u')$ ends with $\dots y_{s} \brk s^*(\tilde{u})$.

        We now give a simple algorithm.
        Intuitively, $\buffer$ will maintain the uncommitted suffix of the input.
        Initialize $\state \gets \start$ to the root of $\calT$, $\last \gets \bot$, and $\buffer \gets \epsilon$.
        Recall that $\delta$ denotes the transition function and $\last$ the last acceptance.
        In the following, reset means $\state \gets \start, \last \gets \bot, \buffer \gets \epsilon$.
        Our algorithm never outputs two consecutive $\brk$; this can be checked by examining the last output.
        On input $x_i$, do the following:
        \begin{mdframed}
            {\bf $\python{SimpleMatch}(x_{i})$}
            \stepnames
            \begin{enumerate}
                \item Update $\state \gets \delta(\state, x_i)$ and push $x_i$ to $\buffer$.

                \item While $\buffer$ is not a trie prefix: 
                \label{step:dead-not-accept}
                
                \begin{enumerate}
                    \item If $\last = \bot$, output $\buffer_{1}$ and update $\buffer \gets \buffer_{> 1}$.
                    
                    \Comment{(Prefix of special token unmatched. Flush first character.)}
                    
                    \item If $\last \neq \bot$, output and pop $\brk s^*(\buffer_1 \dots \buffer_{\last}) \brk$.

                    \Comment{(Backtrack to the last special token. Restore state.)}

                    \item Update $\state \gets \delta(\start, \buffer)$ and $\last \gets \last(\start, \buffer)$.
                \end{enumerate}

                \item If $\state \in \accept$ and $\buffer$ is a maximal trie prefix, output $\brk s^*(\buffer) \brk$. Reset.
    
                \Comment{(Found maximal special token.)}
                
                \item Else if $\state \in \accept$ set $\last \gets |\buffer|$. 
                If $\last \neq \bot$, output $\brk$.
                \label{step:short-special}

                \Comment{(Found special token, not maximal.)}
            \end{enumerate}
        \end{mdframed}

        Observe that if we make the assumption that $S^*$ is prefix-free (as is often the case in practice), the algorithm and ensuing analysis can be simplified by ignoring \Cref{step:short-special}.
        The space of the algorithm is $O(|\buffer|) = O(\ell_{\infty}(S^*))$, since $\buffer$ contains exactly the uncommitted input $u$.
        The time of the $\state \in \accept$ cases is $O(|\buffer|)$, $O(|\buffer|)$ in the maximal case and $O(1)$ otherwise.
        We bound the time of \Cref{step:dead-not-accept}.
        We observe that each iteration of the while loop takes $O(|\buffer|)$ time (e.g. to recompute $\state, \last$).
        Furthermore, there are at most $O(|\buffer|)$ iterations since in each iteration, the size of $\buffer$ decreases by at least $1$.
        Thus, the overall time per update is $O(|\buffer|^2) = O(\ell_{\infty}(S^*)^2)= O(1)$.
        With a slightly tighter analysis, we can achieve $O(n |\buffer|)$ total time, by observing that every step that requires $O(|\buffer|)$ time produces at least one character of output, so there are at most $O(n)$ such steps.

        We now prove correctness, inducting on $x_i$ with the following hypothesis.
        Let $u$ be the uncommitted input after processing $x_{i - 1}$.
        \begin{enumerate}
            \item $\buffer$ contains the uncommitted input.
            \item $\state = \delta(\start, \buffer)$ and $\last = \last(\start, \buffer)$.
            \item The output is $\Seg_{\preceq}(x_{\leq i})$ (satisfying \Cref{eq:seg-prec-output}).
        \end{enumerate}
        The base case is trivial.
        On input $x_i$, either $x_i$ extends $u$ as a prefix of a special token, or it does not.
        In the former case, if $u x_i$ is a maximal special token, our algorithm has $\state \in \accept$ and $\buffer$ is maximal, so we output $\brk s^*(u x_i) \brk$ and reset, satisfying the inductive hypothesis.
        Otherwise, $u x_i$ is not a maximal special token. 
        If it is a special token, we output $\brk$ since $\state \in \accept$, push $x_i$ to $\buffer$ and update $\state, \last$, again satisfying the inductive hypothesis.
        Otherwise, our algorithm pushes $x_i$ to $\buffer$ and updates $\state$, again satisfying the inductive hypothesis.

        If $x_i$ does not extend $u$, our algorithm enters the case where $\buffer$ is not a trie prefix.
        Whenever $\buffer$ is not a trie prefix, we pop some prefix of it (either one character or a prefix of $\buffer$ that is a special token).
        Eventually, $\buffer$ (which is possibly empty) must be a trie prefix.
        If this prefix is maximal, we output it, leaving $\buffer$ empty and satisfying the first condition.
        If not, we retain $\buffer$, again satisfying the first condition, as this is the uncommitted input.
        It is easily verified that whenever $\buffer$ is updated, $\state$ and $\last$ are recomputed appropriately, satisfying the second condition.
        In each iteration, we output any prefix of $\buffer$ that is a special token (that cannot be extended), and also outputting $\brk$ if the $\buffer$ contains a trie prefix after exiting the while loop, maintaining the correctness of the output as desired.
    \end{proof}

    \OnlineSplit*

    We note that in the offline setting, there is a linear time pretokenization algorithm (called maximal munch tokenization) of \citet{reps1998maximalmunch}. 
    In the worst case, their algorithm requires linear space.
    We design an algorithm parameterized by $\lookback := \lookback(R)$.
    Consider a DFA $\calA$ that recognizes exactly $L(R)$ with starting state $\start$.

    Observe that a simple modification of \Cref{lem:segment-alg} gives an algorithm for online pretokenization that runs in $O(n \lookback)$ time and $O(\lookback)$ space:
    Simulate $\calA$.
    On each accepting state, emit the current match; otherwise buffer the current input in memory.
    On each dead state, backtrack to the last accepting state, emit a $\brk$, and simulate $\calA$ on the buffer in memory from the starting state.
    \Cref{lem:live-lookahead} guarantees that a dead state must be encountered within $O(\lookback)$ characters of the last accept, so that we backtrack at most $O(\lookback)$ characters, resulting in $O(n\lookback)$ time.
    While in practice $\lookback$ is not large, in the worst case this can lead to a quadratic time algorithm (e.g. when $\lookback$ is sufficiently large).
    To guarantee linear time in general, we require the following useful data structure:

    \ConcatQueue*

    The data structure is initialized with two empty stacks: the push stack and the pop stack.

    \begin{mdframed}
        $\push(c)$ for character $c \in \Sigma$:
        \begin{enumerate}
            \item Compute $\delta(q, c)$ and $\last(q, c)$ for all $q \in \calQ$.
            \item Let $w_{\push}$ be the top element of the push stack ($\epsilon$ if empty).
            Compute and push
            \begin{align*}
                \delta(q, w_{\push}c) &\gets \delta(\delta(q, w_{\push}), c) \text{,} \\
                \last(q, w_{\push}c) &\gets \begin{cases}
                    \last(q, w_{\push}) & \text{ if } \last(\delta(q, w_{\push}), c) = \bot \\
                    |w_{\push}| + \last(\delta(q, w_{\push}), c) & \otherwise
                \end{cases} \text{.}
            \end{align*}
        \end{enumerate}
    \end{mdframed}

    \begin{mdframed}
        $\pop(p)$ for integer $p \geq 1$:
        \begin{enumerate}
            \item If the pop stack has $k \geq p$ elements, pop $p$ elements from the pop stack.
            \item If the pop stack has $k < p$ elements:
            \begin{enumerate}
                \item Pop $k$ elements from the pop stack. 
                Let the push stack contain $w_1 \dots w_{\ell}$ (with $w_{\ell}$ on top).
                \item If $\ell + k \leq p$, empty both stacks.
                \item If $\ell + k > p$, reverse the elements of the push stack onto the pop stack, so that the pop stack contains $w_{\ell} \dots w_1$ (with $w_1$ on top). 
                For each $i \in [\ell]$ in decreasing order, compute:
                \begin{align*}
                    \delta(q, w_{i} \dots w_{\ell}) &\gets \delta(\delta(q, w_{i}), w_{i + 1} \dots w_{\ell}) \text{,} \\
                    \last(q, w_{i} \dots w_{\ell}) &\gets \max \begin{cases}
                        \last(q, w_{i}) \\
                        1 + \last(\delta(q, w_{i}), w_{i + 1} \dots w_{\ell}) 
                    \end{cases} \text{.}
                \end{align*}
                
                \Comment{(For comparison purposes, if $\last(q, w) = \bot$, then $\bot = - \infty$.)}
                
                Then, pop $p - k$ elements from the pop stack.
            \end{enumerate}
        \end{enumerate}
    \end{mdframed}

    \begin{mdframed}
        $\query(q)$ for state $q \in \calQ$:
        \begin{enumerate}
            \item Let the pop stack contain $w_{j} \dots w_{1}$ (with $w_{1}$ on top) and the push stack contain $w_{j + 1} \dots w_{\ell}$ (with $w_{\ell}$ on top).
            \item Compute:
            \begin{align*}
                    \delta(q, w_{1} \dots w_{\ell}) &\gets \delta(\delta(q, w_{1} \dots w_{j}), w_{j + 1} \dots w_{\ell}) \text{,} \\
                    \last(q, w_{1} \dots w_{\ell}) &\gets \max \begin{cases}
                        \last(q, w_{1} \dots w_{j}) \\
                        j + \last(\delta(q, w_{1} \dots w_{j}), w_{j + 1} \dots w_{\ell}) 
                    \end{cases} \text{.}
                \end{align*}
        \end{enumerate}
    \end{mdframed}

    We prove that the data structure satisfies the desired guarantees.

    \begin{proof}[Proof of \Cref{lem:concat-queue}]
        We maintain the following invariant: every element maintains $\delta$ and $\last$ for the string obtained by concatenating all characters beneath it and itself. 
        This consists of prefixes of the string on the push stack and suffixes of the string on the pop stack.
        Formally,

        \begin{claim}[Below Consistency]
            \label{clm:below-consistency}
            Let the pop stack contain $w_{j} \dots w_{1}$ (with $w_{1}$ on top) and the push stack contain $w_{j + 1} \dots w_{\ell}$ (with $w_{\ell}$ on top).
            Our algorithm maintains for every $q \in \calQ$:
            \begin{enumerate}
                \item (Pop) $\delta(q, w_{i} \dots w_{j})$ and $\last(q, w_{i} \dots w_{j})$ for all $1 \leq i \leq j$,
                \item (Push) $\delta(q, w_{j + 1} \dots w_{i})$ and $\last(q, w_{j + 1} \dots w_{i})$ for all $j + 1 \leq i \leq \ell$.
            \end{enumerate}
        \end{claim}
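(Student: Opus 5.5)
We prove \Cref{clm:below-consistency} by induction on the sequence of operations. The invariant holds trivially for empty stacks. Each operation either leaves the stored values in place or recomputes them. So it suffices to check that every entry written by $\push$ and every entry written during the reversal step of $\pop$ stores the correct values of $\delta$ and $\last$. Entries that are merely popped off the pop stack need no check, since removing the top of the pop stack does not change the strings $w_i \dots w_j$ represented by the remaining entries below it. In particular, the indexing is relative: after removing $w_1, \dots, w_p$ we relabel, and each surviving entry still stores the values for its suffix of the pop-stack string.

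For $\push(c)$, the new top entry must store $\delta(q, w_{j+1} \dots w_\ell c)$ and $\last(q, w_{j+1}\dots w_\ell c)$. By induction, the previous top stores these values for $w_{\push} = w_{j+1}\dots w_\ell$, and the single-character values $\delta(q', c)$, $\last(q', c)$ are computed directly. We then use two DFA composition identities, valid for any $u, v$:
\begin{equation*}
\delta(q, uv) = \delta(\delta(q,u), v), \qquad \last(q, uv) = \max\{\last(q,u),\ |u| + \last(\delta(q,u), v)\},
\end{equation*}
with the convention $\bot = -\infty$. Both follow because the prefixes of $uv$ are the prefixes of $u$ together with $u$ followed by prefixes of $v$, and the state reached after reading $u v'$ from $q$ equals the state reached after reading $v'$ from $\delta(q,u)$. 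The $\push$ formula is exactly this identity with $v = c$: if $c$ leads to an accepting state the maximum is $|u|+1$, and otherwise the value is $\last(q,u)$. Entries already on the push stack are unchanged, so they remain correct.

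For the reversal inside $\pop$, the pop stack receives $w_\ell, \dots, w_1$, and the entry for $w_i$ must store the values for the suffix $w_i \dots w_\ell$. We induct downward on $i$. For $i = \ell$ this is the single-character case. For smaller $i$ we apply the identity with $u = w_i$ and $v = w_{i+1}\dots w_\ell$, whose values were computed in the previous iteration; this matches the formula in the algorithm. The push stack becomes empty, so the push part of the invariant holds vacuously. After the subsequent $p-k$ pops, correctness again follows because removing the top leaves the suffixes below it unchanged. Finally, $\query$ applies the same identity with $u = w_1\dots w_j$ (read from the top entry of the pop stack) and $v = w_{j+1}\dots w_\ell$ (read from the top entry of the push stack). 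The empty-stack cases are handled by setting $\delta(q,\epsilon)=q$ and $\last(q,\epsilon)=\bot$; the paper's $\last$ is defined over nonempty prefixes, so $\epsilon$ contributes $\bot$.

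The main obstacle is not the invariant itself but the amortized $O(1)$ cost claimed in \Cref{lem:concat-queue}. Each entry stores a table over $\calQ$, so every operation costs $O(|\calQ|) = O(1)$, since $R$ is fixed. Every character is pushed once, moved by a reversal at most once, and popped at most once. In the branch where $\ell + k \le p$ we discard the stacks without any reversal. Charging each character $O(|\calQ|)$ for each of these events therefore gives amortized $O(1)$ time per operation.
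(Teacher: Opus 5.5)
Your proof is correct and follows the paper's approach. The paper's argument also states the composition identities for $\delta$ and $\mathrm{last}$ (its \eqref{eq:trans-compose} and \eqref{eq:last-compose}) and observes that the \push{} and reversal updates are instances of them; you add explicit treatment of relabeling after pops, the downward induction in the reversal, and the empty-string conventions, which the paper leaves implicit. Your amortized-cost paragraph goes beyond the claim itself but matches the paper's subsequent proof of \Cref{lem:concat-queue}.
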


        \begin{proof}[Proof of \Cref{clm:below-consistency}]
             Note that for any word $w_{1} \dots w_{j} w_{j + 1} \dots w_{\ell}$, we have by the definition of a DFA for all $j \in [\ell]$,
            \begin{equation}
                \label{eq:trans-compose}
                \delta(q, w_{1} \dots w_{\ell}) = \delta(\delta(q, w_{1} \dots w_{j}), w_{j + 1} \dots w_{\ell}) \text{.}
            \end{equation}
            Similarly, for all $j \in [\ell]$,
            \begin{align}
                \label{eq:last-compose}
                \last(q, w_{1} \dots w_{\ell}) &= \max \begin{cases}
                    \last(q, w_{1} \dots w_{j}) \\
                    j + \last(\delta(q, w_{1} \dots w_{j}), w_{j + 1} \dots w_{\ell}) 
                \end{cases} \\
               &=  \begin{cases}
                    \last(q, w_{1} \dots w_{j}) & \text{ if } \last(\delta(q, w_{\leq j}), w_{j + 1} \dots w_{\ell}) = \bot \\
                    j + \last(\delta(q, w_{\leq j}), w_{j + 1} \dots w_{\ell}) & \otherwise
                \end{cases} \text{.} \nonumber
            \end{align}
            The above equality follows as either the DFA reaches an accepting state in the second half of the computation when scanning $w_{j + 1} \dots w_{\ell}$ (in which case the second case triggers) or it does not, in which case it suffices to find an accepting state in the first half when scanning $w_{1} \dots w_{j}$ (which is handled by the first case).

            We now argue that the invariant is maintained on the push and pop stacks.
            This follows directly from \Cref{eq:trans-compose} and \Cref{eq:last-compose} and the update rule of $\push$ and $\pop$.
        \end{proof}

        Given \Cref{clm:below-consistency}, we observe that \Cref{eq:trans-compose} and \Cref{eq:last-compose} ensure that $\query$ returns the correct values.
        We now argue that our data structure has the desired performance.
        Consider a single element: it is pushed onto the push stack once, moved from the push to pop stack (at most) once, and popped from the pop stack (at most) once.
        It then suffices to argue that each operation requires $O(1)$ time.
        For $\push$, the computation of the transitions $\delta$ and $\last$ can be computed in $O(1)$ time given $\delta(q, w_{\push})$ and $\last(q, w_{\push})$.
        When moving from the push stack to the pop stack, we compute $\delta(q, w_{i})$ and $\last(q, w_{i})$ in $O(1)$ time and then query $\delta(\cdot, w_{i + 1} \dots w_{\ell})$ and $\last(\cdot, w_{i + 1} \dots w_{\ell})$ in $O(1)$ time.
        Popping an element of the pop stack requires $O(1)$ time, completing the proof.
    \end{proof}
    
    Using the given data structure, run the following algorithm.
    We begin by initializing an empty $\cqueue$ (\Cref{lem:concat-queue}) and $\state \gets \start$.
    Then, for each character $x_i$ in the input, do the following.
    
    \begin{mdframed}
        {\bf $\FastPretokenization(x_{i})$}
        
        \begin{enumerate}
        \stepnames
            \item Push $\cqueue.\push(x_i)$ and update $\state \gets \delta(\state, x_i)$.
            
            \item If $\state \in \accept$, output $\cqueue.\pop(|\cqueue|)$.
            \label{step:acc-prefix}

            \Comment{(Found a match, so emit the current buffer and reset $\cqueue$.)}

            \item While $\state \in \dead$:

            \label{step:pretok-dead-loop}

            \Comment{(The unresolved suffix cannot continue the current match, so $\brk$ at the last $\accept$.)}
            
            \begin{enumerate}
                \item If $\cqueue = \epsilon$, output $\brk$, set $\state \gets \start$ and exit.
                \item Query $\delta(\start), \last(\start) \gets \cqueue.\query(\start)$.
                \item Output $\brk$ and $\cqueue.\pop(\last(\start))$.
                \item Update $\state \gets \delta(\start)$.
                \label{step:brk-acc-prefix}
            \end{enumerate}
            
        \end{enumerate}
    \end{mdframed}

    We prove that our algorithm satisfies the desired properties.

    \begin{proof}[Proof of \Cref{lem:pretokenize-alg}]
        We begin by analyzing the correctness of our algorithm.
        We proceed by induction, and argue that for all $i \geq 0$, after witnessing $x_{\leq i}$, the algorithm maintains the following:
        \begin{enumerate}[label=\Alph*]
            \item The output is $\Split_{\preceq}(x_{\leq i})$.
            The output $\Split_{\preceq}(x_{\leq i})$ consists of a string consisting of pretokens and separators $\brk$.
            We let $P_i$ denote the output before the last separator in $\Split_{\preceq}(x_{\leq i})$.
            Thus, we observe that the output has one of the following forms
            \begin{itemize}
                \item $P_i \brk x_{b(i) + 1} \dots x_{\ell(i)}$ for $\ell(i) = b(i) + \last(\start, x_{b(i) + 1} \dots x_{i})$, for some $b(i)$ indicating the current ``live" pretoken, 
                \item $P_{i} \brk$, in which case we write $b(i) = \ell(i) = i$.
            \end{itemize}
            \label{it:outer-out-correct}
            
            \item $\cqueue$ maintains $x_{\ell(i) + 1} \dots x_{i}$.
            \label{it:outer-queue-correct}
            
            \item $\state$ maintains $\delta(\start, x_{b(i) + 1} \dots x_{i})$. 
            \label{it:outer-state-correct}
        \end{enumerate}
        We call the above induction the outer induction, which inducts over each input $x_{i}$ provided incrementally to the algorithm.
        The base case is trivial. 
        If $i = 0$, we have $b(0) = \ell(0) = 0$ so the initial output is $\epsilon$, $\cqueue$ is empty, and $\state \gets \start = \delta(\start, \epsilon)$.

        We now proceed to the inductive case.
        Consider input $x_{i}$.
        To prove the outer induction above, we will use an inner induction over the while loop within $\FastPretokenization$. 
        To prove the output of the $i$-th iteration of the outer induction has Properties \ref{it:outer-out-correct}, \ref{it:outer-queue-correct}, \ref{it:outer-state-correct}, we assume the output of the $(i-1)$-th iteration starting with $P_{i-1}\brk$ satisfies Properties \ref{it:outer-out-correct}, \ref{it:outer-queue-correct}, \ref{it:outer-state-correct} after processing $x_{i - 1}$.
        Note that $P_{i - 1} \preceq P_{i}$ since we can only incrementally extend the output of $\Split_{\preceq}(x_{\leq i})$.
        We now claim that after the $j$-th iteration of the while loop, the following hold: 
        \begin{enumerate}
            \item The current output is $P_{i - 1} \brk Q_{j} \preceq \Split_{\preceq}(x_{\leq i})$ where 
            \begin{equation*}
                Q_{j} = x_{b(i - 1) + 1} \dots x_{r(0)} \brk x_{r(0) + 1} \dots x_{r(1)} \brk \dots \brk x_{r(j - 1) + 1} \dots x_{r(j)}
            \end{equation*}
            for $r(j) = r(j - 1) + \last(\start, x_{r(j - 1) + 1} \dots x_{i})$ with the convention $r(-1) = b(i - 1)$. 
            \item $\cqueue$ maintains $x_{r(j) + 1} \dots x_{i}$. 
            Furthermore, $\state$ maintains $\delta(\start, x_{r(j - 1) + 1} \dots x_{i})$.
            \label{it:state-cqueue-correct}
            \item If $\state \not\in \dead$, then the output is $\Split_{\preceq}(x_{\leq i})$.
            \label{it:state-live-output-correct}
            \item $\state \not\in \accept$ if and only if $r(j) < i$.
        \end{enumerate}
        First, we claim that if the inner induction claim holds and the while loop terminates, then the outer induction holds.
        Towards \Cref{it:outer-out-correct}, we observe that the while loop only terminates when $\state \not\in \dead$ and thus \Cref{it:state-live-output-correct} ensures that whenever $\state \not\in \dead$, the output is $\Split_{\preceq}(x_{\leq i})$ as desired.
        Towards \Cref{it:outer-queue-correct} and \Cref{it:outer-state-correct}, we set $b(i) = r(j - 1)$ and $\ell(i) = r(j)$ after the termination of the while loop so that \Cref{it:state-cqueue-correct} ensures the desired properties.
        Thus, it suffices to prove (1) the inner induction holds and (2) that the algorithm does indeed terminate after each input.

        \paragraph{Base Case.}
        We begin with the base case.
        By convention, we say that the $0$-th iteration of the while loop ends after \Cref{step:acc-prefix}, i.e. immediately before the first iteration of the while loop.
        When input $x_i$ arrives, the output is $\Split_{\preceq}(x_{<i}) \preceq \Split_{\preceq}(x_{\leq i})$ by \Cref{clm:inc-output}.

        If $\state \not\in \accept$, then the algorithm does not output or modify $\cqueue$ in \Cref{step:acc-prefix}.
        Furthermore, since
        \begin{equation*}
            \state = \delta(\delta(\start, x_{b(i - 1) + 1} \dots x_{i - 1}), x_i) = \delta(\start, x_{b(i - 1) + 1} \dots x_{i}) \not\in \accept \text{,}
        \end{equation*}
        we have $r(0) \leq \ell(i - 1)$, satisfying the first, second, and fourth conditions.
        
        If $\state \in \accept$, then we have
        \begin{equation*}
            \state = \delta(\start, x_{b(i - 1) + 1} \dots x_{i}) \in \accept \text{.}
        \end{equation*}
        In particular, since for any continuation $y$,
        \begin{align*}
            \Split(x_{\leq i} y) \succeq \Split_{\preceq}(x_{\leq i}) \succeq \Split_{\preceq}(x_{< i}) \succeq P_{i - 1} \brk
        \end{align*}
        we have $\Split(x_{\leq i} y) = P_{i - 1} \brk \Split(x_{b(i - 1) + 1} \dots x_{i} y)$ so that%
        \begin{equation*}
            x_{b(i - 1) + 1} \dots x_{i} = F(x_{b(i - 1) + 1} \dots x_{i}) \preceq F(x_{b(i - 1) + 1} \dots x_{i} y) \preceq \Split(x_{b(i - 1) + 1} \dots x_{i} y) 
        \end{equation*}
        where $F(x)$ represents the first match of $x$ (see \Cref{def:pretokenization}).
        Thus, the first, second, and fourth conditions are satisfied with $r(0) = i$.

        It remains to check the third condition.
        Suppose $\state \not\in \dead$, we claim the output is $\Split_{\preceq}(x_{\leq i})$.
        Since $\state \not\in \dead$, there is some continuation $|y| > 0$ where 
        \begin{equation*}
            \delta(\start, x_{b(i - 1) + 1} \dots x_{i} y) = \delta(\delta(\start, x_{b(i - 1) + 1} \dots x_{i}), y) = \delta(\state, y) \in \accept \text{.}
        \end{equation*}
        If $\state \not\in \accept$, then above we had $r(0) = \ell(i - 1) < i$,
        \begin{align*}
            \Split(x_{\leq i}) &= P_{i - 1} \brk x_{b(i - 1) + 1} \dots x_{\ell(i - 1)} \brk \Split(x_{\ell(i - 1) + 1} \dots x_{i}) \\
            \Split(x_{\leq i} y) &= P_{i - 1} \brk x_{b(i - 1) + 1} \dots x_{i} y
        \end{align*}
        so that $\Split_{\preceq}(x_{\leq i}) \preceq  P_{i - 1} \brk x_{b(i - 1) + 1} \dots x_{\ell(i - 1)} = P_{i - 1} \brk x_{b(i - 1) + 1} \dots x_{r(0)}$.

        If $\state \in \accept$, after \Cref{step:acc-prefix} the output is $\Split(x_{\leq i})$ since we output
        \begin{equation*}
            P_{i - 1} \brk x_{b(i - 1) + 1} \dots x_{i} = P_{i - 1} \brk F(x_{b(i - 1) + 1} \dots x_{i}) = \Split(x_{\leq i}) \text{.}
        \end{equation*}
        We claim this is $\Split_{\preceq}(x_{\leq i})$ as
        \begin{align*}
            \Split(x_{\leq i}) &= P_{i - 1} \brk x_{b(i - 1) + 1} \dots x_{i} \\
            \Split(x_{\leq i} y) &=  P_{i - 1} \brk x_{b(i - 1) + 1} \dots x_{i} y
        \end{align*}
        have a maximum common prefix of $P_{i - 1} \brk x_{b(i - 1) + 1} \dots x_{i}$.
        Thus, in both cases we have that the output is $\Split_{\preceq}(x_{\leq i})$.

        \paragraph{Inductive Case.}
        Now, let $j \geq 1$.
        We check that all conditions are maintained through each iteration of the while loop.
        From the second condition, we have $\cqueue$ maintains $x_{r(j - 1) + 1} \dots x_{i}$ so that we compute
        \begin{equation*}
            \delta(\start) = \delta(\start, x_{r(j - 1) + 1} \dots x_{i}) \quad \text{    and     } \quad \last(\start) = \last(\start, x_{r(j - 1) + 1} \dots x_{i}) \text{.}
        \end{equation*}

        Let $\delta_0$ denote $\state$ at the end of the previous iteration.
        If $\delta_0 \in \accept$, the fourth condition implies that $r(j - 1) = i$ and the second implies $\cqueue$ maintains $\epsilon$.
        Furthermore, since $\delta_0 \in \dead$ (otherwise this iteration would never have run), then for every continuation $y$
        \begin{equation*}
            \Split(x_{\leq i} y) = \Split(x_{\leq i}) \brk \Split(y)
        \end{equation*}
        so that $\Split_{\preceq} (x_{\leq i}) = P_{i - 1} \brk Q_{j - 1} \brk$.
        Our algorithm outputs exactly $\brk$ in the $j$-th iteration.
        Furthermore, $\state \gets \delta(\start) = \start \not\in \dead$, so the while loop terminates, completing the inductive step.

        Thus, in the following assume $\delta_0 \not\in \accept$, so that $r(j - 1) < i$ and $r(j) > r(j-1)$.
        By induction, we output $P_{i - 1} \brk Q_{j}$.
        We argue $P_{i - 1} \brk Q_{j} \preceq \Split_{\preceq}(x_{\leq i})$.
        Since $\delta_0 \in \dead$, we have for any continuation $y$,
        \begin{equation*}
            \Split(x_{\leq i} y) \succeq P_{i - 1} \brk Q_{j - 1} \brk \Split(x_{r(j - 1) + 1} \dots x_{i} y) \text{.}
        \end{equation*}
        Since $x_{r(j - 1) + 1} \dots x_{r(j)} = F(x_{r(j - 1) + 1} \dots x_{i})$, we have
        \begin{equation*}
            P_{i - 1} \brk Q_{j} = P_{i - 1} \brk Q_{j - 1} \brk F(x_{r(j - 1) + 1} \dots x_{i}) \preceq \Split(x_{\leq i} y) \text{.}
        \end{equation*}
        The algorithm pops and outputs $x_{r(j - 1) + 1} \dots x_{r(j)}$, satisfying the first and second conditions.
        If $\delta(\start) \not\in \accept$, we have $r(j) < i$, satisfying the fourth condition.

        We verify the third condition.
        Suppose $\state \not\in \dead$, so there is some continuation $|y| > 0$ where 
        \begin{equation*}
            \delta(\start, x_{r(j - 1) + 1} \dots x_{i} y) = \delta(\delta(\start, x_{r(j - 1) + 1} \dots x_{i}), y) = \delta(\state, y) \in \accept \text{.}
        \end{equation*}
        Then, we have
        \begin{align*}
            \Split(x_{\leq i}) &= P_{i - 1} \brk Q_{j} \brk \Split(x_{r(j) + 1} \dots x_{i}) \\
            \Split(x_{\leq i} y) &= P_{i - 1} \brk Q_{j - 1} \brk x_{r(j - 1) + 1} \dots x_{i} y 
        \end{align*}
        so that the maximum common prefix is at most $P_{i - 1} \brk Q_{j - 1} \brk x_{r(j - 1) + 1} \dots x_{r(j)} = P_{i - 1} \brk Q_{j}$, as desired.
        
        \paragraph{Termination Condition.}
        We now argue that the while loop terminates.
        Above we have argued that whenever $\state \in \accept$, the while loop terminates in the next iteration.
        On the other hand, whenever $\state \not\in \accept$, we have $r(j) < i$, and the first condition ensures that $r(j) > r(j - 1)$ since $\calA$ is total so that $\last(\start, x_{r(j - 1) + 1} \dots x_{i}) > 0$, as every non-empty string has a non-empty prefix that matches the regular expression.

        \paragraph{Performance.}
        Finally, we bound the amortized running time of our algorithm.
        From \Cref{lem:concat-queue}, it suffices to argue that the running time of our algorithm is dominated by operations of $\cqueue$ (up to constant factors).
        To do so, we can simply charge the check of $\delta(\start) \in \dead$ and outputting $\brk$ in each iteration of the while loop to query from $\cqueue$.
        To bound the space, we use \Cref{lem:live-lookahead}.
        Suppose $|\cqueue| > \lookback$ maintains more than $\lookback$ elements e.g. $x_{i - \lookback} \dots x_{i}$.
        Recall $\ell(i) = b(i) + \last(\start, x_{b(i) + 1} \dots x_{i})$ is the last output index, corresponding to the last index where $\state \in \accept$ on the current attempted match.
        Then, we have $\ell(i) < i - \lookback$ since we pop from $\cqueue$ and output whenever an accepting state is encountered.
        In particular, \Cref{lem:live-lookahead} ensures that $\delta(\start,x_{b(i)+1}\dots x_i)\in\dead$, so that we enter the while loop and output $\brk$.
        Since $\calA$ is total, on the new match we have $\ell' \geq \ell(i) + 1$, so that the algorithm outputs $\brk$ and pops (at least) $x_{\ell(i) + 1}$ from $\cqueue$.
        Thus, whenever $|\cqueue| > \lookback$, the algorithm pops at least one element from $\cqueue$ in each iteration, ensuring that the size of the string maintained by $\cqueue$ does not grow.
        Thus, $\cqueue$ requires space at most $O(\lookback)$.

        \paragraph{Extension to NFA}
        So far, we have assumed that $\calA$ is a DFA.
        However, given a regex $R$, a DFA might have size $\exp(|R|)$, while an NFA is promised to have size $O(|R|)$.
        We thus describe how to extend our algorithm to handle NFAs.
        It suffices to describe how to modify the data structure given by \Cref{lem:concat-queue} to handle queries on NFAs.
        Note that we only ever query the data structure with the $\start$ state.
        In addition, in the algorithm of \Cref{lem:pretokenize-alg}, we modify $\state \subseteq \calQ$ to be a subset of states and in the first step, modify the update rule $\state \gets \cup_{q \in \state} \delta(q, x_i)$.
        We say a subset $\state$ is $\dead$ if $\state \subseteq \dead$.

        We modify \Cref{lem:concat-queue} as follows.
        For each state $q$, we maintain $\delta(q, w) \subseteq \calQ$ which is now a subset of states.
        For any $w = u v$, we have the identities (analogous to \Cref{eq:trans-compose} and \Cref{eq:last-compose}):
        \begin{align*}
            \delta(q, w) &= \bigcup_{a \in \delta(q, u)} \delta(a, v) \\
            \last(q, w) &= \max \set{\last(q, u), |u| + \max_{a \in \delta(q, u)} \last(a, v)} \text{.}
        \end{align*}
        Above, $\max$ of an empty set or $\bot$ is $-\infty$.
        Given $\delta(\cdot, u), \delta(\cdot, v)$ and $\last(\cdot, u), \last(\cdot, v)$ note that $\delta(\cdot, w), \last(\cdot, w)$ can be computed in $\poly(|\calQ|) = O(1)$ time.
        Thus, the amortized time of each operation of the $\cqueue$ data structure is still $O(1)$, as desired.

        The analyses of correctness and performance follow identically.
        Note that all of the modifications described incur a $\poly(|R|) = O(1)$ blowup in runtime.
        This concludes the proof of \Cref{lem:pretokenize-alg}.
    \end{proof}

    Finally, for online BPE we use the algorithm of \citet{jiang2025incrementalbpe}.

    \OnlineBPE*

    \begin{proof}[Proof of \Cref{lem:bpe-merge-alg}]
        We take the algorithm of \citet{jiang2025incrementalbpe} (see their eager output algorithm and accompanying analysis).
        Their algorithm has time $O(n \log^2 \longest)$ and space $O(\merges \longest)$.
        While they do not explicitly analyze the delay of their algorithm (indeed it does not compute zero-delay online BPE, see \Cref{clm:online-bpe-delay}), we show that their algorithm has delay at most $(\merges +1)\longest$.
        
        Let $d(x) \leq \longest$ denote the depth of the \citet{aho1975efficient} automaton state after reading $x$.
        The \cite{jiang2025incrementalbpe} algorithm outputs the longest common prefix of $\Merge(x[:-j])$ for $0 \leq j \leq d(x)$.
        Every input in this common-prefix computation extends $x[:-\longest]$ so that $\Merge_{\preceq}(x[:-\longest])$ is a prefix of their output.
        In particular, $\Merge_{\preceq}$ on input $x$ tokenizes all but a $\merges \longest$ length suffix of $x$.
    
        \begin{lemma}[Theorem 2 of \citet{berglund2023formalizing}]
            \label{lem:merge-prec-progress}
            For any merge list $M$ and input $x \in \Sigma^*$, we have
            \begin{equation*}
                \Merge(x[:-C]) \preceq \Merge_{\preceq}(x)
            \end{equation*}
            where $C := C(x, M) \leq \merges \longest$ and $\longest = \longest(M)$.
        \end{lemma}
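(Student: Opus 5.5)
The plan is a locality argument for BPE, by induction over the merge list. Write $M = (\mu_1, \dots, \mu_{\merges})$ and let $\Merge_{\leq k}(w)$ denote the token sequence after applying $\mu_1, \dots, \mu_k$ to $w$. I will construct positions $|x| = b_0 \geq b_1 \geq \dots \geq b_{\merges}$ with $b_k \geq b_{k-1} - \longest$. The following invariant will hold at each $k$ and for every continuation $y \in \Sigma^*$ (including $y = \epsilon$):
\begin{itemize}
\item no token of $\Merge_{\leq k}(xy)$ straddles position $b_k$;
\item the tokens of $\Merge_{\leq k}(xy)$ covering $x_{\leq b_k}$ are exactly $\Merge_{\leq k}(x_{\leq b_k})$.
\end{itemize}
Taking $C := |x| - b_{\merges} \leq \merges \longest$, the invariant at $k = \merges$ says that $\Merge(xy)$ begins with $\Merge(x[:-C])$ for every $y$. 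So $\Merge(x[:-C])$ is stable, and by \Cref{clm:stable-prefix-compare} it is a prefix of $\Merge_{\preceq}(x)$, which is the statement.

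The base case $k=0$ is immediate. Before any merge the tokens are single characters, so nothing straddles $b_0 = |x|$, and the prefix is $x$ itself.

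For the inductive step, apply the single merge $\mu_{k+1} = (m_1, m_2) \mapsto d$. By \Cref{def:bpe} this is a left-to-right greedy scan. The key observation is that whether a given token pairs with its right neighbour is decided only by the tokens to its left and by that right neighbour. Let $s_1 \dots s_r = \Merge_{\leq k}(x_{\leq b_k})$. By the invariant, these are the first $r$ tokens of $\Merge_{\leq k}(xy)$ for every $y$. Consequently the scan makes identical decisions on $s_1, \dots, s_{r-1}$ whether run on $x_{\leq b_k}$ alone or on $xy$, and a merge can cross position $b_k$ only through the pair $(s_r, \cdot)$. There are two cases.
\begin{itemize}
\item If the scan consumes $s_r$ as the right element of a pair $(s_{r-1}, s_r)$, that merge lies inside the prefix, nothing crosses $b_k$, and I set $b_{k+1} = b_k$.
\item Otherwise I set $b_{k+1} = b_k - |s_r| \geq b_k - \longest$. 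In this case $s_{r-1}$ did not pair with $s_r$, so after the pass no token straddles $b_{k+1}$. The tokens before it are produced by decisions on $s_1, \dots, s_{r-1}$ that do not depend on anything from $s_r$ onward, so they coincide with $\Merge_{\leq k+1}(x_{\leq b_{k+1}})$.
\end{itemize}
In the second case I also need that $\Merge_{\leq k}(x_{\leq b_{k+1}}) = s_1 \dots s_{r-1}$, so that the invariant can be re-applied at the shorter prefix. This follows by running the same locality reasoning inductively on the shorter prefix, or more simply by strengthening the invariant to say that $b_k$ is a no-straddle point for the merge history of every string extending $x_{\leq b_k}$.

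I expect the main obstacle to be stating this "no-straddle boundary" invariant precisely enough that it composes across merges. In particular, I must check that the left-to-right scan's decision on $s_{r-1}$ really is independent of everything to its right. For pairs with $m_1 \neq m_2$ this is trivial. For self-pairs such as $(a,a)$ it relies on the scan being greedy from the left, so a run's parity is decided from its start. I would verify this case carefully. Once the invariant is in place, the bound $C \leq \merges \longest$ is just the sum of at most $\merges$ retreats, each of at most one token of length at most $\longest$.
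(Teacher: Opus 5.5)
Your argument is correct. The paper gives no proof of its own here; it imports the bound as Theorem 2 of \citet{berglund2023formalizing}. Your induction over the merge list is therefore a self-contained replacement for that citation. In it, a no-straddle frontier $b_k$ retreats by at most one token per rule, and that token has raw length at most $\longest$ because $M$ is canonical. The case split on whether the left-to-right pass consumes $s_r$ as the right element of $(s_{r-1},s_r)$ is the right one, and it needs no special care for self-pairs. The pass of \Cref{def:bpe} decides each position using only the tokens up to its right neighbour.

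One point needs repair. You offer two justifications for $\Merge_{\leq k}(x_{\leq b_{k+1}}) = s_1 \cdots s_{r-1}$, and the second one (strengthening the invariant to all strings extending $x_{\leq b_k}$) is false. Take $M = [(a,b)\mapsto X]$ and $x = aa$. The single rule does not consume $s_2$, so $b_1 = 1$. Yet $ab$ extends $x_{\leq 1}$, and $\Merge(ab) = X$ straddles position $1$.

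What you need concerns a \emph{truncation} of $x_{\leq b_k}$, not its extensions. The fact is: if $j$ is a token boundary of $\Merge_{\leq k}(w)$, then $\Merge_{\leq k}(w_{\leq j})$ is exactly the part of $\Merge_{\leq k}(w)$ covering $w_{\leq j}$. It follows by induction on $k$. Merges only delete cuts, so $j$ is a boundary at every earlier stage. In each pass the last token left of $j$ pairs with nothing: in $w$ because the cut at $j$ survives, and in $w_{\leq j}$ because it has no right neighbour.

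Alternatively, you can drop the second bullet of your invariant and require only that the tokens covering $x_{\leq b_k}$ be the same for every $y$; your case analysis proves this verbatim. The truncation fact is then needed only once, at the final stage with $y=\epsilon$, to identify that common prefix with $\Merge(x[:-C])$. Under the paper's canonical assumption, that step is just the prefix-closure of consistent token sequences given by \Cref{clm:stable-merge-seq}.
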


        The bound on $C$ is tight up to constant factors (see \Cref{thm:bpe-space-lb} for details).
        We remark that while $C(x, M)$ may vary on input $x$, it is always upper bounded by $\merges \longest$.
        By \Cref{lem:merge-prec-progress}, the output emits all but at most $(\merges + 1) \longest$ characters.
        Alternatively, we can also use the algorithm of \citet{mamouras2026efficient}, which obtains similar guarantees.
    \end{proof}
    
    This concludes the proof of \Cref{thm:linear-online-tokenization-w-delay}.
\end{proof}
\subsection{Space Lower Bounds and Limitations of Composition}

We give a lower bound showing that the space usage of \Cref{thm:linear-online-tokenization-w-delay} is optimal.

\SpaceLB*

\begin{proof}
    Fix the alphabet $\Sigma = \set{a, b, c}$.
    Fix a function $f$.
    Let $S^* = \emptyset$, 
    \begin{equation*}
        R = (aa \mid bb)^f c \mid a \mid b \mid c
    \end{equation*}
    and $M = [aa \mapsto A, bb \mapsto B]$.
    
    First, we claim that $\lookback(R) = 2f$.
    Consider any string of the form $(aa \mid bb)^{f} c \in L(R)$.
    Then, the only proper
    prefixes that are in $L(R)$ are $a, b$ which are length $2f$ shorter than the original string. 
    Thus, we have $\lookback(R) \geq 2f$.
    To prove the upper bound, we observe that all singletons are in the language, and the longest word in the language is length $2f + 1$.
    
    We now show that online tokenization requires $\Omega(\lookback(R))$ space. 
    Consider the following inputs, where $x$ is a string $(aa \mid bb)^{f}$.
    Upon seeing this, there are (at least) two possibilities.
    The input ends, in which case, $\Split$ partitions $x$ into single characters, so that $\Tok(x) = x$.
    On the other hand, if the next character is a $c$ so that the entire string $xc$ becomes one pretoken, we apply the merge rules to the consecutive $aa$ and $bb$ pairs, so the output is $\Tok(xc) \in (A \mid B)^{f} c$.
    In particular, since the two outputs have no common prefix we have $\Tok_{\preceq}(x) = \epsilon$.
    When the next character is a $c$ and the input ends, the algorithm must output a string in $(A \mid B)^{f} c$.
    Note that the output is purely a function of the internal state of the algorithm, since the final character $c$ is constant across all $2^{f}$ possible inputs. 
    Since there are $2^{f}$ such strings, the algorithm requires $\Omega(f)$ space, as desired.
\end{proof}

Using similar ideas, we show that our space dependence on $\merges$ and the bound in \Cref{lem:merge-prec-progress} are both optimal.

\begin{theorem}[BPE Space Bound]
    \label{thm:bpe-space-lb}
    For every pair of functions $f, g$ with $f = \Omega(\log g)$, there exists $M$ where the size of the merge list $\merges  = \Theta(f)$ and $\longest = \Theta(g)$ 
    and online BPE merging (and therefore online tokenization) requires $\Omega(\merges)$ bits of space.
    Furthermore, there exists some $x \in \Sigma^*$ where 
    \begin{equation*}
        \Merge_{\preceq}(x) \prec \Merge(x[:-\Theta(\merges \longest)]) \text{.}
    \end{equation*}
\end{theorem}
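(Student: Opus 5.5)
The plan is to prove both parts of the statement with one explicit ``cascade'' merge list, and then argue exactly as in the proof of \Cref{thm:online-tokenization-space-lb}. That is, we exhibit a family of $2^{\Omega(\merges)}$ input prefixes with three properties: (i) each has $\Merge_{\preceq} = \epsilon$, or at least a common short stable prefix; (ii) the family is closed under a fixed continuation $y$; and (iii) the outputs $\Merge(x y)$ are pairwise distinct. After reading $x$ and then the constant string $y$, the algorithm's output is a function of its internal state alone. Hence the state must take $2^{\Omega(\merges)}$ values, which gives $\Omega(\merges)$ bits.

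\textbf{Construction.} Let $\ell = \Theta(\log g)$. First spend $O(\ell)$ merges on a doubling chain $a a \mapsto a^{(1)}$, $a^{(1)}a^{(1)} \mapsto a^{(2)}$, and so on, building a block token $A$ with $|\raw(A)| = \Theta(g)$. The hypothesis $f = \Omega(\log g)$ is used exactly here, so that this overhead fits in the $\Theta(f)$ merge budget. Next build a chain of $k = \Theta(f)$ ``level'' tokens. Each level $i$ has a two-way choice of separator character $s_i \in \set{b_i, c_i}$. The level-$i$ merge joins $A$ with the separator to its right, and it has \emph{lower} priority than the level-$(i+1)$ merge that can also consume that separator from the other side. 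The merges are ordered so that a merge at the right end (triggered by an end marker $e$) takes a separator away from its left neighbour. That neighbour then re-pairs one step further left, and the change propagates through all $k$ levels.

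The intended behaviour is a domino effect. Without $e$, every block $A$ pairs with its right separator. With $e$, the alignment flips at every level. So the first output token depends on the final character, which yields the second claim of the statement: the inputs have length $\Theta(k g) = \Theta(\merges \longest)$, yet $\Merge_{\preceq}(x)$ is strictly shorter than $\Merge(x[:-\Theta(\merges\longest)])$. In the flipped alignment, the token formed at level $i$ must encode the bit $s_i \in \set{b_i, c_i}$. Concretely, give each level two merge rules, one per choice of separator, which keeps the count at $O(f)$ merges. The $2^k$ choices of separator string then yield distinct outputs after appending $e$, while before $e$ the stable prefix is empty or independent of the choices.

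\textbf{Verification and main obstacle.} The verification has three steps. First, compute $\merges = \Theta(f)$ and $\longest = \Theta(g)$ directly from the construction. Second, check the two candidate tokenizations (with and without $e$) by simulating the priority order level by level, by induction on the level index. Third, conclude $\Merge_{\preceq}(x) = \epsilon$, because the two continuations disagree on the first token.

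The main obstacle is designing the priorities so that the cascade truly propagates through all $k$ levels. Because BPE applies each merge globally, left to right, before moving to the next merge, a careless ordering lets lower levels resolve before the end marker's influence arrives. I would first try assigning priorities in decreasing order from the right end, so that the merges near $e$ come first in $M$. I would then verify by induction that after the merges for levels $k, k-1, \dots, i$ have been applied, the alignment has flipped exactly on those levels. Finally, I must ensure the canonicity assumption of \Cref{def:bpe} holds for the added tokens; if necessary I would add a constant number of merges per level to restore it.
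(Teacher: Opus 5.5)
Your proposal is correct and is essentially the paper's construction. Both use a $\lceil\log g\rceil$-step doubling chain to build blocks of length $\Theta(g)$, followed by $\Theta(f)$ levels of left/right pairing rules prioritized from the right end, so that an end marker flips every alignment; the paper uses input $c_0 v_1^L c_1\cdots c_{t-1}v_t^L$ with end marker $c_t$, turning the $L_{i-1}$-tokens into $R_i$-tokens. Both then conclude $\Merge_{\preceq}(x)=\epsilon$ and apply the same $2^{\Theta(f)}$ fooling-set count after a fixed continuation.

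The differences are cosmetic. You encode the bits in the separators $b_i/c_i$ with a single block letter and mirror the default alignment, whereas the paper encodes them in the blocks $a^L/b^L$ with fixed separators. Your ``decreasing from the right end'' ordering ($A s_{i+1}$ above $s_i A$ above $A s_i$) is exactly the interleaving the paper's row order uses to make the cascade go through.
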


\begin{proof}[Proof of \Cref{thm:bpe-space-lb}]
    We give an instance of BPE, which can easily be extended to a tokenization instance by setting $S^* = \emptyset$ and $R = \Sigma^*$.
    Let $t = f$.
    Let $\Sigma = \set{a, b, c_0, c_1, \dots, c_{t}}$.
    We build the list of merges as follows, in order of priority.
    For $s = \lceil \log g \rceil$ we define
    \begin{align*}
        aa \mapsto A_1, A_1A_1 \mapsto A_2, \dots, A_{s - 1}A_{s - 1} \mapsto A \\
        bb \mapsto B_1, B_1B_1 \mapsto B_2, \dots, B_{s - 1}B_{s - 1} \mapsto B
    \end{align*}
    so that with $L = 2^{s}$ we have $\Merge(a^{L}) = A$ and $\Merge(b^{L}) = B$.
    Next, define the following rules, where the priority order is given first by row and then left-to-right in each row:
    \begin{align*}
        & \quad &A c_t \mapsto R_t \\
        &c_{t - 1} A \mapsto L_{t - 1} \quad &A c_{t - 1} \mapsto R_{t - 1} \\
        &\dots \quad &\dots \\
        &c_{1} A \mapsto L_{1} \quad &A c_1 \mapsto R_1 \\
        &c_{0} A \mapsto L_{0} \text{.}
    \end{align*}
    Analogously, we define $B c_i \mapsto R_i'$ and $c_iB \mapsto L_i'$, placing each immediately after its corresponding $A$-rule in the above priority order.

    We now describe the inputs.
    For any $v \in \set{a, b}^{t}$, we can build the input
    \begin{equation*}
        x = c_0 v_{1}^{L} c_1 \dots c_{t - 1} v_{t}^{L} \text{.}
    \end{equation*}
    Where $v=(v_1,\ldots,v_t)$. For $i \in [t]$, we define $X_{i} = L_{i - 1}$ if $v_i = a$ and $L_{i - 1}'$ otherwise and $Y_i = R_{i}$ if $v_i = a$ and $R_i'$ otherwise.
    Then, consider the $\epsilon$ (empty) continuation and continuation $y = c_t$. We have
    \begin{align*}
        \Merge(x) &= X_1 \dots X_{t} \\
        \Merge(xy) &= c_0 Y_1 \dots Y_t \text{.}
    \end{align*}
    Thus, $\Merge_{\preceq}(x) = \epsilon \prec c_0 = \Merge(x_{\leq 1}) = \Merge(x[:-\Theta(fg)])$.
    Note $|x| = t (2^{s} + 1) = \Theta(f g)$.
    As in \Cref{thm:online-tokenization-space-lb}, after viewing $x$, the output is $\Merge_{\preceq}(x) = \epsilon$, but the algorithm must produce $2^{t}$ distinct outputs after receiving the continuation $y$.
    Then, the number of bits of space is at least $\Omega(t) = \Omega(f)$.
\end{proof}

Next, we can extend the example of \Cref{thm:bpe-space-lb} to show that the Online BPE (eager output) algorithm of \citet{jiang2025incrementalbpe} in fact incurs delay $\Omega(\longest)$.
Since in our application, delay $O(\merges \longest)$ suffices (and is optimal), we leave as an open question to develop an online BPE algorithm with zero delay.

\begin{claim}
    \label{clm:online-bpe-delay}
    For every function $g$, there exists $M$ where $\longest = \Theta(g)$ and the online BPE algorithm of \citet{jiang2025incrementalbpe} incurs delay $\Omega(\longest)$.
\end{claim}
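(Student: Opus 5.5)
The plan is to exhibit a merge list $M$ with $\longest = \Theta(g)$ and an input $x$ on which the eager-output algorithm of \citet{jiang2025incrementalbpe} holds back $\Omega(\longest)$ characters that $\Merge_{\preceq}$ already commits. The point to exploit is that the eager algorithm, as described in the proof of \Cref{lem:bpe-merge-alg}, outputs only the longest common prefix of $\Merge(x[:-j])$ over $0 \leq j \leq d(x)$. Here $d(x)$ is the depth of the Aho--Corasick state after reading $x$. If the suffix of $x$ is a long proper prefix of some token, then $d(x) = \Omega(\longest)$. In that case the algorithm compares truncations of $x$ that cut deep into the current block. These truncations can tokenize the beginning of $x$ differently, even though no actual continuation can.

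Concretely, I would reuse the doubling construction from \Cref{thm:bpe-space-lb}. Take $\Sigma = \set{a, c}$ and $s = \lceil \log g \rceil$, and use the merges $aa \mapsto A_1, A_1A_1 \mapsto A_2, \dots, A_{s-1}A_{s-1} \mapsto A$, so that $L = 2^s = \Theta(g)$ and $\longest = L$. Add one low-priority rule such as $c a \mapsto D$, placed last, so that the first token depends on the parity and length of the $a$-run after $c$ in truncations. The input is $x = c\, a^{L-1}$.

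\begin{itemize}
\item \emph{Stable prefix.} Every continuation $y$ yields $\Merge(xy)$ beginning with $c$ followed by a token coming from $a$'s. The first pair $aa$ merges with higher priority than $ca$, so the leading $c$ is never merged, and hence $c \preceq \Merge_{\preceq}(x)$.
\item \emph{Deep Aho--Corasick state.} Since $a^{L-1}$ is a prefix of the token $A$, we get $d(x) \geq L-1$.
\item \emph{Disagreeing truncation.} The truncation $x[:-(L-2)] = ca$ tokenizes as $D$. So the longest common prefix over the compared truncations is $\epsilon$, and the algorithm outputs $\epsilon$ on $x$.
\end{itemize}

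It remains to convert this into a delay bound. The algorithm outputs $\epsilon$, so it suffices to find $k = \Omega(L)$ with $\Tok_{\preceq}(x[:-k]) \neq \epsilon$. Here I would extend the input by repeating the block, using $x = (c\,a^{L})^{r} c\, a^{L-1}$ with $r \geq 1$, and check three things:
\begin{itemize}
\item completed blocks $c a^L$ merge to $cA$ stably under any continuation;
\item the eager algorithm still stalls at the last $c$, or even earlier, because a truncation into the final run yields $\dots D$;
\item $\Tok_{\preceq}$ of the input with the last $\Theta(L)$ characters removed already contains the final $c$, or at least the preceding $A$.
\end{itemize}
Equivalently, I need the output to miss a token of $\Merge_{\preceq}(x)$ whose raw text ends $\Omega(L)$ characters before $|x|$.

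The main obstacle is getting the priorities right. I must make sure that a truncation $ca^{j}$ with odd $j$, or with $j=1$, genuinely produces $D$ at the front, while no genuine continuation does. This requires $ca \mapsto D$ to fire only when the $a$ immediately after $c$ is left unpaired. Because merges are applied left to right, the run $a^j$ with $j$ odd leaves its \emph{last} $a$ unpaired, not the first. So I would instead use $a c \mapsto D$ with the run preceding $c$, i.e. the input $a^{L} c\, a^{L-1}$, or an odd-length last run. I would try that variant first and verify the parity bookkeeping carefully.
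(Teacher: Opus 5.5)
Your first instance, $x = c\,a^{L-1}$ with $ca \mapsto D$ placed last, already proves the claim. The gap is that you never finish it, and then argue yourself out of it. The missing step is one line. For $L \geq 4$ we have $x[:-(L-3)] = caa$. In every continuation $caa\,y$, the two $a$'s after $c$ are merged by $aa \mapsto A_1$ before $ca \mapsto D$ is ever applied, so $c$ never merges and $\Merge_{\preceq}(caa)$ begins with $c$. The algorithm outputs $\epsilon$ on $x$, because the compared truncations include $c$ and $ca$, and $\Merge(ca) = D$. Hence the requirement $\Merge_{\preceq}(x[:-k]) \preceq \epsilon$ already fails at $k = L-3$, and the delay is at least $L-2 = \Omega(\longest)$. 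The repeated blocks $(c\,a^{L})^{r}$ are unnecessary. The ``main obstacle'' is also illusory. You need exactly one disagreeing truncation, namely $ca$, and it lies in the compared range because $d(x) = L-1 \geq L-2$. That $c\,a^{j}$ with odd $j \geq 3$ does not produce $D$ is harmless; it is exactly what makes $c$ stable after $caa$.

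The variant you say you would try first does fail. Take $ac \mapsto D$ and the input $a^{L}c\,a^{L-1}$. After reading it, the Aho--Corasick state is $a^{L-1}$, so the compared truncations are $a^{L}c\,a^{m}$ for $0 \leq m \leq L-1$. All of them tokenize as $A\,c\cdots$, so their longest common prefix is $Ac$. This equals $\Merge_{\preceq}(a^{L}c\,a^{L-1})$, since the continuations $\epsilon$ and $a$ already differ at the third token. So the algorithm incurs no delay on that input. The reason is that truncations only cut into the final run, so a rule acting on the run before $c$ cannot create a disagreement.

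For comparison, the paper needs no extra symbol. It uses the unary doubling list and $x = a^{2\longest-1}$, for which the Aho--Corasick depth is $\longest$. There $\Merge(x) = A^{k+1}A^{k}\cdots A^{1}a$ and $\Merge(x[:-\longest]) = A^{k}\cdots A^{1}a$ disagree at the first token, while $\Merge_{\preceq}(a^{\longest}) = A^{k+1}$. Your completed $c\,a^{L-1}$ instance is an equally valid alternative, in which the disagreement comes from a short truncation rather than a token-length one.
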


We begin by recapping the \citet{aho1975efficient} automaton, a crucial ingredient in describing the \citet{jiang2025incrementalbpe} algorithm.
For any node $z$ in the trie, define $\fail(z)$ to be the longest proper suffix of $z$ that is a trie prefix.
Note $\fail(z)$ could be the root corresponding to $\epsilon$.
We use strings to denote their corresponding nodes in the trie.
Aho and Corasick \citep{aho1975efficient} give an efficient linear time algorithm for computing $\fail(z)$ for all $z \in V(\calT)$ and the Aho-Corasick automaton, denoted $\calA(\calT)$, which has states $V(\calT)$ and transitions:
\begin{equation*}
    \delta(z, c) = \begin{cases}
        zc & zc \in V(\calT) \\
        \delta(\fail(z), c) & \otherwise \text{.}
    \end{cases}
\end{equation*}
That is, transition to the child if it exists, otherwise find the longest suffix of $zc$ (possibly empty) that is a prefix of the trie.
The \emph{depth} of a state is the depth of the node in the trie.

\begin{proof}[Proof of \Cref{clm:online-bpe-delay}]
    Let $\Sigma = \set{a}$ and $M = \set{aa \mapsto A^{1}, A^{1}A^{1} \mapsto A^{2}, \dots, A^{k}A^{k} \mapsto A^{k + 1}}$ for $k = \ceil{\log g}$ so that $2 g \leq \longest = 2^{k + 1} \leq 4 g$.
    Consider the input $x = a^{2 \longest - 1}$ so that $\Merge_{\preceq}(x) = A^{k + 1}$.
    Indeed, we have $\Merge_{\preceq}(x_{\leq \longest}) = A^{k + 1}$.
    
    Note that the depth of the Aho-Corasick automaton on the trie consisting of the strings corresponding to all tokens $\set{a, A^{1}, \dots, A^{k + 1}}$ after scanning $x$ is $\longest$.
    Thus, the \citet{jiang2025incrementalbpe} algorithm outputs the longest common prefix of $\Merge(x), \dots, \Merge(x[:-\longest])$.
    In particular,
    \begin{align*}
        \Merge(x) &= A^{k + 1} A^{k} \dots A^{1} a \\
        \Merge(x[:-\longest]) &= A^{k} \dots A^{1} a \text{.}
    \end{align*}
    so the longest common prefix is $\epsilon$, the empty output.
    This fails the requirement for the algorithm to have delay $\longest - 1$, thus establishing that the delay is $\Omega(\longest)$.
\end{proof}

So far, we have given an algorithm that is essentially optimal in both time and space.
However, the worst-case delay of the algorithm is still $\Omega(\lookback)$.
On the other hand, \Cref{thm:np-delay-lb} does not rule out efficient algorithms with delay $O(\merges \longest)$, which can be much smaller than $\lookback$.
Below, we show that this is an artifact of composition: naively composing online algorithms for $\Segment$, $\Split$, and $\Merge$ cannot achieve online tokenization with delay less than lookback.

\CompositionDelayLB*

\begin{proof}[Proof of \Cref{thm:composition-delay-lb}]
    Let $\Sigma = \set{a, b}$.
    For any function $f$, we provide an example where composition of segmenting, pretokenization, and merging alone achieve delay at least $f - 1 = \lookback(R) - 1$.
    Let $S^* = \emptyset$, $R = a^{f}b \mid a \mid b$ and $M = [bb \mapsto B]$.
    Observe that the lookback is $f$ since the only strict prefix of $a^{f} b$ that is matched by the regular expression $R$ is $a$, so the lookback is at least $f$.
    To upper bound the lookback, note singletons are accepted and the longest pretoken has length $f + 1$.
    
    Consider the input $a^{f}$.
    Note $\Tok_{\preceq}(a^{f}) = a^{f}$ since no pretoken contains $bb$ as a substring and no merge ever occurs.
    However, $\Split_{\preceq}(a^{f}) \preceq a$ since it must be a common prefix of
    \begin{align*}
        \Split(a^{f}) &= a \brk a \brk \dots \brk a \\
        \Split(a^{f} b) &= a^{f} b \brk \text{.}       
    \end{align*}
    Thus, $\Merge_{\preceq}(\Split_{\preceq}(a^{f})) \preceq \Merge_{\preceq}(a) = a \prec aa = \Tok_{\preceq}(aa)$.
    In particular, the delay is at least $f - 1$, as desired.
\end{proof}

While the above example is not hard for tokenization (in fact, $\Tok_{\preceq}(x) = \Tok(x) = x$ for any $x$), it shows that any algorithm achieving $o(\lookback)$ delay against arbitrary pretokenization splitting and BPE merging cannot naively compose algorithms for segmenting, pretokenization, and BPE merging.

%% file: tokenization_lb.tex
\subsection{Exact Online Tokenization}

We begin with a conditional lower bound showing that it is likely hard to design an efficient algorithm for online tokenization without delay.

\OnlineLB*

\begin{proof}[Proof of \Cref{thm:np-delay-lb}]
    We begin with the lower bound against algorithms for exact online tokenization with zero delay.
    Suppose we have some algorithm running in time $\poly(n, \ell_1(S^*), R, \merges )$.
    We will construct a reduction from $\kSAT{3}$ on $d$ variables.
    We construct the following tokenization instance.
    There are $O(d^3)$ possible clauses of width $3$ on $d$ variables.
    Let $\calC$ be the set of all such clauses.
    Define $\Sigma_{\calC} = \set{\ell_{c}}_{c \in \calC}$.
    The alphabet consists of
    \begin{equation*}
        \Sigma = \set{a_1^{0}, a_{1}^{1}, b_{1}, \dots, a_{d}^{0}, a_{d}^{1}, b_{d}} \cup \set{\alpha, \beta, \gamma, \delta} \cup \Sigma_{\calC} \text{.}
    \end{equation*}
    Let $S^* = \emptyset$.
    For every bit-string $x \in \set{0, 1}^{d}$, define
    \begin{equation*}
        E(x) = a_{1}^{x_1} b_1 \dots a_{d}^{x_{d}} b_d \text{.}
    \end{equation*}
    For each clause $c$, let $\calE_{c} = \set{E(x) \given c(x) = 0}$ be the set of assignments on $d$ variables that falsify $c$. 
    Observe that $\calE_{c}$ can be recognized by a DFA of size $O(d)$: there is $1$ assignment to the three variables in $c$ that falsifies $c$, and the remaining variables are unconstrained.
    Similarly, $\calE = \set{E(x) \given x \in \set{0, 1}^{d}}$ is the set of all assignments and can be recognized by a DFA of size $O(d)$.
    Consider the regular expression
    \begin{equation*}
        R_{c} = \ell_{c} \Sigma_{\calC}^* \alpha \beta \calE_{c} \gamma 
    \end{equation*}
    where we abuse notation and use $\calE_{c}$ to denote the regular expression that recognizes $\calE_{c}$.
    Let $R_{\calC}$ recognize the alternation of all $R_{c}$ for $c \in \calC$.
    Then, the full regular expression is
    \begin{equation*}
        R = R_{\calC} \mid \alpha \beta \calE \gamma \delta \mid \Sigma \text{.}
    \end{equation*}
    Note that $\Sigma$ is the alternation of all single characters, and ensures that $R$ is total.
    Observe also that all rules that match strings of length $>1$ have distinct starting characters, so that any longest match automatically chooses the match of highest priority.
    In particular, longest match already prioritizes matches of length $>1$ over single character matches.
    Thus, $R$ can be recognized by a DFA of size $O(d^4)$.

    The merge rules are described below in order of priority:
    \begin{align*}
        &(\gamma, \delta) \mapsto \Gamma \\
        &(b_{d}, \gamma) \mapsto B_{d} \\
        &(a_{d}^{0}, b_{d}) \mapsto A_d^{0} \quad &(a_d^{1}, b_d) \mapsto A_d^{1} \\
        &(b_{d - 1}, a_{d}^{0}) \mapsto B_{d - 1}^{0} \quad &(b_{d - 1}, a_{d}^{1}) \mapsto B_{d - 1}^{1} \\
        &(a_{d - 1}^{0}, b_{d - 1}) \mapsto A_{d - 1}^{0} \quad &(a_{d - 1}^{1}, b_{d - 1}) \mapsto A_{d - 1}^{1} \\
        &\vdots \\
        &(b_{1}, a_{2}^{0}) \mapsto B_{1}^{0} \quad &(b_{1}, a_{2}^{1}) \mapsto B_{1}^{1} \\
        &(a_{1}^{0}, b_{1}) \mapsto A_1^{0} \quad &(a_1^{1}, b_1) \mapsto A_1^{1} \\
        &(\beta, a_1^{0}) \mapsto \Delta^{0} \quad &(\beta, a_1^{1}) \mapsto \Delta^{1} \\
        &(\alpha, \beta) \mapsto \Phi \text{.}
    \end{align*}
    Note that $\merges = 4d + 3$.
    Thus, in the preprocessing phase, the assumed algorithm for online pretokenization given $S^*, R, M$ requires $\poly(d)$ time.
    
    Consider now a $3$-CNF formula $\psi = c_{1} \wedge \dots \wedge c_{k}$ on $d$ variables and $k = \poly(d)$ clauses.
    Construct the input
    \begin{equation*}
        x_{\psi} = \ell_{c_1} \ell_{c_2} \dots \ell_{c_k} \alpha \text{.}
    \end{equation*}
    Note that $|x_{\psi}| = \poly(d)$.
    We claim that $\Tok_{\preceq}(x_{\psi})$ decides whether $\psi$ is satisfiable.

    \begin{claim}
        \label{clm:sat-tok-equiv}
        For any $3$-CNF formula $\psi$, we have the following:
        \begin{enumerate}
            \item (Completeness) If $\psi \in \SAT$, then $\Tok_{\preceq}(x_{\psi}) = \ell_{c_1} \dots \ell_{c_{k}}$.
            \item (Soundness) If $\psi \not\in \SAT$, then $\Tok_{\preceq}(x_{\psi}) = \ell_{c_1} \dots \ell_{c_{k}} \alpha$.
        \end{enumerate}
    \end{claim}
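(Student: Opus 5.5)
The plan is to analyze, for an arbitrary continuation $y$, how $x_\psi y$ is pretokenized and merged. The first $k$ output tokens are always $\ell_{c_1},\dots,\ell_{c_k}$. The $(k+1)$-st token is either $\alpha$ or the merged token $\Phi$, and which one occurs is governed by satisfiability. Two structural facts come first.

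\textbf{Fact (i): the pretoken that contains $\alpha$.} Every multi-character alternative of $R$ has a distinct leading character, so the pretoken starting at a position labelled $\ell_{c_i}$ is of one of two kinds. It is the single character $\ell_{c_i}$, or it is a match of $R_{c_i}$. The latter has the form $\ell_{c_i}\ell_{c_{i+1}}\cdots\ell_{c_k}\alpha\beta E(x)\gamma$ with $c_i(x)=0$; it must run through our fixed $\alpha$, since $\Sigma_\calC^*$ can only absorb the remaining $\ell$'s. Consequently the pretoken containing our $\alpha$ takes one of three forms:
\begin{enumerate}
\item an $R_{c_i}$ match ending in $\gamma$;
\item the match $\alpha\beta E(x)\gamma\delta$;
\item the singleton $\alpha$.
\end{enumerate}

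\textbf{Fact (ii): merging inside these pretokens.} No merge rule involves any $\ell_c$, so every $\ell_{c_i}$ survives as its own token in every pretoken. I would then trace $\Merge$ on the two long forms by following the merge priorities.
\begin{itemize}
\item On $\cdots\alpha\beta a_1 b_1\cdots a_d b_d\gamma$, the pair $\gamma\delta$ is absent, so $b_d\gamma\mapsto B_d$ fires first. This consumes $b_d$, so $a_d b_d$ cannot fire, and $b_{d-1}a_d\mapsto B_{d-1}$ fires instead. The pattern alternates down the chain and ends with $\beta a_1\mapsto\Delta$, which leaves $\alpha$ unmerged.
\item On $\alpha\beta a_1b_1\cdots a_db_d\gamma\delta$, the rule $\gamma\delta\mapsto\Gamma$ fires first. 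Then the rules $a_jb_j\mapsto A_j$ fire for each $j$, every $b_{j-1}a_j$ pair has been consumed, $\beta a_1$ is blocked, and finally $\alpha\beta\mapsto\Phi$ fires.
\end{itemize}
This is exactly the parity heuristic. I expect this careful merge trace, and in particular ruling out interference between adjacent rules (e.g.\ whether $b_{j-1}a_j$ could fire before $a_jb_j$), to be the main technical obstacle. I would handle it by induction on $j$ downward from $d$, using that the rules are applied one at a time in priority order.

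\textbf{Completeness.} By Facts (i) and (ii), $\ell_{c_1}\cdots\ell_{c_k}$ is a prefix of $\Tok(x_\psi y)$ for every $y$. It remains to show that no longer prefix is stable. For $y=\epsilon$, every character is a singleton pretoken, so the token after the $\ell$'s is $\alpha$. For $y=\beta E(x^*)\gamma\delta$ with $x^*$ satisfying $\psi$, we have $E(x^*)\notin\calE_{c_i}$ for every $i$. Hence no $R_{c_i}$ matches, each $\ell_{c_i}$ is a singleton, and $\alpha\beta E(x^*)\gamma\delta$ is one pretoken whose first token is $\Phi$. The two continuations disagree at token $k+1$, which gives the claimed value of $\Tok_{\preceq}$.

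\textbf{Soundness.} Here it suffices to show that token $k+1$ equals $\alpha$ for every $y$; this also covers $y=\epsilon$. Forms (a) and (c) give $\alpha$ by Fact (ii). For form (b), suppose $\alpha$ begins a pretoken $\alpha\beta E(x)\gamma\delta$. Since $\psi$ is unsatisfiable, some clause is falsified by $x$; let $j$ be the least index with $c_j(x)=0$. Scanning from $\ell_{c_1}$, the matches for $c_1,\dots,c_{j-1}$ fail, so those characters become singletons. At $\ell_{c_j}$, the longest match is then an $R_{c_j}$ match covering $\alpha$, which contradicts $\alpha$ starting a pretoken. Hence form (b) never occurs, and $\Tok_{\preceq}(x_\psi)=\ell_{c_1}\cdots\ell_{c_k}\alpha$.
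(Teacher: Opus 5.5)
Your proof is correct and follows essentially the same route as the paper. In the satisfiable case, the empty continuation and the continuation $\beta E(x^*)\gamma\delta$ disagree at token $k+1$. In the unsatisfiable case, the first falsified clause's $R_{c_j}$ captures $\alpha$ before the $\alpha\beta\calE\gamma\delta$ alternative can, so $\alpha$ is never merged. Your three-way case analysis of the pretoken containing $\alpha$, and the downward induction on the merge chain, make explicit what the paper simply asserts. For equality in soundness you also need the upper bound $\Tok_{\preceq}(x_\psi)\preceq\Tok(x_\psi)=x_\psi$, which your $y=\epsilon$ computation already supplies.
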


    Thus, if an algorithm computed online tokenization (without delay) in $\poly(n, \ell_1(S^*), R, \merges )$ time, we would have an algorithm that determines satisfiability of $\poly(d)$ size formulas in $\poly(d)$ time.
    In particular, since $\kSAT{3}$ is $\NP$-complete, we would have $\P = \NP$, a contradiction.
    It remains to prove \Cref{clm:sat-tok-equiv}.

    \begin{proof}[Proof of \Cref{clm:sat-tok-equiv}]
        First, we argue $\Tok_{\preceq}(\ell_{c_1} \ell_{c_2} \dots \ell_{c_k}) = \ell_{c_1} \ell_{c_2} \dots \ell_{c_k}$.
        This follows simply as $\ell_{c}$ are not involved in any merge rules, and thus are output regardless of the choice of continuation.
        Next, note that $\Tok(x_{\psi}) = x_{\psi}$ since every pretoken that contains more than one character ends with either $\gamma$ or $\delta$, which does not occur in $x_{\psi}$.
        Thus, $\Split$ splits $x_{\psi}$ into single characters, and no merges occur.
        
        Suppose $\psi \in \SAT$ is satisfiable.
        Let $z \in \set{0, 1}^{d}$ be a satisfying assignment so $\psi(z) = 1$.
        Consider the continuation $\beta E(z) \gamma \delta$.
        Then, we claim that no substring of $x_{\psi} \beta E(z) \gamma \delta$ matches $R_{c}$ for any $c \in \calC$.
        Suppose for contradiction some substring of $x_{\psi} \beta E(z) \gamma \delta$ matches $R_{c}$ for some $c$.
        Since there is only one $\alpha, \beta, \gamma$ in the input, the match must take the form $\ell_{c_{i}} \dots \ell_{c_{k}} \alpha \beta E(z) \gamma$ for some $i \in [k]$.
        We thus have $E(z) \in \calE_{c_{i}}$ for some clause $c_i \in \psi$, a contradiction since $z$ is a satisfying assignment.

        Since $R_{\calC}$ does not match any substring of $x_{\psi} \beta E(z) \gamma \delta$, we have
        \begin{equation*}
            \Tok(x_{\psi} \beta E(z) \gamma \delta) = \Merge(\ell_{c_1} \brk \dots \brk \ell_{c_k} \brk \alpha \beta E(z) \gamma \delta) = \ell_{c_1} \ell_{c_2} \dots \ell_{c_k} \Phi A_1^{z_1} \dots A_{d}^{z_{d}} \Gamma \text{.}
        \end{equation*}
        Thus, $\Tok_{\preceq}(x_{\psi}) \preceq \ell_{c_1} \ell_{c_2} \dots \ell_{c_k}$, establishing equality.

        On the other hand, suppose $\psi \not\in \SAT$ is unsatisfiable.
        Our goal is to show that for any continuation $w \in \Sigma^*$, $x_{\psi} \preceq \Tok(x_{\psi} w)$.
        Observe that $\alpha$ appears in a pretoken only when followed by a $\beta$, a string in $\calE$ and a $\gamma$.
        Thus, it suffices to only consider continuations of the form $\beta E(z) \gamma w$ where $z \in \set{0, 1}^{d}$ and $w \in \Sigma^*$.
        Otherwise, no rule $R_{c}$ matches a pretoken of length $> 1$ containing $\alpha$, so $\alpha$ is emitted as its own token.
        Since $\psi$ is not satisfiable, we have $\psi(z) = 0$, or $c_i(z) = 0$ for some $i \in [k]$.
        Take the smallest such $i$ with $E(z) \in \calE_{c_{i}}$ and since $R_{c_{i}}$ gets matched first, we have
        \begin{align*}
            \Tok(x_{\psi} \beta E(z) \gamma w) &= \Merge(\ell_{c_1} \brk \dots \brk \ell_{c_{i - 1}} \brk \ell_{c_{i}} \dots \ell_{c_{k}} \alpha \beta E(z) \gamma) \circ \Tok(w) \\
            &= \ell_{c_1} \dots \ell_{c_{i - 1}} \circ \Merge(\ell_{c_{i}} \dots \ell_{c_{k}} \alpha \beta E(z) \gamma) \circ \Tok(w) \\
            &= \ell_{c_1} \dots \ell_{c_{k}} \alpha \Delta^{z_1} B_{1}^{z_2} \dots B_{d - 1}^{z_{d}} B_{d} \circ \Tok(w) \text{.}
        \end{align*}
        Thus, we have $\Tok_{\preceq}(x_{\psi}) = \ell_{c_1} \dots \ell_{c_{k}} \alpha$, as desired.
    \end{proof}

    \paragraph{Extension to $o(\merges \longest)$ Delay.}
    Now, we describe how to modify the lower bound against algorithms with small delay.
    We add a character $\zeta$ to the alphabet so that
    \begin{equation*}
        \Sigma = \set{a_{1}^{0}, a_{1}^{1}, b_{1}, \dots, a_{d}^{0}, a_{d}^{1}, b_{d}} \cup \set{\alpha, \beta, \gamma, \delta, \zeta} \cup \Sigma_{\calC} \text{.}
    \end{equation*}
    The special token set $S^* = \emptyset$ is identical.
    Note that in the previous construction, $\longest = \longest(M) = 2$.
    The merge rules now begin with
    \begin{equation*}
        (\zeta, \zeta) \mapsto \zeta^{(1)}, (\zeta^{(1)}, \zeta^{(1)}) \mapsto \zeta^{(2)}, \dots, (\zeta^{(p - 1)}, \zeta^{(p - 1)}) \mapsto \zeta^{(p)} \text{.}
    \end{equation*}
    Then, we have for every $1 \leq i \leq d$, the merge rule $(b_{i}, \zeta^{(p)}) \mapsto \beta_{i}$.
    The remaining merge rules are as essentially identical, replacing $b_{i}$ with $\beta_{i}$.
    \begin{align*}
        &(\gamma, \delta) \mapsto \Gamma \\
        &(\beta_{d}, \gamma) \mapsto B_{d} \\
        &(a_{d}^{0}, \beta_{d}) \mapsto A_d^{0} \quad &(a_d^{1}, \beta_d) \mapsto A_d^{1} \\
        &(\beta_{d - 1}, a_{d}^{0}) \mapsto B_{d - 1}^{0} \quad &(\beta_{d - 1}, a_{d}^{1}) \mapsto B_{d - 1}^{1} \\
        &(a_{d - 1}^{0}, \beta_{d - 1}) \mapsto A_{d - 1}^{0} \quad &(a_{d - 1}^{1}, \beta_{d - 1}) \mapsto A_{d - 1}^{1} \\
        &\vdots \\
        &(\beta_{1}, a_{2}^{0}) \mapsto B_{1}^{0} \quad &(\beta_{1}, a_{2}^{1}) \mapsto B_{1}^{1} \\
        &(a_{1}^{0}, \beta_{1}) \mapsto A_1^{0} \quad &(a_1^{1}, \beta_1) \mapsto A_1^{1} \\
        &(\beta, a_1^{0}) \mapsto \Delta^{0} \quad &(\beta, a_1^{1}) \mapsto \Delta^{1} \\
        &(\alpha, \beta) \mapsto \Phi \text{.}
    \end{align*}
    
    Observe that now as long as $p \leq d$, we have $\merges = p + d + 4d + 3 = p + 5d + 3$.
    For example, if we choose $p = \lceil \log d \rceil$, then $\longest \leq 2^{p} + 2 \leq 2d + 2$.
    As argued above, we have $|\Sigma| = O(d^3)$, $\merges = O(d)$, and $\longest = O(d)$.
    We reduce instead from $\kSAT{3}$ on $d/2$ variables (assuming without loss of generality that $d$ is even).
    Consider the fixed string 
    \begin{equation*}
        F = a_1^{0} b_1 \zeta^{2^{p}} \dots a_{d/2}^{0} b_{d/2} \zeta^{2^{p}} \text{.}
    \end{equation*}
    For every assignment $z \in \set{0, 1}^{d/2}$ on $d/2$ variables, define 
    \begin{equation*}
        E(z) = a_{d/2 + 1}^{z_{1}} b_{d/2 + 1} \zeta^{2^{p}} \dots a_{d}^{z_{d/2}} b_{d} \zeta^{2^{p}} \text{.}
    \end{equation*}
    For every possible clause $c$ of width $3$ on $d/2$ variables, we let $\calE_{c} = \set{E(z) \given c(z) = 0}$ denote the assignments on $d/2$ variables that falsify $c$, and we write $R_{c} = \ell_{c} \Sigma_{\calC}^* \alpha \beta F \calE_{c} \gamma$.
    By our choice of $p = \lceil \log d \rceil$, $R_{c}$ can be recognized by a DFA of size $O(d^2)$ since we may replace the at most $d$ states for recognizing $b_i$ for any $1 \leq i \leq d$ with a path of length at most $O(d)$ that recognizes $b_i \zeta^{2^{p}}$.
    We let $R_{\calC}$ be the alternation of all $R_{c}$ for $c \in \calC$.
    Then, if we let $\calE = \set{E(z) \given z \in \set{0, 1}^{d/2}}$, we define the full regular expression analogously as $R = R_{\calC} \mid \alpha \beta F \calE \gamma \delta \mid \Sigma$.
    Following similar arguments as before, and noting that once again all pretoken matching strings of length $> 1$ have distinct starting characters, $R$ can be recognized by a DFA of size $O(d^{5})$, so that the total preprocessing time of the assumed online tokenization algorithm is $\poly(d)$.

    For a $3$-CNF formula $\psi = c_{1} \wedge \dots \wedge c_{k}$ on $d/2$ variables and $k = \poly(d)$ clauses, we construct the input
    \begin{equation*}
        x_{\psi} = \ell_{c_1} \ell_{c_2} \dots \ell_{c_k} \alpha \beta F \text{.}
    \end{equation*}
    Note that $|x_{\psi}| = \poly(d)$.
    We claim that $\Tok_{\preceq}(x_{\psi})$ decides whether $\psi$ is satisfiable, even with delay.
    
    \begin{claim}
        \label{clm:sat-tok-delay-equiv}
        Let $f = |F| = \frac{d}{2}(2^{p} + 2)$.
        For any $3$-CNF formula $\psi$, we have the following:
        \begin{enumerate}
            \item (Completeness) If $\psi \in \SAT$, then for all $y \preceq x_{\psi}$ with $|y| \geq |x_{\psi}| - f$, $\Tok_{\preceq}(y) = \ell_{c_1} \ell_{c_2} \dots \ell_{c_k}$.
            \item (Soundness) If $\psi \not\in \SAT$, then for all $y \preceq x_{\psi}$ with $|y| \geq |x_{\psi}| - f$, $\Tok_{\preceq}(y) = \ell_{c_1} \ell_{c_2} \dots \ell_{c_k} \alpha$.
        \end{enumerate}
    \end{claim}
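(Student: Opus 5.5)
The plan is to identify the prefixes $y$ allowed by the claim, and then sandwich $\Tok_{\preceq}(y)$ using monotonicity (\Cref{clm:inc-output}). Since $f = |F|$ and $x_{\psi} = \ell_{c_1}\dots\ell_{c_k}\alpha\beta F$, every $y \preceq x_{\psi}$ with $|y| \geq |x_{\psi}| - f$ has the form $y = \ell_{c_1}\dots\ell_{c_k}\alpha\beta F'$ with $F' \preceq F$. So it suffices to prove two things. For completeness, $\Tok_{\preceq}(x_{\psi}) \preceq \ell_{c_1}\dots\ell_{c_k}$. For soundness, $\ell_{c_1}\dots\ell_{c_k}\alpha \preceq \Tok_{\preceq}(\ell_{c_1}\dots\ell_{c_k}\alpha\beta)$. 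The remaining inequalities follow from basic facts, as follows.

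The basic facts are these. As in \Cref{clm:sat-tok-equiv}, no $\ell_c$ participates in a merge, so $\ell_{c_1}\dots\ell_{c_k} \preceq \Tok_{\preceq}(\ell_{c_1}\dots\ell_{c_k}) \preceq \Tok_{\preceq}(y)$. Also $x_{\psi}$ contains no $\gamma$, and every pretoken of length $>1$ ends in $\gamma$ or $\delta$. Hence $\Split(x_{\psi})$ consists of single characters, so $\Tok(x_{\psi}) = x_{\psi}$. This gives $\Tok_{\preceq}(y) \preceq \Tok_{\preceq}(x_{\psi}) \preceq x_{\psi}$, whose $(k+1)$-st token is $\alpha$. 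With these facts, completeness reduces to exhibiting one continuation in which $\alpha$ is merged, and soundness reduces to showing that $\alpha$ survives as a token under every continuation of the shortest admissible $y$.

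For completeness, take a satisfying $z$ and the continuation $E(z)\gamma\delta$ of $x_{\psi}$. Exactly as before, no substring matches any $R_{c}$, since the only possible match would force $E(z) \in \calE_{c_i}$. Maximal munch therefore emits $\ell_{c_1}, \dots, \ell_{c_k}$ singly and then the pretoken $\alpha\beta F E(z)\gamma\delta$. I then run BPE on this pretoken in priority order:
\begin{itemize}
\item the $\zeta$-doubling merges turn each $\zeta^{2^p}$ into $\zeta^{(p)}$;
\item the rules $(b_i,\zeta^{(p)})\mapsto\beta_i$ fire for every $i$;
\item $(\gamma,\delta)\mapsto\Gamma$ consumes $\gamma$, so $(\beta_d,\gamma)$ cannot fire;
\item every $(a_i,\beta_i)\mapsto A_i$ then fires, which blocks the $(\beta_{i-1},a_i)$ and $(\beta,a_1)$ rules;
\item finally $(\alpha,\beta)\mapsto\Phi$ fires.
\end{itemize}
So $\Tok$ of this continuation begins $\ell_{c_1}\dots\ell_{c_k}\Phi$. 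This disagrees with $\Tok(x_{\psi})$ at the $(k+1)$-st token, which forces $\Tok_{\preceq}(x_{\psi}) \preceq \ell_{c_1}\dots\ell_{c_k}$.

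For soundness, fix any continuation $w$ of $\ell_{c_1}\dots\ell_{c_k}\alpha\beta$. The character $\alpha$ lies in a multi-character pretoken only if $w$ begins with $F E(z)\gamma$ for some $z$; otherwise $\alpha$ is its own token and we are done. In that case $\psi(z)=0$, so let $j$ be the least index with $c_j(z)=0$. For $i<j$, the only rule starting with $\ell_{c_i}$ is $R_{c_i}$, which fails, so $\ell_{c_i}$ is a single pretoken. At $\ell_{c_j}$ the longest match is $\ell_{c_j}\dots\ell_{c_k}\alpha\beta F E(z)\gamma$, which contains no $\delta$. Its BPE proceeds in four stages:
\begin{itemize}
\item the $\zeta$-chains and $\beta_i$ rules fire as before;
\item with no $\delta$, $(\beta_d,\gamma)\mapsto B_d$ fires and leaves $a_d$ unmerged;
\item $(\beta_{d-1},a_d)$ fires, which blocks $(a_{d-1},\beta_{d-1})$, and this alternation propagates leftwards until $\beta_1$ is absorbed by $a_2$;
\item $(\beta,a_1^{0})\mapsto\Delta^0$ fires, since $F$ begins with $a_1^{0}$, so $(\alpha,\beta)$ never fires.
\end{itemize}
Hence $\Tok$ begins $\ell_{c_1}\dots\ell_{c_k}\alpha$ for every continuation.

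The main obstacle is verifying this alternating cascade carefully under the priority order, including across the fixed block $F$, where the $a_i$ have fixed bit $0$ and both the $A_i^0$ and $B_i^0$ variants are present. I would do this by induction on $i$ from $d$ down to $1$: with $\delta$ present, every $a_i$ pairs with $\beta_i$; with $\delta$ absent, every $a_i$ pairs with $\beta_{i-1}$. In each case I check that the higher-priority rule for index $i$ consumes exactly the symbol the lower-priority rule would need.
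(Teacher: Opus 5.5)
Your route is essentially the paper's. You use the same witnessing continuation $E(z)\gamma\delta$ for completeness and the same minimal-falsified-clause pretoken $\ell_{c_j}\dots\ell_{c_k}\alpha\beta F E(z)\gamma$ for soundness, and you run the same BPE cascade, which you check in more detail than the paper. Using \Cref{clm:inc-output} to reduce all admissible $y$ to either $x_\psi$ or $\ell_{c_1}\dots\ell_{c_k}\alpha\beta$ is a harmless repackaging of the paper's per-$y$ continuations $F_{>i}E(z)\gamma\delta$. The completeness half is complete as written.

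The gap is in the soundness half. The claim asserts the equality $\Tok_{\preceq}(y)=\ell_{c_1}\dots\ell_{c_k}\alpha$. Your reduction (``soundness reduces to showing that $\alpha$ survives'') only gives the lower bound $\ell_{c_1}\dots\ell_{c_k}\alpha\preceq\Tok_{\preceq}(y)$. The only upper bound your basic facts supply is $\Tok_{\preceq}(y)\preceq x_\psi=\ell_{c_1}\dots\ell_{c_k}\alpha\beta F$, which still allows $\beta$ (and more) to be stable. So the sentence ``the remaining inequalities follow from basic facts'' is false on the soundness side. You must also show that $\beta$ is not a stable next token, i.e.\ $\Tok_{\preceq}(x_\psi)\preceq\ell_{c_1}\dots\ell_{c_k}\alpha$.

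Your own cascade closes this. Since $\psi\notin\SAT$, every $z$ falsifies some clause, so for any $z$ the continuation $E(z)\gamma$ of $x_\psi$ produces tokens $\ell_{c_1}\dots\ell_{c_k}\alpha\Delta^0\dots$. Meanwhile $\Tok(x_\psi)=x_\psi$ has $\beta$ in position $k+2$. Hence $\Tok_{\preceq}(x_\psi)\preceq\ell_{c_1}\dots\ell_{c_k}\alpha$, and monotonicity transfers this to every admissible $y$. The paper gets this direction implicitly from its displayed token sequence, whose $(k+2)$-nd token $\Delta^0$ differs from that of $\Tok(y)=y$.

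The hardness reduction itself only uses the lower bound, since it distinguishes the two cases by whether $\alpha$ appears. The claim as stated, however, requires both directions.
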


    \begin{proof}[Proof of \Cref{clm:sat-tok-delay-equiv}]
        As before, we have $\Tok_{\preceq}(\ell_{c_1} \ell_{c_2} \dots \ell_{c_k}) = \ell_{c_1} \ell_{c_2} \dots \ell_{c_k}$ since no $\ell_{c}$ participates in any merge rule and for any $y \preceq x_{\psi}$, we have $\Tok(y) = y$, since $x_{\psi}$ contains no $\gamma, \delta$.

        Suppose $\psi \in \SAT$ is satisfiable, so that there exists an assignment $z \in \set{0, 1}^{d/2}$ with $\psi(z) = 1$.
        Fix any $y \preceq x_{\psi}$ of length at least $|x_{\psi}| - f$.
        Then, $y = \ell_{c_1} \ell_{c_2} \dots \ell_{c_k} \alpha \beta F_{\leq i}$ for some $0 \leq i \leq f$.
        Consider the extension $F_{>i} E(z) \gamma \delta$.
        Following similar arguments to \Cref{clm:sat-tok-equiv}, we have
        \begin{equation*}
            \Tok(y F_{>i} E(z) \gamma \delta) = \Tok(x_{\psi} E(z) \gamma \delta) = \ell_{c_1} \ell_{c_2} \dots \ell_{c_k} \Phi A_{1}^{0} \dots A_{d/2}^{0} A_{d/2 + 1}^{z_1} \dots A_{d}^{z_{d/2}} \Gamma \text{.}
        \end{equation*}
        Thus, $\Tok_{\preceq}(y) \preceq \ell_{c_1} \ell_{c_2} \dots \ell_{c_k}$, establishing equality.

        Suppose $\psi \not\in \SAT$.
        Again, fix $y = \ell_{c_1} \ell_{c_2} \dots \ell_{c_k} \alpha \beta F_{\leq i}$ for some $0 \leq i \leq f$.
        Since every pretoken of length $>1$ involving $\alpha$ follows $\alpha$ with $\beta$, $F$, a string in $\calE$, and $\gamma$, it suffices to consider extensions of the form $F_{>i} E(z) \gamma w$ where $z \in \set{0, 1}^{d/2}$ and $w \in \Sigma^*$.
        Otherwise, the pretoken does not match $\alpha \beta F_{\leq i}$, so that $\alpha \beta F_{\leq i}$ is not merged and output as single characters.
        Since $\psi$ is unsatisfiable, there exists a clause $c_j$ in $\psi$ such that $E(z) \in \calE_{c_{j}}$.
        Following similar arguments to \Cref{clm:sat-tok-equiv}, we have
        \begin{align*}
            \Tok(y F_{>i} E(z) \gamma w) &= \Tok(x_{\psi} E(z) \gamma w) \\
            &= \ell_{c_1} \ell_{c_2} \dots \ell_{c_k} \alpha \Delta^{0} B_{1}^{0} \dots B_{d/2 - 1}^{0} B_{d/2}^{z_{1}} \dots B_{d - 1}^{z_{d/2}} B_{d} \circ \Tok(w) \text{.}
        \end{align*}
        Thus, $\Tok_{\preceq}(y) = \ell_{c_1} \ell_{c_2} \dots \ell_{c_k} \alpha$.
    \end{proof}

    Suppose for contradiction that we have an algorithm with delay $o(\merges \longest)$.
    Let $z$ be the algorithm's output on $x_\psi$, and let $k$ be its delay.
    Hence
    \begin{equation*}
        \Tok_{\preceq}(x_\psi[:-k]) \preceq z \preceq \Tok_{\preceq}(x_\psi) \text{.}
    \end{equation*}
    Since $\merges = \Theta(d)$, $\longest = \Theta(d)$, and $f = \Theta(d^2)$, we have $k \leq f$ for sufficiently large $d$.
    By \Cref{clm:sat-tok-delay-equiv}, $\Tok_{\preceq}(x_{\psi}[:-k]) = \Tok_{\preceq}(x_{\psi})$ determines whether $\psi$ is satisfiable.
    Thus $z$ also determines whether $\psi$ is satisfiable.
    Since the input $x_{\psi}$ has length $\poly(d)$, we decide whether $\psi$ is satisfiable in $\poly(d)$ time, contradicting the assumption $\P \neq \NP$.
\end{proof}

%% file: improved_delay.tex
\section{Improved Delay Beyond Composition}
\label{sec:improved-delay}

We show that a more careful algorithm can achieve delay $O(\merges \longest)$ independent of $\lookback$.
\Cref{thm:np-delay-lb} establishes that this delay is optimal for polynomial time algorithms.
We begin with some notation and useful properties. 
Recall that for any string $x$, $\phi(x)$ is the first token in $\Merge(x)$.

\begin{claim}[Theorem 2 of \citet{berglund2023formalizing}]
    \label{clm:first-token-equal}
    Let $M$ be a merge list with $\merges$ pairs and $\longest$ the length of the longest token.
    For any two input strings $x, y$, if $x_{\leq (\merges  + 1)\longest} = y_{\leq (\merges +1)\longest}$, then $\phi(x) = \phi(y)$.
\end{claim}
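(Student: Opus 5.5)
The plan is to derive \Cref{clm:first-token-equal} directly from \Cref{lem:merge-prec-progress}, which says that $\Merge(w[:-C]) \preceq \Merge_{\preceq}(w)$ with $C = C(w, M) \leq \merges \longest$. Set $K := (\merges + 1)\longest$.

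First I dispose of the degenerate case. If $|x| < K$, then $x_{\leq K} = x$, so the hypothesis reads $x = y_{\leq K}$. This forces $|y_{\leq K}| = |x| < K$, hence $y_{\leq K} = y$ and therefore $x = y$. The same argument applies if $|y| < K$. So we may assume $|x|, |y| \geq K$.

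Now let $u := x_{\leq K} = y_{\leq K}$, and write $x = u x'$ and $y = u y'$.

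\begin{itemize}
\item Apply \Cref{lem:merge-prec-progress} to $u$. It gives $C = C(u, M) \leq \merges \longest$ with $v := u[:-C]$ and $\Merge(v) \preceq \Merge_{\preceq}(u)$.
\item By the definition of the maximum stable prefix (as in \Cref{def:online-bpe}, with $\Merge$ as the function), $\Merge_{\preceq}(u) \preceq \Merge(u w)$ for every continuation $w$.
\item Taking $w = x'$ and then $w = y'$ gives $\Merge(v) \preceq \Merge(x)$ and $\Merge(v) \preceq \Merge(y)$.
\end{itemize}

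It remains to check that $\Merge(v)$ is nonempty. We have $|v| \geq K - \merges\longest = \longest \geq 1$. Since merging never produces the empty string from a nonempty input, $\Merge(v)$ has at least one token. Its first token is then simultaneously the first token of $\Merge(x)$ and of $\Merge(y)$, so $\phi(x) = \phi(v) = \phi(y)$.

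The only substantive ingredient is the bound $C \leq \merges\longest$ in \Cref{lem:merge-prec-progress}, which we take as given (Theorem 2 of \citet{berglund2023formalizing}); the extra additive $\longest$ in $K$ is exactly what guarantees the surviving prefix $v$ is nonempty.

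If that lemma were not available, I would instead prove the bound directly by induction over the merge list. The idea is to show that after applying the first $j$ merges, the tokenization of a prefix can differ from that of any extension only within a suffix of raw length at most $j\longest$. Each merge can propagate a change leftward by at most one token, which has raw length at most $\longest$. This inductive locality argument would be the main obstacle, since left-to-right greedy pairing in $\Merge(\cdot; m)$ can shift the parity of pairings. I would handle that by noting that a change in pairing can only propagate left across a maximal run of identical-pair candidates, and that such a run lies within the already-bounded affected window.
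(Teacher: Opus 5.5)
Your derivation is correct and takes essentially the same route as the paper. The paper gives no separate proof: it attributes this claim to the same Theorem~2 of \citet{berglund2023formalizing} that it cites for \Cref{lem:merge-prec-progress}. Your reduction makes that implicit link explicit. You apply the lemma to the common prefix $u$, use stability of $\Merge_{\preceq}(u)$ under the continuations $x'$ and $y'$, and use the additive $\longest$ to guarantee that $\Merge(v)$ is nonempty.

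On your fallback, the parity concern is not an obstacle. The left-to-right scan for a single merge is causal: if two token sequences agree on their first $p$ tokens, their images agree on every token emitted from scan positions $\leq p-1$. So the agreed raw prefix shrinks by at most $\longest$ per merge, which gives the $\merges\longest$ bound directly. Parity shifts within a run propagate rightward, not leftward.
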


Throughout this section, we will view BPE as the following process on an input string $x$.
Every pair of adjacent characters initially has a cut between them, that is, every character begins as a single token.
Pretokenization fixes certain cuts in place (i.e. the cuts that represent cuts between pretokens denoted by $\brk$).
For example, the list of pretokens 
\begin{equation*}
    \rust{"aaa"} \brk \rust{"bbb"} \brk \rust{"aaaa"}
\end{equation*}
can be viewed as list of characters with cuts
\begin{equation*}
    a \mid a \mid a \brk b \mid b \mid b \brk a \mid a \mid a \mid a
\end{equation*}
where $\brk$ represents fixed cuts given by the pretokenization phase and $\mid$ represents unfixed cuts (or simply cuts) that can be removed by the BPE merge process.
BPE merge on each pretoken then removes cuts, i.e. every merge removes a cut between two adjacent tokens.
Following the above example, the merge rule $(a, a) \mapsto A$ removes cuts so that the output is 
\begin{equation*}
    a a \mid a \brk b \mid b \mid b \brk a a \mid a a 
\end{equation*}
corresponding to the output string \rust{"AabbbAA"}.

An ordered pair of tokens $(\tau, \mu)$ is consistent, denoted $\tau \rightarrow \mu$, if
\begin{equation*}
    \Merge(\raw(\tau) \raw(\mu)) = \tau \mu \text{.}
\end{equation*}
We show that consistency extends to sequences of tokens.

\begin{claim}
    \label{clm:stable-merge-seq}
    Let $\tau_1, \dots, \tau_{k}$ be a sequence of tokens.
    Then
    \begin{equation*}
        \Merge(\raw(\tau_1) \dots \raw(\tau_k)) = \tau_1 \dots \tau_k \Longleftrightarrow \tau_i \rightarrow \tau_{i + 1}~\forall~1 \leq i < k \text{.}
    \end{equation*}
    We say $\tau_1, \dots, \tau_k$ is a consistent sequence of tokens if they satisfy the left hand side.
    Consequently, for any token $\mu$ and consistent sequence of tokens $\tau_1, \dots, \tau_k$, we have $\mu, \tau_1, \dots, \tau_k$ is consistent iff $\mu \rightarrow \tau_1$.
\end{claim}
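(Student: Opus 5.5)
The plan is to treat BPE as a cut-removal process (as set up just above) and reduce both directions to one locality statement. I would state it for a general split $w = uv$: \emph{during the computation of $\Merge(w)$, if no merge ever removes the cut between $u$ and $v$, then $\Merge(w) = \Merge(u)\circ\Merge(v)$, and in fact after every merge rule the current token sequence of $w$ is the current sequence of $u$ followed by that of $v$.} I would prove it by induction over the merge list $M = (\mu_1, \mu_2, \dots)$, applying one rule $\mu = (m_1,m_2)\mapsto d$ at a time via the left-to-right recursion of \Cref{def:bpe}.

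Inductive step for one rule: the scan of $\Merge(\cdot;\mu)$ on the concatenated sequence agrees with the scan on $u$'s sequence up to the boundary. At the boundary, the last symbol of $u$'s part has either already been consumed by a merge with its left neighbour, or it is tested against the first symbol of $v$'s part. In the second case the pair is not merged, since by hypothesis the cut survives. Either way the scan restarts at the first symbol of $v$'s part exactly as a standalone scan of $v$ would, so the invariant is preserved.

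With this, ($\Rightarrow$) is quick. If $\Merge(\raw(\tau_1)\dots\raw(\tau_k)) = \tau_1\dots\tau_k$, every boundary between consecutive $\raw(\tau_i)$ is a surviving cut. Apply the locality invariant to the whole string cut at all these boundaries, then restrict to the two blocks $\raw(\tau_i)\raw(\tau_{i+1})$. In the computation on the whole string, the region of these two blocks evolves exactly as $\Merge(\raw(\tau_i))$ and $\Merge(\raw(\tau_{i+1}))$ side by side. Reapplying the per-rule argument to the two-block string alone shows that it never merges across its middle cut: at each rule the two relevant adjacent symbols are the same as in the whole-string run, and on the left the scan starts fresh. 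Hence $\Merge(\raw(\tau_i)\raw(\tau_{i+1})) = \tau_i\tau_{i+1}$, using canonicity $\Merge(\raw(\tau)) = \tau$.

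For ($\Leftarrow$), assume every $\tau_i\rightarrow\tau_{i+1}$ and suppose, for contradiction, that the whole-string computation removes some block boundary. Take the first rule $\mu$ at which this happens, and the leftmost boundary $i\mid i+1$ crossed during that rule's scan. Up to that moment, the locality invariant holds at every boundary, so each block's current sequence equals its standalone BPE state. Consider the two-block computation on $\raw(\tau_i)\raw(\tau_{i+1})$. By the same invariant, just before rule $\mu$ its current sequence is the concatenation of the two blocks' states, and these agree with the whole-string ones. The scan of $\mu$ on the two-block string starts fresh at the left of block $i$. In the whole string, the scan entering block $i$ is also effectively fresh, because no merge crossed boundary $i-1\mid i$ (by leftmostness). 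So the two scans agree through block $i$ and reach the same boundary pair with the same "consumed or not" status for the last symbol of block $i$. The whole-string scan merges that pair, so the two-block scan does too. This contradicts $\tau_i\rightarrow\tau_{i+1}$. Therefore no boundary is ever removed, and the locality statement gives $\Merge = \Merge(\raw(\tau_1))\cdots\Merge(\raw(\tau_k)) = \tau_1\cdots\tau_k$.

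The final consequence is then immediate: $\mu,\tau_1,\dots,\tau_k$ is consistent iff all adjacent pairs are, which, given that $\tau_1,\dots,\tau_k$ is consistent, reduces to $\mu\rightarrow\tau_1$.

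The main obstacle I expect is the tie-breaking in the left-to-right scan for overlapping pairs (e.g.\ $aaa$ under $aa\mapsto A$): whether a boundary pair is merged depends on whether its left symbol was already consumed. I handle this by choosing the first rule and the leftmost crossing, so that everything to the left behaves exactly as in the standalone two-block computation.
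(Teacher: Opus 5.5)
Your proof is correct and follows essentially the same route as the paper. Both view BPE as cut removal and rely on locality: an interval whose endpoint cuts survive evolves exactly as it would standalone. You get ($\Rightarrow$) by restricting to $\raw(\tau_i)\raw(\tau_{i+1})$, and ($\Leftarrow$) by contradiction via the highest-priority rule that removes a block boundary and the leftmost boundary it removes. Your explicit tracking of the ``consumed or not'' status during the left-to-right scan makes the paper's informal endpoint-cut argument precise. The fact you use without stating it in ($\Leftarrow$), namely that the two-block computation never removes its middle cut, follows immediately from $\tau_i \rightarrow \tau_{i+1}$, since cuts are never reintroduced.
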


Following the above convention of viewing BPE as a process of removing cuts between characters, a sequence of tokens $\tau_{1} \dots \tau_{k}$ can be viewed as the raw string $\raw(\tau_{1}) \dots \raw(\tau_{k})$ with the cuts surviving the BPE merge process exactly denoting the cuts in the final output.
\begin{equation*}
    \raw(\tau_{1}) \mid_{1} \raw(\tau_{2}) \mid_{2} \dots \mid_{k - 1} \raw(\tau_{k}) 
\end{equation*}
where $\mid_{i}$ denotes the surviving cut between token $\tau_{1}$ and $\tau_{2}$.

\begin{proof}[Proof of \Cref{clm:stable-merge-seq}]
    Suppose $\tau_1, \dots, \tau_k$ is consistent. 
    Consider the interval $\raw(\tau_i) \raw(\tau_{i + 1})$ for any $1 \leq i < k$ with endpoint cuts $\mid_{i - 1}, \mid_{i + 1}$.
    Observe that for two strings containing this interval, if after applying merge $m \in M$, neither endpoint cut is merged, then the cuts internal to the interval are identical after applying merge $m$.
    Then, consider the strings $\raw(\tau_1) \dots \raw(\tau_k)$ and $\raw(\tau_i) \raw(\tau_{i + 1})$.
    Since $\tau_1, \dots, \tau_k$ is consistent, the endpoint cuts $\mid_{i - 1}, \mid_{i + 1}$ are never merged.
    The same statement holds trivially over the string $\raw(\tau_i) \raw(\tau_{i + 1})$.
    Thus, since the cuts internal to the interval are identical, we have that $\tau_i \rightarrow \tau_{i + 1}$.

    Conversely, suppose every pair $\tau_i \rightarrow \tau_{i + 1}$ is consistent.
    If every cut between $\tau_{i}, \tau_{i + 1}$ is not merged, then we have $\Merge(\raw(\tau_1) \dots \raw(\tau_k)) = \tau_1 \dots \tau_k$ by the above argument.
    Thus, assume some cut between $\tau_{i}, \tau_{i + 1}$ is merged.
    Let $m$ be the highest priority rule that removes a boundary cut, and among the boundary cuts removed by $m$, let $\mid_{i}$ be the one of minimum index.
    We claim that at the instant $\mid_{i}$ is removed, neither endpoint cut $\mid_{i - 1}$ nor $\mid_{i + 1}$ has been removed.
    Indeed, by the choice of $m$, no rule of higher priority than $m$ removes a boundary cut, and every rule of lower priority is applied only after $m$; so it suffices to rule out that $m$ itself removes an endpoint cut before $\mid_{i}$.
    For $\mid_{i - 1}$, this holds by the minimality of $i$.
    For $\mid_{i + 1}$, note that minimality does not apply, since $i + 1 > i$; instead, if $m$ also removes $\mid_{i + 1}$, then it does so only after removing $\mid_{i}$, as $\mid_{i}$ lies to the left of $\mid_{i + 1}$.
    Consequently, the cuts internal to the interval $\raw(\tau_i) \raw(\tau_{i + 1})$ are identical to those in the $\Merge$ process on $\raw(\tau_i) \raw(\tau_{i + 1})$ alone, so $\mid_{i}$ is removed there as well.
    This contradicts $\tau_i \rightarrow \tau_{i + 1}$.
\end{proof}

We now present our algorithm.

\MTDelay*

\begin{proof}[Proof of \Cref{thm:online-tokenization-const-delay}]
    We present an algorithm that computes $(\Merge \circ \Split)_{\preceq}$ with delay $O(\merges \longest)$.
    Following similar arguments to \Cref{thm:linear-online-tokenization-w-delay}, we can obtain an online tokenization algorithm with delay $O(\merges \longest) + \ell_{\infty}(S^*) = O(\merges \longest)$, since $\Seg_{\preceq}$ outputs all but the last $\ell_{\infty}(S^*) = O(1)$ characters of the input.
    We begin with the following useful data structures.

    \paragraph{BPE Queue}
    On any string $w$, let $\phi(w; M) = \Merge(w; M)_{1}$ be the first token in $\Merge(w)$.
    As before, we use $\phi(w)$ when the merge list $M$ is clear.

    \BPEQLem*

    Note that queries are only required to be correct when $|w|$ is sufficiently large ($\gg \merges \longest$).

    \begin{proof}[Proof of \Cref{lem:bpe-queue}]
        We require a modification of the incremental algorithm of \citet{jiang2025incrementalbpe}.
        
        \begin{lemma}
            \label{lem:batch-bpe}
            Given BPE merge list $M$, there is an algorithm $\Phi$ with $O(\merges  \longest)$ preprocessing and space that given any string $x$, computes $\phi(x_{\geq i})$ for all $i$ in $O(|x| \log^2 \longest)$ time.
        \end{lemma}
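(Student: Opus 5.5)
The plan is to reduce computing all suffix first tokens $\phi(x_{\geq i})$ to maintaining BPE merge states on reversed-order suffixes. We reuse the machinery of \citet{jiang2025incrementalbpe} for checking consistency of adjacent tokens, which is justified by \Cref{clm:stable-merge-seq}. The key structural fact is that $\Merge(x_{\geq i})$ begins with $\phi(x_{\geq i})$, followed by the tokenization of the remaining suffix. So if $\phi(x_{\geq i}) = \tau$ with $|\raw(\tau)| = r$, then $\Merge(x_{\geq i}) = \tau \circ \Merge(x_{\geq i + r})$. By \Cref{clm:stable-merge-seq} this holds if and only if $\tau \rightarrow \phi(x_{\geq i + r})$.

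This suggests processing $i$ from $|x|$ down to $1$, with the following characterization:
\begin{equation*}
    \phi(x_{\geq i}) = \text{the longest token } \tau \text{ with } \raw(\tau) \preceq x_{\geq i} \text{ and } \tau \rightarrow \phi(x_{\geq i + |\raw(\tau)|}).
\end{equation*}
We take $\tau \rightarrow \bot$ to be vacuous when $\raw(\tau) = x_{\geq i}$. The first step is to prove this characterization.
\begin{itemize}
    \item \emph{Existence.} The true $\phi(x_{\geq i})$ is a candidate, by the factorization above together with \Cref{clm:stable-merge-seq}.
    \item \emph{Uniqueness.} Any candidate $\tau$ yields a consistent sequence $\tau, \Merge(x_{\geq i + |\raw(\tau)|})$. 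Since $\Merge$ is a function, that sequence must equal $\Merge(x_{\geq i})$. Hence exactly one candidate exists, and "longest" merely selects it.
\end{itemize}

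The second step is to enumerate the candidates efficiently. At most $\longest$ tokens have raw string a prefix of $x_{\geq i}$. We find them by walking a trie of reversed token strings, or equivalently by running an Aho–Corasick automaton on the reversed input. This gives the candidate set in $O(\longest)$ per position, which is too slow. Instead we follow Jiang–Gong and use the suffix (failure) link structure, so that candidates are visited from longest to shortest and the search stops at the first valid one. Each check $\tau \rightarrow \phi(x_{\geq i + |\raw(\tau)|})$ is a consistency query between two tokens. That query is answered in $O(\log \longest)$ by their precomputed structure, which decides consistency via a chain of merges along the token's merge tree with binary-search-style descent, after $O(\merges \longest)$ preprocessing. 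The values $\phi(x_{\geq j})$ for $j > i$ are already stored in an array, so lookups cost $O(1)$.

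The main obstacle is bounding the number of rejected candidates per position so that the total is $O(|x|\log^2 \longest)$ rather than $O(|x|\longest)$. My first approach is to mirror their amortization argument. In the incremental algorithm, the last-token chain is searched from the longest automaton match, and a rejection of $\tau$ implies a specific higher-priority merge across its boundary. Combined with a binary search over the nested prefix tokens, this uses $O(\log \longest)$ consistency tests of $O(\log \longest)$ each. Since the characterization is exactly the mirror image of theirs (first token instead of last token, suffixes instead of prefixes), I expect their monotonicity lemma to carry over after reversing the string and replacing left-to-right priority with the corresponding tie-breaking. Verifying that the left-to-right tie-breaking rule of \Cref{def:bpe} survives this reversal is the step I would check most carefully. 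Space is $O(\merges \longest)$ for the automaton plus the output array, and the output array can be streamed away.
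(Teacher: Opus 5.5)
Your characterization step is correct. By \Cref{clm:stable-merge-seq}, $\phi(x_{\geq i})$ is the \emph{unique} token $\tau$ with $\raw(\tau) \preceq x_{\geq i}$ and $\tau \rightarrow \phi(x_{\geq i+|\raw(\tau)|})$. The gap is the search, which is the whole content of the lemma. Because your predicate holds at exactly one candidate, it has no monotone structure: a ``binary search over the nested prefix tokens'' has nothing to exploit, and rejecting one candidate tells you nothing about the others. Amortization does not rescue the longest-to-shortest scan either. Take $x = a^n$ with power-of-two merges up to $a^L$, followed by lower-priority merges that canonically form $a^{L+1},\dots,a^{2L-1}$. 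Then every position with at least $2L$ characters remaining has $\phi = a^L$ and $L-1 = \Theta(\longest)$ longer rejected candidates. You defer exactly this point to a ``monotonicity lemma'' that you never state. The $O(\log \longest)$ oracle for general pairwise consistency $\tau \rightarrow \mu$ is also unsupported: what the paper, following \citet{jiang2025incrementalbpe}, checks per node is a different, local $O(1)$ condition.

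The missing idea is to test a \emph{path-closed} predicate on the split $(\pre(\tau),\suc(\tau))$, rather than your one-point predicate on $\tau$ and its right neighbour. The paper uses the Suffix First Token condition, which has two parts.
\begin{itemize}
\item Reachability: $\phi(x_{\geq i+|\pre(\tau)|})$ lies in the subtree of the predecessor forest rooted at $\suc(\tau)$.
\item Priority dominance: $\tau$'s rule has priority at least that of the child $u$ of $\suc(\tau)$ on that path. Using ``at least'' rather than strict is exactly where left-to-right tie-breaking enters.
\end{itemize}
The tokens satisfying this condition are exactly those that occupy the first position at some stage of merging $x_{\geq i}$. They form a root path from $x_i$ to $\phi(x_{\geq i})$ in the prefix-predecessor tree of the longest token $\rho$ with $\raw(\rho)\preceq x_{\geq i}$. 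This set is ancestor-closed in that tree, though not in length order. So it can be searched with a centroid decomposition: $O(\log\longest)$ centroids, each needing an $O(1)$ condition check plus an $O(\log \longest)$ binary search for the valid child among disjoint intervals. That gives $O(\log^2\longest)$ per position. Unless you replace your predicate and length-ordered search with this tree-structured one, the stated time bound does not follow.
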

        
        Note that \citet{jiang2025incrementalbpe} give an algorithm that computes in $O(n \log^2 \longest)$ time the \emph{last} token of every \emph{prefix} of $x$.
        In contrast, we compute the \emph{first} token of every \emph{suffix} of $x$.
        Since the data structure essentially follows from \citet{jiang2025incrementalbpe}, we defer the proof to \Cref{app:omitted}.

        Finally, we can use a standard trick to convert the batch algorithm into an algorithm with efficient amortized update and query time.
        We maintain two queues $U$ and $V$ such that the maintained string $w$ satisfies $w = U V$.
        Initially, both $U, V = \epsilon$.
        We maintain the invariant that:
        \begin{center}
            If $|w| \geq 3 \merges \longest$, then $2\merges \longest \leq |U| \leq 3\merges \longest$ and $\phi(U_{\geq i})$ is computed for all $i \leq |U| + 1$.
        \end{center}
        On a push, we push the character to $V$.
        On a pop, we pop the character from $U$ if it is not empty. Otherwise, $V$ is not empty and we remove a character from $V$.
        After each operation, if $|U| < 2\merges \longest$ and $|w| \geq 3 \merges \longest$, then we have $|V| > \merges \longest$, in which case we let $U \gets w_{\leq 3 \merges \longest}$ and apply the algorithm of \Cref{lem:batch-bpe} to $U$. 
        
        We check that the invariant is always maintained. 
        Consider the first time $|w| \geq 3 \merges \longest$, at this time we have $|V| = 3 \merges \longest$ and $U$ is empty.
        Our algorithm detects this, and moves $w$ to $U$, applying the batch algorithm to $U$, maintaining the invariant.
        Subsequently, anytime $|U| < 2\merges \longest$ and $|w| \geq 3\merges \longest$, we again transfer $w_{\leq 3\merges \longest}$ to $U$ and apply the batch algorithm.
        Note that $|U| \leq 3\merges \longest$ always and $\phi(U_{\geq i})$ is always computed, since whenever we add elements to $U$, we recompute $\phi(U_{\geq i})$ on the whole object $U$, while popping elements from $U$ ensures that $\phi(U_{\geq i})$ is still computed on the remaining elements.
        Each application of the batch algorithm requires $O(\merges  \longest \log^2 \longest)$ time and ends with $|U| = 3\merges \longest$.
        Since we do not apply the algorithm again until $|U| < 2\merges \longest$, or after $\Omega(\merges \longest)$ operations, the amortized time is $O(\log^2 \longest)$.

        Finally, correctness follows since whenever $|w| \geq 3\merges \longest$, the invariant guarantees $|U| \geq 2\merges \longest$.
        \Cref{clm:first-token-equal} then ensures that $\phi(w) = \phi(U)$.
    \end{proof}

    \paragraph{Incremental BPE}
    We also extend the incremental BPE algorithm of \citet{jiang2025incrementalbpe}.

    \incBPELem*

    Popping is allowed only when $I < |h|$.
    First token queries are only required to return $\phi(w)$ when $\Merge(h_{\leq I}) \prec \Merge(h)$, i.e. when $I$ is a token boundary in the tokenization of the full implicit hidden string $h$.
    
    \begin{proof}[Proof of \Cref{lem:incremental-bpe}]
        \citet{jiang2025incrementalbpe} give an algorithm with $O(\merges  \longest)$ preprocessing and space (see Section 5 and 6 of \citep{jiang2025incrementalbpe} and \Cref{app:omitted} for details) $O(\log^2 \longest)$ worst-case update time supporting push operations and last token queries after any number of push operations.
        For example, given a string $h$, they can compute the last token of every prefix of $h$, i.e. the last token of $\Tok(h_{\leq i})$ for all $i$, denoted $\lambda(h_{\leq i})$.
        
        We describe how to augment their data structure with first token $\phi$ queries, as pops can trivially be maintained with a separate counter for $I$, initialized at $0$.        
        Note that \citet{jiang2025incrementalbpe} provide $\lambda(h_{\leq i})$ queries for all $i$.
        Let $J = |h|$ denote the total number of characters pushed onto $h$, including those that have been popped.
        Consider the tree $\calT$ whose nodes are indices $0 \leq i \leq J$ and node $i$ has parent $p(i) = i - |\lambda(h_{\leq i})|$.
        Intuitively, node $i$ points to its parent, the last character of the token immediately preceding $\lambda(h_{\leq i})$.
        Thus, if we wanted to compute the BPE output of $h_{\leq i}$, it suffices to follow the path from $i$ to the root of the tree $\calT$.

        Let $I$ be the number of characters popped so $I \leq J$.
        Note that $I, J$ naturally correspond to indices in the input string. 
        We abuse notation and let $I, J$ also denote the nodes in the tree $\calT$ corresponding to the indices $I, J$.
        Our data structure maintains $w = h_{> I}$, and induced forest of $\calT$ on the nodes corresponding to indices $I \leq i \leq J$, which we denote $\calT_{\geq I}$.
        Note that for any push, we only need to query $\lambda(h_{\leq i})$ from the \citet{jiang2025incrementalbpe} data structure, and add this node to $\calT_{\geq I}$ (either to its parent if it is in the tree, or as the root of a new tree otherwise).
        We maintain $\calT_{\geq I}$ using standard data structures (for example, an Euler Tour and efficient order queries using \cite{dietz1987two, bender2002two}) so that one can efficiently insert and delete nodes, and support queries of the form is $a$ a descendant of $b$, and if so, what is the next vertex on that path?
        This allows us to support first token queries, as we can query whether $J$ is a descendant of $I$.
        If so, report $\phi(w)$ as the child of $I$ on the path form $I$ to $J$ (note that this could be $J$ itself), otherwise return $\bot$.
        Finding the child of $I$ can be performed with binary search in $O(\log |w|)$ time.
        Overall, this requires space $O(|w| + \merges \longest)$, since we also use the data structure of \citet{jiang2025incrementalbpe}.

    \end{proof}

    Next, we give an efficient data structure for computing pretokens.
    Recall from the proof of \Cref{lem:pretokenize-alg}, for an incrementally updated string $x_{\leq i}$: $\Split_{\preceq}(x_{\leq i}) = Q \brk x_{b(i) + 1} \dots x_{\ell(i)}$ where $\ell(i) = b(i) + \last(\start, x_{b(i) + 1} \dots x_{i})$ or $\Split_{\preceq}(x_{\leq i}) = Q \brk$ in which case $b(i) = \ell(i) = i$.
    Here, $b(i)$ indicates the start of the current ``live'' pretoken in $x_{\leq i}$.

    \begin{lemma}
        \label{lem:pretok-q}
        Given a total regular expression $R$, there is a data structure $\cutQ$ that maintains implicitly an append-only string $x$ (of length $i$), and index $0 \leq \out \leq |x|$.
        Denote
        \begin{enumerate}
            \item $\Split(x) = z_1 \brk \dots \brk z_{\ell}$ where $z_{j} = x_{r(j - 1) + 1} \dots x_{r(j)}$ for $1 \leq j \leq \ell$.
            \item $\Split_{\preceq}(x) = z_{1} \brk \dots \brk z_{\base}$ where $z_{\base} = x_{b(i) + 1} \dots x_{\ell(i)}$.
        \end{enumerate}
        Then, $\cutQ$ supports the following operations:
        \begin{enumerate}
            \item (Push) append $c$ to $x_{\leq i} \gets x_{<i} c$.
            \item (Pop) increment $\out$ by 1.
            \item (Cut) if $\out = i$, return $i$, otherwise return $r(j)$ where $r(j - 1) \leq \out < r(j)$.
            \item (Base) return $b(i)$.
        \end{enumerate}
        Furthermore, the data structure requires $O(1)$ preprocessing, amortized update and query time $O(1)$ and $O(1 + i - \out)$ space.
    \end{lemma}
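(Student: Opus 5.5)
The plan is to build $\cutQ$ on top of the online pretokenizer $\FastPretokenization$ from \Cref{lem:pretokenize-alg} together with the queue $\cqueue$ of \Cref{lem:concat-queue}. I will run $\FastPretokenization$ on the pushed characters, so its state already supplies $b(i)$ and $\ell(i)$. Everything left of $b(i)$ is committed: those cuts $r(j)$ agree with the cuts of $\Split_{\preceq}(x)$. The Base query is therefore answered in $O(1)$ by reading off $b(i)$.

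\textbf{Committed cuts.} I keep a FIFO of the committed cut positions that exceed $\out$. $\FastPretokenization$ emits these in increasing order, so each is pushed once. A Pop increments $\out$ and lazily discards cuts $\le \out$ from the front of the FIFO. Each cut is inserted and deleted once, giving amortized $O(1)$. When $\out < b(i)$ the Cut query just returns the front of the FIFO. The space bound $O(1+i-\out)$ holds because only cuts in $(\out, i]$ are stored, together with the $\cqueue$ contents. Those contents are a suffix of $x_{>\out}$, or a window inside the live region, which lies to the right of $\out$ up to $O(1)$ bookkeeping.

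\textbf{Live region, $\out \ge b(i)$.} Here the relevant cuts are those of the offline split $\Split(x_{b(i)+1}\dots x_i)$. Its first cut is $\ell(i)$, and each further cut is obtained by a longest-match restart. I will maintain an \emph{anchor} $a \le \out$, meant to be the start of the offline piece containing position $\out+1$. I also keep a second $\cqueue$ holding $x_{a+1}\dots x_i$ with $\state=\delta(\start, x_{a+1}\dots x_i)$. A Cut query does $\query(\start)$ to get $\last$ and returns $a+\last$ if this exceeds $\out$, or $i$ if $\out = i$. If instead $a+\last \le \out$, the anchor advances to $a+\last$; we pop that prefix and repeat. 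Every anchor advance pops at least one character, because $R$ is total so $\last\ge 1$. These pops are charged to the $\cqueue$ operations, exactly as in the proof of \Cref{lem:pretokenize-alg}. When $b(i)$ itself moves past $a$, the anchor is reset to $b(i)$ with a pop.

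\textbf{Main obstacle.} The hard step is showing that these offline cuts in the live region are consistent over time. When new characters arrive, a piece of the live tail that previously ended before $\out$ might be extended by the longest-match rule past $\out$, which would invalidate a previously advanced anchor. I would attack this with the monotonicity noted in the overview: under the longest-match convention, appending characters only extends pretokens. The precise claim to prove is that the piece of $\Split(x)$ containing position $\out+1$ has a start that is nondecreasing in $i$ and in $\out$. I would prove it by induction on pushes. The key facts are that $\last(\start, u c) \ge \last(\start, u)$, and that if a piece ending at $r \le \out$ is extended to $r' > \out$, then that extended piece now contains $\out+1$ and starts at the old anchor or later. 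If the monotonicity claim fails, the fallback is to keep a second anchor from the previous epoch and recompute from it. This costs $O(i-\out)$ per epoch, charged against characters that subsequently get output.
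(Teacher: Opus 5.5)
You have a genuine gap in the live-region branch. The committed-cut FIFO and reading $b(i)$ off the pretokenizer of \Cref{lem:pretokenize-alg} are fine. The problem is that the monotonicity claim you plan to prove is false, exactly in the case you flag.

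If $\Split(x) = z_1 \brk \cdots \brk z_{\ell}$, then appending $c$ gives $z_1 \brk \cdots \brk z_{k-1} \brk z_k \cdots z_{\ell} c$ for the \emph{least} $k$ with $z_k \cdots z_{\ell} c \in L(R)$, or appends a singleton piece $c$ if no such $k$ exists. So a push can fuse an arbitrary suffix of pieces. The piece containing $\out+1$ can then jump back to any earlier live piece start. Your step ``starts at the old anchor or later'' cannot hold: a piece that ended at $r \leq \out$ and is not the anchor's piece started strictly before $r \leq a$.

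Here is a concrete failure. Take $R = c(a \mid b)^* d \mid a^{10} b \mid a \mid b \mid c \mid d$ and $x = c a^{n}$ with $n \geq 20$. All pieces are singletons, $b(i) = 0$ and $\ell(i) = 1$. After a Cut query at $\out = n-3$, your anchor is $n-3$ and your $\cqueue$ holds $a^4$. Now push $b$. The pieces become $c, a, \dots, a, a^{10} b$, the last being $x_{n-8} \cdots x_{n+2}$, so Cut must return $n+2$. But $c a^{n} b$ is still live, so $b(i)$ and $\ell(i)$ are unchanged and no reset fires. Your queue holds $aaaab$ with $\last(\start, aaaab) = 1$, so you return $n-2$.

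The fallback does not repair this, for three reasons.
\begin{itemize}
\item Detecting the fusion requires knowing, for every live piece start (not only $b(i)$ and the anchor), whether the DFA run from it becomes accepting on the new character.
\item Rescanning from the fused start $s^*$ costs $\Theta(i - s^*)$. That is not $O(i-\out)$, and it cannot be charged to future output in a data-structure lemma that must hold for arbitrary push/pop sequences.
\item The characters before $a$ have already been popped, so they are not available to rescan.
\end{itemize}

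The space claim also fails. Your queues store raw characters starting at $a+1$ and $\ell(i)+1$, which can lie far to the left of $\out$. For example, with $R = a^{+}$ on $a^{n}$ and $\out = n-1$, the anchor stays at $0$ and its queue holds $\Theta(n)$ characters, versus the required $O(1)$.

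The missing idea is to store no characters at all. Instead:
\begin{itemize}
\item Keep pairs (piece start, DFA state reached from that start) and advance every state on each push.
\item Keep only the minimal start for each state. Equal states evolve identically, so the earlier start is always the one extended.
\item Drop dead states. At most $|\calQ|$ pairs then remain.
\item On a push, the least start whose state becomes accepting is exactly the fused start $r(k-1)$. In a linked list of pieces ending after $\out$, replace every interval from $r(k-1)$ onward by $(r(k-1), i)$. If no start becomes accepting, append $(i-1, i)$ instead.
\item Answer Base with the minimal stored start.
\end{itemize}
Each push adds at most one interval, so this gives amortized $O(1)$ time and $O(1 + i - \out)$ space.
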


    \begin{proof}[Proof of \Cref{lem:pretok-q}]
        We will maintain a linked list $\calL \subseteq \set{(r(j - 1), r(j))}_{1 \leq j \leq \ell}$ that intuitively contains pretokens in $\Split(x_{\leq i})$ where $r(j) > \out$.
        We will also maintain a data structure $\calA$ that contains for some subset of $1 \leq j \leq \ell$, the tuple $r(j - 1)$ and $\state(j) := \delta(\start, x_{r(j - 1) + 1} \dots x_{i})$.
        For the analysis, denote $\state_{0}(j) = \delta(\start, x_{r(j - 1) + 1} \dots x_{i - 1})$.
        Let $\calQ$ be the state space of the DFA recognizing $L(R)$.
        We will ensure that $\calA$ contains at most $|\calQ|$ tuples by ensuring that for every $q \in \calQ$, we keep only $(s^*, q) \in \calA$ where $s^* = \min \set{s \mid (s, q) \in \calA}$.
        This operation, which can be computed in $O(1)$ time, is denoted $\trim(\calA)$.
        At initialization, both $\calL, \calA \gets \emptyset$.
        Upon receiving $x_{i}$, run the following algorithm.
        
        \begin{mdframed}
            $\cutQ.\push(x_{i})$

            \stepnames

            \begin{enumerate}
                \item For $(s, q) \in \calA$, update $q \gets \delta(q, x_{i})$. $\trim(\calA)$.

                \item Let $X = \set{s \mid (s, q) \in \calA, q \in \accept}$. 

                \item If $X \neq \emptyset$, let $s^* = \min X$. Remove all $(s, q)$ from $\calA$ with $s > s^*$.
                \label{step:accept-update-pretok-q}
                \begin{enumerate}
                    \item If $s^* \leq \out$, set $\calL \gets \set{(s^*, i)}$.
                    \item If $s^* > \out$, remove all $(s, e) \in \calL$ where $s \geq s^*$ and insert $(s^*, i)$ to $\calL$.
                \end{enumerate}

                \item Otherwise, if $X = \emptyset$, add $(i - 1, i)$ to $\calL$ and $(i - 1, \delta(\start, x_{i}))$ to $\calA$.

                \Comment{(Totality ensures $\delta(\start, x_{i}) \in \accept$.)}

                \item Remove $(s, q)$ from $\calA$ if $q \in \dead$.
            \end{enumerate}
        \end{mdframed}

        \begin{mdframed}
            $\cutQ.\pop()$

            \stepnames

            \begin{enumerate}
                \item Increment $\out \gets \out + 1$.
                \item Remove any $(s, e) \in \calL$ where $e \leq \out$.
            \end{enumerate}
        \end{mdframed}
        
        We claim that the following hold after each update (push or pop):
        \begin{enumerate}
            \item If $\out = i$, $\calL = \emptyset$. Otherwise, $\calL = \set{(r(j_0 - 1), r(j_0)), \dots, (r(\ell - 1), r(\ell))}$ where $r(j_0 - 1) \leq \out < r(j_0)$.
            \item $\calA$ contains for every $q \not\in \dead$ such that $s_{q} = \min \set{r(j - 1) \mid \state(j) = q}$ exists, a tuple $(s_{q}, q)$, and no other tuples.
        \end{enumerate}
        Before proving the invariants, we see that this efficiently supports the desired queries. 
        Cut queries can be supported in $O(1)$ time by examining the first element of $\calL$ (or returning $i$ if $\calL$ is empty).
        Base queries can be supported by computing $\mu \gets \min \set{s \given (s, q) \in \calA}$ in $O(|\calA|) = O(1)$ time.
        If the set is empty, we set $\mu \gets i$.
        We claim that $\mu = b(i)$, since 
        \begin{align*}
            \mu &= \min \set{s \given (s, q) \in \calA} \\
            &= \min \set{r(j - 1) \given \delta(\start, x_{r(j - 1) + 1} \dots x_{i}) \not\in \dead} = b(i) \text{.}
        \end{align*}
        When the set is empty, we have $b(i) = \mu = i$, as desired.
        It remains to prove that the above invariants hold, which we do so by induction.
        In the base case, before any operations, the invariants are trivially satisfied.

        Consider a push operation of $x_i$.
        There are two cases: either $x_i$ does not extend any existing pretoken, or it does.
        
        In the former case, we have
        \begin{equation*}
            \Split(x_{\leq i}) = z_1 \brk \dots \brk z_{\ell} \brk x_{i} \text{.}
        \end{equation*} 
        Observe that $X = \emptyset$ since every $\delta(\start, x_{r(j - 1) + 1} \dots x_{i}) \not\in \accept$, so we add $(i - 1, i)$ to $\calL$ and $(i - 1, \delta(\start, x_{i}))$ to $\calA$.
        This satisfies the first property.

        Towards the second property, consider some $q \not\in \dead$ and the minimum $r(j - 1)$ such that $\state(j) = q$.
        We claim $(r(j - 1), \state(j)) \in \calA$.
        Suppose $\state(j) \not\in \accept$, which covers all but the new singleton pretoken.
        Note that for any $j, j'$, if $\state(j) \neq \state(j')$, then $\state_{0}(j) \neq \state_0(j')$.
        Thus, at the start of the $\push$, the same $j$ must have been the minimum $j$ with state $\state_0(j)$.
        Thus, we have $(r(j - 1), \state_0(j)) \in \calA$ by induction.
        If $(r(j - 1), \state(j)) \not\in \calA$, it must have been deleted by $\trim(\calA)$ since $X = \emptyset$ and \Cref{step:accept-update-pretok-q} does not run.
        This cannot happen, since $r(j - 1)$ is the minimum over all $j$ with $\state(j) = q$.
        The unique $\state(j) \in \accept$ is the new singleton pretoken, which we insert into $\calA$.
        
        Conversely, consider a tuple $(s, q) \in \calA$.
        Any tuple in $\calA$ is $(r(j - 1), \state(j))$ for some $j$ (since all pretokens from before the $\push$ survive).
        Suppose for contradiction that there is some $j'$ with $r(j') < r(j)$ and $\state(j) = \state(j')$.
        Take the minimum such $j'$.
        By the inductive hypothesis, $(r(j' - 1), \state_0(j'))$ was in $\calA$ before the push, so that $(r(j' - 1), \state(j')) \in \calA$ is in $\calA$, implying that $(r(j - 1), \state(j))$ is removed by $\trim(\calA)$.
        This concludes the proof of the second property.

        In the latter case, for some $k \geq \base$, we have
        \begin{equation*}
            \Split(x_{\leq i}) = z_{1} \brk \dots \brk z_{k} \dots z_{\ell} x_{i} \text{.}
        \end{equation*}
        By the choice of $k$, it is the earliest old pretoken extended by $x_i$. Since $\state(k) \in \accept$, its previous state $\state_0(k)$ is not dead as reading $x_i$ provides a nonempty accepting continuation.
        We claim that $(r(k-1),\state_0(k))$ is present in $\calA$ before the push.
        Otherwise, the second invariant gives an earlier $j<k$ with $\state_0(j)=\state_0(k)$.
        Then $\state(j)=\state(k)\in\accept$, so $x_i$ would extend the earlier pretoken $z_j$, contradicting the choice of $k$.
        Consequently, $r(k-1)\in X$, and no earlier old pretoken start belongs to $X$.
        Thus $s^*=r(k-1)$.

        Towards the first property, we see that if $r(k - 1) \leq \out$, then $\calL = \set{(r(k - 1), i)}$ satisfies the invariant.
        Similarly, if $r(k - 1) > \out$, we have $\calL = \set{\dots, (r(k - 1), i)}$, as desired.

        Towards the second property, again consider some $q \not\in \dead$ and the minimum $j$ such that $\state(j) = q$.
        We claim $(r(j - 1), \state(j)) \in \calA$.
        If not, following similar arguments as the previous case, it must have been removed either by $\trim(\calA)$ or \Cref{step:accept-update-pretok-q}.
        We have already argued that $\trim(\calA)$ will not remove the tuple, so we may focus on \Cref{step:accept-update-pretok-q}.
        Here, we only remove tuples $(s, q)$ where $s > s^* = r(k - 1)$.
        Since all pretokens have $r(j - 1) \leq r(k - 1)$, this cannot happen either.

        Conversely, consider $(s, q) \in \calA$.
        If $(s, q) = (r(j - 1), \state(j))$ corresponds to a pretoken in $\Split(x_{\leq i})$, a similar argument to above proves that $r(j - 1)$ is the minimum among all tuples with state $\state(j)$.
        However, it may now be the case that $(s, q)$ corresponds to some pretoken $z_{j}$ with $j > k$.
        In this case, we have $r(j - 1) > r(k - 1)$, so that \Cref{step:accept-update-pretok-q} removes $(s, q)$, as desired.

        Finally, consider a $\pop$ operation.
        Note that $\out$ that does not affect the second invariant at all, so this remains satisfied.
        We remove all $(s, e) \in \calL$ with $e \leq \out$, so that the first interval in $\calL$ has $e > \out$, as desired.

        \paragraph{Performance}
        It remains to bound the performance.
        Preprocessing is constant, since we just need to build the DFA and initialize $\calL, \calA$.
        All queries require $O(1)$ time, since we ensure $\calA$ has at most size $|\calQ| = O(1)$.
        Each $\push$ and $\pop$ operation requires $O(1)$ time for all operations that do not involve $\calL$, again because $|\calA| = O(1)$.
        To bound the amortized update time, note that each iteration adds at most one interval, and each interval is removed at most once.
        Finally, we note that the space is $O(\calL) = O(1 + i - \out)$.
    \end{proof}

    We now use \Cref{lem:bpe-queue}, \Cref{lem:incremental-bpe}, and \Cref{lem:pretok-q} to prove \Cref{thm:online-tokenization-const-delay}, i.e. give an online tokenization algorithm with constant delay.
    We begin by describing the data structures our algorithm will use.
    We initialize $\cut, \out \gets 0$.
    We ensure that $\out$ maintains the last index in the output and $\cut$ to maintain the end of the pretoken containing out in $\Tok(x)$, i.e. the current pretoken being output if the input were to immediately terminate.

    We maintain a raw string buffer $\buffer$ that supports $\push$ and $\pop$.
    We initialize empty $\bpequeue$ and $\incbpe$ data structures.
    We initialize $\cutQ$ with $\out \gets 0$ (the same $\out$ as the counter).
    Whenever $\out$ is incremented, we accordingly pop from $\cutQ$.
    When a data structure is used as an input, this denotes reading the string maintained by the structures.
    When $\pop$ is supplied with an integer argument $p$, we $\pop$ $p$ items.
    When $\push$ is supplied with a string, we push the string character by character.
    For each character $x_i$ in the input, do the following:
    \begin{mdframed}
        {\bf $\python{OptimalDelayTokenization}(x_{i})$}
        
        \begin{enumerate}
        \stepnames
            \item Push $x_i$ to $\bpequeue$, $\cutQ$, and $\buffer$.
            \label{step:push}
            
            \item If $\cutQ.\cut > \cut$:

            \Comment{(The current pretoken containing $\out$ extends.)}
            
            ~~~~Push $\incbpe.\push(\buffer.\pop(\cutQ.\cut - \cut))$.
            
            ~~~~Update $\cut \gets \cutQ.\cut$.
            \label{step:accept}
            
            \item While $\cutQ.\base > \out$:
            \label{step:dead-loop}

            \Comment{(Cannot extend the current pretoken containing $\out$. 
            Process dead pretokens, and find the next live pretoken. 
            Compare to \Cref{step:pretok-dead-loop} of {\bf $\FastPretokenization$}.)}
            
            \begin{enumerate}
                \item Push $\incbpe.\push(\brk)$.

                \item While $\incbpe \neq \epsilon$:

                 ~~~~Output $\phi \gets \incbpe.\phi$. 
                 
                 ~~~~Update $\bpequeue.\pop(|\phi|)$, $\incbpe.\pop(|\phi|)$, and $\out \addgets |\phi|$.

                 \label{step:dead-loop-match}

                 \Comment{(Compute BPE on the current dead pretoken with $\incbpe$.)}

                 \item Push $\incbpe.\push(\buffer.\pop(\cutQ.\cut - \cut))$.
                 
                 \item Update $\cut \gets \cutQ.\cut$.

                 \Comment{(Move the next pretoken to $\incbpe$.)}    
            \end{enumerate}
            
            \item While $\incbpe.\phi = \bpequeue.\phi \neq \bot$:

            \begin{enumerate}
                \item Output $\phi \gets \incbpe.\phi$.

                \item Update $\bpequeue.\pop(|\phi|)$, $\incbpe.\pop(|\phi|)$, and $\out \addgets |\phi|$.
            \end{enumerate}
            
            \Comment{(When $\incbpe$ and $\bpequeue$ agree, $\incbpe.\phi$ extends the stable prefix.)}

            \label{step:while-match}

            \item If $\incbpe = \epsilon$ and $\out < i$:

            \label{step:inc-cut}

            \Comment{(Load the next pretoken into $\incbpe$.)}

            \begin{enumerate}
                \item Push $\incbpe.\push(\brk \buffer.\pop(\cutQ.\cut - \cut))$.
                
                \item Update $\cut \gets \cutQ.\cut$ and go to \Cref{step:while-match}.
            \end{enumerate}
        \end{enumerate}
    \end{mdframed}

    When $\out < i$, write $\Split(x_{\leq i}) = Q \brk z_{-p} \brk \dots \brk z_{0} \brk z_{1} \brk \dots \brk z_{\ell}$ where $x_{b(i) + 1} \dots x_{i} = z_{-p} \dots z_{\ell}$ with $\ell(i) = |Q| + |z_{-p}|$ and $|z_{j}| = \last(\start, z_{j} \dots z_{\ell})$.
    Above, $z_{-p}, \dots, z_{\ell}$ are the pretokens in $x_{\leq i}$ if the input were to immediately terminate. 
    We use $z_{j} \dots z_{\ell}$ to denote the concatenation of pretokens $z_{j}, z_{j + 1}, \dots, z_{\ell}$, so that $z_{j}$ is indeed the longest accepting prefix in $z_{j} \dots z_{\ell}$ for each $-p \leq j \leq \ell$. 
    Define $r(-p-1) = b(i)$ and for each $-p \leq j \leq \ell$, $r(j) = r(j - 1) + |z_{j}|$ so that $z_{j} = x_{r(j - 1) + 1} \dots x_{r(j)}$.
    Index the pretokens so that $r(-1)\leq\out<r(0)$; thus $z_0$ contains the first un-emitted character.
    When $\out=i$, no such $z_0$ is needed.
    
    We now make the following claim at after processing each input $x_i$:
    \begin{enumerate}
        \item $\buffer$ maintains $x_{\cut + 1} \dots x_{i}$.
        \item $\bpequeue$ maintains $x_{\out + 1} \dots x_{i}$.
        \item $\incbpe$ explicitly maintains $x_{\out + 1} \dots x_{\cut}$. Consequently, $\out \leq \cut$.
        \item The output $y$ of the algorithm satisfies $\raw(y) = x_{\leq \out}$.
        \item  If $\out<i$, then $r(-1)\leq\out<r(0)$ and $\cut=r(0)$. If $\out=i$, then $\cut=i$.
        \item $\out \geq b(i)$.
    \end{enumerate}
    Trivially, the conditions are satisfied at initialization.
    Observe that $\cut, \out$ are monotonically non-decreasing.
    To prove the first three properties, note that (1) $x_{i}$ is always pushed onto $\buffer, \bpequeue$, (2) we always push and pop $\cutQ.\cut - \cut$ elements from $\buffer$ to $\incbpe$ and increment $\cut$ by the same amount, and (3) we always increment $\out$ by $|\phi|$, which is popped from $\bpequeue$ and $\incbpe$.
    Similarly, the fourth property follows as every output is accompanied by a corresponding increment to $\out$.

    Next, we argue the fifth property.
    If $\out=i$, the third property and $\cut \leq i$ give $\cut=i$. 
    Otherwise, $r(-1) \leq \out < r(0)$ holds by definition.
    We argue $\cut = r(0)$.
    After \Cref{step:accept} and \Cref{step:dead-loop}, $\cut = r(0)$ by correctness of \Cref{lem:pretok-q}, since we query $\cutQ.\cut$ after any pushes and pops (i.e. increments to $\out$).
    In \Cref{step:while-match}, we increment $\out$, but as long as $\incbpe \neq \epsilon$, we still have $\out < \cut = r(0)$ by the third property, in which case $r(0)$ does not change and $\cut$ does not need to updated.
    On the other hand, whenever $\incbpe = \epsilon$, in \Cref{step:inc-cut} we recompute $\cut \gets \cutQ.\cut = r(0)$ by the correctness of \Cref{lem:pretok-q}.

    Finally, we argue $\out \geq \cutQ.\base = b(i)$.
    In \Cref{step:dead-loop}, we already have $\cutQ.\base = b(i)$.
    Thus, as long as the while loop terminates (which we will show later), we must have $\out \geq b(i)$.

    We are now ready to prove correctness, and proceed with the inductive claim that the output $y$ satisfies
    \begin{equation*}
        (\Merge \circ \Split)_{\preceq}(x_{\leq i - 3 \merges \longest}) \preceq y \preceq (\Merge \circ \Split)_{\preceq}(x_{\leq i}) \text{.}
    \end{equation*}
    We argue inductively over individual token emissions within each input iteration. 
    Thus, before considering the next emission, we may assume that the output already produced is a stable prefix.
    
    We first justify first-token queries to $\incbpe$.
    Consider a query with $\out<\cut$, and let $w=x_{\out+1}\dots x_{\cut}$ be the string maintained by $\incbpe$.
    Let $a$ be the position such that $\incbpe$ implicit hidden history is $\dots \brk x_{a + 1} \dots x_{\cut}$, possibly followed by a trailing separator at $\cut$.
    We denote $u = x_{a + 1} \dots x_{\out}$.
    
    Observe that both $a$ and $\out$ are token boundaries in the tokenization
    of the current input: $a$ is either a pretoken boundary or a boundary of the already-emitted stable output, and $\out$ is the end of that output.
    The block $uw$ lies within a single pretoken of the current $\Tok(x)$ tokenization: it is loaded within one pretoken, and later longest-match extensions can only preserve or join existing pretokens.
    Thus $a$ and $\out$ are surviving token boundaries within that pretoken, so \Cref{clm:stable-merge-seq} gives $\Merge(uw)=\Merge(u)\circ\Merge(w)$.
    Since the implicit hidden history of $\incbpe$ is $\dots \brk uw$, we have $\Merge(h_{\leq \out}) \prec \Merge(h)$ and \Cref{lem:incremental-bpe} applies. 
    In particular, since the first token query condition is satisfied,
    \begin{equation}
        \incbpe.\phi = \phi(x_{\out+1} \dots x_{\cut}).
        \label{eq:incbpe-current-query}
    \end{equation}
    
    Beginning with soundness, it is clear that if we do not output any tokens after reading $x_i$, then the output is a stable prefix by induction as
    \begin{equation*}
        y \preceq (\Merge \circ \Split)_{\preceq}(x_{<i}) \preceq (\Merge \circ \Split)_{\preceq}(x_{\leq i}) \text{.}
    \end{equation*}
    Suppose a token is output in \Cref{step:dead-loop} (i.e. $\cutQ.\base > \out$).
    By the correctness of \Cref{lem:pretok-q}, we have $\cutQ.\base = b(i)$ so that $b(i - 1) \leq \out < b(i)$.
    Following the proof of \Cref{lem:pretokenize-alg}, write
    \begin{align*}
        \Split_{\preceq}(x_{<i}) &= P_{i - 1} \brk x_{b(i - 1) + 1} \dots x_{\ell(i - 1)} \\
        \Split_{\preceq}(x_{\leq i}) &= P_{i - 1} \brk x_{b(i - 1) + 1} \dots x_{r'(0)} \brk x_{r'(0) + 1} \dots x_{r'(1)} \brk \dots \brk x_{b(i) + 1} \dots x_{\ell(i)}
    \end{align*}
    where for some $t \geq 1$, we have $r'(-1) = b(i - 1)$, $r'(t) = \ell(i)$, and $r(i) = r(i - 1) + \last(\start, x_{r(i - 1) + 1} \dots x_{i})$ for $0 \leq i \leq t$.
    Following our above notation, we have $z_{-p} = x_{r'(t - 1) + 1}\dots x_{r'(t)}$.
    Then, observe
    \begin{align*}
        \Merge_{\preceq}(\Split_{\preceq}(x_{\leq i})) &= \Merge(P_{i - 1} \brk x_{b(i - 1) + 1} \dots x_{r'(0)} \brk x_{r'(0) + 1} \dots x_{r'(1)} \brk \dots \brk) \\
        &\quad\quad\quad \circ \Merge_{\preceq}(x_{r'(t - 1) + 1} \dots x_{r'(t)}) \\
        &\preceq (\Merge \circ \Split)_{\preceq}(x_{\leq i}) \text{.}
    \end{align*}
    Consider an iteration of \Cref{step:dead-loop}. 
    Since $\out < b(i)$, the current pretoken ends at $\cut \leq b(i)$, so this pretoken is output by $\Split_{\preceq}$ and is unchanged under every continuation.
    By \eqref{eq:incbpe-current-query}, the query returns $\phi(x_{\out+1}\dots x_{\cut})$, the next token of the current pretoken.
    Since the already-emitted output is stable, this is also the next token under every continuation.
    Thus each emission preserves soundness, and the argument applies again after updating $\out$.

    Upon exiting \Cref{step:dead-loop}, we have $\out = b(i)$, and the emitted prefix is the following stable prefix:
    \begin{equation*}
        \Merge(P_{i - 1} \brk x_{b(i - 1) + 1} \dots x_{r'(0)} \brk x_{r'(0) + 1} \dots x_{r'(1)} \brk \dots \brk) \preceq (\Merge \circ \Split)_{\preceq}(x_{\leq i}) \text{.}
    \end{equation*}

    Now suppose a token $\phi$ is emitted in \Cref{step:while-match}.
    Write $z_0 = z_0'z_0''$, where $|z_0'|=\out-r(-1)$.
    By \eqref{eq:incbpe-current-query} $\phi=\incbpe.\phi=\phi(z_0'')$.
    Also, $\bpequeue$ maintains $z_0'' z_1 \dots z_{\ell}$ and $\phi = \bpequeue.\phi \neq \bot$ so $|\bpequeue| \geq 3 \merges \longest$.
    
    Fix any continuation $z$.
    If no pretoken extends, or the first pretoken
    that extends occurs strictly after $z_0$, then $z_0$
    is unchanged.
    The next token is therefore $\phi(z_0'')=\phi$.
    
    Otherwise, let $z_k$, with $k \leq 0$, be the first pretoken that extends.
    Because its longer match was unavailable before reading $z$, the new match ends strictly after position $i$.
    Consequently, this pretoken has raw string $z_k\dots z_\ell z'$ for some nonempty prefix $z'\preceq z$.
    The remaining suffix of $z$ may form further pretokens.
    
    By the inductive stability of the emitted output, the boundary at $\out$ survives within this extended pretoken. 
    In particular, we have
    \begin{equation*}
        \Merge(z_{k} \dots z_{\ell} z') = \Merge(z_{k} \dots z_{0}') \circ \Merge(z_{0}'' \dots z_{\ell} z') \text{.}
    \end{equation*}
    By \Cref{clm:first-token-equal},
    \begin{equation*}
        \phi(z_0''z_1\dots z_\ell z')
        =\phi(z_0''z_1\dots z_\ell)
        =\bpequeue.\phi
        =\phi \text{.}
    \end{equation*}
    Thus $\phi$ is the next token for every continuation and emitting it preserves soundness.
    
    Finally, we prove completeness, i.e. $(\Merge \circ \Split)_{\preceq}(x_{\leq i - 3 \merges \longest}) \preceq y$.
    As in the proof of soundness, we note that after exiting \Cref{step:dead-loop}, we have $\out \geq b(i)$.
    We may assume without loss of generality that $\out < i - 3\merges \longest$, as otherwise \Cref{clm:coverage} already bounds the delay by $3 \merges \longest$.

    Thus $|\bpequeue| = i-\out > 3\merges\longest$.
    At termination, \Cref{step:inc-cut} ensures that $\incbpe$ is nonempty, since $\out<i$.
    Define
    \begin{equation*}
        \theta = \incbpe.\phi = \phi(x_{\out+1}\dots x_{\cut}) \text{,}
        \qquad
        \tau = \bpequeue.\phi = \phi(x_{\out+1}\dots x_i) \text{.}
    \end{equation*}
    The first equality is justified by \eqref{eq:incbpe-current-query}.
    Both queries are non-bottom, so termination of \Cref{step:while-match} implies $\theta\neq\tau$.
    
    Since $b(i) \leq \out < i$, there is a live pretoken $x_{b(i) + 1} \dots x_{\ell(i)}$ with 
    \begin{equation*}
        q=\delta(\start,x_{b(i)+1}\dots x_i)
          \notin\dead \text{.}
    \end{equation*}
    Choose a nonempty continuation $z$ such that $\delta(q, z) \in \accept$.
    Note that all pretokens preceding $b(i)$ are dead, while $x_{b(i)+1}\dots x_i z$ is accepted as a single pretoken under the continuation $z$.
    
    For the empty continuation, the next token after the already-emitted output is $\theta$.
    For continuation $z$, the boundary at $\out$ survives by soundness. 
    The next token is
    \begin{equation*}
        \phi(x_{\out+1} \dots x_i z)
        =\phi(x_{\out+1}\dots x_i)
        =\tau \text{,}
    \end{equation*}
    where the first equality follows from \Cref{clm:first-token-equal}.
    Hence, if $P$ denotes the already emitted output,
    \begin{equation*}
        P \circ\theta\preceq\Merge(\Split(x_{\leq i})) \text{,}
        \qquad
        P \circ\tau\preceq\Merge(\Split(x_{\leq i}z)) \text{.}
    \end{equation*}
    Because $\theta\neq\tau$, these two tokenizations have longest common prefix exactly $P$.
    Together with soundness, this proves
    \begin{equation*}
        P = (\Merge\circ\Split)_{\preceq}(x_{\leq i}) \text{.}
    \end{equation*}
    In particular,
    \begin{equation*}
        (\Merge\circ\Split)_{\preceq}
            (x_{\leq i-3\merges\longest})
        \preceq P 
    \end{equation*}
    as required.

    \paragraph{Termination Condition.}
    It remains to prove that the algorithm terminates on each iteration.
    \Cref{step:dead-loop} terminates since $\out$ increases in every iteration of the while loop.
    In particular, since $\out < b(i)$, we have $\out < r(0)$ so that $\incbpe$ is not empty.
    \Cref{step:while-match} terminates as each 
    iteration pops at least $1$ character from both $\incbpe$ and $\bpequeue$, so the loop must terminate when $|\bpequeue| < 3 \merges \longest$.
    \Cref{step:inc-cut} must terminate whenever \Cref{step:while-match} terminates with a mismatch, in which case $\incbpe \neq \epsilon$.
    Otherwise, we increment $\cut$ until eventually $\out = i$, and the algorithm terminates.
    
    \paragraph{Performance}
    Finally, we bound the space and runtime of our algorithm.
    To bound the space, we observe that similar to \Cref{lem:pretokenize-alg}, we can bound the space of $\cutQ$ by $O(\min(n, \lookback) + \merges \longest)$.
    This follows as 
    \begin{equation*}
        \out \geq \ell(i) - 3 \merges \longest \geq i - \lookback - 3 \merges \longest \text{.}
    \end{equation*}
    The first inequality follows from the fact that if $\out < \ell(i) - 3 \merges \longest$, then $\bpequeue$ and $\incbpe$ share a prefix of length at least $3 \merges \longest$, so that \Cref{step:while-match} will output a token.
    The second inequality follows from the definition of $\lookback$.
    Similarly, the space of $\bpequeue$ and $\incbpe$ is $O(\min(n, \lookback) + \merges \longest)$.
    The space of $\buffer$ is $O(i - \cut) = O(\min(n, \lookback))$ as $\cut \geq \ell(i)$.
    
    To bound the time, our algorithm does $O(1)$ work per input in addition to various calls to the data structures.
    Combining \Cref{lem:incremental-bpe}, \Cref{lem:bpe-queue}, and \Cref{lem:pretok-q} gives an algorithm with $O(\merges  \longest)$ preprocessing and $O(n (\log^2 \longest + \log \merges  + \log \min(n, \lookback)))$ total time.
    This concludes the proof of \Cref{thm:online-tokenization-const-delay}.
\end{proof}

%% file: experiments.tex
\section{Experimental Setup}
\label{sec:experimental-setup}

We compare our end-to-end tokenizer on a selection of encodings from \python{tiktoken} \citep{tiktoken} and \python{tokenizers} \citep{huggingface}.
Our code can be found 
\ificlrfinal
at \url{https://github.com/morganstanley/MSML/tree/main/papers/Online_Tokenization_with_Optimal_Delay_in_Near_Linear_Time}.
\else
in the supplementary material.
\fi

We test our tokenizer on similar datasets as \citet{jiang2025incrementalbpe}.\footnote{We note that we do not explicitly compare our tokenizer against \citet{jiang2025incrementalbpe}, since their code supports only BPE, whereas our goal is to benchmark the full end-to-end tokenization pipeline.}
In particular, we use the following three datasets:
\begin{enumerate}
    \item English and Chinese - We utilized the first Parquet shard from the standard training split of the Wikipedia dataset (20231101 dump) \citep{wikimedia_wikipedia_2023b}.
    The filtered dataset is obtained by a strided sampling strategy.
    For English, we keep documents 0, 5, 10, and 16 in every block of 21 documents.
    For Chinese, we keep documents 0 and 6 in every block of 13 documents.
    
    \item Github - We extracted data from the RedPajama GitHub subset \citep{together_redpajama_2023b}, specifically using the file with the filename \rust{"filtered_08cdfa755e6d4d89b673d5bd1acee5f6.sampled.jsonl"}. 
    The filtered dataset contains every 17th document.
\end{enumerate}

All experiments were conducted on an Intel Xeon Platinum 8462Y+, comprising 32 cores.
All computations were performed exclusively on the CPU and executed on a single core.
The system was otherwise idle during benchmarking, with no other user processes competing for CPU resources.

\subsection{Applications: R50K, CL100K, and O200K.}
\label{sec:backtrack-bound}

We study the full $\rfk$, $\clok$, $\otk$ reference patterns below.
The three patterns: $\rfk$ (used by both $\rfk$ and $\pfk$ encodings), $\clok$, and $\otk$ are given below:

{\scriptsize
\begin{minted}{rust}
    R50K = r"'(?:[sdmt]|ll|ve|re)| ?\p{L}+| ?\p{N}+| ?[^\s\p{L}\p{N}]+|\s+(?!\S)|\s+";

    CL100K = concat!( 
        r"'(?i:[sdmt]|ll|ve|re)|[^\r\n\p{L}\p{N}]?+\p{L}++|\p{N}{1,3}+|",
        r" ?[^\s\p{L}\p{N}]++[\r\n]*+|\s++$|\s*[\r\n]|\s+(?!\S)|\s"
    );

    O200K = concat!(
        r"[^\r\n\p{L}\p{N}]?[\p{Lu}\p{Lt}\p{Lm}\p{Lo}\p{M}]*[\p{Ll}\p{Lm}\p{Lo}\p{M}]",
        r"+(?i:'s|'t|'re|'ve|'m|'ll|'d)?",
        r"|[^\r\n\p{L}\p{N}]?[\p{Lu}\p{Lt}\p{Lm}\p{Lo}\p{M}]+[\p{Ll}\p{Lm}\p{Lo}\p{M}]",
        r"*(?i:'s|'t|'re|'ve|'m|'ll|'d)?",
        r"|\p{N}{1,3}",
        r"| ?[^\s\p{L}\p{N}]+[\r\n/]*",
        r"|\s*[\r\n]+",
        r"|\s+(?!\S)",
        r"|\s+"
    );
\end{minted}
}

Each regex is an alternation of several cases.
We explain some of the illustrative cases below (focusing on $\rfk$ and $\clok$, the interested reader is encouraged to investigate in detail with a tool such as \url{https://regex101.com}).

\newsavebox{\rustbox}

\begin{lrbox}{\rustbox}
  \rust{r"\s+$"}
\end{lrbox}

\begin{enumerate}
    \item \rust{r"'(?i:[sdmt]|ll|ve|re)"} matches common apostrophed suffixes such as \rust{"'ll"}.
    \item \rust{r" ?\p{L}+"} matches consecutive letters with optional leading whitespace, e.g. \rust{" hello"}. 
    \item \rust{r" ?\p{N}+"} matches consecutive digits with optional leading whitespace, e.g. \rust{" 123"}.
    We note that $\clok$ caps the length of a run of digits at 3. 
    \item \rust{r" ?[^\s\p{L}\p{N}]+"} matches consecutive symbols with optional leading whitespace e.g. \rust{" !?!?"}.
    \item \rust{r"\s+(?!\S)"} matches consecutive white space such that the \emph{next} character is a space or end of input.
    \label{pat:leading-space}
    \item \usebox{\rustbox} matches a consecutive whitespace until the end of the file.
    Throughout, we use \$ as a special character to denote EOF.
    \label{pat:space-eof}
    \item \rust{r"\s*[\r\n]"} matches consecutive whitespace ending in a newline.
    \label{pat:newline-space}
    \item \rust{r"\s+"} and \rust{r"\s"} matches consecutive whitespace and single whitespace respectively.
    \label{pat:space}
\end{enumerate}

In these reference patterns, matching is computed in order of priority, then length.
In particular, given a regular expression $R = R_1 \mid R_{2} \mid  \dots \mid R_{i}$ the \emph{first} match of $R$ in $x$, denoted $F'(x; R)$ is defined as $x_{\leq J'(x; R)}$ where
\begin{equation*}
    J'(x; R) = \max \set{j > 0 \mid x_{\leq j} \in L(R_{i}) \text{ and no non-empty prefix of $x$ is in $L(R_{k})$ for $k < i$}} \text{.}
\end{equation*}
That is, instead of just taking the longest prefix that matches any rule, we take the highest priority rule that has any match with a prefix of $x$, and the longest match under that rule.
We observe that given the left-to-right priority of the alternation rules in the regex pattern, a string is accepted by a rule if (1) it matches the pattern of that rule and (2) it has no prefix that matches any higher priority rule.
Note that (online) pretokenization (\Cref{def:pretokenization}) is defined identically, replacing $F$ with $F_{\refT}$.

Beyond left-to-right priorities, practical patterns also implement the lookahead operations.
In particular, $\clok$ and $\otk$ contain the rule \rust{r"\s+(?!\S)"} which matches runs of whitespace that are followed by whitespace or end of input (but not runs of whitespace followed by a non-whitespace character).
In this case, a DFA may need to scan ahead by one character to accept a prefix.
For example, if $x = \rust{"\t\t\ta"}$, the DFA must scan $\rust{"\t\t\t"}$ before accepting $\rust{"\t\t"}$.
We note that even with the priority rules and lookahead operation the $\clok$ and $\otk$ examples can still be represented with simple DFAs (see \Cref{app:omitted}).
In our examples, the DFA needs to scan ahead by at most one character.

With both priority and lookahead, let $F_{\refT}(x;R)$ be the prefix of $x$ matched by $R$ incorporating all priority and lookahead rules.
Define $J_{\refT}(x;R)=|F_{\refT}(x;R)|$.

In view of priority and lookahead, we give a generalized definition of lookback (\Cref{def:lookback}).

\begin{definition}
    \label{def:ref-lookback}
    A regular expression $R$ has $\lookback_{\refT}$-generalized-lookback if for any $x \in \Sigma^*$ and $a \in \Sigma$, $|J_{\refT}(xa) - J_{\refT}(x)| \leq \lookback_{\refT}$.
\end{definition}

First, one can see that when $R$ has no priority rules (i.e. just one rule) and no lookahead, $|J_{\refT}(xa) - J_{\refT}(x)| = J(xa) - J(x) \geq 0$ is only non-zero when $xa \in L$, in which case $J(x)$ is exactly the longest match among strict prefixes, so that $\lookback_{\refT}(R) = \lookback(R)$.

For the reference patterns considered here, we adapt the simple backtracking algorithm of \Cref{lem:segment-alg} to compute pretokenization on a pattern with priority rules and one character lookahead.
When $\lookback_{\refT}(R)$ is finite, a similar proof can show that the structure of the DFAs ensure that each restart reprocesses only $O(\lookback_{\refT}(R))$ characters.
Charging each restart to progress through the input gives $O(n \lookback_{\refT}(R))$ total time.
We use this approach in practice because it is simple and avoids the overhead of the more general data structure in \Cref{lem:pretokenize-alg}.
We further justify this choice by showing that in most practical cases, $\lookback_{\refT}(R)$ is small.

\begin{claim}
    \label{clm:r50k-lookback}
    The $\rfk$ pattern satisfies $\lookback_{\refT}(\rfk) \leq 2$.
\end{claim}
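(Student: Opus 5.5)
We need to bound $|J_{\refT}(xa) - J_{\refT}(x)|$ for the R50K pattern. The plan is to do a case analysis on which alternative of \rust{R50K} produces the priority-then-longest match $F_{\refT}(x)$, and then ask how appending a single character $a$ can change that match. The alternatives are:
(i) the contraction rule \rust{'(?:[sdmt]|ll|ve|re)};
(ii)--(iv) the three ``optional space plus a run of a character class'' rules \rust{ ?\p{L}+}, \rust{ ?\p{N}+}, \rust{ ?[^\s\p{L}\p{N}]+};
(v) \rust{\s+(?!\S)};
(vi) \rust{\s+}.
The first structural observation is that the only way the match changes at all when we append $a$ is if $x$ itself is a prefix of a candidate match (the match reached the end of $x$), or if the lookahead in (v) is resolved by $a$. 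Otherwise the match is decided entirely inside $x$ and $J_{\refT}(xa) = J_{\refT}(x)$.

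First, suppose the match on $x$ comes from one of rules (ii)--(iv). These are of the form ``optional space, then $C^+$'' for a character class $C$. The set of prefixes of $x$ matching the rule is upward-closed along the run, so:
\begin{itemize}
\item if $a \in C$ and the match reached the end of $x$, the match grows by exactly $1$;
\item otherwise the match is unchanged.
\end{itemize}
Priority can only cause a jump if appending $a$ makes a higher-priority rule match a prefix it did not match before. For a rule to newly match, the new matching prefix must end at position $|x|+1$. The only higher-priority rule is (i), and it requires $x$ to begin with \rust{'}. In that case rule (iv) was matching at most \rust{'} followed by symbols. Appending a letter can make (i) match \rust{'s}, \rust{'ll}, etc., where previously (iv) matched \rust{'} or an apostrophe with one letter pending. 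So $J_{\refT}$ changes from $1$ to at most $3$, a change of at most $2$.

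This contraction interaction is the case I expect to be the tightest. I would check the concrete example $x = \rust{'l}$, $a = \rust{l}$: here $J_{\refT}(x) = 1$ and $J_{\refT}(xa) = 3$, which shows the bound $2$ is attained. I would also check the reverse direction, where a longer string matched under a lower rule would be preempted. This cannot happen, because rule (i) only ever matches the fixed prefixes, which are already determined by the first $\leq 3$ characters.

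Next come the whitespace rules. If $x$ is a whitespace run $s^k$ (possibly with a leading portion consumed by an earlier token boundary), then on $x$ alone the lookahead \rust{(?!\S)} sees end of input, so rule (v) matches all $k$ characters. Two subcases:
\begin{itemize}
\item If $a$ is whitespace, the match becomes $k+1$, a change of $1$.
\item If $a$ is non-whitespace, rule (v) retracts to $k-1$ characters, a change of $1$. When $k=1$, rule (v) fails entirely. Then either rule (ii)--(iv) matches \rust{ }$a$ (when the single whitespace is a space), a change of $1$, or rule (vi) matches the single character, a change of $0$.
\end{itemize}
The space-prefixed rules (ii)--(iv) have higher priority than (v), so I would also handle the case $x = \rust{ }$, where appending a letter switches the match from rule (v) at length $1$ to rule (ii) at length $2$, a change of $1$. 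Finally, if the match on $x$ is a whitespace run not at the end of $x$, it is already delimited by the next character of $x$ and does not change.

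Combining all cases gives $|J_{\refT}(xa) - J_{\refT}(x)| \leq 2$. The main work is making the case analysis exhaustive, in particular ensuring that priority switches between rules (i) and (iv) and between rules (ii)--(iv) and (v) are all accounted for.
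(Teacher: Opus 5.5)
Your proposal is correct and follows essentially the same case analysis as the paper, with the same conclusions in more detail. Runs of letters, digits, or symbols grow by at most one, an initial apostrophe can jump to a contraction by at most $2$ (your \texttt{'l}\,+\,\texttt{l} example mirrors the paper's \texttt{'v}\,+\,\texttt{e}), and whitespace runs change by one, either through the negative lookahead of rule (v) or through a switch to a space-prefixed rule.

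One wording fix: ``the only higher-priority rule is (i)'' is literally true only for rule (ii). For rules (iii) and (iv) you should add that (ii) and (iii) cannot newly match, since whether they match is decided by the first non-space character, which already lies in $x$.
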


\begin{proof}[Proof of \Cref{clm:r50k-lookback}]
    Appending a character can extend an existing run of letters, digits, or symbols by at $1$. 
    For runs of letters and numbers, appending a character cannot lead to a match with a higher priority rule.
    On the other hand, an initial apostrophe could be extended to a contraction e.g. \rust{"'s"}, which extends the match by at most $1$.
    Finally, note that a character can extend the match to a contraction, for example, $x = \rust{"'v"}$ and $a = \rust{"e"}$, which can extend the first match by at most $2$.
    
    For a run of whitespace, appending a non-whitespace character can shorten the first match by one character because of the lookahead rule, or extend the match to a higher priority rule by at most one character.
    Appending a whitespace character on the other hand, can extend the match by at most one character.
    This exhausts all the cases.
\end{proof}

\begin{claim}
    \label{clm:cl100k-200k-lookback}
    The $\clok$ and $\otk$ patterns have unbounded $\lookback_{\refT}$.
\end{claim}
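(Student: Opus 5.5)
The plan is to refute any finite bound directly from \Cref{def:ref-lookback}. For each pattern I will exhibit a family of pairs $(x, a)$ whose first-match lengths $J_{\refT}(x)$ and $J_{\refT}(xa)$ differ by an amount that grows with a free parameter $k$. Both families are built from a single newline followed by a run of $k$ spaces. For such inputs the letter, number and symbol rules cannot fire: the optional leading character class excludes \rust{r"\r\n"}, and \rust{r" ?[^\s\p{L}\p{N}]"} requires a non-space symbol. So only the whitespace rules need to be tracked, in their priority order. I use the paper's convention that \$ denotes end of input, and the priority-then-longest semantics of $F_{\refT}$.

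\emph{$\clok$.} Let $x = \rust{"\n"}\,\rust{" "}^k$ and $a = \rust{"a"}$.
\begin{itemize}
\item On $x$ alone, the rule \rust{r"\s++$"} applies, since all of $x$ is whitespace running to end of input. It is the highest-priority whitespace rule that matches, so $J_{\refT}(x) = k+1$.
\item On $xa$, \rust{r"\s++$"} fails: the possessive run cannot reach end of input, because \rust{"a"} follows and possessive matching forbids backtracking.
\item The next rule in priority, \rust{r"\s*[\r\n]"}, does match. Its longest match must end in a newline, and the only newline is at position $1$, so it matches \rust{"\n"} and $J_{\refT}(xa) = 1$.
\end{itemize}
Hence $|J_{\refT}(xa) - J_{\refT}(x)| = k$.

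\emph{$\otk$.} There is no \rust{r"\s++$"} rule, so I instead append a newline to grow the match. Take the same $x = \rust{"\n"}\,\rust{" "}^k$ and $a = \rust{"\n"}$.
\begin{itemize}
\item On $x$, the highest-priority whitespace rule with any match is \rust{r"\s*[\r\n]+"}. Its match must end in a newline, so it matches only \rust{"\n"} and $J_{\refT}(x) = 1$.
\item On $xa$, the same rule matches all of $\rust{"\n"}\,\rust{" "}^k\rust{"\n"}$, so $J_{\refT}(xa) = k+2$.
\end{itemize}
Hence the difference is $k+1$.

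Since $k$ is arbitrary in both cases, no finite $\lookback_{\refT}$ works for either pattern. The main obstacle is the second step in each case: checking carefully that no higher-priority alternative matches a nonempty prefix of these inputs, and that the priority-then-longest semantics (including possessive quantifiers and the \$ convention) gives exactly the stated match lengths. I will do this by going through the alternatives of each pattern in order, confirming that every rule before the whitespace rules requires a letter, digit or non-space symbol at the relevant position.
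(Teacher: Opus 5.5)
Your proof is correct and follows the paper's route. The paper uses the same witnesses: a newline followed by a long whitespace run, with a letter appended for $\clok$ (the match shrinks from the end-of-input rule to the single newline) and a newline appended for $\otk$ (the match grows from the single newline to the whole string). This gives gaps of $k$ and $k+1$. The only cosmetic difference is that you use spaces where the paper uses tabs, which does not matter because the leading newline already rules out all letter, number and symbol rules.
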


\begin{proof}[Proof of \Cref{clm:cl100k-200k-lookback}]
    It suffices to produce a single accepted string where any proper prefix that is also accepted is much shorter.
    
    Consider $\clok$. For any $m \geq 1$, consider $x = \rust{"\n(\t)^{m}"}$ and $a$ to be any letter so that $xa$ matches the pattern \Cref{pat:newline-space} and $J_{\refT}(xa) = 1$.
    On the other hand, $x$ matches the pattern \Cref{pat:space-eof} and $J_{\refT}(x) = m + 1$.
    This establishes a gap of size $m$.

    For $\otk$, we consider a similar string $x = \rust{"\n(\t)^{m}"}$ and $a = \rust{"\n"}$.
    Note that $xa$ is accepted by the \eqref{pat:newline-space} (i.e. the fifth rule of $\otk$, but any length $> 1$ prefix of $x$ is not accepted: the first rule it matches, either \eqref{pat:leading-space} or \eqref{pat:space}, is lower priority than \eqref{pat:newline-space}, which is matched by the prefix \rust{"\n"}.
    Thus, $J_{\refT}(xa) - J_{\refT}(x) = m + 1$ as well.
\end{proof}

However, given that $\clok$ and $\otk$ only have unbounded lookback on relatively rare outputs (basically files that are all whitespace), all three patterns have lookback that are typically much smaller than input size on normal text.

\subsection{Optimization}

We now describe some optimizations in our code to improve the practical efficiency of our algorithm.

\paragraph{Chunk Size}
We read input in fixed-size chunks to amortize input overhead, observing that chunks of 64KiB (or larger) are most efficient in practice.
The read chunk size controls input batching.
For each chunk, the unresolved suffix may contain a candidate pretoken that is retained until it is resolved by reading the next chunk.

\paragraph{Pretokenization Algorithm}
We remark that in practice, the simple backtracking algorithm for online pretokenization that requires $O(n\lookback)$ time is more efficient than the $O(n)$ time algorithm given by \Cref{lem:pretokenize-alg}.
In particular, the overhead of the linear-time algorithm depends on the size of the regular expression, which is typically larger than the overhead incurred by backtracking (since in practice $\lookback$ is often very small, see e.g. \Cref{clm:r50k-lookback}).

\paragraph{BPE Algorithm.}
To favor throughput, our implementation invokes the incremental (but not online) BPE algorithm of \citet{jiang2025incrementalbpe}, which only merges after a pretoken boundary is resolved.
This is consistent with \citet{jiang2025incrementalbpe}, which observe that the online BPE algorithm is slower in practice.
The implementation therefore differs here from the theoretical composition algorithm, which emits accepted prefixes of candidate pretokens to an online BPE merging algorithm.
The practical implementation can consequently (in the worst case) retain an arbitrarily long unfinished pretoken and defer applying BPE to merge this pretoken until the pretoken is complete.
Its worst-case memory and latency can therefore depend on pretoken length, even for patterns with bounded lookback.
We note that for the implemented regular expressions, this scenario typically does not occur on natural language, evidenced by the memory and latency advantages of our implementation.

\paragraph{Linear Buffering}
Suppose that we have read in a chunk into a buffer, and our online tokenization algorithm has decided that some small prefix of the chunk can be stably tokenized and emitted as output.
We now need to remove this prefix on the chunk and recurse the tokenizer on the remaining suffix.
Naively draining the buffer (e.g. \rust{std::String::drain}) after each token is emitted requires linear time, since we shift the entire array of characters forwards.
Since tokens are typically much shorter than a chunk, this requires time that is quadratic in the chunk size.
Instead, whenever we emit a token, we simply advance a counter on the buffer to mark which characters have been successfully processed.
At the arrival of a new chunk, we keep the last trailing token (typically quite short) and drain the processed prefix in time linear in the chunk size.

\paragraph{DFA Implementation}
A key optimization of our code is a faster implementation of the regular expression matching required by pretokenization.
Indeed, regex parsing is the dominant cost of tokenization (see \citet{tiktoken_lib_rs}).
Instead of directly implementing the pretokenization patterns (e.g. $\rfk$, $\clok$) as a DFA, we use a combination of (1) a simpler pattern that can be efficiently implemented as a DFA, and (2) an auxiliary $\applyFixup$ function that ensures our algorithm exhibits identical behavior to a DFA directly implemented on the original pattern.

We note that the choice of DFA implementation (either the lazy DFA or dense DFA in \rust{regex_automata}) once the pattern is fixed does not lead to any significant differences in performance.
Furthermore, since we work exclusively with total regular expressions, we use anchored DFAs, which ensure that the first character is matched (removing the overhead caused by searching for alternative starting characters).
In \Cref{sec:data-validation}, we present theoretical and empirical evidence that our alternative implementation is in fact equivalent to a direct implementation of the DFA.

\paragraph{Segmenting}
Finally, we optimize the special token segmenting.
Since special tokens typically start with the same character (e.g. \rust{<}) and this character is relatively rare in natural text, we use \rust{memchr} \citep{burntsushi_memchr_2026}, an optimized library for string search in Rust to efficiently find instances of the starting characters.

\subsection{Data Validation: DFA Optimization}
\label{sec:data-validation}

For \texttt{tiktoken}, we provide theoretical and empirical evidence that our DFA optimizations preserve the same pretoken matching and splitting as the reference patterns. We consider the $\rfk$, $\clok$, and $\otk$, beginning with their simplified forms.
{\scriptsize
\begin{minted}{rust}
    R50K_SIMP = r"'(?:[sdmt]|ll|ve|re)| ?\p{L}+| ?\p{N}+| ?[^\s\p{L}\p{N}]+|\s+";

    CL100K_SIMP = concat!( 
        r"'(?i:[sdmt]|ll|ve|re)|[^\r\n\p{L}\p{N}]?\p{L}+|\p{N}{1,3}|",
        r" ?[^\s\p{L}\p{N}]+[\r\n]*|\s*[\r\n]|[^\S\r\n]+"
    );
\end{minted}
}
Similarly to $\rfk$, the simplified $\otk$ pattern is identical to the reference $\otk$ pattern with the \rust{r"\s+(?!\S)"} rule removed.
The $\applyFixup$ function is described as follows.
Every time the DFA finds a resolved match in the substring (in the language of \Cref{lem:pretokenize-alg}, this corresponds to outputting $\brk$), we call the $\applyFixup$ function on the candidate pretoken.
Note that if we reach the end of the current chunk in buffer and the DFA is not in an $\accept$ or $\dead$ state, we simply suspend the DFA computation until the arrival of the next chunk and do not call $\applyFixup$.

\begin{mdframed}
    {\bf $\applyFixup(x, j; R)$}
    
    \begin{enumerate}
        \item {\bf Input:} remaining input $x$ and candidate boundary $j$.
        
        \item {\bf extend:} if $x_{j}$ is a newline, $x_{\leq j}$ is all whitespace, and the remaining suffix of $x$ through EOF is all horizontal whitespace, then extend the pretoken to the end of the buffer (equivalently return $x$).

        \Comment{(extend is only applied for the $\clok$ pattern.)}

        \item {\bf trim:} if $j > 1$ and $x_{j}$ is whitespace, $j < |x|$ and next character $x_{j + 1}$ is not whitespace, then return $x_{< j}$.

        \Comment{(for the $\clok$ and $\otk$ patterns, we check if $x_{j}$ is horizontal whitespace.)}

        \item Otherwise, return $x_{\leq j}$.
    \end{enumerate}
\end{mdframed}

We now show that the simplified patterns along with $\applyFixup$ as described above is equivalent to the original pattern.

\paragraph{Proof of Equivalence}

Observe the following crucial property of $\Split$: any algorithm that successfully computes $F_{\refT}(x; R)$ immediately gives an algorithm that computes $\Split$.
The proof is essentially immediate from the definition of pretokenization (see \Cref{def:pretokenization}).
Recall that given a string $x$ and a total expression $R$, $F_{\refT}(x; R)$ is the first match of $R$ in $x$ starting at the first character.
Note $F_{\refT}(\epsilon; R) = \epsilon$.

\begin{claim}[Local Reduction]
    \label{clm:local-reduction}
    Let $R$ be a total regular expression.
    Suppose there is a function $f_0$ such that $f_0(x) = F_{\refT}(x; R)$ for all $x \in \Sigma^*$.
    Then, $f(\epsilon) = \epsilon$ and $f(x) := f_0(x) \brk f(x_{>|f_0(x)|})$ for $|x| > 0$ computes $\Split(x)$ for all $x \in \Sigma^*$.
\end{claim}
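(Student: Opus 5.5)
The plan is a strong induction on $|x|$ that compares the recursive definition of $f$ with the recursive definition of $\Split$ in \Cref{def:pretokenization}, with $F$ replaced by $F_{\refT}$ as stipulated for the reference patterns. For the base case, $x = \epsilon$: by definition $f(\epsilon) = \epsilon = \Split(\epsilon)$.

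For the inductive step, fix a nonempty $x$ and assume $f(x') = \Split(x')$ for every $x'$ with $|x'| < |x|$. By hypothesis $f_0(x) = F_{\refT}(x; R)$. Totality of $R$ guarantees that $x$ has a nonempty matching prefix, so $F_{\refT}(x;R)$ is nonempty, and therefore $|x_{>|f_0(x)|}| < |x|$. This strict decrease is the point to check carefully. It is needed both so that the recursion defining $f$ terminates and so that the inductive hypothesis applies to the suffix.

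I would verify nonemptiness under the priority-and-lookahead convention by noting that totality gives some rule with a nonempty match. The highest-priority rule with any nonempty matching prefix then contributes its longest such prefix, which is nonempty. If lookahead could in principle reject every candidate, I would appeal to the assumption that the given reference patterns keep a nonempty fallback rule (e.g.\ \rust{r"\s"} or the singleton rules). With that in hand, the inductive hypothesis gives
\[
f(x) = F_{\refT}(x;R) \brk \Split(x_{>|F_{\refT}(x;R)|}) = \Split(x),
\]
where the second equality is exactly the recursive case of the definition.

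The main obstacle is therefore only the well-foundedness, namely that $f_0(x)$ is nonempty. Everything else is a direct unfolding of definitions.
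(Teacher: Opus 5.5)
Your proposal is correct and is essentially the paper's argument: the paper just remarks that the claim is immediate from \Cref{def:pretokenization} with $F$ replaced by $F_{\refT}$, because the recursion defining $f$ is literally the recursive case of $\Split$. Your strong induction on $|x|$ makes this unfolding explicit, and your check that totality forces $f_0(x) \neq \epsilon$ for nonempty $x$ (so the recursion is well founded) is the one point the paper leaves implicit; you do not need the appeal to specific fallback rules of the reference patterns, since totality of $R$ is already assumed.
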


Thus, it suffices to argue that the first match is identical for all strings.
Before doing so, we formally describe our algorithm $\FastMatch(x; R)$.
For regular expression $R$ (e.g. $\rfk$), let $\tilde{R}$ denote the simplified pattern.

\begin{mdframed}
    {\bf $\FastMatch(x; R)$}
    \begin{enumerate}
        \item {\bf Input:} string $x \in \Sigma^*$.
        \item If $x = \epsilon$, return $\epsilon$.
        \item Compute $J_{\refT}(x; \tilde{R})$.
        \item Output $\applyFixup(x, J_{\refT}(x; \tilde{R}); R)$.
    \end{enumerate}
\end{mdframed}

\begin{proposition}
    \label{prop:first-match-equivalence}
    For all $x \in \Sigma^*$ and pattern $R$ in $\rfk$, $\clok$, and $\otk$, we have
    \begin{equation*}
        \FastMatch(x; R) = F_{\refT}(x; R) \text{.}
    \end{equation*}
\end{proposition}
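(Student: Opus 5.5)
We prove \Cref{prop:first-match-equivalence} separately for each of $\rfk$, $\clok$ and $\otk$. The approach is a case analysis on the first character $x_1$ and on which alternative of the reference pattern $R$ wins under the priority-then-length rule defining $F_{\refT}$. The key observation is that $\tilde{R}$ is obtained from $R$ only by deleting or weakening whitespace rules: \rust{r"\s+(?!\S)"} is deleted, \rust{r"\s++$"} is deleted, possessive quantifiers are replaced by greedy ones, and \rust{r"\s"} becomes \rust{r"[^\S\r\n]+"}. The non-whitespace alternatives are therefore literally identical in $R$ and $\tilde{R}$, and all work concentrates on inputs whose first match is whitespace-driven.

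\emph{Step 1: non-whitespace first matches.} Suppose the winning rule of $R$ on $x$ is not one of the whitespace rules, i.e.\ it is a contraction, a letter run, a digit run, or a symbol run (possibly with an optional leading space or leading non-letter character).
\begin{itemize}
\item We check that the same rule wins in $\tilde{R}$ with the same length. The rule is present verbatim, and every higher-priority rule is also present verbatim.
\item For $\clok$ we must argue that possessive and greedy quantifiers coincide on these particular sub-patterns. This holds because no backtracking into \rust{\p{L}++} or \rust{[\r\n]*+} can help the overall match: what follows the quantifier is either the end of the alternative or a disjoint character class.
\end{itemize}
We then check that $\applyFixup$ returns $x_{\leq j}$ unchanged. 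Here $x_j$ is a letter, digit, or symbol, so \textbf{trim} does not fire. The only exception is a symbol run ending in newlines in $\clok$. There $x_{\leq j}$ is not all whitespace, so \textbf{extend} does not fire; and \textbf{trim} in $\clok$/$\otk$ tests only horizontal whitespace, so it does not fire either.

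\emph{Step 2: $x$ begins with a whitespace run.} Write $x = s\,r$, where $s$ is the maximal whitespace prefix. For $\rfk$ there is one extra case: a leading space followed by a letter, digit, or symbol is already covered by Step 1. The case split is on three things:
\begin{itemize}
\item whether $r$ is empty (end of input);
\item whether $s$ contains a newline, and the position of its last newline;
\item the first character of $r$.
\end{itemize}
For each case we compute $F_{\refT}(x;R)$ directly.
\begin{itemize}
\item If $r=\epsilon$, then \rust{\s++$} in $\clok$, or \rust{\s+(?!\S)} or \rust{\s+} in general, gives all of $s$. Otherwise \rust{\s+(?!\S)} yields $s$ minus its last character when $|s|\ge 2$, and a lone character falls to \rust{\s} or \rust{\s+}.
\item For $\clok$ and $\otk$, the rule \rust{\s*[\r\n]} has higher priority, so it yields the prefix of $s$ up to its last newline whenever $s$ contains one. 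We must compare this carefully against \rust{\s++$}, which sits earlier in $\clok$.
\end{itemize}
We then compute $J_{\refT}(x;\tilde{R})$: either the maximal whitespace run, or the prefix up to the last newline, or the maximal horizontal-whitespace run. Finally we verify that $\applyFixup$ maps it to the reference answer. \textbf{Trim} removes the last whitespace character exactly when a non-whitespace character follows, which reproduces the lookahead rule. \textbf{Extend} reproduces \rust{\s++$} winning over \rust{\s*[\r\n]} when the rest of the input through EOF is horizontal whitespace.

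\emph{Main obstacle.} I expect the main obstacle to be the $\clok$ whitespace cases, where three priorities interact: \rust{\s++$} before \rust{\s*[\r\n]} before \rust{\s+(?!\S)}. The delicate points are twofold.
\begin{itemize}
\item The condition in \textbf{extend} must match precisely when \rust{\s++$} succeeds. This happens exactly when the whole remainder is whitespace, yet the simplified DFA stops at the last newline.
\item \textbf{Trim} must not fire when the boundary $j$ ends in a newline. This is why $\clok$/$\otk$ test only horizontal whitespace.
\end{itemize}
I would handle this by tabulating the cases $(s$ has no newline$)$, $(s$ has a newline and the remainder is EOF-horizontal-whitespace$)$, and $(s$ has a newline followed by more content$)$, each crossed with the next character being whitespace, non-whitespace, or EOF. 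For each cell I would check equality directly. The chunking caveat, that $\applyFixup$ is only called on resolved matches, guarantees that the next character or EOF is actually known when these checks are made.
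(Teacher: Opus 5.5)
Your proposal is correct and follows essentially the same route as the paper's proof: the non-whitespace alternatives agree (via the possessive-quantifier equivalence for $\clok$), and a case analysis of the leading whitespace run shows that \textbf{trim} reproduces the negative-lookahead rule while \textbf{extend} reproduces the end-of-input whitespace rule of $\clok$. When you write out the table, two details need fixing: your first Step~2 bullet overstates the case $r=\epsilon$ for $\otk$ (if $s$ contains a newline, the higher-priority newline rule stops at the last newline, as your second bullet correctly says), and \textbf{trim} fires only when $j>1$, which is what leaves a lone whitespace character untouched.
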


\begin{proof}[Proof of \Cref{prop:first-match-equivalence}]
    We analyze each pattern separately, beginning with $\rfk$.

    \paragraph{R50K.}
    Note that the first four rules are identical in the original and simplified patterns.
    None of these patterns trigger the {\bf trim} subroutine of $\applyFixup$, since the last character is not a whitespace.
    Therefore, the returned match is identical, if original pattern first matches one of these rules.

    Suppose the fifth rule \rust{r"\s+(?!\S)"} matches in the original pattern. 
    Let $w$ be the maximal initial run of whitespace.
    If $w$ reaches EOF, both the original and simplified patterns match all of $w$, and {\bf trim} does not trigger because there is no following character.
    Otherwise, $w$ is followed by a non-whitespace character.
    For the fifth rule to match, $|w| \geq 2$, and the original pattern matches $w$ without its final character. The simplified pattern matches all of $w$, and {\bf trim} removes its final character, yielding the same match.

    Finally, suppose the sixth rule matches in the original pattern.
    Since the fifth rule did not match, the initial whitespace run consists of a single character followed by a non-whitespace character. 
    This whitespace cannot be a literal space, since one of the preceding letter, number, or symbol rules would then match. 
    The simplified pattern matches the same single whitespace character, and \texttt{trim} does not trigger because the candidate has length one.

    \paragraph{CL100K.}
    Consider now $\clok$. 
    The first rule is identical in the original and simplified pattern.
    For the second, third, and fourth rules, we claim that removing the possessive quantifier \rust{"+"} does not change the match.
    First, a general characterization of when possessive quantifiers change the output of a match.

    \begin{claim}
        \label{clm:possessive-equiv}
        Let $A_0, B_0$ be disjoint sets of literals in $\Sigma$.
        Consider a regular expression $R_1$, defined by \rust{r"A++B"} and $R_2$, defined by \rust{r"A+B"}.
        If $A \subseteq A_0^*$ and $B \subseteq B_0^*$, then $F_{\refT}(x; R_1) = F_{\refT}(x; R_2)$ for all $x \in \Sigma^*$.
    \end{claim}

    \begin{proof}[Proof of \Cref{clm:possessive-equiv}]
        We begin with a brief refresher on possessive quantifiers (see e.g. \citet{oracleJavaRegexQuantifiers, regularExpressionsInfoPossessive}).
        Given a pattern \rust{r"A++B"}, a possessive quantifier tries to (1) find the longest match satisfying \rust{r"A+"}, and (2) the longest match of \rust{r"B"} in the remaining string. 
        If no such match exists, then the possessive quantifier reports that no match is found.
        In contrast, if no such match exists, the greedy quantifier given by \rust{r"A+B"} backtracks to the next longest prefix matching \rust{r"A+"}, and attempts to match \rust{r"B"} from there.
        As an example, \rust{r".++x"} would fail to match \rust{"abcx"}, since (1) matches the whole string leaving no characters for (2), whereas \rust{r".+x"} would find a match by backtracking one character.
        
        Suppose for contradiction that $F_{\refT}(x; R_1) \neq F_{\refT}(x; R_2)$ for some $x$.
        Let $a_1 = F_{\refT}(x; \rust{r"A+"})$ be the longest prefix match of \rust{r"A+"} in $x$.
        Then, there is no match for \rust{r"B"} in $x \setminus a_1$.
        On the other hand, there is a prefix $a_2 \prec a_1$ such that $x \setminus a_2$ contains a match for \rust{r"B"}.
        In particular, a prefix of $x \setminus a_2$ is a suffix of $a_1$, a contradiction since \rust{r"A"} and \rust{r"B"} are over disjoint sets of literals.
    \end{proof}
    
    \Cref{clm:possessive-equiv} applies to the possessive symbol run followed by the disjoint newline class in the fourth rule.
    In the second rule, the optional possessive prefix excludes letters, so giving back its character cannot enable the following letter match.
    The bounded digit pattern in the third rule and the remaining possessive quantifiers in these three rules occur at the ends of their rules, where no subsequent sub-pattern can require backtracking.
    Thus removing the possessive modifiers preserves the matches of the second, third, and fourth rules.

    Consider the fifth rule \eqref{pat:space-eof}.
    In this case, the simplified pattern matches \eqref{pat:newline-space} or the final case of all horizontal whitespace.
    In the former case, we trigger {\bf extend} since the match ends with a newline, is all whitespace, and the remaining string is all whitespace, so we extend the pretoken to the whole string.
    In the latter case, the entire input string is all horizontal whitespace followed by EOF, so that neither {\bf extend} or {\bf trim} trigger, and we return the full matched string, as desired.
    
    For the sixth rule, observe that the sixth rule occurs in the simplified string, and since we did not match the fifth rule, {\bf extend} will not trigger since the remaining string is not all whitespace.
    Since $x_{j}$ is not horizontal whitespace, {\bf trim} does not trigger either.

    Consider the seventh rule, i.e. a run of whitespace with no newlines (since the sixth rule did not match) such that the following character is a horizontal whitespace (but there is some non-whitespace character afterwards since the fifth rule did not match).
    In this case, we match the last rule of the simplified pattern (which matches all runs of horizontal whitespace).
    Since we matched also the trailing horizontal whitespace {\bf trim} will remove this to produce an identical match to the original pattern.

    In the final rule, we have a single whitespace character; a similar argument follows as in the last rule of $\rfk$.

    \paragraph{O200K.}
    As with $\rfk$, the simplified pattern and the original pattern are identical for the first five rules, so we may assume that the first match does not involve the first five rules.
    The fifth rule did not match, hence the run contains no newline.
    In particular, we may assume that the first match is all horizontal whitespace.
    
    Suppose the original pattern matches the sixth rule \rust{r"\s+(?!\S)"}. 
    Let $w$ be the maximal initial run of horizontal whitespace.
    If $w$ reaches EOF, both patterns match all of $w$, {\bf trim} does not trigger.
    Otherwise, $w$ is followed by a non-whitespace character.
    The sixth rule can match only if $|w|\geq 2$, in which case it matches $w$ without its final character. 
    The simplified pattern matches all of $w$, and {\bf trim} removes its final character, yielding the same match.

    Finally, suppose the original pattern matches the seventh rule.
    Since the sixth rule did not match, the initial run consists of a single horizontal whitespace character followed by a non-whitespace character.
    The simplified pattern matches this same character, and {\bf trim} does not trigger because the candidate has length one. This concludes the proof.
\end{proof}

Next, we verify the correctness of our algorithm empirically by enforcing equal output with reference tokenizers on test datasets.

\paragraph{Empirical Validation: Equivalence on Datasets}

Finally, we ensure that our algorithm's output is byte-identical to \python{tiktoken} and \python{tokenizers} on the three given datasets (English, Chinese, and Github).
We also empirically verify that our implementation is byte-identical to \python{tiktoken} and \python{tokenizers} on a large number of short strings.

\section{Experimental Results}
\label{sec:experiment-results}

We now present the results of our empirical evaluations.

\subsection{Time to First Token}
Next, we measure the time to first token (TTFT), a key metric in streaming and online applications.
For each \python{tiktoken} encoding and dataset, we ignore encoder construction time, and measure the time from when the input is provided to the encoder to when the first token is formed.
Note that we measure the time to when the first token is formed, not when it is output, since both \python{tiktoken} and \python{gigatoken} do not output any tokens until all tokens are formed.
For our tokenizer, we provide the input as a file, so that input-reading time is included.
For \python{tiktoken} and \python{gigatoken}, we provide the string as input in memory, thus excluding input-reading time.
We note that \python{tiktoken} and \python{gigatoken} do not natively provide support for TTFT profiling: we make minimal changes to their code to allow for this measurement, as the alternative simpler method of measuring TTFT as the total time of encoding excessively advantages our algorithm (see code for details).
TTFT results are provided in \Cref{tbl:ttft-results}.

\begin{table}[ht]
\caption{Time to first token (TTFT) of all comparison of all tokenizers on \python{tiktoken} encodings measured in milliseconds (ms). We report the median TTFT over 30 runs for each row with upper and lower quartiles in the parentheses.}
\label{tbl:ttft-results}
\begin{center}
\begin{tabular}{|l|l|c|c|c|c|}
\hline
{\bf Encoding} & {\bf Dataset} & \python{tiktoken} & \python{gigatoken} & {\bf Ours} & Speedup \\
\hline
\python{r50k} & Github & 10.91 (10.89-16.77) & 10.70 (10.67-10.74) & {\bf 0.49} (0.45-0.56) & 21.7$\times$ \\
\python{r50k} & English & 10.95 (10.94-10.99) & 10.86 (10.85-10.89) & {\bf 0.61} (0.51-0.84) & 17.8$\times$ \\
\python{r50k} & Chinese & 11.04 (11.00-11.11) & 50.24 (46.48-172.62) & {\bf 0.77} (0.72-0.85) & 14.3$\times$ \\
\hline
\python{p50k} & Github & 10.88 (10.86-10.91) & 10.70 (10.68-10.72) & {\bf 0.56} (0.49-0.61) & 19.2$\times$ \\
\python{p50k} & English & 10.95 (10.93-10.99) & 10.83 (10.81-10.86) & {\bf 0.53} (0.47-0.68) & 20.3$\times$ \\
\python{p50k} & Chinese & 11.01 (10.99-11.04) & 46.63 (31.03-64.26) & {\bf 0.80} (0.72-0.87) & 13.8$\times$ \\
\hline
\python{cl100k} & Github & 19.93 (19.91-19.96) & 18.21 (18.19-18.23) & {\bf 0.52} (0.47-0.56) & 34.8$\times$ \\
\python{cl100k} & English & 6.44 (6.41-6.46) & 6.96 (6.92-7.00) & {\bf 0.58} (0.50-0.70) & 11.1$\times$ \\
\python{cl100k} & Chinese & 6.51 (6.50-6.54) & 236.34 (233.10-238.22) & {\bf 0.82} (0.77-0.86) & 7.9$\times$ \\
\hline
\python{o200k} & Github & 10.62 (10.61-10.67) & 18.78 (18.73-18.82) & {\bf 0.60} (0.56-0.66) & 17.8$\times$ \\
\python{o200k} & English & 10.95 (10.92-10.97) & 7.03 (6.96-7.05) & {\bf 0.60} (0.51-0.72) & 11.8$\times$ \\
\python{o200k} & Chinese & 10.99 (10.98-11.02) & 235.69 (94.11-239.96) & {\bf 0.88} (0.78-0.90) & 12.5$\times$ \\
\hline
\end{tabular}
\end{center}
\end{table}

\subsection{Memory}
Next, we examine the memory required by each tokenizer.
For each encoding and dataset, we measure the peak RSS (Resident Set Size) required by tokenization for input sizes of 1, 2, 4, 8, 16, 32, 64, 128 MiB.
We invoke \python{tiktoken}, \python{tokenizers} and \python{gigatoken} through their public APIs, which produce the output as an array in memory.
Recall that our algorithm streams an input file of characters to an output file of tokens.
The peak RSS is plotted against input size on a $\log$-$\log$ scale.
Across the evaluated datasets of size 1--128\,MiB, our implementation exhibits approximately constant peak RSS, while existing tokenizers show memory growth with input size.
Note that on small inputs, our algorithm will incur large RSS relative to \python{tiktoken} but still smaller than \python{gigatoken} due to initialization costs.

\begin{figure}[ht]
    \centering
    \begin{subfigure}[b]{0.48\textwidth}
        \centering
        \includegraphics[width=\linewidth]{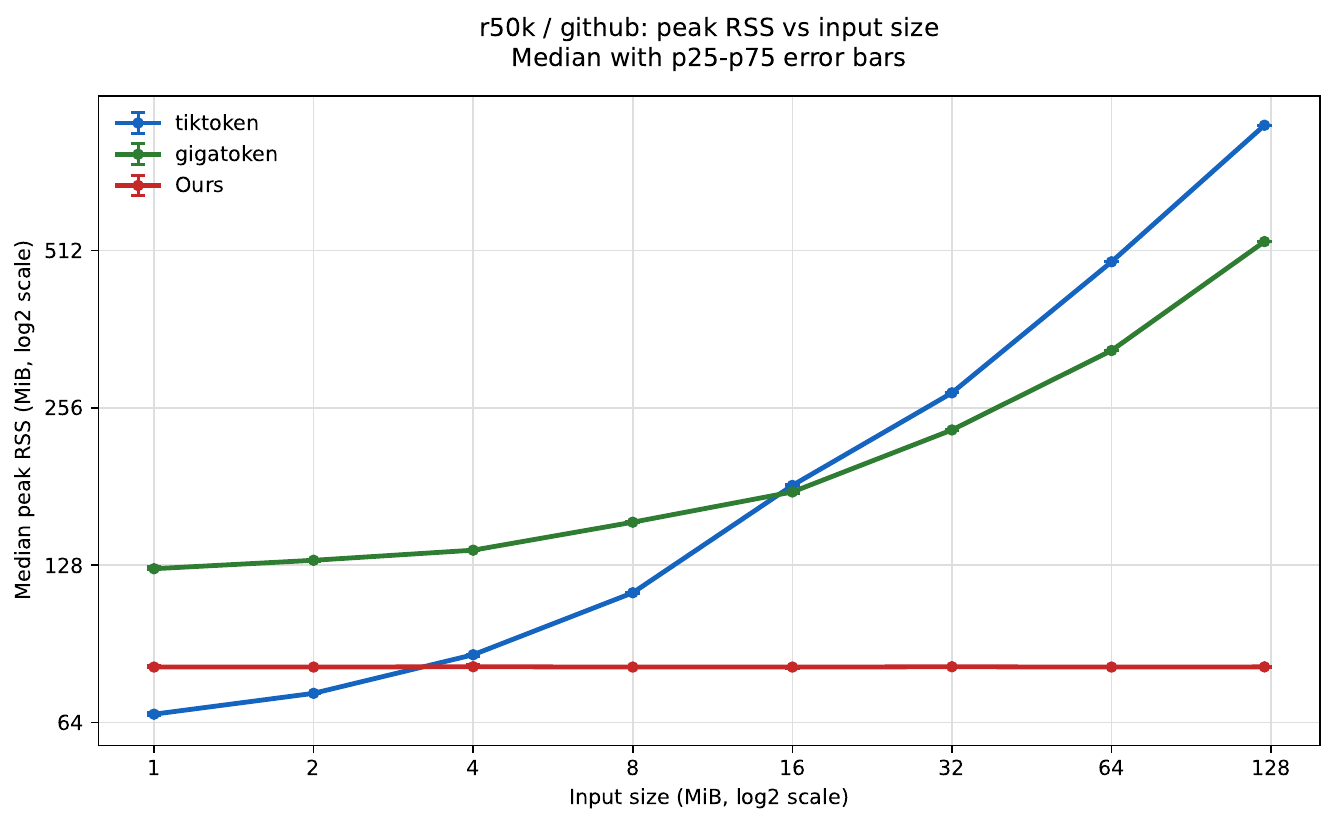}
        \caption{\python{r50k} peak RSS vs. input size}
    \end{subfigure}
    \hfill
    \begin{subfigure}[b]{0.48\textwidth}
        \centering
        \includegraphics[width=\linewidth]{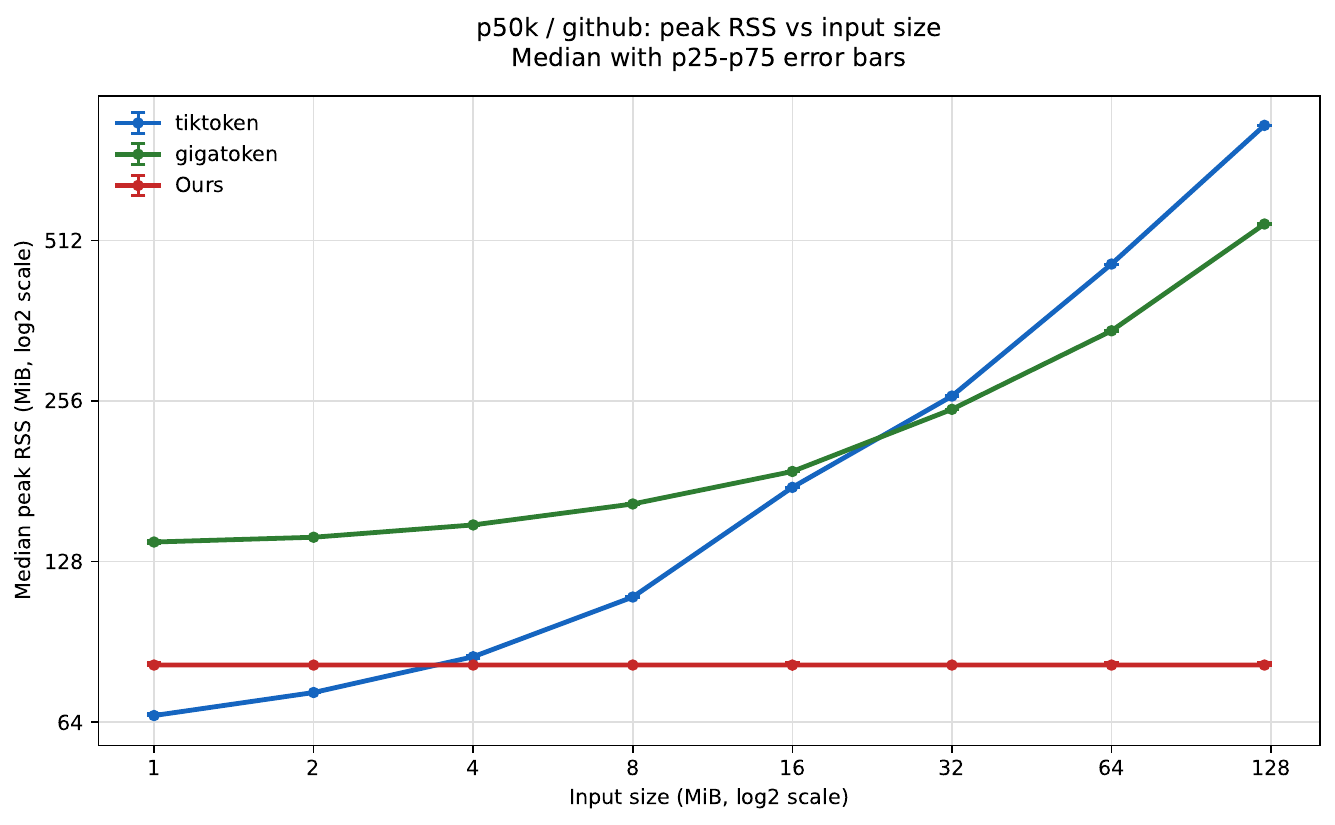}
        \caption{\python{p50k} peak RSS vs. input size}
    \end{subfigure}
    
    \vspace{0.5em}
    
    \begin{subfigure}[b]{0.48\textwidth}
        \centering
        \includegraphics[width=\linewidth]{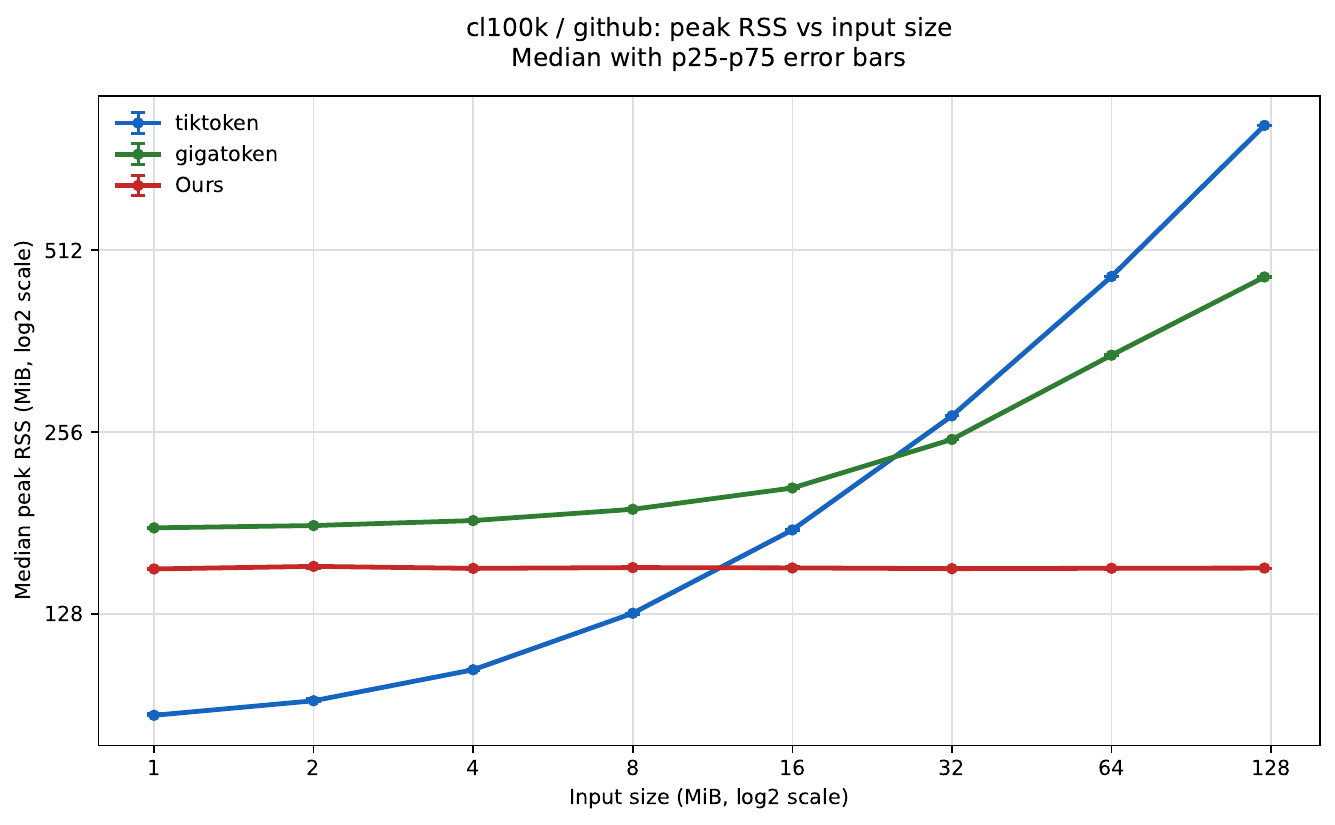}
        \caption{\python{cl100k} peak RSS vs. input size}
    \end{subfigure}
    \hfill
    \begin{subfigure}[b]{0.48\textwidth}
        \centering
        \includegraphics[width=\linewidth]{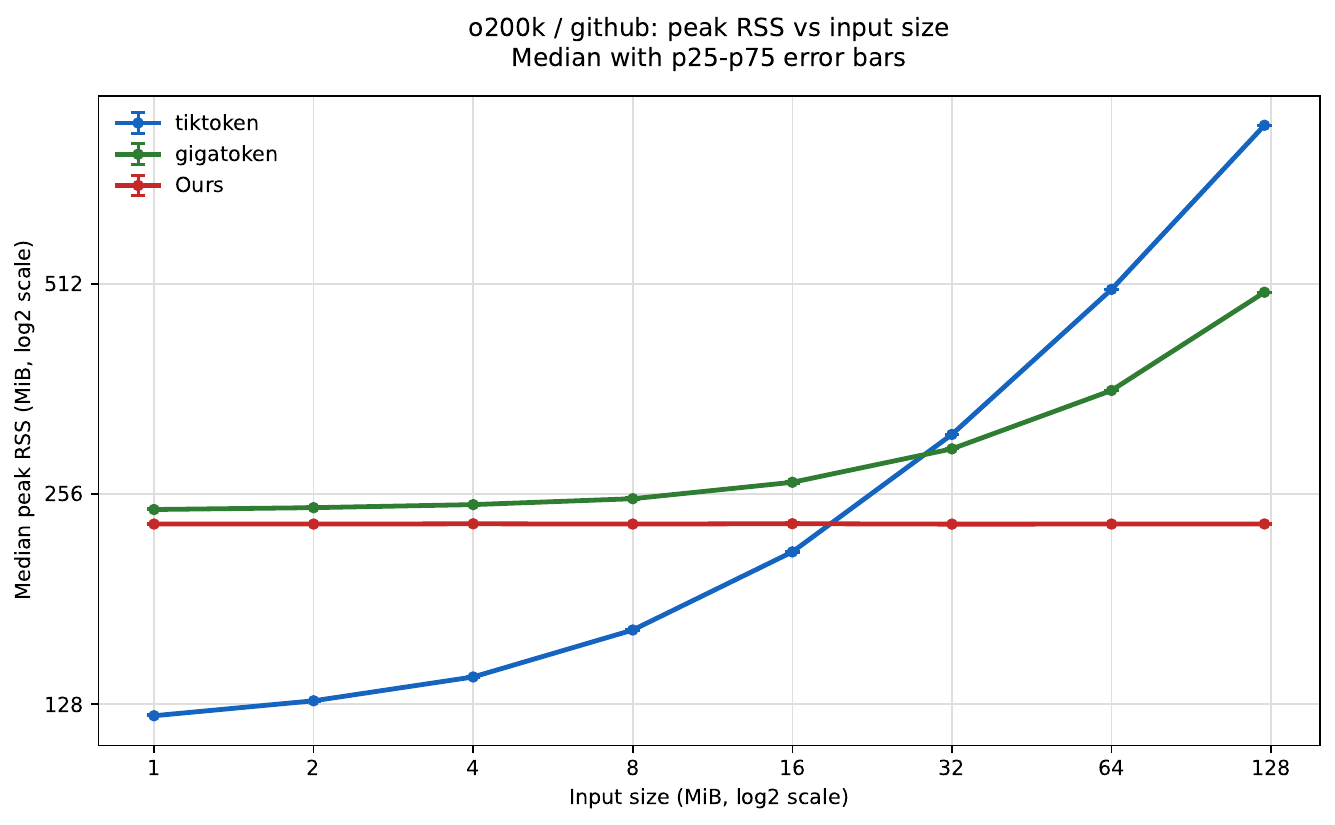}
        \caption{\python{o200k} peak RSS vs. input size}
    \end{subfigure}
    
    \caption{Peak RSS vs. input size on Github datasets. The blue line is \python{tiktoken}, the green line is \python{gigatoken}, while the red line is our algorithm.}
    \label{fig:github-peak-rss}
\end{figure}

\begin{figure}[ht]
    \centering
    \begin{subfigure}[b]{0.48\textwidth}
        \centering
        \includegraphics[width=\linewidth]{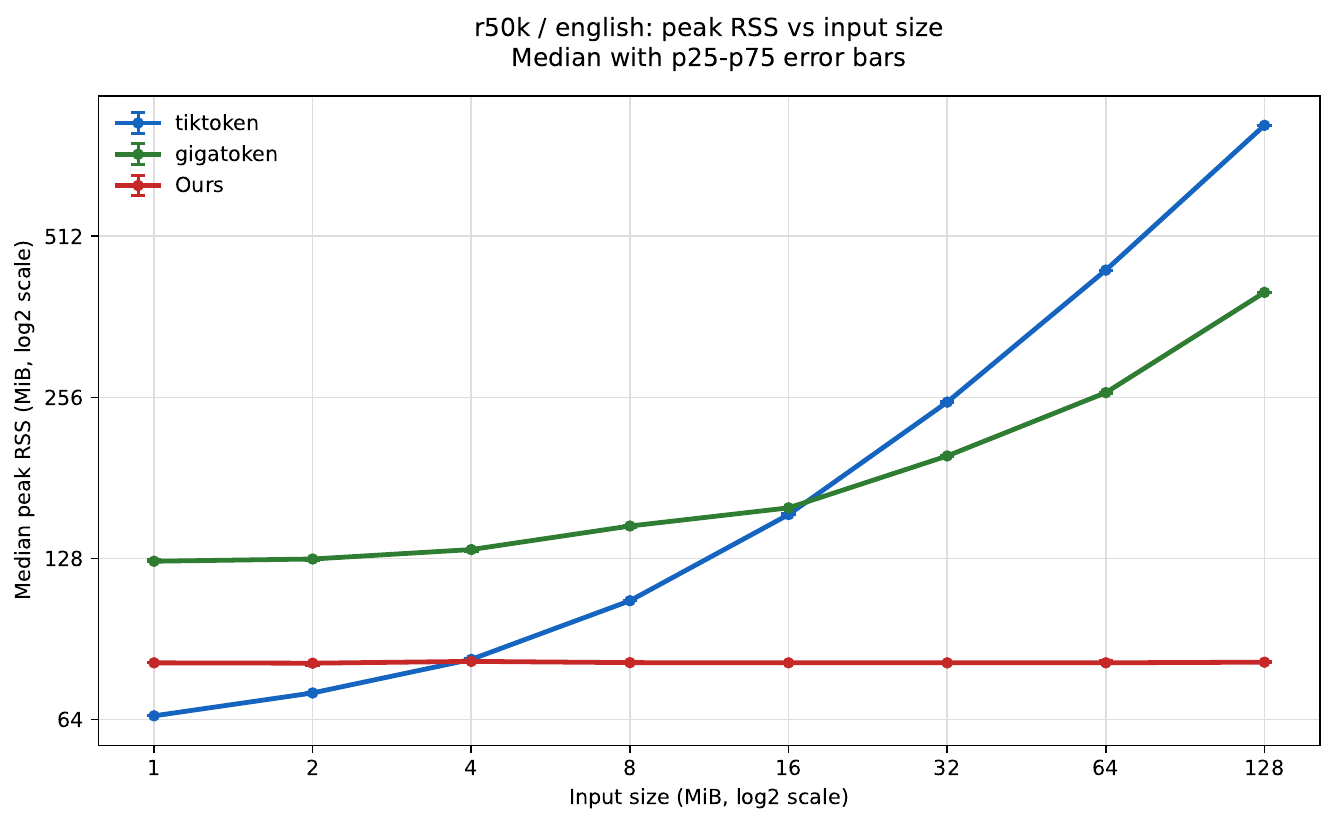}
        \caption{\python{r50k} peak RSS vs. input size}
    \end{subfigure}
    \hfill
    \begin{subfigure}[b]{0.48\textwidth}
        \centering
        \includegraphics[width=\linewidth]{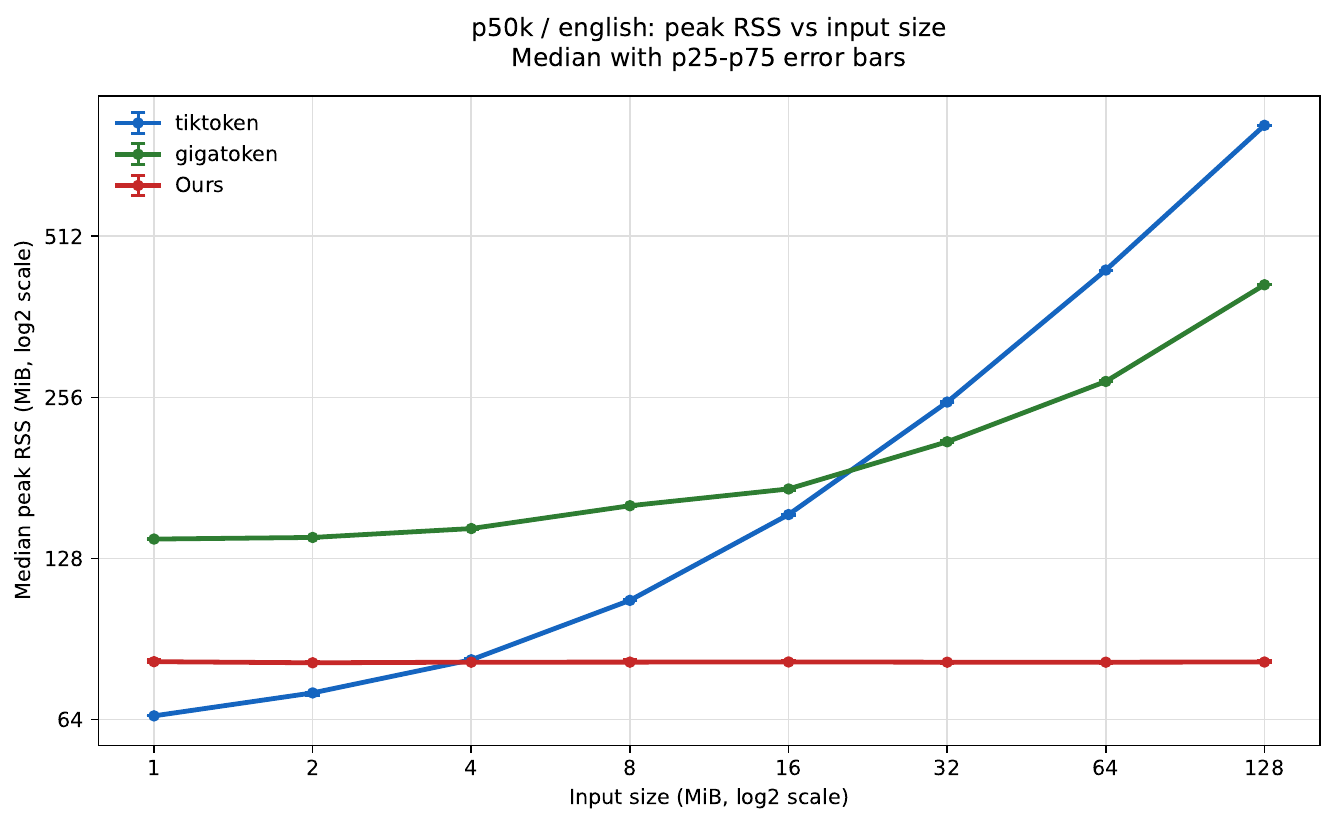}
        \caption{\python{p50k} peak RSS vs. input size}
    \end{subfigure}
    
    \vspace{0.5em}
    
    \begin{subfigure}[b]{0.48\textwidth}
        \centering
        \includegraphics[width=\linewidth]{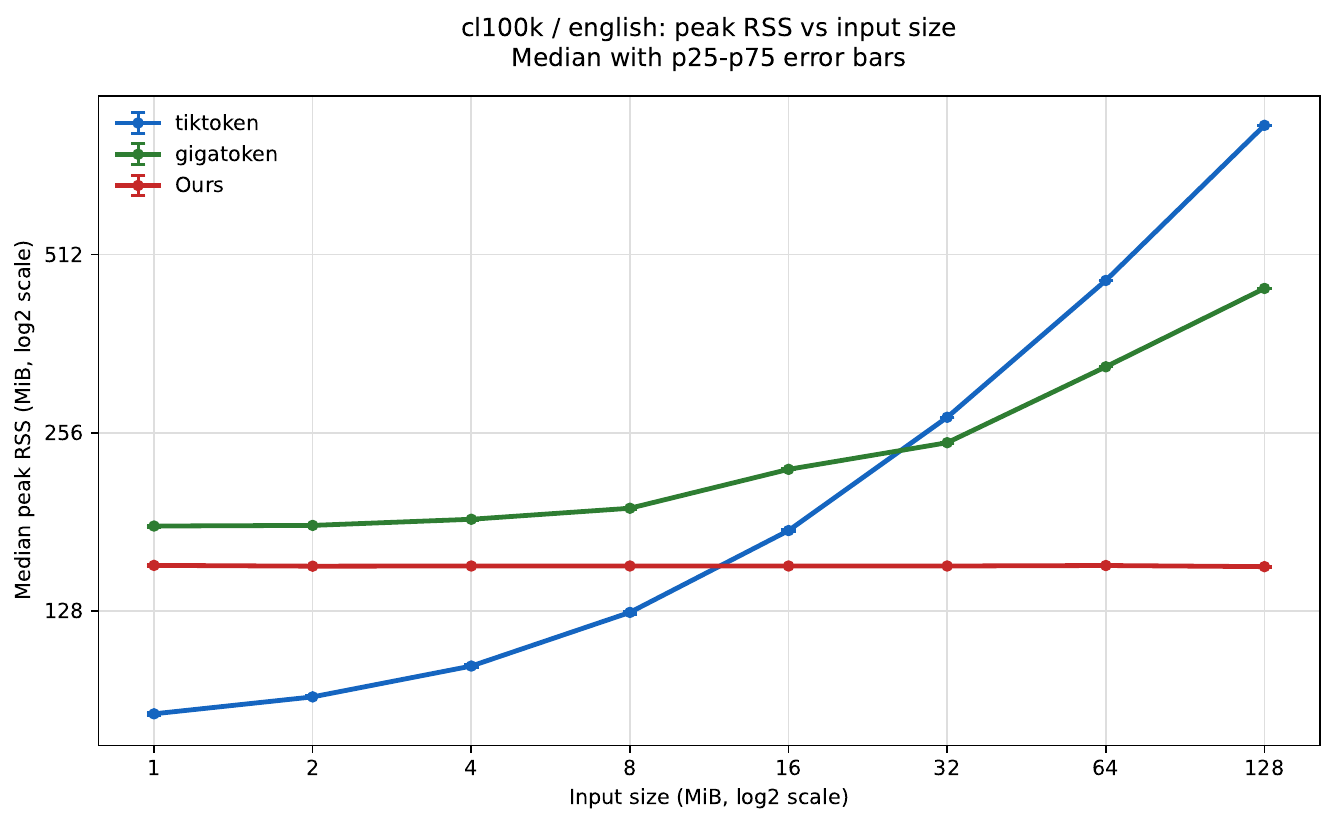}
        \caption{\python{cl100k} peak RSS vs. input size}
    \end{subfigure}
    \hfill
    \begin{subfigure}[b]{0.48\textwidth}
        \centering
        \includegraphics[width=\linewidth]{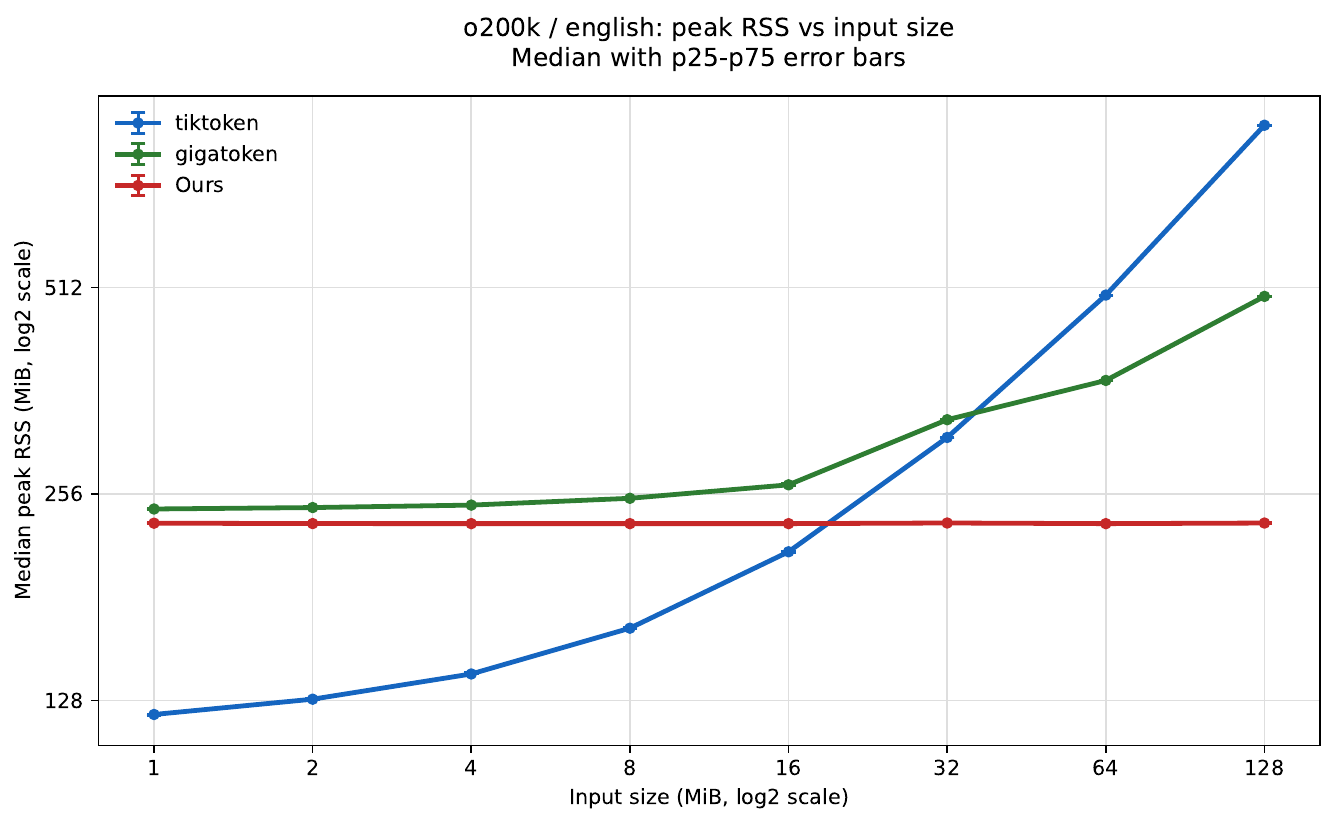}
        \caption{\python{o200k} peak RSS vs. input size}
    \end{subfigure}
    
    \caption{Peak RSS vs. input size on English datasets. The blue line is \python{tiktoken}, the green line is \python{gigatoken}, while the red line is our algorithm.}
    \label{fig:english-peak-rss}
\end{figure}

\begin{figure}[ht]
    \centering
    \begin{subfigure}[b]{0.48\textwidth}
        \centering
        \includegraphics[width=\linewidth]{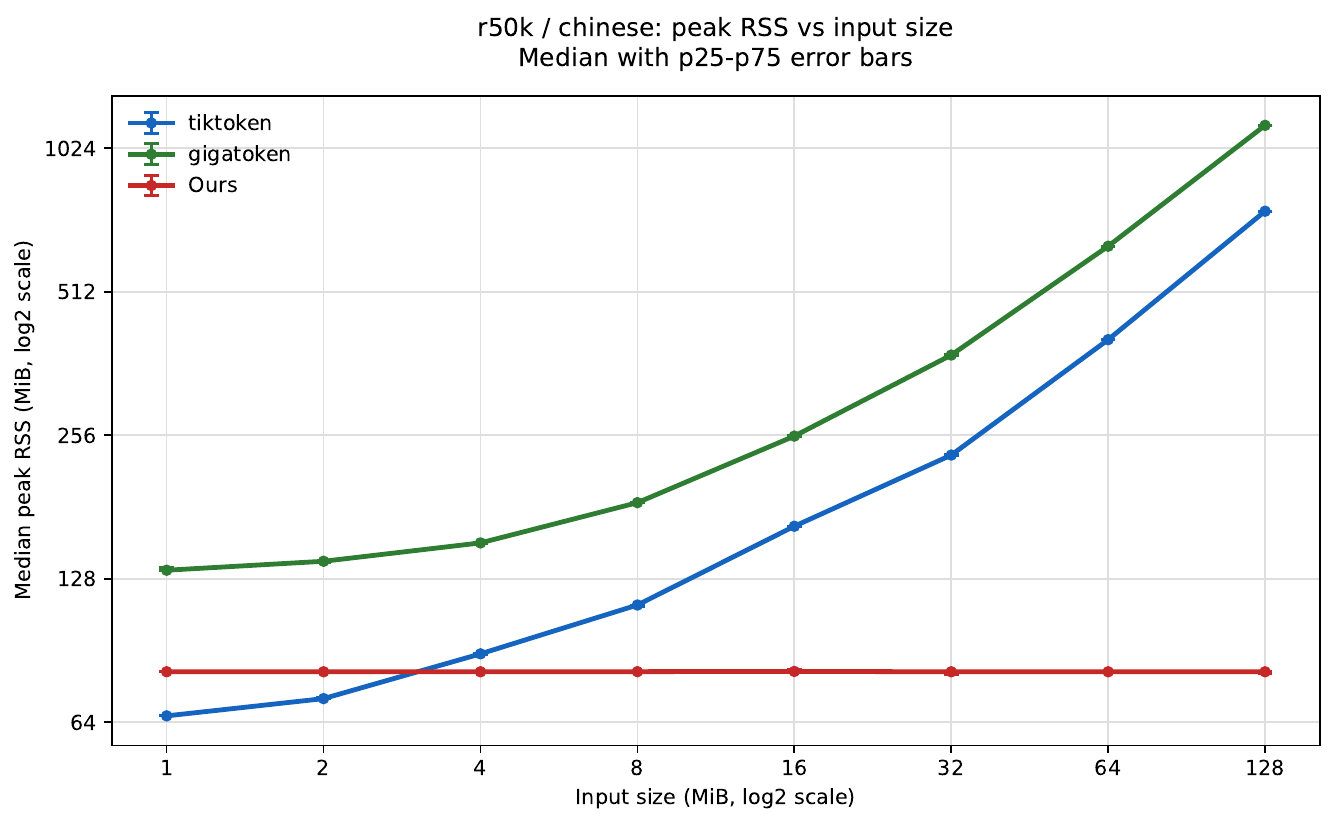}
        \caption{\python{r50k} peak RSS vs. input size}
    \end{subfigure}
    \hfill
    \begin{subfigure}[b]{0.48\textwidth}
        \centering
        \includegraphics[width=\linewidth]{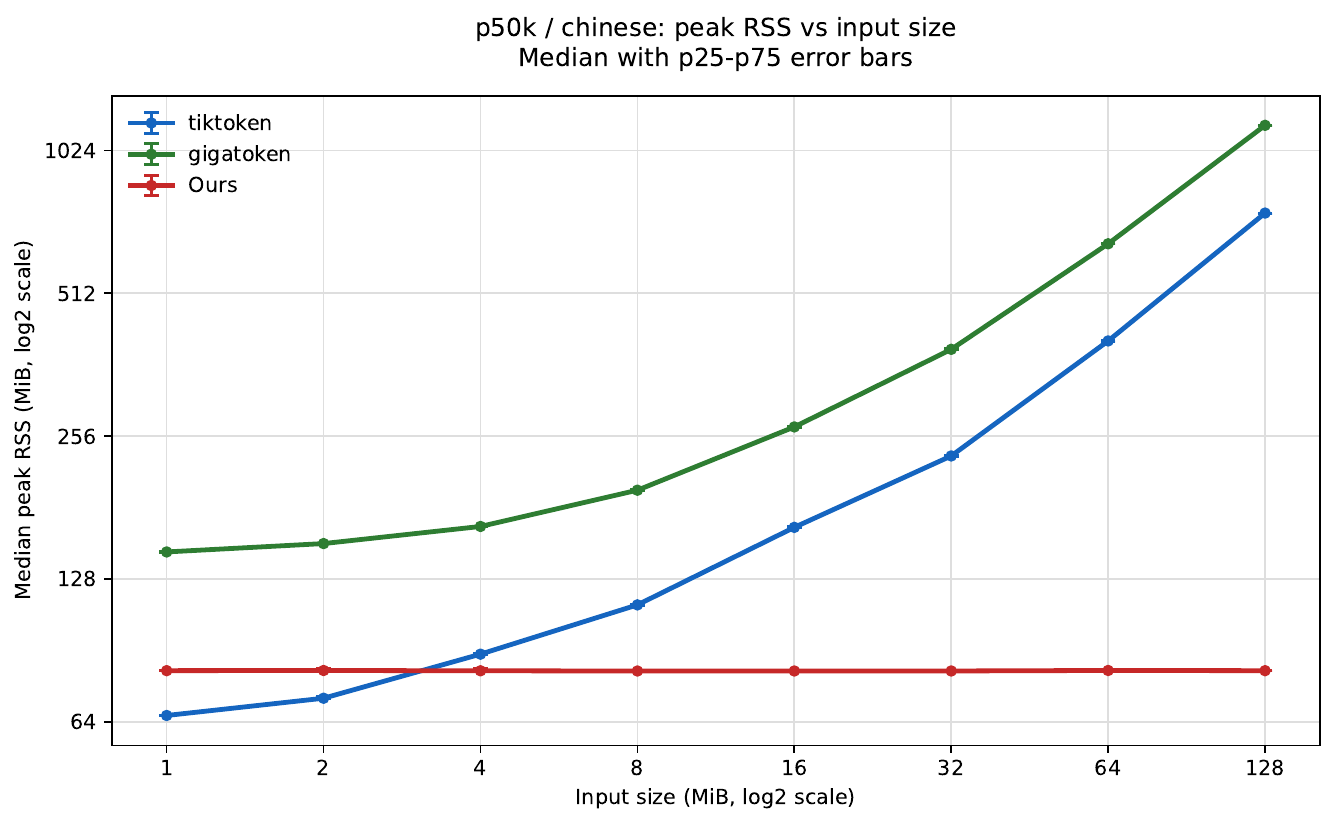}
        \caption{\python{p50k} peak RSS vs. input size}
    \end{subfigure}
    
    \vspace{0.5em}
    
    \begin{subfigure}[b]{0.48\textwidth}
        \centering
        \includegraphics[width=\linewidth]{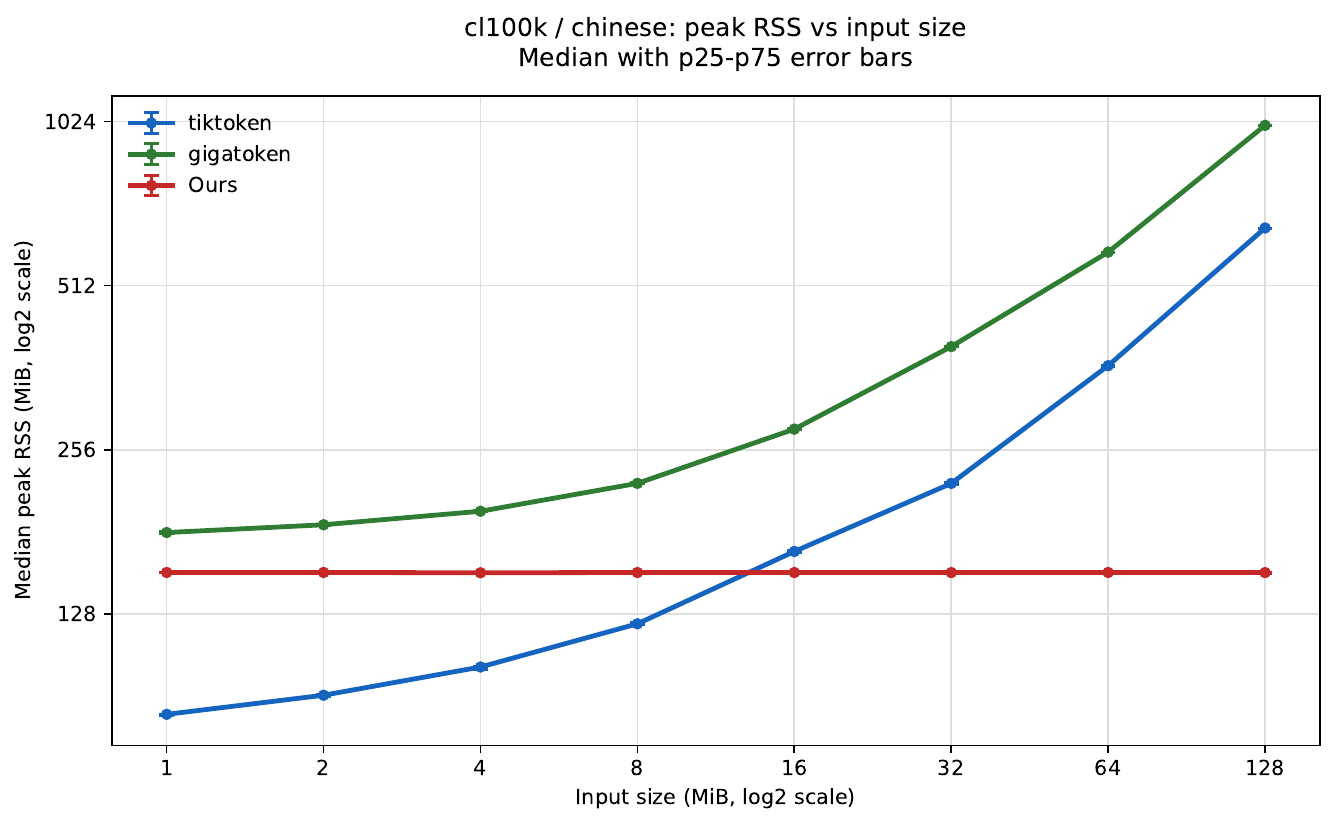}
        \caption{\python{cl100k} peak RSS vs. input size}
    \end{subfigure}
    \hfill
    \begin{subfigure}[b]{0.48\textwidth}
        \centering
        \includegraphics[width=\linewidth]{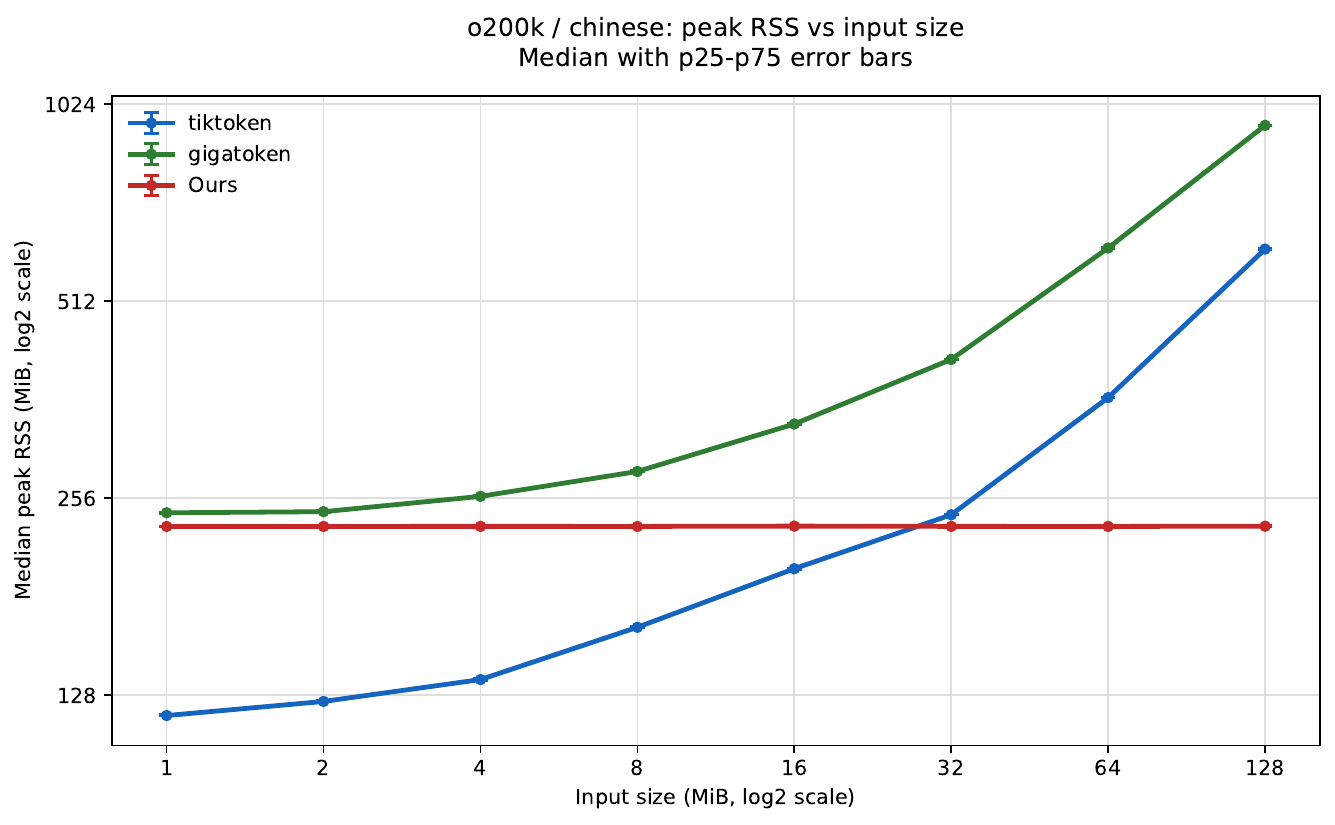}
        \caption{\python{o200k} peak RSS vs. input size}
    \end{subfigure}
    
    \caption{Peak RSS vs. input size on Chinese datasets. The blue line is \python{tiktoken}, the green line is \python{gigatoken}, while the red line is our algorithm.}
    \label{fig:chinese-peak-rss}
\end{figure}

\subsection{Throughput Comparison}

Finally, we measure throughput. 
The size of each dataset is approximately 128 MiB.
For every dataset and every \python{tiktoken} encoding, we compare the throughput between \python{tiktoken} and our tokenizer.
For every dataset and every \python{tokenizers} model, we compare the throughput between \python{tokenizers} and our tokenizer.
For both \python{tiktoken} and \python{tokenizers}, input is provided as a string in memory, while output is given as an array in memory.
For our algorithm, input is provided as a file path and output is streamed to a temporary file.
For the throughput benchmarking, output is streamed to a temporary file (\rust{tmpfs}) backed by memory instead of an actual file on disk.
In particular, the cost of reading and writing the input and output file is factored into the throughput of our algorithm.
Note that encoder construction time is not timed.
We measure performance of the encoding function itself.
\Cref{tbl:dfa-tiktoken-throughput} shows the throughput comparison on \python{tiktoken} encodings and \Cref{tbl:hf-tokenizers-throughput} shows the throughput comparison on \python{tokenizers} encodings.
Our algorithm significantly outperforms \python{tiktoken}, achieving throughput speedups of 2.4$\times$-7.0$\times$ depending on the encoding and dataset.
We achieve a more dramatic speedup over \python{tokenizers}, with speedup factors between 15.8$\times$-38.1$\times$.

\begin{table}[ht]
\caption{Throughput comparison of our tokenizer on \python{tiktoken} encodings measured in MiB/s. \python{tiktoken} is provided with a string as input and gives an array as output, both in memory. Our algorithm is provided with a file path as input and writes output to file. We report the median throughput over 10 runs for each row with upper and lower quartiles in the parentheses.}
\label{tbl:dfa-tiktoken-throughput}
\begin{center}
\begin{tabular}{|l|l|c|c|c|}
\hline
{\bf Encoding} & {\bf Dataset} & \python{tiktoken} & {\bf Ours} & Speedup \\
\hline
\python{r50k} & Github & 11.3 (11.3-11.3) & {\bf 43.6} (43.5-43.7) & 3.9$\times$ \\
\python{r50k} & English & 17.8 (17.7-17.8) & {\bf 47.7} (47.4-48.3) & 2.7$\times$ \\
\python{r50k} & Chinese & 15.0 (14.9-15.0) & {\bf 38.0} (38.0-38.1) & 2.5$\times$ \\
\hline
\python{p50k} & Github & 10.8 (10.8-10.8) & {\bf 45.0} (44.7-45.2) & 4.2$\times$ \\
\python{p50k} & English & 18.0 (17.9-18.0) & {\bf 49.2} (48.8-49.4) & 2.7$\times$ \\
\python{p50k} & Chinese & 15.0 (14.9-15.0) & {\bf 36.6} (36.5-36.7) & 2.4$\times$ \\
\hline
\python{cl100k} & Github & 11.1 (11.1-11.1) & {\bf 48.4} (48.1-48.6) & 4.4$\times$ \\
\python{cl100k} & English & 15.5 (15.4-15.6) & {\bf 45.3} (44.3-46.4) & 2.9$\times$ \\
\python{cl100k} & Chinese & 13.6 (13.6-13.6) & {\bf 37.6} (37.5-37.6) & 2.8$\times$ \\
\hline
\python{o200k} & Github & 7.8 (7.7-7.8) & {\bf 54.8} (54.7-55.0) & 7.0$\times$ \\
\python{o200k} & English & 8.7 (8.7-8.7) & {\bf 41.5} (41.3-41.7) & 4.8$\times$ \\
\python{o200k} & Chinese & 9.9 (9.9-9.9) & {\bf 36.5} (36.2-36.9) & 3.7$\times$ \\
\hline
\end{tabular}
\end{center}
\end{table}

\begin{table}[ht]
\caption{Throughput comparison of our tokenizer on \python{tokenizers} models measured in MiB/s. \python{tokenizers} is provided with a string as input and gives an array as output, both in memory. Our algorithm is provided with a file path as input and writes output to file. We report the median throughput over 10 runs for each row with upper and lower quartiles in the parentheses.}
\label{tbl:hf-tokenizers-throughput}
\begin{center}
\begin{tabular}{|l|l|c|c|c|}
\hline
{\bf Model} & {\bf Dataset} & \python{tokenizers} & {\bf Ours} & Speedup \\
\hline
\python{GPT2} & Github & 1.2 (1.2-1.2) & {\bf 45.7} (45.5-45.8) & 38.1$\times$ \\
\python{GPT2} & English & 1.4 (1.4-1.4) & {\bf 50.4} (50.0-51.3) & 36.0$\times$ \\
\python{GPT2} & Chinese & 1.2 (1.2-1.2) & {\bf 38.6} (38.4-38.6) & 32.2$\times$ \\
\hline
\python{RoBERTa} & Github & 1.1 (1.1-1.2) & {\bf 40.7} (37.5-41.0) & 37.0$\times$ \\
\python{RoBERTa} & English & 1.3 (1.3-1.3) & {\bf 46.2} (45.7-46.4) & 35.5$\times$ \\
\python{RoBERTa} & Chinese & 1.2 (1.2-1.2) & {\bf 35.0} (34.7-35.2) & 29.2$\times$ \\
\hline
\python{StarCoder} & Github & 1.4 (1.4-1.4) & {\bf 51.3} (50.8-51.5) & 36.6$\times$ \\
\python{StarCoder} & English & 1.3 (1.3-1.4) & {\bf 35.8} (35.7-35.8) & 27.5$\times$ \\
\python{StarCoder} & Chinese & 1.5 (1.4-1.5) & {\bf 32.1} (32.0-32.2) & 21.4$\times$ \\
\hline
\python{Mistral} & Github & 1.7 (1.6-1.7) & {\bf 55.8} (55.4-56.3) & 32.8$\times$ \\
\python{Mistral} & English & 1.7 (1.7-1.7) & {\bf 47.4} (47.1-47.6) & 27.9$\times$ \\
\python{Mistral} & Chinese & 1.7 (1.7-1.7) & {\bf 39.3} (39.3-39.4) & 23.1$\times$ \\
\hline
\python{Llama4} & Github & 1.8 (1.8-1.8) & {\bf 56.3} (56.2-56.7) & 31.3$\times$ \\
\python{Llama4} & English & 1.7 (1.7-1.7) & {\bf 41.1} (40.9-41.4) & 24.2$\times$ \\
\python{Llama4} & Chinese & 1.7 (1.7-1.7) & {\bf 33.7} (33.5-33.9) & 19.8$\times$ \\
\hline
\python{Qwen-3} & Github & 1.6 (1.6-1.6) & {\bf 30.8} (30.6-31.0) & 19.2$\times$ \\
\python{Qwen-3} & English & 1.5 (1.5-1.5) & {\bf 29.6} (29.5-30.1) & 19.7$\times$ \\
\python{Qwen-3} & Chinese & 1.7 (1.7-1.7) & {\bf 26.8} (26.6-26.9) & 15.8$\times$ \\
\hline
\python{gpt-oss} & Github & 1.9 (1.9-1.9) & {\bf 56.6} (56.4-57.1) & 29.8$\times$ \\
\python{gpt-oss} & English & 1.8 (1.8-1.8) & {\bf 44.4} (44.2-44.7) & 24.7$\times$ \\
\python{gpt-oss} & Chinese & 1.9 (1.9-1.9) & {\bf 36.8} (36.6-37.0) & 19.4$\times$ \\
\hline
\end{tabular}
\end{center}
\end{table}

\subsubsection{Comparison with \python{gigatoken}}
\label{sec:gigatoken}
Recently, the release of the \python{gigatoken} tokenizer presents a parallelized tokenizer that achieves significant throughput speedups over previous tokenizers \citep{roed2026gigatoken}.
At a high level, \python{gigatoken} uses the following ideas to dramatically speed up tokenization:
\begin{enumerate}
    \item Parallelism - \citet{roed2026gigatoken} attempts to process a batch of files or a single file in parallel. 
    In order to break up a single file, one can attempt certain heuristics to split up a large file based on known priors for pretokenization. 
    For example, examining the \python{tiktoken} patterns, if we find a space between two words (a fairly common occurence), we can safely split a file at this space.
    \item Optimized Pretokenization - Using SIMD and many other tricks, \citet{roed2026gigatoken} drastically reduces the workload of pretokenization, which is typically deferred to a regex engine. 
    \item Caching - In order to avoid repeatedly merging the same pretoken, \citet{roed2026gigatoken} maintains a cached mapping from pretokens to tokens. 
    While some form of caching is already implemented in \python{tiktoken}, \citet{roed2026gigatoken} uses a heavily optimized cache to improve performance.
\end{enumerate}

To make a fair comparison with \python{gigatoken} in the streaming and online settings which are inherently sequential, we compare with the serial implementation of \python{gigatoken} (the package already provides such an option by setting \python{parallel=False}). 
Furthermore, we note that given the above tricks to split a single file or a batch of files, one can easily obtain a parallel algorithm by applying an optimized serial tokenizer to each part of the resulting split.
Regarding the optimized pretokenization and caching, we observe that many optimizations remain compatible with an online framework, particularly composition: one can replace our DFA-based pretokenization algorithm with the SIMD-optimized pretokenization, and one can incorporate an optimized pretoken caching policy into any online BPE algorithm.
We note that our original algorithm, identically to \python{tiktoken}, maintains a cache of all pretokens that map directly to tokens, and thus maintains a cache of approximately size $O(\merges \longest)$.
The cache of \python{gigatoken} instead caches all observed pretokens, thus resulting in a cache that grows with input size in the worst case.
In particular, after incorporating the more aggressive caching policy which caches all pretokens, online tokenization will no longer maintain a constant memory footprint independent of input size.

In \Cref{tbl:gigatoken-throughput-array}, we compare the throughput of \python{tiktoken}, \python{gigatoken}, and two implementations of our algorithm with optimized pretokenization and caching from \citet{roed2026gigatoken}.
Both our algorithm and \python{gigatoken} are provided with a file path as input and supports two output types: an array in memory, and a written output file (for the benchmark, we use a temporary file backed by RAM to avoid blocking on disk).
Since \python{gigatoken} does not support streaming, we write the array to file after output.
On array output, we attain similar performance to \python{gigatoken}, producing lower throughput on English and code datasets, while producing higher throughput on Chinese.
The more aggressive caching policy can use increasing memory as additional distinct pretokens are encountered. 
Our DFA-based implementation voids this growing pretoken cache, but its input buffer can still grow with an unfinished pretoken (as discussed earlier).

\begin{table}[ht]
\caption{Throughput comparison of optimized tokenizer on \python{tiktoken} encodings (MiB/s). \python{gigatoken} and our algorithm are provided with a file path as input. The array (A) column reports throughput when output is collected as array in memory, the stream (S) column reports when output is written to a temporary file. We report the median of 10 runs (quartiles omitted for space).}
\label{tbl:gigatoken-throughput-array}
\begin{center}
\begin{tabular}{|l|l|c|c|c|c|}
\hline
{\bf Encoding} & {\bf Dataset} & \python{gigatoken} (A) & \python{gigatoken} (S) & {\bf Ours} (A) & {\bf Ours} (S) \\
\hline
\python{r50k} & Github & 178.3 & 26.9 & {\bf 185.5} & 117.5 \\
\python{r50k} & English & {\bf 331.1} & 44.6 & 298.7 & 181.9 \\
\python{r50k} & Chinese & 39.9 & 13.1 & {\bf 45.3} & 37.8 \\
\hline
\python{p50k} & Github & 188.7 & 32.2 & {\bf 191.5} & 126.8 \\
\python{p50k} & English & {\bf 329.2} & 44.6 & 297.6 & 180.0 \\
\python{p50k} & Chinese & 39.8 & 13.0 & {\bf 45.9} & 37.8 \\
\hline
\python{cl100k} & Github & {\bf 172.2} & 40.4 & 159.7 & 119.7 \\
\python{cl100k} & English & {\bf 278.3} & 44.2 & 248.3 & 158.4 \\
\python{cl100k} & Chinese & 27.8 & 13.5 & {\bf 43.7} & 36.5 \\
\hline
\python{o200k} & Github & {\bf 253.4} & 43.3 & 236.6 & 155.9 \\
\python{o200k} & English & {\bf 240.9} & 43.6 & 216.0 & 140.8 \\
\python{o200k} & Chinese & 24.0 & 15.0 & {\bf 41.2} & 36.4 \\
\hline
\end{tabular}
\end{center}
\end{table}

%% file: appendix_omitted_proofs.tex
\section{Omitted Proofs}
\label{app:omitted}

We prove \Cref{lem:batch-bpe}.

\begin{proof}[Proof of \Cref{lem:batch-bpe}]
    We use the incremental BPE algorithm of \citet{jiang2025incrementalbpe}.
    Their algorithm computes the last token of every prefix, whereas we require the first token of every suffix.  
    The algorithm and proof are almost identical, with some minor modifications to handle left-to-right tie-breaking in the computation of first token $\phi$ rather than last token $\lambda$.

    We recap the algorithm of \citet{jiang2025incrementalbpe}.
    In preprocessing, they compute the following data structures:
    \begin{enumerate}
        \item A successor forest $\calF$ where nodes are tokens $\tau$ and edges $(\tau, \suc(\tau))$ where $\tau$ is produced by the merge rule $(\pre(\tau), \suc(\tau)) \mapsto \tau$.
        We call $\pre(\tau), \suc(\tau)$ the predecessor and successor of $\tau$.
        We replace this with a predecessor forest $\calF$ where the nodes are tokens and edges $(\tau, \pre(\tau))$.
        This requires $O(\merges)$ time.

        \item For each token $\tau$, compute the suffix-successor tree of $\tau$, which is the induced subgraph of $\calF$ containing tokens $\mu$ where $\raw(\mu)$ is a suffix of $\raw(\tau)$.
        On each of these trees, compute a centroid decomposition (using e.g. \citet{della2019new}).
        Instead, we replace this with a prefix-predecessor tree, i.e. the induced subgraph of the predecessor forest $\calF$ consisting of tokens $\mu$ where $\raw(\mu) \preceq \raw(\tau)$.
        There are $\merges$ tokens, each with at most $\longest$ prefixes, so this requires $O(\merges \longest)$ time.

        \item \citet{jiang2025incrementalbpe} use an Aho--Corasick automaton to report the longest suffix of $x_{\leq i}$ that is a token.
        Instead, we scan right-to-left using an automaton on reversed token strings to report the longest prefix of $x_{\geq i}$ that is a token.
        This requires $O(\merges\longest)$ preprocessing.
    \end{enumerate}

    We then process the input string $x$ right-to-left.
    Assume that $\phi(x_{\geq j})$ is computed for $j > i$.
    We have an analogue of \citet{jiang2025incrementalbpe} Prefix Last Token condition (see Definition 4.1), which we call the Suffix First Token condition.
    That is, $\tau$ satisfies the Suffix First Token condition if it is an atomic token (i.e. $|\raw(\tau)| = 1$) or satisfies the following:
    \begin{enumerate}
        \item (Reachability) $\phi(x_{\geq i + |\pre(\tau)|})$ is in the subtree of the predecessor forest rooted at $\suc(\tau)$.
        This ensures that $\suc(\tau)$ is formed in the suffix $x_{\geq i + |\pre(\tau)|}$.
        
        \item (Priority Dominance) If $\phi(x_{\geq i + |\pre(\tau)|}) \neq \suc(\tau)$, let $u$ be the unique child of $\suc(\tau)$ that lies on the path to $\phi(x_{\geq i + |\pre(\tau)|})$.
        Then, the rule producing $\tau$ has priority at least as high as the rule producing $u$.
        This ensures that $\tau$ is produced before $\suc(\tau)$ is consumed.
        Note that here we require priority to be at least as high, instead of strictly higher as in \citet{jiang2025incrementalbpe}, due to left-to-right tie-breaking.
    \end{enumerate}

    Let $\rho$ be the longest token such that $\raw(\rho) \preceq x_{\geq i}$.
    Then $\raw(\phi(x_{\geq i})) \preceq \raw(\rho)$.
    Analogously to \citet{jiang2025incrementalbpe}, one can show that on the prefix-predecessor tree of $\rho$, a vertex is on the path from $x_{i}$ to $\phi(x_{\geq i})$ iff it satisfies the Suffix First Token condition, which can be checked in $O(1)$ time. 
    The idea then is to start from the root $x_{i}$ and follow the path of children satisfying the Suffix First Token (SFT) condition.
    Since we have constructed a centroid decomposition, we begin from the centroid.
    We first test the centroid's SFT condition.
    If it is valid, we search for a child satisfying the condition.
    We identify the valid child by binary search, with children ordered by the left endpoints of their disjoint valid intervals, as in \citet{jiang2025incrementalbpe}, with predecessors and successors interchanged.
    Here children refer to the original prefix-predecessor tree.
    If the centroid is valid but has no valid child, return it.
    Otherwise, either the centroid has a valid child, in which case we remove the centroid and continue at the centroid of the remaining component containing the valid child, or the centroid is invalid, in which case we remove it and continue at the centroid of the remaining component containing its original-tree parent, as in \citet{jiang2025incrementalbpe}.
    This proceeds for at most $O(\log \longest)$ steps, and searching for the right child requires $O(\log \longest)$ time, leading to $O(\log^2 \longest)$ time.
\end{proof}

We provide the reference and simplified DFA illustrations for $\rfk$, $\clok$, and $\otk$.

\input{tikz/r50K_ref}
   
\input{tikz/r50k_simp}

\input{tikz/cl100k}
    
\input{tikz/o200k}

\input{tikz/o200k_simp}

%% file: tikz/r50K_ref.tex
\begin{figure}[t]\centering
\begin{tikzpicture}[
  >={Stealth[round]}, shorten >=1pt, semithick,
  every state/.style={minimum size=9mm, inner sep=1pt, font=\small},
  el/.style={font=\scriptsize, inner sep=1.2pt}]

  \node[state, initial, initial text={}] (q0) at (0,0)     {$q_0$};
  \node[state, accepting]                (G)  at (0,2.4)   {$G$};
  \node[state]                           (sp) at (3,0)     {$sp$};
  \node[state, accepting]                (R)  at (6,0)     {$R$};
  \node[state]                           (W1) at (0,-2.4)  {$W_1$};
  \node[state]                           (W2) at (3,-2.4)  {$W_2$};
  \node[state, accepting]                (ceq) at (1.5,-4.8){$c_{=}$};
  \node[state, accepting]                (cmin) at(4.5,-4.8){$c_{-}$};

  \path[->]
    (q0) edge[el] node[above] {$\sqcup$} (sp)
    (q0) edge[el] node[right] {${}'(?:[sdmt]|ll|ve|re)$} (G)
    (q0) edge[el, bend left=22] node[above] {$X$} (R)
    (sp) edge[el] node[above] {$X$} (R)
    (R)  edge[el, loop right] node {$X$} ()
    (q0) edge[el] node[left] {$w$} (W1)
    (sp) edge[el] node[right] {$\sqcup,\,w$} (W2)
    (W1) edge[el] node[above] {$\sqcup,\,w$} (W2)
    (W2) edge[el, loop right] node {$\sqcup,\,w$} ()
    (W1) edge[el] node[left, align=center] {$L,N,P,\$$} (ceq)
    (W2) edge[el] node[right] {$L,N,P$} (cmin)
    (W2) edge[el] node[pos=.45, above] {$\$$} (ceq)
    (sp) edge[el] node[pos=.75, left] {$\$$} (ceq);
\end{tikzpicture}
\caption{R50K reference pattern as a delayed-accept DFA. $X$ denotes non-whitespace characters as    in \Cref{fig:r50k-simp};
$\sqcup$ = space, $w$ = other whitespace, $\$$ = end of input. 
The non-whitespace part ($q_0,sp,R,G$) is an ordinary DFA (boundary unconditional at run/contraction
end); the whitespace part ($W_1,W_2,c_{=},c_{-}$) is a delayed DFA.
$c_{=}$ includes the last ws char; $c_{-}$ trims it (imposed by the rule $\backslash\mathrm s^{+}(?!\backslash\mathrm S)$).
Peeling fires only from $W_2$, i.e. when the run of whitespace is at least $2$ characters long.}
\label{fig:r50k-ref}
\end{figure}

%% file: tikz/r50k_simp.tex
\begin{figure}[t]\centering
\begin{tikzpicture}[
  >={Stealth[round]}, shorten >=1pt, semithick, on grid,
  every state/.style={minimum size=9mm, inner sep=1pt, font=\small},
  el/.style={font=\scriptsize, inner sep=1.5pt}]
  \node[state, initial, initial text={}] (q0) {$q_0$};
  \node[state, accepting] (G)  [above=20mm of q0] {$G$};
  \node[state]            (sp) [right=26mm of q0] {$sp$};
  \node[state, accepting] (R)  [right=52mm of q0] {$R$};
  \node[state, accepting] (W)  [below=20mm of q0] {$W$};
  \path[->]
    (q0) edge [el] node[above] {$\sqcup$} (sp)
    (q0) edge [el] node[left,pos=.55] {${}'(?:[sdmt]|ll|ve|re)$} (G)
    (q0) edge [el] node[left] {$w$} (W)
    (q0) edge [el, bend left=24] node[above] {$X$} (R)
    (sp) edge [el] node[above] {$X$} (R)
    (sp) edge [el, bend left=14] node[pos=.35,left] {$\sqcup,\,w$} (W)
    (R)  edge [el, loop right]  node {$X$} ()
    (W)  edge [el, loop below]  node {$\sqcup,\,w$} ();
\end{tikzpicture}
\caption{R50K simplified pattern.
$X$ denotes non-whitespace characters such as letters, digits, and symbols.
In each case, the run continues on the \emph{same} class $X$;
$\sqcup$ = literal space, $w$ = other whitespace $(\backslash\mathrm s\setminus\sqcup)$.
$W$ accepts the maximal $\backslash\mathrm s^{+}$ run \emph{unconditionally} (no lookahead).}
\label{fig:r50k-simp}
\end{figure}

%% file: tikz/cl100k.tex
\begin{figure}[t]\centering
\begin{tikzpicture}[
  >={Stealth[round]}, shorten >=1pt, semithick,
  every state/.style={minimum size=8mm, inner sep=1pt, font=\small},
  el/.style={font=\scriptsize, inner sep=1.2pt}, add/.style={dashed}]
  \node[state, initial, initial text={}] (q0) at (0,0)      {$q_0$};
  \node[state]                           (sp) at (3, 0)     {$\mathrm{sp}$};
  \node[state, accepting]                (G)  at (0,2.6)    {$G$};
  \node[state, accepting]                (Wd) at (6,-0.9)  {$\mathrm{L}$};
  \node[state, accepting]                (D1) at (3,3)  {$D_1$};
  \node[state, accepting]                (D2) at (4.8,3)  {$D_2$};
  \node[state, accepting]                (D3) at (6.6,3)  {$D_3$};
  \node[state, accepting]                (Pu) at (6,0.9) {$\mathrm{Pu}$};
  \node[state, accepting]                (PN) at (8,0.9) {$\mathrm{PN}$};
  \node[state]                           (Wh) at (3,-2.7)   {$\mathrm{Wh}$};
  \node[state]                           (Wh2) at (1.5,-5)   {$\mathrm{Wh}_2$};
  \node[state]                           (Wnl) at (0,-2.7) {$\mathrm{Wnl}$};
  \node[state, accepting]                (ceq) at (-2,-4) {$c_{=}$};
  \node[state, accepting]                (cmin) at (5,-4) {$c_{-}$};
  
  \path[->]
    (q0) edge[el] node[left] {${}'(?:[sdmt]|ll|ve|re)$} (G)
    (q0) edge[el] node[above] {$\sqcup$} (sp)
    (sp) edge[el] node[above=.1,pos=.4] {$\pL$} (Wd)
    (q0) edge[el, bend right] node[above=.1,pos=.6] {$\pL$} (Wd)
    (Wd) edge[el, loop right] node {$\pL$} ()
    (q0) edge[el] node[above=.4,pos=.4] {$\pN$} (D1)
    (D1) edge[el] node[above] {$\pN$} (D2)
    (D2) edge[el] node[above] {$\pN$} (D3)
    (sp) edge[el] node[above=.3,pos=.4] {$\punct$} (Pu)
    (Pu) edge[el, loop above] node {$\punct$} ()
    (Pu) edge[el] node[above] {$\Rclass$} (PN)
    (PN) edge[el, loop right] node {$\Rclass$} ()
    (q0) edge[el] node[below=.5, pos=.3] {$\Hclass$} (Wh)
    (Wh) edge[el] node[below] {$\Rclass$} (Wnl)
    (Wh) edge[el] node[below=.5] {$\pL$} (Wd)
    (Wh) edge[el] node[right=.1] {$\Hclass$} (Wh2)
    (q0) edge[el] node[left,pos=.25] {$\Rclass$} (Wnl)
    (Wnl) edge[loop left] node {$\Rclass$} (Wnl)
    (Wnl) edge[el] node[right=.1] {$\Hclass$} (Wh2)
    (Wnl) edge[el] node[below=.3, pos=.2] {$\backslash \mathrm{S}, \$$} (ceq)
    (Wh2) edge[el] node[below=.1] {$\backslash \mathrm{S}$} (cmin)
    (Wh2) edge[el, bend left] node[left=.1] {$\Rclass$} (Wnl)
    (Wh2) edge[loop below] node[below] {$\Hclass$} ()
    (Wh2) edge[el] node[below] {$\$$} (ceq);

\end{tikzpicture}
\caption{CL100K reference pattern. 
$G$ denotes the DFA to compute the first rule. 
Next, $L$ denotes the DFA to compute runs of letters with leading horizontal whitespace or punctuation (e.g. $\leadc$).
$D_i$ denotes the DFA to compute runs of $\leq 3$ digits. 
$Pu$ and $PN$ denote the DFA to compute runs of symbols followed by newlines.
Priority: contraction, word, digit, punct, ws.
$\mathrm{Wh}$ and $\mathrm{Wh}_{2}$ denote the first and subsequent horizontal whitespace respectively.
$\mathrm{Wnl}$ denotes vertical whitespace/newline. 
From $\mathrm{Wnl}$ we always keep the last whitespace.
From $\mathrm{Wh}_{2}$, we trim the last character when a non-whitespace is encountered.}
\label{fig:cl100k-ref}
\end{figure}

%% file: tikz/o200k.tex
\begin{figure}[t]\centering
\begin{tikzpicture}[
  >={Stealth[round]}, shorten >=1pt, semithick,
  every state/.style={minimum size=8mm, inner sep=1pt, font=\small},
  el/.style={font=\scriptsize, inner sep=1.2pt}, add/.style={dashed}]
  \node[state, initial, initial text={}] (q0) at (0,0)      {$q_0$};
  \node[state, accepting]                (Wu) at (0,2.6)  {$\mathrm{Wu}$};
  \node[state, accepting]                (Wl) at (5.4,2.6)  {$\mathrm{Wl}$};
  \node[state, accepting]                (D1) at (4.0,0.9)  {$D_1$};
  \node[state, accepting]                (D2) at (5.7,0.9)  {$D_2$};
  \node[state, accepting]                (D3) at (7.4,0.9)  {$D_3$};
  \node[state, accepting]                (Pu) at (4.0,-0.9) {$\mathrm{Pu}$};
  \node[state, accepting]                (PN) at (6,-0.9) {$\mathrm{PN}$};
  \node[state]                           (Wh) at (3.2,-2.7)   {$\mathrm{Wh}$};
  \node[state, accepting]                (Wnl)at (0,-2.7) {$\mathrm{Wnl}$};
  \node[state, accepting]                (con)at (3, 5)     {$G$};
  \node[state]                           (Who) at (0,-5)   {$\mathrm{Wh}_o$};
  \node[state]                           (Wh2) at (2,-5)   {$\mathrm{Wh}_2$};
  \node[state, accepting]                (ceq) at (6,-3) {$c_{=}$};
  \node[state, accepting]                (cmin) at (4,-5) {$c_{-}$};
  \path[->]
    (q0) edge[el] node[left,pos=.55] {$\leadc?\,U$} (Wu)
    (q0) edge[el, bend left=5] node[left=.4,pos=.7] {$\leadc?\,l$} (Wl)
    (Wu) edge[el, loop above] node {$U$} ()
    (Wu) edge[el] node[above] {$l$} (Wl)
    (Wu) edge[el] node[left=.1, pos=.7] {$(?i{:}'s|\dots)?$} (con)
    (Wl) edge[el, loop above] node {$l$} ()
    (Wl) edge[el] node[right=.1, pos=.7] {$(?i{:}'s|\dots)?$} (con)
    (q0) edge[el] node[above=.1,pos=.55] {$\pN$} (D1)
    (D1) edge[el] node[above] {$\pN$} (D2)
    (D2) edge[el] node[above] {$\pN$} (D3)
    (q0) edge[el] node[above=.2,pos=.6] {$\sqcup?\,\punct$} (Pu)
    (Pu) edge[el, loop below] node {$\punct$} ()
    (Pu) edge[el] node[above] {$[\backslash\mathrm r\backslash\mathrm n/]$} (PN)
    (PN) edge[el, loop right] node {$[\backslash\mathrm r\backslash\mathrm n/]$} ()
    (q0) edge[el] node[left=.2] {$\Hclass$} (Wh)
    (Wh) edge[el] node[right] {$\Hclass$} (Wh2)
    (Wh) edge[el] node[below] {$\Rclass$} (Wnl)
    (Wh) edge[el] node[below=.1, pos=.4] {$[\hat{}\backslash\mathrm{s}Ul]$} (ceq)
    (q0) edge[el] node[left,pos=.4] {$\Rclass$} (Wnl)
    (Wnl) edge[el, loop left] node {$\Rclass$} ()
    (Wnl) edge[el] node[right, pos=.7] {$\Hclass$} (Who)
    (Who) edge[el, bend left] node[left=.1] {$\Rclass$} (Wnl)
    (Who) edge[loop below] node[below] {$\Hclass$} ()
    (Wh2) edge[el] node[above] {$\backslash \mathrm{S}$} (cmin)
    (Wh2) edge[loop below] node[below] {$\Hclass$} ()
    (Wh2) edge[el] node[right=.1] {$\Rclass$} (Wnl);
\end{tikzpicture}
\caption{O200K reference pattern. Words are case split, with upper case characters $U=[\backslash\mathrm p\{Lu,Lt,Lm,Lo,M\}]$ first, followed by lower case characters $l=[\backslash\mathrm p\{Ll,Lm,Lo,M\}]$ with an optional contraction suffix $G$.
Digits have run length $\leq 3$.
Symbols can have a leading space, followed by an arbitrary number of newlines.
Horizontal whitespace followed by letters reaches the first two cases, while followed by any other non-whitespace character immediately outputs the single space.
Remaining runs of whitespace either must end with a newline or leave a trailing space before the next whitespace character.}
\label{fig:o200k-ref}
\end{figure}

%% file: tikz/o200k_simp.tex
\begin{figure}[t]\centering
\begin{tikzpicture}[
  >={Stealth[round]}, shorten >=1pt, semithick,
  every state/.style={minimum size=8mm, inner sep=1pt, font=\small},
  el/.style={font=\scriptsize, inner sep=1.2pt}, add/.style={dashed}]
  \node[state, initial, initial text={}] (q0) at (0,0)      {$q_0$};
  \node[state, accepting]                (Wu) at (0,2.6)  {$\mathrm{Wu}$};
  \node[state, accepting]                (Wl) at (5.4,2.6)  {$\mathrm{Wl}$};
  \node[state, accepting]                (D1) at (4.0,0.9)  {$D_1$};
  \node[state, accepting]                (D2) at (5.7,0.9)  {$D_2$};
  \node[state, accepting]                (D3) at (7.4,0.9)  {$D_3$};
  \node[state, accepting]                (Pu) at (4.0,-0.9) {$\mathrm{Pu}$};
  \node[state, accepting]                (PN) at (5.7,-0.9) {$\mathrm{PN}$};
  \node[state, accepting]                (Wh) at (3.2,-2.7)   {$\mathrm{Wh}$};
  \node[state, accepting]                (Wnl)at (0,-2.7) {$\mathrm{Wnl}$};
  \node[state, accepting]                (con)at (3, 5)     {$G$};
  \node[state]                           (Wh2) at (1.5,-5)   {$\mathrm{Wh}_2$};
  \path[->]
    (q0) edge[el] node[left,pos=.55] {$\leadc?\,U$} (Wu)
    (q0) edge[el, bend left=5] node[left=.4,pos=.7] {$\leadc?\,l$} (Wl)
    (Wu) edge[el, loop above] node {$U$} ()
    (Wu) edge[el] node[above] {$l$} (Wl)
    (Wu) edge[el] node[left=.1, pos=.7] {$(?i{:}'s|\dots)?$} (con)
    (Wl) edge[el, loop above] node {$l$} ()
    (Wl) edge[el] node[right=.1, pos=.7] {$(?i{:}'s|\dots)?$} (con)
    (q0) edge[el] node[above=.1,pos=.55] {$\pN$} (D1)
    (D1) edge[el] node[above] {$\pN$} (D2)
    (D2) edge[el] node[above] {$\pN$} (D3)
    (q0) edge[el] node[above=.2,pos=.6] {$\sqcup?\,\punct$} (Pu)
    (Pu) edge[el, loop below] node {$\punct$} ()
    (Pu) edge[el] node[above] {$[\backslash\mathrm r\backslash\mathrm n/]$} (PN)
    (PN) edge[el, loop right] node {$[\backslash\mathrm r\backslash\mathrm n/]$} ()
    (q0) edge[el] node[left=.2] {$\Hclass$} (Wh)
    (Wh) edge[el, loop below] node[right=.2] {$\Hclass$}
    (Wh) edge[el] node[below] {$\Rclass$} (Wnl)
    (q0) edge[el] node[left,pos=.4] {$\Rclass$} (Wnl)
    (Wnl) edge[el, loop left] node {$\Rclass$} ()
    (Wnl) edge[el] node[right=.1] {$\Hclass$} (Wh2)
    (Wh2) edge[el, bend left] node[left=.1] {$\Rclass$} (Wnl)
    (Wh2) edge[loop right] node[right] {$\Hclass$} ();
\end{tikzpicture}
\caption{O200K simplified pattern. Most of the DFA (except whitespace handling) is identical to \Cref{fig:o200k-ref}. For whitespace, any run of whitespace that contains vertical whitespace/newline must end on vertical whitespace/newline. Otherwise, a run that contains only horizontal whitespace is matched as a single run.}
\label{fig:o200k-simp}
\end{figure}